	\newenvironment{shortv}{}{}
    \newcommand{\appref}[1]{\marginpar{\tiny #1\\full version}}
    \newcommand{\appref}[1]{}
\newtheorem{theorem}{Theorem}[section]
\newtheorem{lemma}[theorem]{Lemma}
\newtheorem{observation}[theorem]{Observation}
\newtheorem{fact}[theorem]{Fact}
\newtheorem{definition}[theorem]{Definition}
\newtheorem{corollary}[theorem]{Corollary}
\newtheorem{claim}[theorem]{Claim}
\newtheorem{invariant}[theorem]{Invariant}
\theoremstyle{remark}
\newtheorem{remark}[theorem]{Remark}
\newcommand{\makeop}{\mathtt{make\_string}}
\newcommand{\dropop}{\mathtt{drop\_string}}
\newcommand{\compop}{\mathtt{compare}}
\newcommand{\actop}{\mathtt{activate}}
\newcommand{\dactop}{\mathtt{deactivate}}
\newcommand{\eqop}{\mathtt{equal}}
\newcommand{\concop}{\mathtt{concat}}
\newcommand{\splitop}{\mathtt{split}}
\newcommand{\findop}{\mathtt{find}}
\newcommand{\lcpop}{\mathtt{LCP}}
\newcommand{\layerop}{\mathtt{layer}}
\newcommand{\anchfind}{\mathtt{find}}
\newcommand{\symbols}{\mathcal{S}}
\newcommand{\sigs}{\bar{\Sigma}}
\newcommand{\lastsymb}{\mathcal{Z}}
\newcommand{\compress}{\textsc{Compress}}
\newcommand{\depth}{\textsc{Depth}}
\newcommand{\rle}{\textsc{Rle}}
\newcommand{\shrink}[1]{\ensuremath{\textsc{Shrink}_{#1}}}
\newcommand{\cshrink}[1]{\ensuremath{\overline{\textsc{Shrink}_{#1}}}}
\newcommand{\str}{w}
\newcommand{\strb}{y}
\newcommand{\hs}{h}
\newcommand{\grammar}{\mathcal{G}}
\newcommand{\mword}{B}
\newcommand{\decomp}{D}
\newcommand{\spair}[3]{\ensuremath{#1 \rightarrow #2#3}}
\newcommand{\spower}[3]{\ensuremath{#1 \rightarrow #2^{#3}}}
\newcommand{\slev}{\textit{level}}
\newcommand{\sanch}{\textit{anch}}
\newcommand{\slength}{\textit{length}}
\newcommand{\stree}{\mathcal{T}}
\newcommand{\sstr}{\textit{str}}
\newcommand{\ustree}{\mathcal{\overline{T}}}
\newcommand{\spar}[1]{\mathrm{par}_{\stree(#1)}}
\newcommand{\spars}{\mathtt{pars}}
\newcommand{\uspar}{\mathrm{par}}
\newcommand{\usleft}{\mathrm{left}}
\newcommand{\usright}{\mathrm{right}}
\newcommand{\uschild}{\mathrm{child}}
\newcommand{\ussig}{\mathrm{sig}}
\newcommand{\stlayer}{\Lambda}
\newcommand{\itroot}{\mathtt{root}}
\newcommand{\itbegin}{\mathtt{begin}}
\newcommand{\itend}{\mathtt{end}}
\newcommand{\coll}{\mathcal{W}}
\newcommand{\itparent}{\mathtt{parent}}
\newcommand{\itchild}{\mathtt{child}}
\newcommand{\itindex}{\mathtt{index}}
\newcommand{\itlevel}{\mathtt{level}}
\newcommand{\itleft}{\mathtt{left}}
\newcommand{\itrepr}{\mathtt{repr}}
\newcommand{\itright}{\mathtt{right}}
\newcommand{\itnil}{\mathtt{nil}}
\newcommand{\itanch}{\mathtt{anchor}}
\newcommand{\itdegree}{\mathtt{degree}}
\newcommand{\itfirst}{\mathtt{First}}
\newcommand{\itlast}{\mathtt{Last}}
\newcommand{\itsig}{\mathtt{sig}}
\newcommand{\itrskip}{\mathtt{rskip}}
\newcommand{\itlskip}{\mathtt{lskip}}
\newcommand{\itrext}{\mathtt{rext}}
\newcommand{\itlext}{\mathtt{lext}}
\newcommand{\stpush}{\mathtt{push}}
\newcommand{\stpop}{\mathtt{pop}}
\newcommand{\sttop}{\mathtt{top}}
\newcommand{\stempty}{\mathtt{empty}}
\newcommand{\stcomp}{\mathtt{collapse}}
\newcommand{\poly}{\mathrm{poly}}
\newcommand{\insertop}{\mathtt{insert}}
\newcommand{\deleteop}{\mathtt{delete}}
\newcommand{\connectedop}{\mathtt{connected}}
\newcommand{\updateop}{\mathtt{update}}
\newcommand{\verifyop}{\mathtt{verify}}
\newcommand{\algptr}{P}
\newcommand{\algqtr}{Q}
\newcommand{\eps}{\varepsilon}
\newcommand{\edot}{{\cdot}}
\begin{document}

\begin{titlepage}
\date{}
\title{Optimal Dynamic Strings\thanks{Work done while Paweł Gawrychowski held a post-doctoral position at  Warsaw  Center  of
Mathematics and Computer Science. Piotr Sankowski is supported by the Polish National Science Center, grant no 2014/13/B/ST6/00770.}}

\author[1]{Paweł Gawrychowski}
\author[1]{Adam Karczmarz}
\author[1]{Tomasz Kociumaka}
\author[2]{\\Jakub Łącki}
\author[1]{Piotr Sankowski}

\affil[1]{Institute of Informatics, University of Warsaw, Poland}
\affil[ ]{\texttt{[gawry,a.karczmarz,kociumaka,sank]@mimuw.edu.pl}}
\affil[2]{Sapienza University of Rome, Italy}
\affil[ ]{\texttt{j.lacki@mimuw.edu.pl}}

\maketitle
\thispagestyle{empty}

\begin{abstract}
In this paper we study the fundamental problem of maintaining
a~dynamic collection of strings under the following operations:
\begin{itemize}
\item $\concop$ -- concatenates two strings,
\item $\splitop$ -- splits a string into two at a given position,
\item $\compop$ -- finds the lexicographical order (less, equal, greater) between two strings,
\item $\lcpop$ -- calculates the longest common prefix of two strings.
\end{itemize}
We present an efficient data structure for this problem, where an update requires
only $O(\log n)$ worst-case time with high probability, with $n$ being the total length of all strings
in the collection, and a query takes constant worst-case time.
On the lower bound side, we prove that even if the only possible query is checking equality of
two strings, either updates or queries take amortized $\Omega(\log n)$ time; hence
our implementation is optimal.

Such operations can be used as a basic building block to solve other string problems.
We provide two examples. First, we can augment our data structure to provide
pattern matching queries that may locate occurrences of a specified pattern $p$ in the
strings in our collection in optimal $O(|p|)$ time, at the expense of increasing update time to $O(\log^2 n)$.
Second, we show how to maintain a history of an edited text, processing updates in $O(\log t \log \log t)$ time,
where $t$ is the number of edits, and how to support pattern matching queries against the whole
history in $O(|p| \log t \log \log t)$ time. 

Finally, we note that our data structure can be applied to test dynamic tree isomorphism
and to compare strings generated by dynamic straight-line grammars. 
\end{abstract}
\end{titlepage}
\newpage

\tableofcontents
\newpage

\section{Introduction}
Imagine a set of text files that are being edited. Each edit consists in adding or removing a single character or copy-pasting longer
fragments, possibly between different files. We want to perform fundamental queries on such
files. For example, we might be interested in checking equality of two files, finding their
first mismatch, or pattern matching, that is, locating occurrences of a given pattern in all files.
All these queries are standard operations supported in popular programs, e.g., text editors. However, modern
text editors rise new interesting problems as they often maintain the entire history of edits.
For example, in TextEdit on OS X Lion the save button does not exist anymore.
Instead, the editor stores all versions of the file that can be scrolled through using the so called timeline.
Analogously, in Google Docs one sees a file together with its whole history of edits.
A natural question arises whether we can efficiently support a \emph{find} operation,
that is whether it is possible to efficiently locate all occurrences of a given pattern in all versions of a file.

We develop improved and more efficient tools that can be used as a basic building block in algorithms on dynamic texts.
In particular, we present an optimal data structure for dynamic strings equality.
In this problem we want to maintain a collection of strings that can be concatenated, split, and tested for equality. Our algorithm
requires $O(\log n)$ worst-case time with high probability\footnote{Our algorithms always return correct answers, and the
randomization only impacts the running time.} for an update, where $n$ is the total length of the strings in the collection,
and compares two strings in $O(1)$ worst-case time, assuming $O(1)$-time arithmetic on
the lengths of the strings.

Our solution is obtained by maintaining a certain context-insensitive representation of the strings,
which is similar but conceptually simpler than what has been used in the previous works~\cite{Alstrup,Mehlhorn}
(which use the same assumptions concerning randomization and arithmetics).
This allows us to obtain better running time in the end, but getting the desired bound requires a deeper insight in comparison to previous results.
While our improvement in the update time is $O(\log^*n)$, we
believe that obtaining tight bounds for such a fundamental problem is of major interest.
Thus, we also provide a matching $\Omega(\log n)$ lower bound for the amortized complexity
of update or query, showing that our solution is indeed the final answer. This lower bound is obtained
by appropriately modifying the proof of the lower bound for dynamic path connectivity~\cite{logarithmic}.

Next, we describe how to support lexicographical comparison and computing the longest
common prefix of two strings in constant worst-case time. All time complexities remain unchanged if we
want to maintain a collection of persistent strings, that is, concatenate and split do not destroy their
arguments. The aforementioned lower bound applies to non-persistent strings, and hence also to persistent
strings. Nevertheless, in the persistent setting $n$ might be exponential with respect to the total input size,
while it is only polynomial in the lower-bound examples.
Thus, the bounds would not meet if we measured the running time as a function of the total input size.
We also show how to apply our date structure to: support pattern matching queries in dynamic string collections, 
support find operation in history of edits, compare strings generated by dynamic straight line grammars, and test dynamic
tree isomorphism. 

\subsection{Our Results}
\paragraph{Dynamic String Equality and Lexicographic Comparison}
The first result on dynamic string equality was given by Sundar and Tarjan~\cite{sundartarjan}.
They achieved amortized $O(\sqrt{n \log m} + \log m)$ time for an update, where $m$ is
the number of operations executed so far and $n$ is the total length of the strings. This was later improved
to randomized logarithmic time for the special case of repetition-free sequences~\cite{Pugh}.
The first improvement for the general case was due to Mehlhorn et al.~\cite{Mehlhorn}, who
decreased the update time to expected $O(\log^2 n)$ time. They also provided a deterministic version
working in $O(\log n (\log n + \log m\log^*m))$ time. Finally, much later, the deterministic algorithm of~\cite{Mehlhorn} was
substantially improved by Alstrup et al.~\cite{Alstrup}, who achieved $O(\log n \log^* n)$
update time.\footnote{While not explicitly stated in the paper, this bound holds with high probability, as the algorithm uses a hash table.} In all these solutions, equality of two strings can be checked in worst-case
constant time. We provide the final answer by further improving the update time to $O(\log n)$ with
high probability and showing that either an update or an equality test requires $\Omega(\log n)$ 
time, even if one allows amortization and randomization (either Las Vegas or Monte Carlo).%
\begin{shortv}%
\footnote{This extended abstract aims to present a brief overview of our results. The references to formal definitions, and detailed descriptions included in the full version are put on the margin.}
\end{shortv}

We note that it is very simple to achieve $O(\log n)$ update time for maintaining a non-persistent
family of strings under concatenate and split operations, if we allow the equality queries
to give an incorrect result with polynomially small probability.
We represent every string by a balanced search tree with characters in the leaves and every
node storing a fingerprint of the sequence represented by its descendant leaves. However, it is not clear how to make
the answers always correct in this approach (even if the time bounds should
only hold in expectation). Furthermore, it seems that both computing the longest common
prefix of two strings of length $n$ and comparing them lexicographically requires $\Omega(\log^2n)$ time in this approach.
This is a serious drawback, as the lexicographic comparison is a crucial ingredient in our applications related to pattern matching.

\paragraph{Straight-line Grammars}
A straight-line grammar (also known as a straight-line program) is a context-free grammar
generating exactly one string. Such grammars are a very convenient abstraction of text
compression. For instance, it is known that given a LZ77 representation of a text of length $n$
consisting of $z$ phrases, we can construct a straight-line grammar of size
$O(z\log(n/z))$~\cite{Charikar,Rytter} that generates the given text. Therefore, we might hope to efficiently process
compressed texts by working with their straight-line grammars.

However, operating on such
grammars is nontrivial. Even the seemingly simple task of checking if two non-terminals derive
the same string is challenging, as the derived strings might have exponential
length in terms of the size of the grammar $g$. Using combinatorial properties of strings,
Hirshfeld et al.~\cite{Hirshfeld} and Plandowski~\cite{Plandowski} independently showed
how to check in polynomial time whether two straight-line grammars describe the same string.
The time complexity of both implementations is rather high (for example, although not stated
explicitly in~\cite{Plandowski}, it seems to be $O(g^4)$). This was recently improved
by Jeż to $O(g\log n)$, where $n$ is the length of the described string.

This problem can be easily solved with a data structure for maintaining a family of persistent strings
to obtain the same $O(g\log n)$ complexity for 
preprocessing a straight-line program
for checking if two non-terminals derive the same string (and if not, computing the longest common prefix of the two derived strings) in constant time, assuming
constant-time arithmetic on the lengths of the generated strings.
While matching the best previously known time bound, our solution has the advantage of being dynamic:
we are able to efficiently add new non-terminals to the grammar and ask queries about
already existing non-terminals.
We believe that our results have further consequences for algorithmics of straight-line grammars.
See~\cite{LohreySurvey} for a survey of the area.

\paragraph{Pattern Matching in Dynamic Texts}
Finding all $occ$ occurrences of a length $\ell$ pattern in a static text can be done in $O(\ell+occ)$ time
using suffix trees, which can be constructed in linear time~\cite{McCreight:1976,SuffixTree}.
Suffix trees can be made partially dynamic by allowing prepending or appending single characters
to the text~\cite{onlinesuffix}. In the fully dynamic setting, where insertions or deletions can
occur in the middle of the text, the problem becomes much more difficult.
However, if we are only interested in inserting or deleting single characters in $O(\log n)$ time,
queries can be implemented in $O(\ell+ occ \log m + m \log \ell)$ time~\cite{Gu:1994},
where $m$ is again the number of operations executed so far.
Ferragina~\cite{Ferragina:1997} has shown how to handle insertions and deletions of blocks of text with the query
time roughly proportional to the number of updates made so far. This was soon improved to support
updates in $O(\sqrt{n})$ time and queries in $O(\ell+occ)$ time~\cite{DBLP:journals/siamcomp/FerraginaG98}.
The first polylogarithmic-time data structure for this problem was presented by Sahinalp and Vishkin~\cite{Sahinalp:1996}, who achieved $O(\log^3 n)$ time for an update and optimal
$O(\ell+occ)$ time for a query. Later, the update time was improved to
$O(\log^2n \log \log n \log^* n)$ at the expense of increasing query time to
$O(\ell+ occ + \log n \log \log n)$ by Alstrup et al.~\cite{Alstrup}. By building on our structure
for maintaining a collection of dynamic strings, we are able to further improve the update time
to $O(\log^2n)$ (we remove the $O(\log \log n \log^{*} n)$ factor) with optimal query time (we remove the $\log n \log \log n$ additive term).
We also extend pattern matching to the persistent setting, in which case the update times are preserved
and the query time becomes $O(\ell+\log^2 n + occ\log n)$.

\paragraph{Pattern Matching in the History of an Edited Text}
We consider a problem of maintaining a text that is being edited.
At each step either a character is added, some block of text is removed or moved to a different location (cut-paste).
We develop a data structure that can be updated in $O(\log t \log \log t)$ time with high probability and
supports pattern matching queries.
Such a query locates and reports first $occ$ occurrences of a length-$\ell$ pattern in the whole history of a text in chronological
order in $O(\ell \log t \log \log t + occ\log t)$ time.
To the best of our knowledge, we are the first to consider this natural problem.

\paragraph{Dynamic Tree Isomorphism}
As shown in~\cite{treei}, a data structure for maintaining a family of sequences can also be used for solving dynamic tree isomorphism problem.
In this problem, a family of trees is maintained and can be updated by adding/removing edges and adding vertices.
Moreover, each two trees can be tested for being isomorphic.
The result of~\cite{treei} can be immediately improved with the data structure of
Alstrup et al.~\cite{Alstrup}, and our result improves it further by a $O(\log^{*} n)$ factor
to decrease the update time to $O(\log^2 n)$ with high probability, where $n$ is the total number of vertices in all trees, while keeping the query time constant.

\subsection{Related Techniques}
Our structure for dynamic string equality is based on maintaining a hierarchical representation
of the strings, similarly to the previous works~\cite{Mehlhorn,Alstrup}.
In such an approach the representation should be, in a certain sense, locally consistent, meaning that
two equal strings have identical representations and the representations of two strings
can be joined to form the representation of their concatenation at a relatively small cost.
The process of creating such a representation can be imagined as parsing: breaking the string into blocks,
replacing every block by a new character, and repeating the process on the new shorter
string.

Deciding how to partition the string into blocks is very closely related to the
list ranking problem, where we are given a linked list of $n$ processors and every processor wants
to compute its position on the list. This requires resolving contention, that
is, choosing a large independent subset of the processors. A simple approach,
called the random mate algorithm~\cite{TreeContraction,Vishkin}, is to give a random bit
to every processor and select processors having bit $0$ such that their successor has bit $1$.
A more complicated (and slower by a factor of $O(\log^*n)$) deterministic solution, called deterministic
coin tossing, was given by Cole and Vishkin~\cite{DeterministicTossing}. Such symmetry-breaking
method is the gist of all previous solutions for dynamic string equality. Mehlhorn et al.~\cite{Mehlhorn}
used a randomized approach (different than random mate) and then applied deterministic coin tossing
to develop the deterministic version. Then, Alstrup et al.~\cite{Alstrup} further optimized the
deterministic solution.

The strategy of breaking a string into blocks and repeating on the shorter string has been
recently very successfully used by Jeż, who calls it the recompression method~\cite{JezRecompression},
to develop more efficient and simpler algorithms for a number of problems on compressed strings and for solving
equations on words. In particular, a straightforward consequence of his fully compressed pattern
matching algorithm~\cite{JezFully} is an algorithm for checking if two straight-line grammars
of total size $g$ describe the same string of length $n$ in time $O(g\log n)$.
However, he considers only static problems, which allows him to appropriately choose the partition 
by looking at the whole input.

Very recently, Nishimoto et al.~\cite{DBLP:journals/corr/NishimotoIIBT15} further extended
some of the ideas of Alstrup et al.~\cite{Alstrup} to design a space-efficient dynamic index.
They show how to maintain a string in space roughly proportional to its Lempel-Ziv factorization,
while allowing pattern matching queries and inserting/deleting substrings in time roughly proportional
to their lengths.

To obtain an improvement on the
work of Alstrup et al.~\cite{Alstrup}, we take a step back and notice that while partitioning
the strings into blocks is done deterministically, obtaining an efficient implementation
requires hash tables, so the final solution is randomized anyway. This suggests that, possibly, it does
not make much sense to use the more complicated deterministic coin tossing, and applying
the random mate paradigm might result in a faster solution. We show
that this is indeed the case, and the obtained structure is not only faster, but also conceptually
simpler (although an efficient implementation requires a deeper insight and solving a few new
challenges, see \cref{sec:comp} for a detailed discussion).

\subsection{Organization of the Paper}
In \cref{sec:preliminaries} we introduce basic notations.
Then, in \cref{sec:collection} we present the main ideas of our dynamic string collections
and we sketch how they can be used to give a data structure that supports equality tests in $O(1)$ time.
We also discuss differences between our approach and the previous ones (in \cref{sec:comp}).
The details of our implementation are given in the following three sections:
\cref{sec:pointers} describes iterators for traversing parse trees of grammar-compressed strings.
Then, in~\cref{sec:contexti} we prove some combinatorial facts concerning context-insensitive decompositions.
Combined with the results of previous sections, these facts let us handle equality tests and updates on the dynamic string collection in a clean and uniform way
in \cref{sec:adding}.

Next, in~\cref{sec:lb} we provide a lower bound for maintaining a family of non-persistent strings.
Then, in~\cref{sec:om}, we show how to extend the data structure developed in~\cref{sec:collection,sec:pointers,sec:contexti,sec:adding} in order to support lexicographic comparisons and longest common prefix queries.
\cref{sec:anchored} introduces some basic tools that are related to answering pattern-matching queries in the following three sections.
In \cref{sec:pm} we show that the our dynamic string collection can be extended with pattern matching queries against all strings in the data structure.
However, this comes at a cost of making the data structure non-persistent.
In \cref{sec:ppm} we address this issue presenting a persistent variant with an extra additive term in the query time.
Finally, in \cref{sec:timeline} we show how to use our data structures to support pattern matching queries in all versions of an edited text.

We conclude with content deferred from \cref{sec:preliminaries,sec:collection}: 
discussion of technical issues related to the model of computation in \cref{app:model} and a rigorous formalization of our main concepts in \cref{app:formalsymbols}.
\section{Preliminaries}\label{sec:preliminaries}
Let $\Sigma$ be a finite set of \emph{letters} that we call an \emph{alphabet}.
We denote by $\Sigma^{*}$ a set of all finite \emph{strings} over $\Sigma$,
and by $\Sigma^+$ all the non-empty strings.
Our data structure maintains a family of strings over some integer alphabet $\Sigma$.
Internally, it also uses strings over a larger (infinite) alphabet $\symbols \supseteq \Sigma$.
We say that each element of $\symbols$ is a \emph{symbol}.
In the following we assume that each string is over $\symbols$, but the exact description of the set $\symbols$ will be given later.

Let $\str = \str_1 \cdots \str_k$ be a string (throughout the paper we assume that the string indexing is 1-based).
For $1\le a \le b \le |\str|$ a word $u=\str_a \cdots \str_b$ is called a \emph{substring} of $\str$.
By $\str[a..b]$ we denote the \emph{occurrence} of $u$ in $\str$ at position $a$, called a \emph{fragment} of $\str$.
We use shorter notation $\str[a]$, $\str[..b]$, and $\str[a..]$, to denote $\str[a..a]$, $\str[1..b]$, and $\str[a..|\str|]$, respectively.
Additionally, we slightly abuse notation and use $\str[a+1..a]$ for $0 \le a \le |\str|$ to represent empty fragments. 

We say that a string $\str$ is \emph{1-repetition-free} if its every two consecutive symbols are distinct.
The concatenation of two strings $\str_1$ and $\str_2$ is denoted by $\str_1 \cdot \str_2$ or simply $\str_1 \str_2$.
For a symbol $x$ and $k \in \mathbb{Z}_{+}$, $x^k$ denotes the string of length $k$ whose every element is equal to $x$.
We denote by $\str^R$ the reverse string $w_kw_{k-1}\cdots w_1$.



To compute the \emph{run-length encoding} of a string $\str$, we compress maximal substrings that consist of equal symbols.
Namely, we divide $\str$ string into \emph{blocks}, where each block is a maximal substring of the form $a^k$ (i.e., no two adjacent blocks can be merged).
Then, each block $a^k$, where $k > 1$, is replaced by a pair $(a, k)$.
The blocks with $k=1$ are kept intact.
For example, the run-length encoding of $aabaaacc$ is $(a,2)b(a,3)(c,2)$.

We use the word RAM model with multiplication and randomization, assuming that the machine word has $\mword$ bits.
Some of our algorithms rely on certain numbers fitting in a machine word, which is a~weak
assumption because a word RAM machine with word size $\mword$ can simulate a word RAM machine
with word size $O(\mword)$ with constant factor slowdown.
A more detailed discussion of our model of computation is given in~\cref{app:model}.

\section{Overview of Our Result}\label{sec:collection}
In this section we show the main ideas behind the construction of our data structure
for maintaining a family of strings that can be tested for equality.
The full details are then provided in~\cref{sec:pointers,sec:contexti,sec:adding}.
Our data structure maintains a dynamic collection $\coll$ of non-empty strings.
The collection is initially empty and then can be extended using the following updates:
\begin{itemize}
  \item $\makeop(\str)$, for $\str\in \Sigma^+$, results in $\coll:=\coll\cup \{\str\}$.
   \item $\concop(\str_1,\str_2)$, for $\str_1,\str_2\in \coll$, results in $\coll:=\coll\cup \{\str_1\str_2\}$.
  \item $\splitop(\str,k)$, for $\str\in \coll$ and $1\leq k<|\str|$, results in
    $\coll:=\coll\cup \{\str[..k], \str[(k + 1)..]\}$.
\end{itemize}
Each string in $\coll$ has an integer \emph{handle} assigned by the update which created it.
This way the arguments to $\concop$ and $\splitop$ operations have constant size.
Moreover, we make sure that if an update creates a string which is already present in $\coll$, the original handle is returned.
Otherwise, a fresh handle is assigned to the string. We assume that handles are consecutive integers starting from $0$
to avoid non-deterministic output.
Note that in this setting there is no need for an extra $\eqop$ operation, as it can be implemented by
testing the equality of handles.

The w.h.p. running time can be bounded by $O(|\str|+\log n)$ for $\makeop$ and $O(\log n)$ for $\splitop$ and $\concop$,
where $n$ is the total length of strings in $\coll$; see~\cref{thm:data_structure} for details.

\subsection{Single String Representation}\label{sec:single_string}
In order to represent a string, we build a straight-line grammar that generates it.
\begin{definition}
A context-free grammar is a \emph{straight-line grammar} (a straight-line program) if for every non-terminal $S$
there is a single production rule $S \rightarrow \str$ (where $\str$ is a string of symbols),
and each non-terminal can be assigned a level in such a way that if $S \rightarrow \str$, then the levels of non-terminals in $\str$ are smaller than the level of $S$.
\end{definition}
It follows that a straight-line grammar generates a single string.
Let us describe how we build a grammar representing a given string $\str$.
This process uses two functions that we now introduce.

We first define a function \rle{} based on the run-length encoding of a string $\str$.
To compute $\rle(\str)$, we divide $\str$ into maximal blocks consisting of adjacent equal symbols and replace each block of size at least two with a new non-terminal. Multiple blocks corresponding to the same string are replaced by the same non-terminal.
Note that the result of $\rle{}$ is 1-repetition-free.

The second function used in building the representation of a string is $\compress_i{}$, where $i \geq 1$ is a parameter.
This function also takes a string $\str$ and produces another string.
If $\str$ is 1-repetition-free, then the resulting string is constant factor shorter (in expectation).

Let us now describe the details.
We first define a family of functions $\hs_i$ ($i\ge 1$) each of which uniformly at random assigns $0$ or $1$ to every possible symbol.
To compute $\compress_i(\str)$, we define blocks in the following way.
If there are two adjacent symbols $a_j a_{j+1}$ such that $\hs_i(a_j) = 0$ and $\hs_i(a_{j+1}) = 1$, we mark them as a block.
All other symbols form single-symbol blocks.
Then we proceed exactly like in $\rle$ function.
Each block of length two is replaced by a new non-terminal.
Again, equal blocks are replaced by the same non-terminal.

Note that the blocks that we mark are non-overlapping.
For example, consider $\str = cbabcba$ and assume that $\hs_1(a) = 1$, $\hs_1(b) = 0$, and $\hs_1(c) = 1$.
Then the division into blocks is $c|ba|bc|ba$, and $\compress_1(cbabcba) = cS_1S_2S_1$.
In a 1-repetition-free string, each two adjacent symbols are different, so they form a block with probability $\frac14$.
Thus, we obtain the following:
\begin{fact}\label{lem:compress}
If $\str$ is a 1-repetition-free string, then $\mathbb{E}(|\compress_i(\str)|) \leq \frac{3}{4} |\str| + \frac14$ for every $i \geq 1$.
\end{fact}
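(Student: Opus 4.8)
The plan is to analyze the marked blocks combinatorially and then take expectations over the random hash function $\hs_i$. First I would observe that every symbol of $\str$ belongs to exactly one block, and that a block has length two precisely when it is a marked pair $a_j a_{j+1}$ with $\hs_i(a_j)=0$ and $\hs_i(a_{j+1})=1$; all remaining symbols form singleton blocks. Since $|\compress_i(\str)|$ equals the number of blocks, and each length-two block replaces two symbols by one, we get the exact identity
\begin{equation}
|\compress_i(\str)| = |\str| - M,
\end{equation}
where $M$ is the number of marked pairs. Hence it suffices to show $\mathbb{E}(M) \geq \tfrac14|\str| - \tfrac14$.

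Next I would set up indicator variables. For $j = 1, \dots, |\str|-1$, let $X_j$ be the indicator of the event that the pair $\str[j..j+1]$ is marked, i.e.\ $\hs_i(\str_j)=0$ and $\hs_i(\str_{j+1})=1$. The subtlety is that the $X_j$ are \emph{not} independent and, more importantly, that $M \neq \sum_j X_j$ in general, because marked pairs must be non-overlapping: if both $\str[j..j+1]$ and $\str[j+1..j+2]$ would qualify, only one is actually taken. However — and this is the key point — by construction the rule never lets two qualifying pairs overlap: if $\hs_i(\str_j)=0,\hs_i(\str_{j+1})=1$, then $\str[j+1..j+2]$ cannot qualify since it would need $\hs_i(\str_{j+1})=0$. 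So in fact at most one of any two consecutive pairs can be in the ``qualifying'' state simultaneously, which means the qualifying pairs are automatically disjoint and $M = \sum_{j=1}^{|\str|-1} X_j$ exactly. This is where $1$-repetition-freeness enters only indirectly; what actually makes this clean is the form of the pattern $(0,1)$.

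Then I would compute $\mathbb{E}(X_j)$. Because $\str$ is $1$-repetition-free, $\str_j \neq \str_{j+1}$, so $\hs_i(\str_j)$ and $\hs_i(\str_{j+1})$ are independent uniform bits; hence $\mathbb{E}(X_j) = \Pr[\hs_i(\str_j)=0]\cdot\Pr[\hs_i(\str_{j+1})=1] = \tfrac14$. By linearity of expectation, $\mathbb{E}(M) = \tfrac14(|\str|-1)$. Substituting into the identity above gives
\begin{equation}
\mathbb{E}(|\compress_i(\str)|) = |\str| - \tfrac14(|\str|-1) = \tfrac34|\str| + \tfrac14,
\end{equation}
as claimed (in fact with equality), and this holds uniformly for every $i \geq 1$.

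The main obstacle, as flagged above, is not the expectation computation but making rigorous the claim that $M = \sum_j X_j$, i.e.\ that the greedy non-overlapping selection of blocks coincides with simply taking every qualifying pair. I would want to state and verify carefully the small lemma that no two qualifying pairs can overlap (which follows immediately from the asymmetry of the pattern: a shared symbol would have to be both $0$ and $1$), so that the ``All other symbols form single-symbol blocks'' clause in the definition of $\compress_i$ is unambiguous and the block count is exactly $|\str| - M$. Everything else is routine: linearity of expectation plus one elementary probability computation that uses $1$-repetition-freeness to get independence of the two relevant hash values.
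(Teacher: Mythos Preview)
Your proof is correct and follows exactly the reasoning the paper sketches just before the statement: each of the $|\str|-1$ adjacent pairs qualifies with probability $\tfrac14$ (using 1-repetition-freeness for independence), qualifying pairs never overlap by the asymmetry of the $(0,1)$ pattern, and linearity of expectation gives $\mathbb{E}(|\compress_i(\str)|)=|\str|-\tfrac14(|\str|-1)=\tfrac34|\str|+\tfrac14$. You have simply made explicit the steps the paper leaves implicit, and your observation that equality actually holds is correct.
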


In order to build a representation of a string $\str$, we repeatedly apply $\rle{}$ and $\compress_i{}$, until a string of length one is obtained.
We define:
\begin{align*}
\shrink{i}(\str) & := \begin{cases}
\rle(\str) & \text{if $i$ is odd} \\
\compress_{i/2}(\str) & \text{if $i$ is even}
\end{cases}\\
\cshrink{0}(\str) & := \str\\
\cshrink{i+1}(\str) & := \shrink{i+1}(\cshrink{i}(\str))
\end{align*}
The \emph{depth} of the representation of $\str$, denoted by $\depth(\str)$, is the smallest $i$, such that $|\cshrink{i}(w)| = 1$.
We say that the representation contains $\depth(\str)+1$ \emph{levels} numbered $0, 1, \ldots, \depth(\str)$.
Observe that $\depth(\str)$ is actually a random variable, as its value depends on the choice of random bits $\hs_i(\cdot)$.
Using \cref{lem:compress}, we obtain the following upper bound on the depth:

\begin{restatable}{lemma}{smalldepth}\label{lem:small_depth}
If $\str$ is a string of length $n$ and $r\in \mathbb{R}_{\ge 0}$, then
$\mathbb{P}(\depth(\str) \leq 8(r + \ln n)) \geq 1 - e^{-r}.$
\end{restatable}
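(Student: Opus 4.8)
The plan is to track the length $L_i := |\cshrink{i}(\str)|$ of the string obtained at level $i$, show that it shrinks geometrically \emph{in expectation} once we factor out the fixed point of the underlying recursion, and then convert this into a high‑probability statement via Markov's inequality, exploiting that $L_i$ is a positive integer: a geometrically small value of $\mathbb{E}[L_i-1]$ already forces $L_i=1$ with overwhelming probability.

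First, some structural bookkeeping, then the expectation bound. Since $\rle$ only collapses maximal blocks of $\ge 2$ equal symbols into single non‑terminals, it never increases length, so $L_{2k-1}\le L_{2k-2}$; moreover $\cshrink{2k-1}(\str)=\rle(\cshrink{2k-2}(\str))$ is $1$‑repetition‑free. Also, both $\rle$ and $\compress_i$ fix any length‑$1$ string, so $\depth(\str)=\min\{i:L_i=1\}$ is well defined and $\{\depth(\str)\le 2k\}=\{L_{2k}=1\}$. At an even level $2k$ we apply $\compress_k$ to the $1$‑repetition‑free string $\cshrink{2k-1}(\str)$; as $\hs_k$ is independent of $\hs_1,\dots,\hs_{k-1}$ and hence of $\cshrink{2k-1}(\str)$, \cref{lem:compress} applied conditionally on $\cshrink{2k-1}(\str)$ gives $\mathbb{E}[L_{2k}]\le\tfrac34\,\mathbb{E}[L_{2k-1}]+\tfrac14\le\tfrac34\,\mathbb{E}[L_{2k-2}]+\tfrac14$. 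Writing $a_k:=\mathbb{E}[L_{2k}]$, this reads $a_k-1\le\tfrac34(a_{k-1}-1)$ with $a_0=n$, so $\mathbb{E}[L_{2k}-1]=a_k-1\le\bigl(\tfrac34\bigr)^{k}(n-1)$.

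Now the tail bound and the choice of $k$. As $L_{2k}-1$ is a nonnegative integer, Markov's inequality yields $\mathbb{P}(\depth(\str)>2k)=\mathbb{P}(L_{2k}\ge 2)=\mathbb{P}(L_{2k}-1\ge 1)\le\mathbb{E}[L_{2k}-1]\le\bigl(\tfrac34\bigr)^{k}(n-1)$. Take $k$ to be the least integer with $\bigl(\tfrac34\bigr)^{k}(n-1)\le e^{-r}$, i.e.\ $k\ge(r+\ln(n-1))/\ln\tfrac43$ (and $k=0$ when $n=1$); then $\mathbb{P}(\depth(\str)\le 2k)\ge 1-e^{-r}$. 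A short calculation, using $1/\ln\tfrac43<3.5$ and $\ln(n-1)\le\ln n$, shows $2k\le 8(r+\ln n)$ for all $n\ge 1$, $r\ge 0$; the constant $8$ is deliberately generous (anything above $2/\ln\tfrac43\approx 6.95$, plus slack for the rounding, for $n-1$ versus $n$, and for small $n$, suffices). Since $\{\depth(\str)\le 2k\}\subseteq\{\depth(\str)\le 8(r+\ln n)\}$, the claim follows.

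The main obstacle is conceptual rather than technical: one has to notice that the affine recursion $x\mapsto\tfrac34 x+\tfrac14$ for the expected length becomes purely geometric after subtracting its fixed point $1$, and — crucially — that applying Markov's inequality to the \emph{integer} variable $L_{2k}-1$ turns the geometrically small expectation into a tail of exactly the shape $\bigl(\tfrac34\bigr)^{k}n$, which is what simultaneously produces the $\ln n$ term and the $e^{-r}$ failure probability. The remaining ingredients — monotonicity of $\rle$, independence of the hash functions $\hs_i$ across levels, and the arithmetic pinning down the constant $8$ — are routine.
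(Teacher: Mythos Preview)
Your proof is correct and takes a genuinely different route from the paper. The paper invokes a black-box multiplicative drift theorem (Theorem~7 of Lehre and Witt, 2013): it sets $X_t=|\cshrink{2t}(\str)|-1$, verifies the conditional drift $\mathbb{E}[X_t-X_{t+1}\mid X_0,\dots,X_t]\ge\tfrac14 X_t$, and reads off the tail bound on the first hitting time of $0$. You instead unroll the recursion directly to $\mathbb{E}[L_{2k}-1]\le(3/4)^k(n-1)$ and then observe that, because $L_{2k}-1$ is a nonnegative \emph{integer}, Markov's inequality already gives $\mathbb{P}(L_{2k}\ge 2)\le(3/4)^k(n-1)$, which is exactly the tail the drift theorem would have produced. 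In effect you have reproved the special case of multiplicative drift that is needed here, exploiting the integrality of the state to replace the general argument by a one-line Markov bound. Your approach is more elementary and self-contained; the paper's is shorter by delegating to a known theorem, and would scale more readily if the state space were not integer-valued. The ``short calculation'' you allude to for $2k\le 8(r+\ln n)$ does go through---the minimum of $8\ln n-\tfrac{2}{\ln(4/3)}\ln(n-1)$ over integers $n\ge 2$ is attained near $n=8$ and exceeds $2$, which absorbs the $+2$ from the ceiling---but since you explicitly flag that step as routine, it would be worth spelling out one line of it.
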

In the proof we use the following theorem.

\begin{theorem}[Theorem 7 in \cite{Lehre:2013}]\label{thm:drift}
Let $(X_t)_{t \geq 0}$ be a stochastic process over some state space $S = \{0\} \cup [x_{min}, x_{max}]$, where $x_{min} > 0$.
Suppose that there exists some $\delta$, where $0 < \delta < 1$, such that $\mathbb{E}(X_t - X_{t+1} \mid (X_0, \ldots, X_t)) \geq \delta X_t$.
Let $T:= \min\{t \mid X_t = 0\}$ be the first hitting time.
Then $$\mathbb{P}\left(T \geq \frac{\ln (X_0 / x_{min}) + r}{\delta}\right) \leq e^{-r}$$ for all $r > 0$.
\end{theorem}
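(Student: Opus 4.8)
The plan is to apply the drift theorem (\cref{thm:drift}) to the sequence of lengths $|\cshrink{i}(\str)|$ observed only at \emph{even} levels $i$. The point of restricting to even levels is that two consecutive steps of the construction always consist of one $\rle$ step followed by one $\compress$ step: the $\rle$ step never increases the length and makes the string $1$-repetition-free, after which \cref{lem:compress} guarantees that the following $\compress$ step shrinks it by a constant factor in expectation. This yields a multiplicative drift with ratio $\tfrac34$, which \cref{thm:drift} turns into the desired tail bound.

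First I would dispense with the trivial case $n = 1$: then $|\cshrink{0}(\str)| = 1$, so $\depth(\str) = 0 \le 8(r + \ln n)$ with probability $1$; also, for $r = 0$ the claimed bound $1 - e^{-r} = 0$ is vacuous. So assume $n \ge 2$ and $r > 0$, and define the process $X_t := |\cshrink{2t}(\str)| - 1$ for $t \ge 0$. Since neither $\rle$ nor $\compress_i$ ever increases the length, and $X_0 = n-1$, each $X_t$ lies in $\{0, 1, \dots, n-1\} \subseteq \{0\} \cup [1, n-1]$; furthermore $X_t = 0$ iff $|\cshrink{2t}(\str)| = 1$, and once this holds it persists, since a length-$1$ string is a fixed point of both $\rle$ and $\compress_i$. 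Let $T := \min\{t : X_t = 0\}$; then $|\cshrink{2T}(\str)| = 1$, hence $\depth(\str) \le 2T$ by minimality of $\depth$.

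The crux is the drift inequality $\mathbb{E}(X_t - X_{t+1} \mid X_0, \dots, X_t) \ge \tfrac14 X_t$. By construction, level $2t+1$ is odd and level $2t+2$ is even, so $\cshrink{2t+1}(\str) = \rle(\cshrink{2t}(\str))$ and $\cshrink{2t+2}(\str) = \compress_{t+1}(\cshrink{2t+1}(\str))$. I would condition on the full configuration $\cshrink{2t}(\str) = y$: then $z := \rle(y)$ is a $1$-repetition-free string with $|z| \le |y|$ whose length is determined by $y$, while $\cshrink{2t+2}(\str) = \compress_{t+1}(z)$ depends on the random bits $\hs_{t+1}$, which are independent of $\hs_1, \dots, \hs_t$ and hence of everything used in forming $\cshrink{0}(\str), \dots, \cshrink{2t}(\str)$. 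By \cref{lem:compress} applied to the fixed $1$-repetition-free string $z$, $\mathbb{E}(|\compress_{t+1}(z)| \mid \cshrink{2t}(\str)=y) \le \tfrac34 |z| + \tfrac14 \le \tfrac34 |y| + \tfrac14$, so $\mathbb{E}(X_{t+1} \mid \cshrink{2t}(\str) = y) \le \tfrac34 |y| + \tfrac14 - 1 = \tfrac34(|y| - 1) = \tfrac34 X_t$. Averaging over the configurations $y$ consistent with a fixed value of $(X_0, \dots, X_t)$ — all of which satisfy $|y| - 1 = X_t$ — gives $\mathbb{E}(X_{t+1} \mid X_0, \dots, X_t) \le \tfrac34 X_t$, i.e. the drift condition with $\delta = \tfrac14$ (the case $X_t = 0$ being immediate).

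It then remains to invoke \cref{thm:drift} with $\delta = \tfrac14$, $x_{min} = 1$, $x_{max} = n-1$, and $X_0 = n-1$, which gives $\mathbb{P}(T \ge 4(\ln(n-1) + r)) \le e^{-r}$; using $\ln(n-1) \le \ln n$ together with $\depth(\str) \le 2T$ yields $\mathbb{P}(\depth(\str) > 8(r + \ln n)) \le \mathbb{P}(T > 4(r + \ln n)) \le e^{-r}$, which is equivalent to the claimed inequality. I expect the only genuinely delicate point to be the conditioning in the drift step: one must argue cleanly that the bits $\hs_{t+1}$ driving the $\compress$ step are independent of the history, and that conditioning on $(X_0, \dots, X_t)$ may be refined to conditioning on the actual string $\cshrink{2t}(\str)$. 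The parity bookkeeping and the elementary monotonicity $|\rle(y)|, |\compress_i(y)| \le |y|$ are routine.
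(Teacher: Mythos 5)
You have not proved the stated result at all. The statement in question is \cref{thm:drift}, the multiplicative drift theorem itself, which the paper imports as a black box from the external reference (Theorem~7 of Lehre and Witt); the paper contains no proof of it, and your argument does not supply one either. What you have written is instead a proof of \cref{lem:small_depth} — the lemma that \emph{applies} the drift theorem to the depth process $X_t = |\cshrink{2t}(\str)|-1$ — and that argument is essentially identical to the paper's own proof of that lemma (same process, same $\delta = \tfrac14$, same parity bookkeeping, with some added care about conditioning and the trivial case $n=1$).

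Had you been asked to prove \cref{thm:drift} itself, you would need a genuinely different argument that does not circularly invoke \cref{thm:drift}: the standard route is to consider the supermartingale $Y_t := e^{\delta t} X_t$ (or to bound $\mathbb{E}[X_t]$ by iterating the drift inequality, $\mathbb{E}[X_t] \le (1-\delta)^t X_0 \le e^{-\delta t} X_0$), then apply Markov's inequality using the fact that $X_t \ge x_{\min}$ whenever $X_t \ne 0$: for $t^* = \lceil (\ln(X_0/x_{\min})+r)/\delta \rceil$, $\mathbb{P}(T > t^*) = \mathbb{P}(X_{t^*} \ge x_{\min}) \le \mathbb{E}[X_{t^*}]/x_{\min} \le e^{-\delta t^*} X_0/x_{\min} \le e^{-r}$. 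None of that appears in your proposal, so as a proof of the stated theorem it is off-target.
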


\begin{proof}[Proof of \cref{lem:small_depth}]
Let $X_i$ be a random variable defined as $X_i := |\cshrink{2i}(\str)|-1$.
We have that $X_0 = |\cshrink{0}(\str)|-1 = |\str|-1 = n-1$.

By \cref{lem:compress}, $\mathbb{E}(|\cshrink{2i+2}(\str)|) \leq \frac34|\cshrink{2i+1}| + \frac14$.
Clearly, $|\rle(\str)| \leq |\str|$, which implies $|\cshrink{2i+2}(\str)| \leq \frac34|\cshrink{2i}| + \frac14$.
Thus, $\mathbb{E}(X_{i+1}) \leq \frac34 \mathbb{E}(X_i)$.
Since every call to $\compress_i$ uses different random bits $\hs_i(\cdot)$, distinct calls are independent.
Thus,  we get that $\mathbb{E}(X_t - X_{t+1} \mid (X_0, \ldots, X_t)) = \mathbb{E}(X_t - X_{t+1}) \geq \frac14 X_t$.
By applying \cref{thm:drift}, for every $r > 0$ we get
$$\mathbb{P}(T \geq 4(r + \ln n)) \leq e^{-r}.$$
Here, $T$ is the smallest index, such that $X_T = 0$, which means $|\cshrink{2T}(\str)| = 1$.
Hence, $\depth(\str) \leq 2T$.
The Lemma follows.
\end{proof}

It follows that in a collection $\coll$ of $\poly(n)$ strings of length $\poly(n)$, we have $\depth(\str) = O(\log n)$ 
for every $w\in \coll$ with high probability.

Finally, we describe how to build a straight-line grammar generating the string $\str$.
Observe that every call to $\shrink{i}$ replaces some substrings by non-terminals.
Each time a string $\strb$ is replaced by a non-terminal $S$, we add a production $S \rightarrow \strb$ to the grammar.
Once we compute a string $\cshrink{i}(\str)$ that has length $1$, the only symbol of this string is the start symbol of the grammar.

\subsection{Representing Multiple Strings}\label{sec:persistent_ds}
We would now like to use the scheme from previous section to build a representation of a collection of strings $\coll$.
Roughly speaking, we build a union $\grammar$ of the straight-line grammars representing individual strings.
Formally, $\grammar$ is not a grammar because it does not have a start symbol, but it becomes a straight-line grammar once
an arbitrary non-terminal is chosen a start symbol.
Still, we abuse the naming slightly and call $\grammar$ a grammar.

\begin{shortv}
We say that two production rules $S_1 \rightarrow \str_1$ and $S_2 \rightarrow \str_2$ are \emph{equivalent} if $\str_1 = \str_2$.
We shall make sure that $\grammar$ does not contain two distinct equivalent production rules.
Actually, we even ensure that no two symbols of $\grammar$ generate the same string.

A call to $\shrink{i}(\str)$ produces a string, in which some substrings (blocks) in $\str$ are replaced by single non-terminals.
We say that each such replacement is a \emph{collapse}.
In the process of building the representation of a single string, for each collapse we add production rules to $\grammar$.
We also assure that equal blocks are collapsed to the same non-terminal.

To build a representation of a collection of strings, we use the same approach globally for all strings.
Hence, we have a global $\grammar$ that contains a list of all previously generated production rules.
Whenever some block is collapsed and we are to add a new production rule $S \rightarrow \str$, we first check if $\grammar$ already contains an equivalent production rule $S' \rightarrow \str$.
If this is the case, the function replaces $w$ with $S'$ and does not update $\grammar$.
Otherwise, the new production rule $S \rightarrow \str$ is added to $\grammar$, and the substring $\str$ is replaced by $S$.
For simplicity, throughout this paper we assume that there is a single grammar $\grammar$ that is being built and it can be accessed from any function that we describe.
\end{shortv}

As the collection of strings $\coll$ is growing over time, we need to extend the scheme from \cref{sec:single_string} to support multiple strings in such a dynamic setting.
Since our construction is randomized, some care needs to be taken to formalize this process properly.
This is done in full details in \cref{app:formalsymbols}.
In this section we provide a more intuitive description.

Our first goal is to assure that if a particular string is generated multiple times by the same instance of the data structure,
its representation is the same each time.
In order to do that, 
we ensure that if a block of symbols $w'$ is replaced by a non-terminal during the construction of the representation of some string of $\coll$, 
a new non-terminal is created only when the block $w'$ is encountered for the first time.
To this end, we use a dynamic dictionary (implemented as a hash table) that maps already encountered blocks to the non-terminals.
Combined with our definition of $\shrink{i}$, this suffices to ensure that each non-terminal in the grammar generates a distinct string.

To keep this overview simple, we assume that the non-terminals come from an infinite set $\symbols$.
The string generated by $S\in\symbols$ is denoted by $\sstr(S)$.
The set $\symbols$ is formally defined so that it is independent from the random choices of the $h_i$ functions,
which leads to some symbols being inconsistent with those choices. 
In particular many symbols may generate the same string, but exactly one of them is consistent. 
Hence, each string $\str\in\Sigma^+$ can be assigned a unique symbol $\sstr^{-1}(\str)$ that generates $\str$.
Other symbols do not appear in the grammar we construct, i.e., $\grammar \subseteq \sstr^{-1}(\Sigma^+)\subseteq\symbols$.

We say that $\grammar$ \emph{represents} a string $w$ if $\sstr^{-1}(w) \in \grammar$.
A grammar $\grammar$ is called \emph{closed} if all symbols appearing on the right-hand side of a production rule
also appear on the left-hand side.
Our data structure maintains a grammar $\grammar$ which satisfies the following invariant:
\begin{invariant}\label{inv:consistency}
$\grammar$ is closed and represents each $w\in \coll$.
\end{invariant}

Following earlier works~\cite{Alstrup, Mehlhorn}, our data structure operates on signatures which are just integer identifiers of symbols.
Whenever it encounters a symbol $S$ missing a signature (a newly created non-terminal or a terminal encountered for the first time), it assigns a fresh signature.
At the same time it reveals bits $\hs_i(S)$ for $0< i \le \mword$.\footnote{The data structure never uses bits $\hs_i(S)$ for $i>\mword$. To be precise, it fails before they ever become necessary.}
Since the bits are uniformly random, this can be simulated by taking a random machine word.

We use $\sigs(\grammar)$ to denote the set of signatures corresponding to the symbols
that are in the grammar $\grammar$.
We distinguish three types of signatures.
We write $s\to a$ for $a\in \Sigma$ if $s$ represents a terminal symbol, %
$\spair{s}{s_1}{s_2}$ to denote that $s$ represents
a non-terminal $S$ which replaced two symbols $S_1,S_2$
(represented by signatures $s_1,s_2$, correspondingly)
in $\compress_{i}$.
Analogously, we write
$\spower{s}{s_1}{k}$ to denote that $s$ represents a non-terminal introduced by $\rle$ when
$k$ copies of a symbol $s_1$ were replaced represented by $s$.
We also set $\sstr(s)=\sstr(S)$.
\begin{shortv}
$\spair{s}{s_1}{s_2}$ if $s$ is a signature created in $\compress_{i}$, when a substring  $s_1s_2$ was collapsed,
and  $\spower{s}{s_1}{k}$ if $s$ is a signature created in $\rle$ when a substring $s_1^k$ was collapsed.
The \emph{level} of a signature $s$ is the integer $\slev(s)$, such that $s$ is created when some substring is collapsed in $\cshrink{\slev(s)}$.
Note that this value is uniquely defined for each signature.
\end{shortv}

For each signature $s\in \sigs(\grammar)$ representing a symbol $S$,
we store a record containing some attributes of the associated symbol.
More precisely, it contains the following information, which can be stored in $O(1)$ space
and generated while we assign a signature to the symbol it represents: (a) The associated production rule $\spair{s}{s_1}{s_2}$, $\spower{s}{s_1}{k}$ (if it represents a non-terminal) or the terminal symbol $a\in \Sigma$ if $s \to a$; (b) The length $\slength(s)$ of the string generated from $S$; (c) The level $\slev(s)$ of $s$ (which is the level $l$, such that $s$ could be created in the call to $\shrink{l}$); (d) The random bits $\hs_i(S)$ for $0 < i \le \mword$.

In order to ensure that a single symbol has a single signature,
we store three dictionaries which for each $s\in \sigs(\grammar)$ map $a$ to $s$ whenever $s\to a$,  $(s_1,s_2)$ to $s$ whenever $\spair{s}{s_1}{s_2}$, and $(s_1,k)$ to $s$ whenever $\spower{s}{s_1}{k}$.
Thus, from now on we use signatures instead of the corresponding symbols.

Finally, we observe that a signature representing $\sstr^{-1}(w)$ can be used to identify $w$ represented by $\grammar$.
In particular, this enables us to test for equality in a trivial way.
However, in order to make sure that updates return deterministic values, we use consecutive integers as handles to elements of $\coll$
and we explicitly store the mappings between signature and handles.

\subsection{Parse Trees}\label{sec:pt_overview}
We define two trees associated with a signature $s \in \sigs(\grammar)$ that represents the string $\str$.
These trees are not stored explicitly, but are very useful in the description of our data structure.
We call the first one of them the \emph{uncompressed parse tree} and denote it by $\ustree(s)$.
When we are not interested in the signature itself, we sometimes call the tree $\ustree(\str)$ instead of $\ustree(s)$.

Consider the strings $\cshrink{0}(\str), \cshrink{1}(\str), \ldots, \cshrink{\slev(s)}(\str)$ stacked on top of each other, in such a way that $\cshrink{0}(\str)$ is at the bottom.
We define $\ustree(s)$ to be a tree that contains a single node for each symbol of each of these strings.
Each symbol of $\cshrink{i+1}(\str)$ originates from one or more symbols of $\cshrink{i}(\str)$ and this defines how the nodes in $\ustree(s)$ are connected: a node representing some symbol is a parent of all nodes representing the symbols it originates from.
The tree $\ustree(s)$ is rooted with the root being the node representing the only symbol in $\cshrink{\slev(s)}(\str)$,
whose signature is $s$.
Moreover, the children of every node are ordered left-to-right in the natural way.
If $v$ is a node in $\ustree(s)$ distinct from the root, we define $\uspar(v)$ to be the parent node of $v$ in $\ustree(s)$.
We denote by $\uschild(v,k)$ the $k$-th child of $v$ in $\ustree(s)$.

\begin{figure}[ht]
\hspace{20pt}
\includegraphics[scale=0.6]{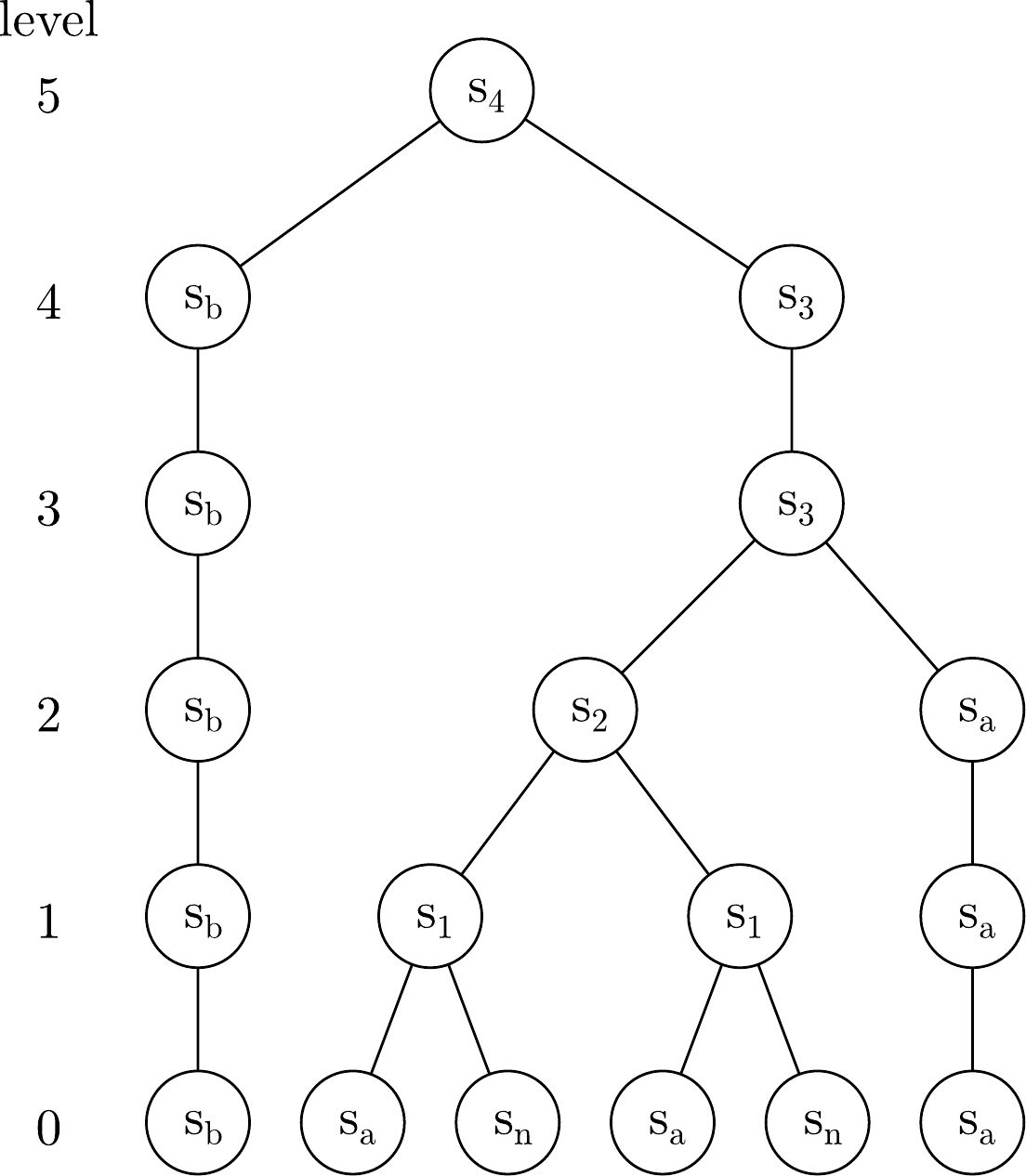}
\hfill
\includegraphics[scale=0.6]{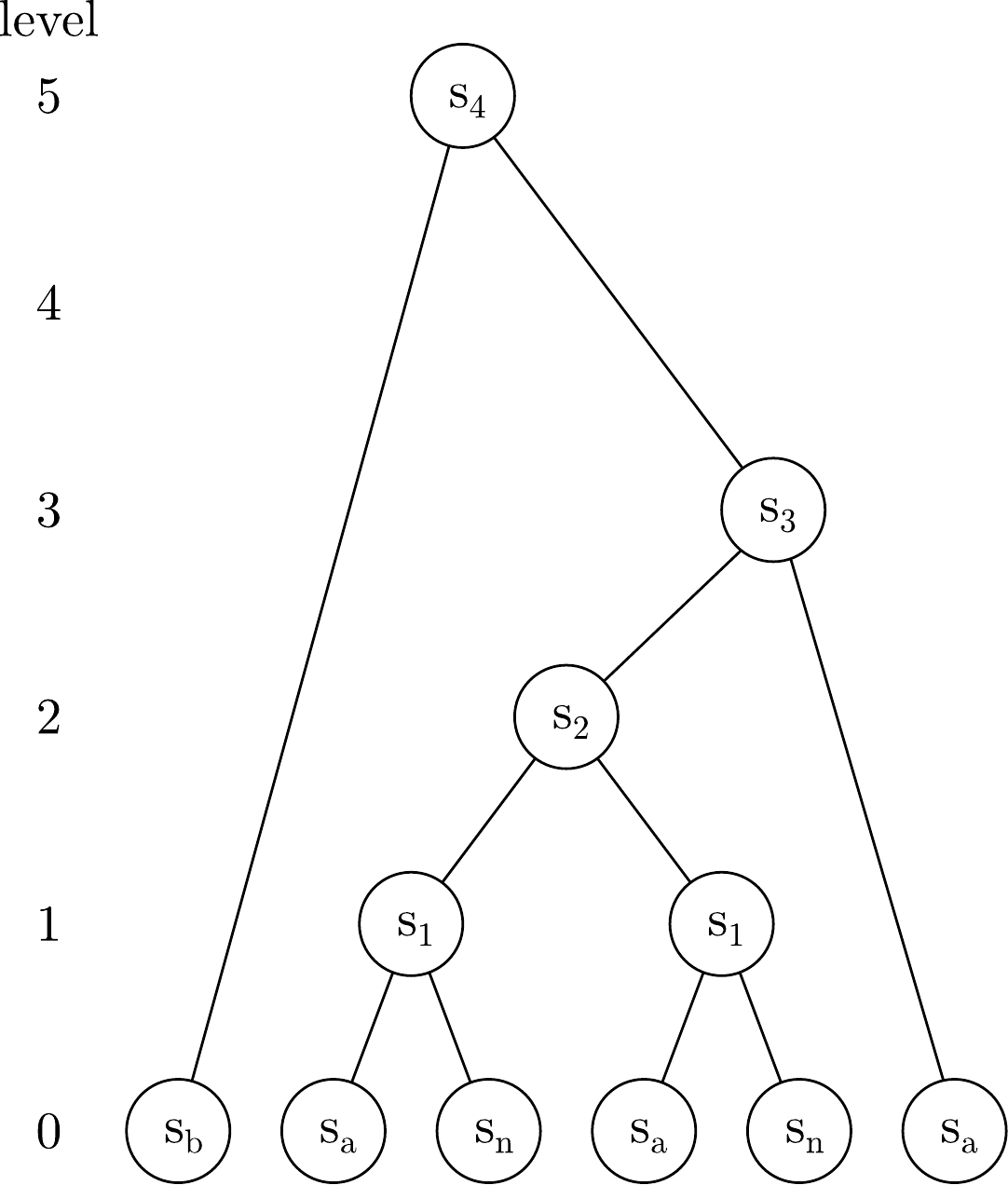}
\hspace{20pt}
\caption{Let $\str = \textrm{banana}$. Left picture shows the tree $\ustree(\str)$, whereas the right one depicts $\stree(\str)$.
The signature stored in every node is written inside the circle representing the node.
We have $\sstr(s_b) = b$, $\sstr(s_a) = a$, $\sstr(a_n) = n$, $\sstr(s_1) = an$, $\sstr(s_2) = anan$, $\sstr(s_3) = anana$, and $\sstr(s_4) = banana$.
Moreover, $s_a \rightarrow a$, $\spair{s_1}{s_a}{s_n}$, $\spower{s_2}{s_1}{2}$, $\spair{s_3}{s_2}{s_a}$ and $\spair{s_4}{s_b}{s_3}$.
Note that the parent of the rightmost leaf has level $1$ in $\ustree(\str)$, but the parent of the same leaf in $\stree(\str)$ has level $3$.
}
\label{fig:trees}
\end{figure}

Observe that $\ustree(s)$ may contain nodes with just one child.
This happens exactly when some symbol forms a single-symbol block in $\rle$ or $\compress_i$.
Hence, for a signature $s$ we also define a \emph{parse tree} $\stree(s)$, which is obtained from $\ustree(s)$ by dissolving nodes with one child (similarly as before, we sometimes use the notation $\stree(\str)$ to denote the same tree).
Observe that $\stree(s)$ is indeed a parse tree, as every node corresponds to an application of a production rule from $\grammar$.
In particular, a node $v$ with signature $s'$ has exactly two children if $\spair{s'}{s_1}{s_2}$ and $k$ children if $\spower{s'}{s_1}{k}$.
Thus, each internal node of $\stree(s)$ has at least two children, which implies that $\stree(s)$ has $O(|\str|)$ nodes, where $\str=\sstr(s)$.
Similarly to the uncompressed parse trees, each tree $\ustree(s)$ is also rooted and the children of each node are ordered.
Moreover, if $v$ is a node in $\stree(s)$ distinct from the root, we define $\spar{s}(v)$ to be the parent node of $v$ in $\stree(s)$.
See \cref{fig:trees} for an example.

Consider a tree $\stree(s)$ and its node $u$ representing a symbol with signature $\ussig(u)$.
Observe that if we know $\ussig(u)$, we also know the children of $u$, as they are determined by the production rule associated with $\ussig(u)$.
More generally, $\ussig(u)$ uniquely determines the subtree rooted at~$u$.
On the other hand, $\ussig(u)$ in general does not determine the parent of $u$.
This is because there may be multiple nodes with the same signature, even within a single tree.

We show how to deal with the problem of navigation in the parse trees.
Namely, we show that once we fix a tree $\stree(s)$ or $\ustree(s)$, we may traverse it efficiently, using the information stored in $\grammar$.
We stress that for that purpose we do not need to build the trees $\stree(s)$ and $\ustree(s)$ explicitly.

We use \emph{pointers} to access the nodes of $\ustree(s)$.
Assume we have a pointer $P$ to a node $v \in \ustree(s)$ that corresponds to the $i$-th symbol of $\cshrink{j}(\str)$.
As one may expect, given $P$ we may quickly get a pointer to the parent or child of $v$.
More interestingly, we can also get a pointer to the node $\usright(v)$ ($\usleft(v)$) that lies \emph{right} (or \emph{left}) of $v$ in $\ustree(s)$
in constant time.
The node $\usright(v)$ is the node that corresponds to the $(i+1)$-th symbol of $\cshrink{j}(\str)$,
while $\usleft(v)$ corresponds to the $(i-1)$-th symbol.
Note that these nodes may not have a common parent with $v$.
For example, consider $\ustree(\str)$ in \cref{fig:trees} and denote the level-0 nodes corresponding to the first three letters (b, a and n) by $v_b$, $v_a$ and $v_n$.
Then, $\usright(v_a) = v_n$, but also $\usleft(v_a) = v_b$, although $\uspar(v_a)\ne \uspar(v_b)$.
The pointers let us retrieve some information about the underlying nodes, including their signatures.
A pointer to the leftmost (rightmost) leaf of $\ustree(\str)$ can also be efficiently created.
The full interface of parse tree pointers is described in \cref{sec:pointers}.

In order to implement the pointers to $\ustree(\str)$ trees, we first introduce
pointers to compressed trees.
The set of allowed operations on these pointers is more limited, but sufficient
to implement the needed pointers to $\ustree(\str)$ trees
in a black-box fashion.
Each pointer to a tree $\stree(\str)$ is implemented as a stack that represents the path from the root of $\stree(s)$ to the pointed node.
To represent this path efficiently, we replace each subpath that repeatedly descends to the leftmost child by a single entry in the stack (and similarly for paths descending to rightmost children).
This idea can be traced back to a work of Gąsieniec et al.~\cite{DBLP:conf/dcc/GasieniecKPS05}, who implemented constant-time
forward iterators on grammar-compressed strings.
By using a fully-persistent stack, we are able to create a pointer to a neighbor node in constant time, without destroying the original pointer.
However, in order to do this efficiently, we need to quickly handle queries about the leftmost (or rightmost) descendant of a node $v\in\stree(\str)$
at a given depth, intermixed with insertions of new signatures into $\sigs(\grammar)$.
To achieve that, we use a data structure for finding level ancestors in dynamic trees~\cite{Alstrup:2000}.
The details of this construction are provided in~\cref{sec:pointers}.

\subsection{Comparison to Previous Approaches}\label{sec:comp}

Let us compare our construction with the approaches previously used for similar problems in~\cite{Mehlhorn} and~\cite{Alstrup}.
\cite{Mehlhorn} describes two: a deterministic one (also used in~\cite{Alstrup}) and a randomized one.

Our algorithm for $\shrink{i}$ is simpler than the corresponding procedure in~\cite{Mehlhorn}.
In particular, we may determine if there is a block boundary between two symbols, just by inspecting their values.
In the deterministic construction of~\cite{Mehlhorn}, this requires inspecting $\Theta(\log^{*} n)$ surrounding symbols.

However, the use of randomness in our construction poses some serious challenges,
mostly because the size of the uncompressed parse tree $\ustree(\str)$ can be $\Omega(n\log n)$ for a string $\str$ of length $n$ with non-negligible probability.
Consider, for example, the string $\str=(ab)^{n/2}$. With probability at least $2^{-k}$ we have $h_i(a)=h_i(b)=1$
for $0 \le i < k/2$, and consequently $\ustree(\str)$ contains at least $|\str|\cdot k$ nodes. Hence, implementing $\makeop(\str)$
in time proportional to $|\str|$ requires more care.

Another problem is that even prepending a single letter $a$ to a string $\str$ results in a string $a\cdot \str$ 
whose uncompressed parse tree $\ustree(a\cdot w)$ might differ from $\ustree(w)$ by $\Omega(\log^2n)$ nodes (with non-negligible probability).
In the sample string considered above, the strings $\cshrink{i}(a \cdot \str)$ and $\cshrink{i}(a \cdot \str)$
differ by a prefix of length $\Omega(i)$ for $0\le i < k$ with probability $4^{-k}$.
In the deterministic construction of~\cite{Mehlhorn}, the corresponding strings may only differ by $O(\log^{*} n)$ symbols.
In the randomized construction of~\cite{Mehlhorn}, the difference is of constant size.
As a result, in~\cite{Mehlhorn,Alstrup}, when the string $a \cdot \str$ is added to the data structure, the modified prefixes of (the counterpart of) $\cshrink{i}(a \cdot \str)$ can be computed explicitly, which is not feasible in our case. 

To address the first problem, our grammar is based on the compressed parse trees $\stree(\str)$ and we operate on the uncompressed trees $\ustree(\str)$
only using constant-time operations on the pointers. 
In order to keep the implementation of $\splitop$ and $\concop$ simple despite the second issue,
we additionally introduce the notion of context-insensitive decomposition, which captures the part of $\ustree(\str)$ which is preserved in the parse tree of every superstring of $\str$ (including $a \cdot \str$).
A related concept (of \emph{common sequences} or \emph{cores}) appeared in a report by Sahinalp and Vishkin~\cite{SV1995} and was later used in several
papers including a recent a work by Nishimoto et al.~\cite{DBLP:journals/corr/NishimotoIIBT15}.

Consequently, while the definition of $\shrink{i}$ is simpler compared to  the previous constructions,
and the general idea for supporting $\splitop$ and $\concop$ is similar,
obtaining the desired running time requires deeper insight and more advanced data-structural tools.

\newcommand{\itstring}{\mathit{pref}}
\newcommand{\itsuf}{\mathit{suff}}

\subsection{Context-Insensitive Nodes}\label{sec:context_short}
In this section we introduce the notion of context-insensitive nodes of $\ustree(w)$ to express
the fact that a significant part of $\ustree(w)$ is ``preserved'' in the uncompressed parse trees of superstrings of $w$.
For example, if we concatenate two strings $\str$ and $\strb$ from $\coll$, it suffices to take care of the part of  $\ustree(\str \cdot \strb)$
which is not ``present'' in $\ustree(w)$ or $\ustree(y)$. We formalize this intuition as follows.

Consider a node $v$ of $\ustree(w)$, which represents a particular fragment $w[i..j]$ of $w$.
For every fixed strings $x,y$ this fragment can be naturally identified with a fragment of the concatenation $x\edot w\edot y$.
If $\ustree(xwy)$ contains a node representing that fragment, we say that $v$ is \emph{preserved} in the \emph{extension} $x\edot w \edot y$.
If $v$ is preserved for every such extension, we call it \emph{context-insensitive}.
A weaker notion of \emph{left} and \emph{right} context-insensitivity is defined to impose a node
to be preserved in  all \emph{left} and all \emph{right extensions},  i.e., extensions with $y=\eps$ and $x=\eps$, respectively.

The following lemma captures some of the most important properties of context-insensitive nodes.
Its proof can be found in \cref{sec:contexti}, where a slightly stronger result appears as \cref{cor:context_insensitive}.

\begin{lemma}\label{lem:context_insensitive} Let $v$ be a node of $\ustree(w)$.
  If $v$ is right context-insensitive, so are nodes $\usleft(v)$, $\usleft(\uspar(v))$, and all children of~$v$.
  Symmetrically, if $v$ is left context-insensitive, so are nodes $\usright(v)$, $\usright(\uspar(v))$, and all children of~$v$.
Moreover, if $v$ is both left context-insensitive and right context-insensitive, then it is context-insensitive.
\end{lemma}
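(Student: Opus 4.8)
The whole statement is about how context-insensitivity propagates along the structure of $\ustree(w)$, so the natural approach is to reason directly about which fragments of $w$ correspond to which nodes, and to exploit the fact that the parsing $\shrink{i}$ is \emph{locally determined}: whether a symbol $a_j$ starts/ends a block in $\rle$ or $\compress_i$ depends only on $a_j$ and one neighbour (for $\rle$, the maximal run containing position $j$; for $\compress_i$, the bits $\hs_i$ of $a_j$ and $a_{j\pm1}$). I would first record, as a lemma if not already available from \cref{sec:contexti}, the precise meaning of ``$v$ preserved in the extension $x\edot w\edot y$'': if $v$ sits at level $\ell$ and spans $w[i..j]$, then the symbols of $\cshrink{\ell}(xwy)$ restricted to the positions covered by $w[i..j]$ agree with a contiguous run of symbols of $\cshrink{\ell}(w)$, \emph{and} the block boundaries of $\cshrink{\ell'}(xwy)$ for $\ell'<\ell$ inside that range coincide with those of $\cshrink{\ell'}(w)$. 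Right context-insensitivity is the special case $y=\eps$, left context-insensitivity the case $x=\eps$. The point of the lemma is that it lets me talk about ``the representation inside a window'' without ever building the trees.

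\textbf{The three propagation claims (right-context version; the left one is symmetric).} Fix a node $v$ at level $\ell$ spanning $w[i..j]$, right context-insensitive.

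\emph{Children of $v$:} each child of $v$ is a node of $\ustree(w)$ at level $\ell-1$ spanning some $w[i'..j']\subseteq w[i..j]$. For any right extension $wy$, preservation of $v$ already forces the parsing of all levels $<\ell$ inside $w[i..j]$ to match that of $w$; in particular the block of level $\ell-1$ corresponding to that child is formed identically in $\ustree(wy)$, so the child is preserved. Hence it is right context-insensitive.

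\emph{$\usleft(v)$:} this is the node immediately to the left of $v$ at level $\ell$; it spans some $w[i''..i-1]$. The key observation is that the right endpoint $j$ of $v$'s range is ``protected'' by $v$'s right context-insensitivity, and the left endpoint of $v$ equals the right endpoint of $\usleft(v)$; so the right boundary of $\usleft(v)$'s level-$\ell$ block is already fixed in every right extension. The \emph{left} boundary of $\usleft(v)$ is internal to $w$ and so is unchanged by appending $y$. Since $\shrink{\ell}$ at position $i-1$ looks only at $i-1$ (and possibly $i$, which is pinned by $v$), the level-$(\ell-1)$ block that collapses into $\usleft(v)$ is formed the same way in $\ustree(wy)$ — this needs a short case split on whether $\shrink{\ell}$ is $\rle$ or $\compress_{\ell/2}$, but in both cases the relevant local window lies in $w[..j]$, which is preserved. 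Therefore $\usleft(v)$ is preserved, i.e.\ right context-insensitive. (The one delicate point: $\usleft(v)$ may have no common parent with $v$; but the argument above never uses the parent, only the level-$\ell$ left-to-right order, which is exactly what $\usleft$ encodes.)

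\emph{$\usleft(\uspar(v))$:} Let $p=\uspar(v)$, a level-$(\ell+1)$ node spanning $w[i_p..j_p]$ with $i_p\le i$. Its children at level $\ell$ are some $v_1,\dots,v_t=v$ (with $v$ the rightmost since we are going left of $p$ to reach its left sibling — wait, more carefully: $\usleft(p)$ is the level-$(\ell+1)$ node left of $p$, whose range ends at $i_p-1$). The symbol $\cshrink{\ell+1}$ at the position just left of $p$'s block looks only at that position and possibly the next, and the next one is the first child of $p$. To conclude that this block is formed identically in every right extension, I want the first child of $p$ — call it $v_1$ — to be right context-insensitive, and then combine with the $\rle$/$\compress$ local rule. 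But $v_1$ is right context-insensitive by the ``children of $v$'' and ``$\usleft$'' parts applied repeatedly: $v_1=\usleft(\cdots\usleft(v)\cdots)$ within the children of $p$, and $p$'s right context-insensitivity would have to be established first. Here is the clean route: $v$ is right context-insensitive $\Rightarrow$ (by iterating the $\usleft$ step) every $v_{i'}$ with $i'\le t$ that is a child of $p$ is right context-insensitive, hence in particular $v_1$ is; then $v_1$'s right context-insensitivity pins the right boundary of $\usleft(p)$'s level-$(\ell+1)$ block, and its left boundary is internal to $w$, so — again by the one-symbol-lookahead property of $\shrink{\ell+1}$ — $\usleft(p)$ is preserved in every right extension. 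So $\usleft(\uspar(v))$ is right context-insensitive. This step is where I expect the most bookkeeping, because I must be careful that ``iterating $\usleft$ stays among the children of $p$'' really does let me reach $v_1$, and that the $\rle$ case (a run of equal symbols crossing a child boundary) does not let an appended $y$ lengthen that run — it cannot, since the run sits strictly left of position $j_p<|w|$, i.e.\ strictly inside $w$.

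\textbf{The last sentence.} If $v$ is both left and right context-insensitive, I must show it is preserved in \emph{every} two-sided extension $x\edot w\edot y$. The idea is to peel the two sides independently: right context-insensitivity says the level-$\ell$ representation of $w$ inside $v$'s range, together with all lower-level boundaries there, is unchanged when passing from $w$ to $wy$; and this property only involves positions $\le j\le|w|$, so appending $y$ and then, \emph{separately}, prepending $x$ do not interfere — formally I would prove by induction on the level $\ell'$ from $0$ up to $\ell$ that the block structure of $\cshrink{\ell'}(xwy)$ restricted to the positions under $w[i..j]$ equals that of $\cshrink{\ell'}(w)$, using the left-insensitivity to control the influence of $x$ from the left and right-insensitivity to control $y$ from the right, and using that $\shrink{\ell'+1}$ only ever reaches one symbol past a boundary. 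The subtle case, once more, is $\rle$: a maximal run of equal symbols under $w[i..j]$ could in principle be extended by symbols of $x$ on the left or $y$ on the right, but left- (resp. right-) context-insensitivity of $v$ precisely rules this out at every level, because being preserved in $xw$ (resp.\ $wy$) already forbids the corresponding run from growing. Assembling the induction gives that $\cshrink{\ell}(xwy)$ contains a node for the fragment $w[i..j]$, i.e.\ $v$ is context-insensitive.

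\textbf{Main obstacle.} The real work is not any single inequality but making the ``locally determined'' nature of $\shrink{i}$ quantitatively precise — exactly how far left/right each of $\rle$ and $\compress_i$ looks, and how that interacts with the endpoints that a context-insensitive node ``pins down'' — and then checking that the $\usleft(\uspar(v))$ case genuinely reduces to the $\usleft$-of-a-child case without a circular dependence on $p$ itself being context-insensitive. I expect the cleanest presentation to factor everything through a single technical lemma (the ``window preservation'' statement above, presumably \cref{cor:context_insensitive}) and then derive all four assertions as short corollaries.
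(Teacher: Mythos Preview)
Your intuitions are right, but there is a genuine gap in the $\usleft(v)$ step, and everything downstream depends on it.

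You write: ``the left boundary of $\usleft(v)$ is internal to $w$ and so is unchanged by appending $y$,'' and later ``the relevant local window lies in $w[..j]$, which is preserved.'' But your window-preservation lemma, as you state it, only controls the parsing \emph{inside} $v$'s range $w[i..j]$. Nothing you have established says that the level-$\ell$ (or level-$(\ell-1)$) parsing of $wy$ agrees with that of $w$ on positions \emph{left of} $i$. The claim is true, but it is exactly the content that needs proving; ``internal to $w$'' is not a proof, since a priori the block structure of $\cshrink{\ell}(wy)$ could differ from that of $\cshrink{\ell}(w)$ anywhere, not just near the right end. Your $\usleft(\uspar(v))$ argument, which routes through the leftmost child $v_1$ of $p$, inherits the same gap.

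The paper fills this hole with \cref{lem:lcpd}: if $p$ is the longest common prefix of two strings, then the longest common prefix after applying $\shrink{i}$ is $\shrink{i}(p_\ell)$ for a decomposition $p=p_\ell p_r$ with $|\rle(p_r)|\le 1$. This is the precise ``locally determined'' statement you were looking for, and it immediately gives that the entire level-$\ell$ prefix up to and including $v$'s position matches in $w$ and $wy$. The paper then packages the argument via an explicit \emph{counterpart} relation (\cref{lem:close}), proved by mutual induction on levels: $\usleft(v)\approx\usleft(v')$ is reduced to $\usleft(\uspar(\uschild(v,1)))$ one level down, and $\usleft(\uspar(v))\approx\usleft(\uspar(v'))$ comes directly from the LCP decomposition (since all symbols in $p_r$ share a parent, $\usleft(\uspar(v))$ lands inside the level-$(\ell+1)$ common prefix regardless). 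No detour through $v_1$ is needed.

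For the two-sided claim, your ``peeling'' induction is close in spirit to the paper's \cref{lem:lr}, but the paper's device is sharper: instead of separately controlling the influence of $x$ and $y$ on $w$, it views $xwy$ as a right extension $\eps\edot xw\edot y$ of $xw$ and as a left extension $x\edot wy\edot\eps$ of $wy$, and applies the one-sided \cref{lem:close} in each view. That pins down both the left and right boundaries of the candidate counterpart $v'$ without re-doing any local case analysis on $\rle$ versus $\compress$.

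In short: the obstacle you flagged is real, and the missing piece is precisely \cref{lem:lcpd}. Once you have it, the whole lemma becomes a short corollary; without it, the $\usleft(v)$ and two-sided steps as written do not go through.
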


We say that a collection $L$ of nodes in $\ustree(w)$ or in $\stree(w)$ forms a \emph{layer} if every leaf has exactly one ancestor in $L$.
The natural left-to-right order on $\ustree(w)$ lets us treat every layer as a sequence of nodes.
The sequence of their signatures is called the \emph{decomposition} corresponding to the layer.
Note that a single decomposition may correspond to several layers in $\ustree(w)$, but only one of them (the lowest one) does not contain nodes with exactly one child.

If a layer in $\ustree(w)$ is composed of (left/right) context-insensitive nodes,
we also call the corresponding decomposition (left/right) context-insensitive.
The following fact relates context-insensitivity with concatenation of strings and their decompositions.
It is proven in~\cref{sec:contexti}.

\begin{restatable}{fact}{twodecompositions}\label{fact:two-decompositions}
Let $D$ be a right context-insensitive decomposition of $w$ and let $D'$ be a left-context insensitive decomposition of $w'$.
The concatenation $D\cdot D'$ is a decomposition of $w w'$. Moreover, if $D$ and $D'$ are context-insensitive, then $D\cdot D'$
is also context-insensitive.
\end{restatable}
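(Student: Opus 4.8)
\textbf{Proof plan for \cref{fact:two-decompositions}.}

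The plan is to prove the statement by induction on the construction of the parse tree, peeling off one level $\cshrink{i}$ at a time and appealing to \cref{lem:context_insensitive} to propagate context-insensitivity across the seam where $D$ meets $D'$. First I would set up notation: write $D = (s_1,\dots,s_p)$ and $D' = (s'_1,\dots,s'_q)$ for the signature sequences of the two layers $L$ in $\ustree(w)$ and $L'$ in $\ustree(w')$, and observe that concatenating the underlying fragments shows $\sstr(s_1)\cdots\sstr(s_p) = w$ and $\sstr(s'_1)\cdots\sstr(s'_q) = w'$, hence $\sstr(s_1)\cdots\sstr(s_p)\sstr(s'_1)\cdots\sstr(s'_q) = ww'$. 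To see that $D\cdot D'$ is a \emph{decomposition} of $ww'$ — i.e.\ corresponds to an actual layer of $\ustree(ww')$ — I would use the fact that each $s_k$ in $D$ is right context-insensitive, so the fragment it represents is preserved in the right extension $w\cdot w'$ of $w$; symmetrically each $s'_k$ is preserved as a node of $\ustree(ww')$ because $D'$ is left context-insensitive. It remains to check that these preserved nodes together form a layer of $\ustree(ww')$, which follows because their fragments tile $ww'$ exactly (the $D$-part covers $w$, the $D'$-part covers $w'$) and the preservation isomorphism respects the ancestor relation, so every leaf of $\ustree(ww')$ has exactly one ancestor among them.

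For the second claim — that $D\cdot D'$ is context-insensitive when both $D$ and $D'$ are — the idea is to show that every node of the layer corresponding to $D\cdot D'$ is preserved in \emph{every} extension $x\cdot(ww')\cdot y$. For a node coming from $D$, it is already right context-insensitive (as an element of $D$), so I need left context-insensitivity in the ambient string $ww'$; for a node coming from $D'$ the situation is mirrored. The key observation is that $x(ww')y = (xw)(w'y)$, so an extension of $ww'$ is simultaneously a left extension of $w$ relative to the prefix $x$ composed with... — more carefully, I would apply \cref{lem:context_insensitive} locally: the nodes in $D$ are left context-insensitive by hypothesis (since $D$ is context-insensitive), hence preserved in all left extensions of $w$, and combining left-insensitivity with right-insensitivity gives full context-insensitivity of each node of $D$ inside $\ustree(w)$; the last sentence of \cref{lem:context_insensitive} is exactly what promotes "left and right" to "full". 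The remaining work is to argue that a node preserved in $\ustree(w)$ under all extensions of $w$ is still preserved when $w$ sits inside $ww'$ and then inside $x(ww')y$ — this is a transitivity property of the "preserved" relation: a fragment preserved in $x(ww')y$ viewed as an extension of $ww'$, combined with $ww'$ itself being (via the first part) a faithful concatenation, lets one chase the node through the two successive parse-tree embeddings.

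The main obstacle I anticipate is the bookkeeping around the seam: showing rigorously that the two preservation isomorphisms — one embedding (the relevant part of) $\ustree(w)$ into $\ustree(ww')$, the other embedding $\ustree(w')$ into $\ustree(ww')$ — agree on the layer and glue to a single layer, with no node of $\ustree(ww')$ straddling the boundary between the $D$-part and the $D'$-part. This is where context-insensitivity is genuinely used rather than just invoked: right context-insensitivity of the rightmost portion of $D$ guarantees that the blocks of $\shrink{i}$ at each level do not merge the last symbol of $w$'s layer-image with the first symbol of $w'$'s in an uncontrolled way, and left context-insensitivity of $D'$ does the symmetric job. I would handle this by an induction on levels $i = 0,1,\dots$, at each step invoking the $\usleft$/$\usright$ and $\usleft(\uspar(\cdot))$/$\usright(\uspar(\cdot))$ clauses of \cref{lem:context_insensitive} to certify that the ancestors created by $\shrink{i+1}$ on the two sides remain within their respective halves; the base case $i=0$ is the leaf level where the statement is immediate. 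Everything else — that $\sstr$ values concatenate correctly, that the length counts add up — is routine once this seam argument is in place.
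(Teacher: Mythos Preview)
Your plan for the first part is essentially the paper's argument: view $ww'$ as the right extension $\eps\cdot w\cdot w'$ of $w$ and as the left extension $w\cdot w'\cdot \eps$ of $w'$, so each node of $L$ (resp.\ $L'$) has a counterpart in $\ustree(ww')$, and the counterparts tile $ww'$ and hence form a layer. Your worry about the seam is legitimate but dissolves once you phrase it this way: the fragments represented by the counterparts of $L$ cover exactly the prefix $w$ of $ww'$ and those of $L'$ cover exactly the suffix $w'$, so together they partition the leaves with no overlap and no gap; no level-by-level induction is needed.

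For the second part you are overcomplicating. You wrote the key identity $x(ww')y = (xw)(w'y)$ and then did not use it. The paper's one-line observation is that any extension $x\cdot ww'\cdot y$ of $ww'$ is \emph{literally} the extension $x\cdot w\cdot (w'y)$ of $w$ and the extension $(xw)\cdot w'\cdot y$ of $w'$. Since each node of $D$ is context-insensitive in $\ustree(w)$, it is by definition preserved in \emph{every} extension of $w$, in particular in $x\cdot w\cdot (w'y)$; likewise for nodes of $D'$. That is the entire argument: there is no transitivity step through $\ustree(ww')$, no need to promote left+right to full via \cref{lem:context_insensitive}, and no induction on levels. Your detour through ``a node preserved under all extensions of $w$ is still preserved when $w$ sits inside $ww'$ and then inside $x(ww')y$'' is unnecessary because $x(ww')y$ is already, directly, an extension of $w$.
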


The context-insensitive decomposition of a string may need to be linear in its size.
For example, consider a string $a^k$ for $k > 1$.
We have $|\cshrink{1}(a^k)| = 1$.
Thus, $\ustree(a^k)$ contains a root node with $k$ children.
At the same time, the root node is not preserved in the tree $\ustree(a^{k} \cdot a)$.
Thus, the smallest context-insensitive decomposition of $a^k$ consists of $k$ signatures.

However, as we show, each string $\str$ has a context-insensitive decomposition, whose run-length encoding has length $O(\depth(\str))$, which is $O(\log |\str|)$ with high probability.
This fact is shown in~\cref{sec:constructing_cid}.
Let us briefly discuss, how to obtain such a decomposition.
For simplicity, we present a construction of a right context-insensitive decomposition.

Consider a tree $\ustree(\str)$.
We start at the rightmost leaf $v_0$ of this tree.
This leaf is clearly context-insensitive (since all leaves are context-insensitive).
Then, we repeatedly move to $v_{i+1}=\usleft(\uspar(v_{i}))$.
From~\cref{lem:context_insensitive} it follows that in this way we iterate through right context-insensitive nodes of $\ustree(\str)$.
Moreover, nodes left of each $v_i$ are also right context-insensitive.
In order to build a decomposition, we start with an empty sequence $D$ and every time before we move from $v_{i}$ to $v_{i+1}$ we prepend $D$ with the sequence of signatures of all children of $\uspar(v_i)$ that are left of $v_i$, including the signature of the node $v_i$.
In each step we move up $\ustree(\str)$, so we make $O(\depth(\str))$ steps in total.
At the same time, $\uspar(v_i)$ either has at most two children or all its children have equal signatures.
Thus, the run-length encoding of $D$ has length $O(\depth(\str))$ and, by using tree pointers, can be computed in time that is linear in its size.
This construction can be extended to computing context-insensitive decompositions of a given substring of $\str$, which is used in the $\splitop$ operation.

\subsection{Updating the Collection}
In this section we show how to use context-insensitive decompositions in order to add new strings to the collection.
It turns out that the toolbox which we have developed allows us to handle $\makeop$, $\concop$ and $\splitop$ operations in a very simple and uniform way.

Consider a $\concop(\str_1, \str_2)$ operation.
We first compute (the run-length encodings of) context-insensitive decompositions of $\str_1$ and $\str_2$.
Denote them by $D_1$ and $D_2$.
By~\cref{fact:two-decompositions}, their concatenation $D$ is a decomposition of $\str_1 \cdot \str_2$.
Moreover, each signature in $D$ belongs to $\sigs(\grammar)$.
Let $L$ be the layer in $\stree(\str_1 \cdot \str_2)$ that corresponds to $D$.
In order to ensure that $\grammar$ represents $\str_1 \cdot \str_2$, it suffices to add to $\grammar$ the signatures of all nodes of $\stree(\str_1 \cdot \str_2)$ that lie above $L$.
This subproblem appears in $\makeop$, $\concop$ and $\splitop$ operations and we sketch its efficient solution below.

Consider an algorithm that repeatedly finds a node $v$ of the lowest level that lies above $L$ and substitutes in $L$ the children of $v$ with $v$.
This algorithm clearly iterates through all nodes of $\stree(\str_1 \cdot \str_2)$ that lie above $L$.
At the same time, it can be implemented to work using just the run-length encoding of the decomposition $D$ corresponding to $L$ instead of $L$ itself.
Moreover, its running time is only linear in the length of the run-length encoding of $D$ and the depth of the resulting tree. 
For details, see~\cref{lem:compressdec}.

Thus, given the run-length encoding of decomposition of $\str_1 \cdot \str_2$ we can use the above algorithm to add to $\grammar$ the signatures of all nodes of $\stree(\str_1 \cdot \str_2)$ that lie above $L$.
The same procedure can be used to handle a $\splitop$ operation (we compute a context-insensitive decomposition of the prefix and suffix and run the algorithm on each of them) or even a $\makeop$ operation.
In the case of $\makeop$ operation, the sequence of letters of $\str$ is a trivial decomposition of $\str$, so by using the algorithm, we can add it to the data structure in $O(|\str|+\depth(\str))$ time.
Combined with the efficient computation of the run-length encodings of context-insensitive decompositions, this gives an efficient way of handling $\makeop$, $\concop$ and $\splitop$ operations.
See~\cref{sec:conclusions} for precise theorem statements.

\section{Navigating through the Parse Trees}\label{sec:pointers}
In this section we describe the notion of \emph{pointers} to trees $\stree(s)$ and $\ustree(s)$ that has been introduced in~\cref{sec:pt_overview} in detail.
Although in the following sections we mostly make use of the pointers
to $\ustree(s)$ trees, the pointers to compressed parse trees
are essential to obtain an efficient implementation
of the pointers to uncompressed trees.

Recall that a pointer is a handle that can be used to access some node of a tree $\stree(s)$ or $\ustree(s)$ in constant time.
The pointers for trees $\stree(s)$ and $\ustree(s)$ are slightly different and we describe them separately.

The pointer to a node of $\stree(s)$ or $\ustree(s)$ can be created for any string represented by $\grammar$ and, in fact, for any existing signature $s$.
Once created, the pointer points to some node $v$ and cannot be moved.
The information on the parents of $v$ (and, in particular, the root $s$)
is maintained as a part of the internal state of the pointer.
The state can possibly be shared with other pointers.

In the following part of this section, we first describe the pointer interface.
Then, we move to the implementation of pointers to $\stree(s)$ trees.
Finally, we show that by using the pointers to $\stree(s)$ trees in a black-box fashion, we may obtain pointers to $\ustree(s)$ trees.

\subsection{The Interface of Tree Pointers}
We now describe the pointer functions we want to support.
Let us begin with pointers to the uncompressed parse trees.
Fix a signature $s\in\sigs(\grammar)$ and
let $\str=\sstr(S)$, where $s$ corresponds to the symbol $S$.
First of all, we have three primitives for creating pointers:
\begin{itemize}
  \item $\itroot(s)$ -- a pointer to the root of $\ustree(s)$.
  \item $\itbegin(s)$ ($\itend(s)$) -- a pointer to the leftmost (rightmost) leaf of $\ustree(s)$.
\end{itemize}
We also have six functions for navigating through the tree.
These functions return appropriate pointers or $\itnil$ if the respective nodes do not exist.
Let $P$ be a pointer to $v\in \ustree(s)$.
\begin{itemize}
  \item $\itparent(P)$ -- a pointer to $\uspar(v)$.
  \item $\itchild(P,k)$ -- a pointer to the $k$-th child of $v$ in $\ustree(s)$.
  \item $\itright(P)$ ($\itleft(P)$) -- a pointer to the node $\usleft(v)$ ($\usright(v)$, resp.).
  \item $\itrskip(P,k)$ ($\itlskip(P,k)$) --
    let $v=v_1,v_2,\ldots$ be the nodes to the right (left) of $v$
    in the $\itlevel(P)$-th level in $\ustree(s)$, in the left to right (right to left) order
    and let $K$ be the largest integer such that
    all nodes $v_1,\ldots,v_K$ correspond to the same signature as $v$.
    If $k\leq K$ and $v_{k+1}$ exists, return a pointer to $v_{k+1}$,
    otherwise return $\itnil$.

\end{itemize}
Note that the nodes $\usright(v)$ and $\usleft(v)$ might not have a common parent with the node $v$.

Additionally, we can retrieve information about the pointed node $v$ using several functions:
\begin{itemize}
  \item $\itsig(P)$ -- the signature corresponding to $v$, i.e., $\ussig(v)$.
  \item $\itdegree(P)$ -- the number of children of $v$.
  \item $\itindex(P)$ -- if $v \neq s$, an integer $k$ such that $P=\itchild(\itparent(P,k))$, otherwise $0$.
  \item $\itlevel(P)$ -- the level of $v$, i.e., the number of edges on the path from $v$ to the leaf of $\ustree(s)$ (note that this may not be equal to $\slev(\itsig(P))$, for example if $v$ has a single child that represents the same signature as $v$).
  \item $\itrepr(P)$ -- a pair of indices $(i,j)$ such that $v$ represents $w[i..j]$.
  \item $\itrext(P)$ ($\itlext(P)$) -- let $v=v_1,v_2,\ldots$ be the nodes to the right (left) of $v$
    in the $\itlevel(P)$-th level in $\ustree(s)$, in the left to right (right to left) order.
    Return the maximum $K$ such that
    all nodes $v_1,\ldots,v_K$ correspond to the same signature as $v$.
\end{itemize}

Moreover, the interface provides a way to check if two pointers $P$, $Q$ to
the nodes of the same tree $\ustree(s)$ point to the same node:
$P=Q$ is clearly equivalent to $\itrepr(P)=\itrepr(Q)$.

The interface of pointers for trees $\stree(s)$ is more limited.
The functions $\itroot$, $\itbegin$, $\itend$, $\itparent$, $\itchild$,
$\itsig$, $\itdegree$, $\itindex$ and $\itrepr$ are defined
analogously as for uncompressed tree pointers.
The definitions of $\itright$ and $\itleft$ are slightly different.
In order to define these functions, we introduce the notion of
the \emph{level-$l$ layer} of $\stree(s)$.
The level-$l$ layer $\stlayer_s(l)$ (also denoted by $\stlayer_s(l)$)
is a list of nodes $u$ of $\stree(s)$ such that $\slev(\ussig(u))\leq l$
and either $\ussig(u)=s$ or $\slev(\ussig(\spar{s}(u)))>l$, ordered from the leftmost to the rightmost nodes.
In other words, we consider a set of nodes of level $l$ in $\ustree(s)$ and then replace each node with its first descendant (including itself) that belongs to $\stree(s)$.
This allows us to define $\itleft$ and $\itright$:
\begin{itemize}
  \item $\itright(P,l)$ ($\itleft(P,l)$) -- assume $v\in \stlayer_s(l)$.
    Return a pointer to the node to the right (left) of $v$ on $\stlayer_s(l)$, if such node exists.
\end{itemize}
For example, assume that $P$ points to the leftmost leaf in $\stree(\str)$ in \cref{fig:trees}.
Then $\itright(P, 0)$ is the pointer to the leaf representing the first $a$, but $\itright(P, 2)$ is the pointer to the node representing the string $anan$.

\begin{remark}The functions $\itright$ and $\itleft$ operating on $\stree(s)$ trees
(defined above)
are only used for the purpose of implementing the pointers to uncompressed
parse trees.
\end{remark}
\subsection{Pointers to Compressed Parse Trees}
As we later show, the pointers to $\ustree(s)$ trees can be implemented using pointers to $\stree(s)$ in a black-box fashion.
Thus, we first describe the pointers to trees $\stree(s)$.

\begin{lemma}\label{lem:ptr}
Assume that the grammar $\grammar$ is growing over time.
We can implement the entire interface of pointers to trees $\stree(s)$ where
$s\in\sigs(\grammar)$
in such a way that all operations take $O(1)$ worst-case time.
The additional time needed to update the shared pointer state is also $O(1)$ per creation of a new signature.
\end{lemma}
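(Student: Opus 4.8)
\textbf{Proof plan for \cref{lem:ptr}.}
The plan is to represent a pointer to $\stree(s)$ as a \emph{fully persistent stack} encoding the root-to-node path, where each stack entry records not a single tree node but a maximal ``straight run'' of the path --- a maximal descent through leftmost children or a maximal descent through rightmost children. Concretely, an entry stores a signature $s'$, a direction (leftmost/rightmost), the length of the run, the level of its bottom node, and (for $\itrepr$ and $\itindex$) the cumulative string offset and child index at which the run was entered. Because a run-length node $\spower{s'}{s_1}{k}$ has $k$ equal children, descending into its leftmost (or rightmost) child and then continuing straight can be collapsed into the run entry with $O(1)$ bookkeeping; this is the idea of Gąsieniec et al.~\cite{DBLP:conf/dcc/GasieniecKPS05} adapted to our grammar. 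The creation primitives $\itroot(s)$, $\itbegin(s)$, $\itend(s)$ then each build a constant number of stack entries: $\itroot(s)$ is a single trivial entry, while $\itbegin(s)$/$\itend(s)$ push one ``leftmost/rightmost run'' entry reaching from the root all the way down to the corresponding leaf.

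Next I would implement navigation. For $\itparent(P)$ and $\itchild(P,k)$: if the move stays inside the current run (e.g.\ $\itparent$ when the top run has length $>1$, or $\itchild$ into the extreme child that continues the run's direction), we only adjust the top entry's length/level/offset fields; otherwise we pop the exhausted entry and/or push a new length-one entry obtained by reading the production rule of the relevant signature from $\grammar$. The local queries $\itsig$, $\itdegree$, $\itindex$, $\itrepr$ are answered from the top entry together with, when needed, the entry below it (to recover the parent's production rule and the offset/index at which we descended); each reads $O(1)$ records of $\grammar$. The subtle functions are $\itleft(P,l)$ and $\itright(P,l)$: to move to the right neighbour on $\stlayer_s(l)$ I would pop entries until the path first reaches a node that has a right sibling at a level $\le$ the level at which we left level $l$, step to that sibling, and then descend straight to the appropriate leftmost descendant lying in $\stlayer_s(l)$ --- i.e.\ the shallowest descendant whose signature has level $\le l$. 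Symmetrically for $\itleft$. Each such call pops and then pushes only $O(1)$ \emph{run} entries, but the descent must be done in $O(1)$ time without walking node by node.

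The main obstacle is exactly that last descent (and its mirror): given a node $u$ and a target level $l$, find in $O(1)$ time a pointer to the leftmost descendant of $u$ that belongs to $\stlayer_s(l)$, i.e.\ the leftmost descendant whose signature level does not exceed $l$ while its parent's does. Walking down leftmost children could take $\Theta(\depth)$ steps, so we need an auxiliary structure. The plan is to precompute, for every signature $s'$ as it is inserted into $\sigs(\grammar)$, enough information to jump: since $\stree(s')$ is determined by $s'$, the ``leftmost spine'' of $\stree(s')$ --- the sequence of signatures obtained by repeatedly taking the leftmost child --- is a fixed object, and likewise the rightmost spine. I would maintain these spines as nodes in a \emph{dynamic tree} (a forest where the leftmost-child relation among signatures forms a tree that only grows) and equip it with a dynamic level-ancestor structure~\cite{Alstrup:2000}: a level-$l$ query on the spine then returns, in $O(1)$ worst-case time, the required jump target. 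This is where the ``$O(1)$ per new signature'' shared-state update budget is spent: inserting $s'$ adds $O(1)$ edges to the leftmost/rightmost spine forests and performs $O(1)$ link operations in the level-ancestor structure. Putting these pieces together --- persistent stack for path representation, $O(1)$-sized pushes/pops driven by $\grammar$'s production rules, and the dynamic level-ancestor structure to short-circuit straight descents --- yields all pointer operations in $O(1)$ worst-case time with $O(1)$ additional state-update time per new signature, as claimed; the detailed case analysis of each of the roughly fifteen pointer functions is routine once these invariants are fixed.
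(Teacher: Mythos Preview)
Your proposal is correct and takes essentially the same approach as the paper: a persistent stack encoding the root-to-node path with maximal left/right runs collapsed (the Gąsieniec et al.\ idea), together with auxiliary ``leftmost/rightmost spine'' forests on signatures equipped with the Alstrup--Holm dynamic level-ancestor structure to support the $O(1)$ descent to $\stlayer_s(l)$. The one detail you gloss over is that the level-ancestor structure answers $k$-th-ancestor queries, not ancestor-at-$\slev$-$l$ queries, and since $\slev$ values along a spine are not consecutive the paper bridges this gap with a per-signature bitmask recording which levels occur among its spine ancestors; this is routine but worth making explicit.
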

Before we embark on proving this lemma, we describe auxiliary
data structures constituting the shared infrastructure for pointer
operations.
Let $s\in\sigs(\grammar)$ and let $\stree_{\geq l}(s)$ be the tree $\stree(s)$
with the nodes of levels less than $l$ removed.
Now, define $\itfirst(s,l)$ and $\itlast(s,l)$ to be the signatures
corresponding to the leftmost and rightmost
leaves of $\stree_{\geq l}(s)$, respectively.
We aim at being able to compute functions $\itfirst$ and $\itlast$ in
constant time, subject to insertions of new signatures
into $\sigs(\grammar)$.

\newcommand{\bmask}{\mathtt{b}}
\newcommand{\bcnt}{\mathtt{bcnt}}

\begin{lemma}
  We can maintain a set of signatures $\sigs(\grammar)$ subject
to signature insertions and queries $\itfirst(s,l)$ and
$\itlast(s,l)$, so that both insertions and queries take constant time.
\end{lemma}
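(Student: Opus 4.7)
The plan is to reduce $\itfirst$ and $\itlast$ to level-ancestor queries on two incrementally-growing auxiliary forests $T_L$ and $T_R$ over $\sigs(\grammar)$, and to locate the correct ancestor via a constant-size bitmask stored with each signature.

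\emph{The auxiliary forests.} In $T_L$, make each non-terminal signature $s$ a child of its leftmost child in $\stree(s)$ (call it $\mathrm{lmost}(s)$; this equals $s_1$ whether the rule is $\spair{s}{s_1}{s_2}$ or $\spower{s}{s_1}{k}$), and take every terminal to be a root. Define $T_R$ symmetrically using the rightmost child, which is $s_2$ for a pair rule and $s_1$ for a power rule. By induction on level, the ancestor chain of $s$ in $T_L$ coincides with the leftmost spine of $\stree(s)$, and the sequence of values $\slev(\cdot)$ along this spine is strictly decreasing, since every straight-line grammar edge strictly decreases level. When a new signature $s$ is inserted, both its leftmost and rightmost children already belong to $\sigs(\grammar)$, so in each forest $s$ is appended as a fresh leaf: both forests grow only by leaf addition.

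\emph{Per-signature data.} For each $s$ store its $T_L$-depth $d_L(s)$ together with a bitmask $b_L(s)$ whose bit at position $i$ is set iff some ancestor of $s$ in $T_L$ (including $s$ itself) has level $i$. Both admit $O(1)$ updates: $d_L(s) := d_L(\mathrm{lmost}(s)) + 1$ and $b_L(s) := b_L(\mathrm{lmost}(s)) \mathbin{|} 2^{\slev(s)}$. Under the paper's conventions every relevant level is below $\mword$ before the structure fails, so $b_L(s)$ fits in a single machine word. Maintain analogous $d_R(s)$ and $b_R(s)$ for $T_R$. Plug both forests into the Alstrup--Holm dynamic level-ancestor structure~\cite{Alstrup:2000}, which supports leaf additions and level-ancestor queries in worst-case $O(1)$.

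\emph{Answering a query.} For $\itfirst(s,l)$ with $\slev(s) \ge l$, strict monotonicity of levels along the leftmost spine implies that the leftmost leaf of $\stree_{\ge l}(s)$ is the unique spine node whose level equals $\ell^{*} := \min\{i \ge l : \text{bit } i \text{ of } b_L(s) \text{ is set}\}$. Compute $\ell^{*}$ in $O(1)$ by clearing the low $l$ bits of $b_L(s)$ and extracting the lowest set bit via word-RAM primitives; if the result is zero, the tree $\stree_{\ge l}(s)$ is empty. The $T_L$-depth of the target node is $d_L(s) - i$, where $i$ is the number of spine ancestors of $s$ of level strictly greater than $\ell^{*}$. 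Because the set bits of $b_L(s)$ are exactly the spine levels of $s$, this count $i$ equals the population count of the bits of $b_L(s)$ at positions strictly above $\ell^{*}$, which is again an $O(1)$ word operation. A single level-ancestor query in $T_L$ at depth $d_L(s) - i$ (or the node $s$ itself if $i = 0$) returns the sought signature. The query $\itlast(s,l)$ is handled symmetrically using $T_R$, $d_R$, and $b_R$.

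\emph{Main obstacle.} The only subtle step is the popcount-to-depth identity used to translate a level constraint into a $T_L$-depth, and it hinges on all levels along a spine being pairwise distinct. This is forced by the straight-line property of $\grammar$, so I would make that observation explicit up front. Beyond this, the ingredients (bit primitives on a single word, Alstrup--Holm level ancestors, and the trivial $O(1)$ bookkeeping performed at each insertion) combine without friction.
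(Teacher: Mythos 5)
Your proposal matches the paper's proof: the paper builds exactly the same forest (keying each non-terminal to its leftmost child, with the symmetric forest for $\itlast$ left implicit as ``analogous''), stores the same bitmask of spine-ancestor levels, and locates the target with a single dynamic level-ancestor query after computing how many spine ancestors have level at least $l$. The one real implementation difference is how that count is obtained: you invoke word-RAM population count (plus lowest-set-bit extraction), whereas the paper sidesteps treating those as primitives by maintaining a dictionary $\mathtt{bcnt}$ mapping every bitmask it has ever created to its popcount, updated incrementally via $\mathtt{bcnt}(\mathtt{b}(s'))=\mathtt{bcnt}(\mathtt{b}(s_p))+1$ at each leaf insertion; clearing the high bits of a node's mask always yields the mask of one of its own ancestors, so the cleared key is guaranteed to be present in $\mathtt{bcnt}$. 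Both variants are valid on a word RAM with multiplication (popcount and LSB are $O(1)$-computable there), the paper's route simply keeps the set of assumed bit primitives minimal. One small slip: levels range up to $2\mword$, not $\mword$, so the mask occupies two machine words rather than one --- the paper explicitly uses a $2\mword$-bit mask --- but this is harmless since $O(1)$ words suffice.
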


\begin{proof}
For brevity, we only discuss the implementation of $\itfirst$, as $\itlast$ can
be implemented analogously.
We form a forest $F$ of rooted trees on the set $\sigs(\grammar)$, such that each
$x\in\Sigma$ is a root of some tree in $F$.
Let $s\in\sigs(\grammar)\setminus\Sigma$.
If $s\to s_l s_r$, then $s_l$ is a parent of $s$ in $F$ and
if $s\to s_p^k$, where $k\in\mathbb{N}$, then $s_p$ is
a parent of $s$.
For each $s\in\sigs(\grammar)$ we also store a $2\mword$-bit mask $\bmask(s)$ with
the $i$-th bit set if some (not necessarily proper) ancestor of $s$ in $F$ has
level equal to $i$.
The mask requires a constant number of machine words.
When a new signature $s'$ is introduced, a new leaf has to be added
to $F$.
Denote by $s_p$ the parent of $s'$ in $F$.
The mask $\bmask(s')$ can be computed in $O(1)$ time: it is equal
$\bmask(s_p)+2^{\slev(s')}$.
We also store a dictionary $\bcnt$ that maps each introduced value $\bmask(s)$
to the number of bits set in $\bmask(s)$.
Note that since $\slev(s')>\slev(s_p)$, $\bcnt(\bmask(s'))=\bcnt(\bmask(s_p))+1$.

Now, it is easy to see that $\itfirst(s,l)$ is the highest (closest to the root)
ancestor $a$ of $s$ in $F$, such that $\slev(a)\geq l$.
In order to find $a$, we employ the \emph{dynamic level ancestor}
data structure of Alstrup et al.~\cite{Alstrup:2000}.
The data structure allows us to maintain a dynamic forest subject
to leaf insertions and queries for the $k$-th nearest ancestor
of any node.
Both the updates and the queries are supported in worst-case $O(1)$ time.
Thus, we only need to find such number $k$ that $a$ is
the $k$-th nearest ancestor of $s$ in $F$.
Let $\bmask'(s)$ be the mask $\bmask(s)$ with the bits numbered from $l$ to $2\mword-1$
cleared.
The mask $\bmask'(s)$ can be computed in constant time with standard bitwise operations.
Note that $\bmask'(s)=\bmask(a')$ for some ancestor $a'$ of $s$
and as a result $\bcnt$ contains the key $\bmask'(s)$.
Hence, $\bcnt(\bmask'(s))$ is the number of bits set in $\bmask'(s)$.
Also, $\bmask'(s)$ denotes the number of ancestors of $s$ at levels less than $l$.
Thus $k=\bcnt(\bmask(s))-\bcnt(\bmask'(s))$ is the number of ancestors
of $s$ at levels not less than $l$.
Consequently, the $(k-1)$-th nearest ancestor of $s$ is the needed node $a$.
\end{proof}

Equipped with the functions $\itfirst$ an $\itlast$, we now move to the proof
of \cref{lem:ptr}.
\begin{proof}[Proof of \cref{lem:ptr}]
  Let $w$ be such that $\stree(w)=\stree(s)$.
  Recall that we are interested in pointers to $\stree(s)$ for some fixed
$s\in\sigs(\grammar)$.
A functional stack will prove to be useful in the following description.
For our needs, let the functional stack be a data structure storing
values of type $\mathcal{A}$. Functional stack has the following interface:
\begin{itemize}
  \item $\stempty()$ -- return an empty stack
  \item $\stpush(S,val)$ -- return a stack $S'$ equal to $S$ with a value $val\in\mathcal{A}$ on top.
  \item $\stpop(S)$ -- return a stack $S'$ equal to $S$ with the top value removed.
  \item $\sttop(S)$ -- return the top element of the stack $S$.
\end{itemize}
The arguments of the above functions are kept intact.
Such stack can be easily implemented as a functional list \cite{Okasaki:1999}.

Denote by $r$ the root of $\stree(s)$.
The pointer $P$ to $v$ is implemented as a functional stack containing
some of the ancestors of $v$ in $\stree(s)$, including $v$ and $r$,
accompanied with some additional information (a $\itnil$ pointer
is represented by an empty stack).
The stack is ordered by the levels of the ancestors, with $v$
lying always at the top.
Roughly speaking, we omit the ancestors being the internal nodes
on long subpaths of $r\to v$ descending to the leftmost (rightmost)
child each time. Gąsieniec et al.~\cite{DBLP:conf/dcc/GasieniecKPS05} applied a similar idea
to implement forward iterators on grammar-compressed strings (traversing only the leaves of the parse tree).

More formally, let $A_v$ be the set of ancestors of $v$ in $\stree(s)$,
including $v$.
The stack contains an element per each $a\in A_v\setminus (X_l\cup X_r)$, where
$X_l$ ($X_r$ respectively) is a set of such ancestors $b\in A_v\setminus\{r\}$ that $b$ simultaneously:
\begin{itemize}
  \item[(1)] contains $v$ in the subtree rooted at its leftmost (rightmost resp.) child,
  \item[(2)] is the leftmost (rightmost) child of $\spar{s}(b)$.
\end{itemize}
The stack elements come from the set
$\sigs(\grammar)\times(\{\texttt{L},\texttt{R},\perp\}\cup\mathbb{N}_+)\times \mathbb{N}$.
Let for any node $u$ of $\stree(s)$, $\delta(u)$ denote an
integer such that $u$ represents the substring of $w$ starting at $\delta(u)+1$.
If $a=r$, then the stack contains the entry $(s,\perp,0)$.
If $a$ is the leftmost child of its parent in $\stree(s)$, then the stack
contains the entry $(\ussig(a),\texttt{L},\delta(a))$.
Similarly, for $a$ which is the rightmost child of $\spar{s}(a)$,
the stack contains $(\ussig(a),\texttt{R},\delta(a))$.
Otherwise, $a$ is the $i$-th child of some $k$-th power signature, where
$i\in(1,k)$, and the stack contains $(\ussig(a),i,\delta(a))$ in this case.
Note that the top element of the stack is always of the form $(\ussig(v),*,*)$.

Having fixed the internal representation of a pointer $P$, we are
now ready to give the implementation of the $\stree(s)$-pointer interface.
In order to obtain $\itsig(P)$, we take the first coordinate of the top element of $P$.
The $\itdegree(P)$ can be easily read from $\grammar$ and $\itsig(P)$.

Let us now describe how $\itindex(P)$ works.
If $P$ points to the root, then the return value is clearly $0$.
Otherwise, let the second coordinate of the top element be $y$.
We have that $y \in \{\texttt{L}, \texttt{R}, 2, 3, 4, \ldots\}$.
If $y \not\in \{\texttt{L}, \texttt{R}\}$, we may simply return $y$.
Otherwise, if $y = \texttt{L}$, $\itindex(P)$ returns $1$.
When $y = \texttt{R}$, we return $\itdegree(\itparent(P))$.

The third ``offset'' coordinate of the stack element $(\ussig(v),y,\delta(v))$ is only
maintained for the purpose of computing the value $\itrepr(P)$, which
is equal to $(\delta(v)+1,\delta(v)+\slength(\ussig(v)))$.
We show that the offset $\delta(v)$ of any entry depends
only on $v$, $y$ and the previous preceding stack element $(\ussig(v'),*,\delta(v'))$
(the case when $v$ is the root of $\stree(s)$ is trivial).
Recall that $v'$ is an ancestor of $v$ in $\stree(s)$.
\begin{itemize}
\item If $y=\texttt{L}$, then on the path $v'\to v$ in $\stree(s)$
  we only descend to the leftmost children, so $\delta(v)=\delta(v')$.
\item If $y=j$, then $\ussig(v')\to s_1^k$, $j\in(1,k)$ and
$v'$ is the parent of $v$ in $\stree(s)$.
Hence, $\delta(v)=\delta(v')+(j-1)\cdot\slength(s_1)$ in this case.
\item If $y=\texttt{R}$, then on the path $v'\to v$ in $\stree(s)$
  we only descend to the rightmost children, so
  $\delta(v)=\delta(v')+\slength(\ussig(v'))-\slength(\ussig(v))$.
\end{itemize}
For the purpose of clarity, we hide the third coordinate of the stack element
in the below discussion, as it can be computed according to the above
rules each time $\stpush$ is called.
Thus, in the following we assume that the stack contains
elements from the set $\sigs(\grammar)\times(\{\texttt{L},\texttt{R},\perp\}\cup\mathbb{N}_+)$.

The implementation of the primitives $\itroot$, $\itbegin$ and $\itend$
is fairly easy: $\itroot(s)$ returns $\stpush(\stempty(),\allowbreak (s,\perp))$,
$\itbegin(s)$ returns the pointer
$\stpush(\itroot(s),(\itfirst(s,0),\texttt{L}))$,
whereas $\itend(s)$ returns $\stpush(\itroot(s),\allowbreak(\itlast(s,0),\texttt{R}))$.

It is beneficial to have an auxiliary
function $\stcomp$.
If the stack $P$ contains $(s_1,y)$ as the top element
and $(s_2,y)$ as the next element, where $y\in \{\texttt{L},\texttt{R}\}$,
$\stcomp(P)$ returns a stack $P'$ with $(s_2,y)$ removed, i.e.
$\stpush(\stpop(\stpop(P)),\sttop(P))$.
Now, to execute $\itchild(P,k)$, we compute the $k$-th child $w$ of $v$
from $\grammar$ and return $\stcomp(\stpush(P,(\ussig(w),z))$,
where $z=\texttt{L}$ if $k=1$, $z=\texttt{R}$ if $v$ has exactly
$k$ children and $z=k$ otherwise.

The $\itparent(P)$ is equal to $\stpop(P)$ if $\sttop(P)=(s_v,k)$,
for an integer $k$ and $s_v=\ussig(v)$.
If, however, $\sttop(P)=(s_v,\texttt{L})$, then the actual parent
of $v$ might not lie on the stack.
Luckily, we can use the $\itfirst$ function: if $(s_u,*)$ is the second-to top stack
entry, then the needed parent is $\itfirst(s_u,\slev(s_v)-1)$.
Thus, we have $\itparent(P)=\stcomp(\stpush(\stpop(P),\allowbreak(\itfirst(s_u,\slev(s_v)-1),\texttt{L})))$
in this case.
The case when $\sttop(P)=(s_v,\texttt{R})$ is analogous -- we use $\itlast$
instead of $\itfirst$.

To conclude the proof, we show how to implement $\itright(P,l)$ (the implementation
of $\itleft(P,l)$ is symmetric).
We first find $Q$ -- a pointer to the nearest ancestor $a$ of $v$ such that $v$ is not in
the subtree rooted at $a$'s rightmost-child.
To do that, we first skip the ``rightmost path'' of nearest ancestors
of $v$ by computing a pointer $P'$ equal to $P$ if $\sttop(P)$ is
not of the form $(*,\texttt{R})$ and $\stpop(P)$ otherwise.
The pointer $Q$ can be now computed by calling $\itparent(P')$.
Now we can compute the index $j$ of the child $b$ of $a$ such that
$b$ contains the desired node: if $\sttop(P')=(*,\texttt{L})$, then $j=2$, and if $\sttop(P')=(*,k)$, then $j=k+1$.
The pointer $R$ to $b$ can be obtained by calling $\stcomp(\stpush(Q,\itsig(\itchild(Q,j)),y))$,
where $y$ is equal to $\texttt{R}$ if $b$ is the rightmost child
of $a$ and $j$ otherwise.
The exact value of the pointer $\itright(P,l)$ depends on whether
$b$ has level no more than $l$.
If so, then $\itright(P,l)=R$.
Otherwise, $\itright(P,l)=\stpush(R,(\itfirst(\itsig(R),l),\texttt{L}))$.

To sum up, each operation on pointers take worst-case $O(1)$ time, as they
all consist of executing a constant number of functional stack operations.
\end{proof}

\subsection{Pointers to Compressed Parse Trees}
We now show the implementation of pointers to uncompressed parse trees.

\begin{lemma}\label{lem:uptr}
Assume that the grammar $\grammar$ is growing over time.
We can implement the entire interface of pointers to trees $\ustree(s)$ where
$s\in\sigs(\grammar)$
in such a way that all operations take $O(1)$ worst-case time.
The additional worst-case time needed to update the shared pointer state is also $O(1)$ per creation of new signature.
\end{lemma}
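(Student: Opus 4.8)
The goal is to implement the full uncompressed-parse-tree pointer interface on top of the compressed pointers from \cref{lem:ptr}, in $O(1)$ worst-case time per operation. The key observation is that every node $v\in\ustree(s)$ that is \emph{not} in $\stree(s)$ is a "dissolved" unary node: it lies on a maximal chain of single-child nodes all sharing one signature, whose bottom endpoint \emph{is} a node of $\stree(s)$. Concretely, such a chain corresponds to a single node $u\in\stree(s)$ together with an integer "height" $t$ counting how many extra unary levels above $u$ (within $\ustree(s)$) we have climbed; the level of this $\ustree$-node is $\itlevel_{\stree}(u)+t$, and it sits inside the level-$l$ layer $\stlayer_s(l)$ represented by $u$. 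So I would represent a $\ustree(s)$-pointer as a pair $(P,t)$, where $P$ is a $\stree(s)$-pointer to the corresponding node $u=$ "first strict descendant of $v$ that survives in $\stree(s)$", and $t\ge 0$ is the offset, constrained by $t < \slev(\ussig(\spar{s}(u)))-\itlevel_{\stree}(P)$ when $u$ is not the root (and $t$ unbounded-but-finite-capped at the root), so that the represented $\ustree$-node is genuinely a proper ancestor or equal.

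**Translating the operations.** With this representation, $\itsig(P,t)=\itsig(P)$, $\itlevel(P,t)=\itlevel(P)+t$ (reading $\itlevel$ on the compressed pointer as the number of $\stree$-edges below, then adjusting by how the unary chain interleaves — this needs a small amount of bookkeeping using $\slev$ of the signatures along the top of $P$, all computable from $\grammar$), $\itrepr(P,t)=\itrepr(P)$ (a dissolved unary chain represents exactly the same fragment as its surviving bottom node), and $\itdegree(P,t)$ is $1$ if $t>0$ and the compressed degree of $v$ if $t=0$. For navigation: $\itchild((P,t),k)$ returns $(P,t-1)$ if $t>0$ (the unique child is one level down the same chain), and otherwise descends in $\stree(s)$ via $\itchild(P,k)$, then resets $t$ to the number of dissolved unary levels sitting on top of that child within $\ustree(s)$, which equals $\slev(\text{child sig})-1$ minus the $\stree$-level of the child — again a constant-time computation from $\grammar$. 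Symmetrically, $\itparent((P,t))$ increments $t$ unless we have exhausted the current unary chain (i.e. $t$ has reached its cap), in which case we call $\itparent(P)$ and recompute the offset. The horizontal moves $\itright,\itleft$ are where the layer functions $\itright(P,l),\itleft(P,l)$ of \cref{lem:ptr} do the real work: $\usright(v)$ for $v$ at level $l$ is obtained by taking the $\stree$-node representing $v$ in $\stlayer_s(l)$, calling $\itright(\cdot,l)$ to get the next node of that layer, and then re-expressing it as a $(P',t')$ pair at level $l$ — so $\itright((P,t))=$ "convert $(P,t)$ to its level-$l$ witness with $l=\itlevel(P,t)$, call compressed $\itright(\cdot,l)$, convert back." The skip and extent functions $\itrskip,\itlskip,\itrext,\itlext$ similarly reduce to one compressed horizontal step into a node on the same layer that is the start of a run of equal signatures; the run length is just the multiplicity in the enclosing power-rule production (or $1$), readable from $\grammar$, after which $\itrext$ returns that count and $\itrskip(\cdot,k)$ moves by taking the $(k{+}1)$-st node of the run, i.e. one compressed layer step combined with an arithmetic offset inside a power node — all $O(1)$.

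**The main obstacle.** The crux is bookkeeping the relationship between $\ustree$-levels and $\stree$-levels precisely enough that $\itlevel$, the offset caps, and especially the conversions "$(P,t)\leftrightarrow$ its level-$l$ witness in $\stlayer_s(l)$" are genuinely constant time and mutually consistent. The subtlety is that when we climb a unary chain and then hit a node of $\stree(s)$, or when $\itright(\cdot,l)$ lands us on a node whose own $\stree$-level is strictly below $l$ (because it too has dissolved unary levels above it up to level $l$), we must correctly recover the offset $t'$; this is exactly $l-\itlevel_{\stree}(\text{new }P')$, and one has to verify this never exceeds the legal cap, which follows from the definition of $\stlayer_s(l)$ (the node's parent signature has $\slev>l$). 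I would prove a small invariant — "$(P,t)$ is valid iff $0\le t$ and either $P$ is the root, or $t < \slev(\ussig(\spar{s}(\cdot)))-\itlevel_{\stree}(P)$" — and check each operation preserves it; the rest is routine translation. Correctness of the pointer-equality test follows because $\itrepr$ still characterizes nodes within a fixed $\ustree(s)$, noting two distinct $\ustree$-nodes can share a fragment only if one is an ancestor of the other via a unary chain, and such nodes are excluded since a valid pointer always names the \emph{lowest} surviving descendant.
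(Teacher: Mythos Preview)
Your approach matches the paper's: a $\ustree(s)$-pointer is a pair consisting of a $\stree(s)$-pointer $P$ (to the nearest non-dissolved descendant) plus level data, and every operation reduces to $O(1)$ calls to the compressed interface of \cref{lem:ptr}. The one difference is that the paper stores the \emph{absolute} $\ustree$-level $l$ rather than your offset $t$; since $l=\slev(\itsig(P))+t$ this is equivalent, but carrying $l$ directly eliminates the recomputations you flag as the ``main obstacle'' --- e.g.\ $\itchild$ at $t=0$ becomes simply $(\itchild(P,k),\,l-1)$ and $\itright$ is $(\itright(P,l),\,l)$, with no ``convert back'' step --- and it would spare you the garbled reset formula you wrote for $\itchild$ (the correct value is $l-1-\slev(\itsig(\itchild(P,k)))$, not what you wrote).

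One genuine error: your closing remark on pointer equality is wrong. A pointer $(P,t)$ with $t>0$ represents a dissolved unary node that shares the same fragment, hence the same $\itrepr$, as $(P,0)$; your representation explicitly allows such pointers, so your claim that ``such nodes are excluded'' contradicts your own setup. Equality of $\ustree$-pointers requires comparing both $\itrepr$ and $\itlevel$ (the paper's remark that $\itrepr$ alone suffices is only used where both pointers are already known to sit at the same level).
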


\begin{proof}
To implement the pointers to the nodes of a tree $\ustree(s)$ we use pointers to the nodes of $\stree(s)$ (see \cref{lem:ptr}).
Namely, the pointer $\overline{P}$ to a tree $\ustree(s)$ is a pair $(P, l)$, where $P$ is a pointer to $\stree(s)$ and $l$ is a level.

Recall that $\ustree(s)$ can be obtained from $\stree(s)$ by dissolving nodes with one child.
Thus, a pointer $\overline{P}$ to a node $u$ of $\ustree(s)$ is represented by a pair $(P, l)$, where:
\begin{itemize}
\item If $u\in\stree(s)$, $P$ points to $u$. Otherwise, it points to the first descendant
  of $u$ in $\ustree(s)$ that is also a node of $\stree(s)$,
\item $l$ is the level of $u$ in $\ustree(s)$.
\end{itemize}

Thus, $\itlevel(\overline{P})$ simply returns $l$.
The operations $\itroot(s)$ and $\itbegin(s)$ can be implemented directly, by calling the corresponding operations on $\stree(s)$.
Observe that the desired nodes are also nodes of $\stree(s)$.
Moreover, thanks to our representation, the return value of $\itsig(\overline{P})$ is $\itsig(P)$.

To run $\itparent(\overline{P})$ we consider two cases.
If the parent of $\overline{P}$ is a node of $\stree(s)$, then $\itparent(\overline{P})$ returns $(\itparent(P), l+1)$.
Otherwise, it simply returns $(P, l+1)$.
Note that we may detect which of the two cases holds by inspecting the level of $\itparent(P)$.

The implementation of $\itchild(\overline{P}, k)$ is similar.
If the node pointed by $\overline{P}$ is also a node of $\stree(s)$, we return $(\itchild(P, k), l-1)$.
Otherwise, $k$ has to be equal to $1$ and we return $(P, l-1)$.

If $\itparent(\overline{P})$ does exist and is a node of $\stree(s)$, the return value of $\itindex(\overline{P})$ is the same as the return value of $\itindex(P)$.
Otherwise, the parent of the node pointed by $\overline{P}$ either has a single child or does not exist, so $\itindex(\overline{P})$ returns $1$.

In the function $\itlext$ we have two cases.
If $l$ is odd, then $\itlext(\overline{P})=0$, as the odd
levels of $\ustree(s)$ do not contain adjacent equal signatures.
The same applies to the case when $P$ points to the root of $\stree(s)$.
Otherwise, the siblings of $v$ constitute a block of equal signatures
on the level $l$.
Thus, we return $\itindex(\overline{P})-1$.
The implementation of $\itrext$ is analogous.

$\itright(\overline{P})$ returns $(\itright(P, l), l)$ and $\itleft(\overline{P})$ returns $(\itleft(P, l), l)$.

Finally, $\itright(\overline{P})$ can be used to implement $\itrskip(\overline{P},k)$ ($\itlskip$ is symmetric).
If $k=\itrext(\overline{P})+1$, then $\itrskip(\overline{P},k)=\itright(\itrskip(\overline{P},k-1))$.
Also, $\itrskip(\overline{P},0)=\overline{P}$ holds in a trivial way.
In the remaining case we have $k\in[1,\itrext(\overline{P})]$ and it follows that $\itsig(\itparent(\overline{P}))\to \itsig(\overline{P})^q$
for $q=\itindex(\overline{P})+\itrext(\overline{P})$.
Thus, we return $\itchild(\itparent(\overline{P}),\itindex(\overline{P})+k))$.
\end{proof}

\begin{remark}
In the following we sometimes use the pointer notation
to describe a particular node of $\ustree(s)$.
For example, when we say ``node $P$'', we actually
mean the node $v\in\ustree(s)$ such that $P$ points to $v$.
\end{remark}

\section{Context-Insensitive Nodes}\label{sec:contexti}
In this section we provide a combinatorial characterization of context-insensitive nodes,
introduced in \cref{sec:context_short}.
Before we proceed, let us state an important property of our way of parsing,
which is also applied in \cref{sec:om} to implement longest common prefix queries.

\begin{lemma}\label{lem:lcpd}
Let $i\in \mathbb{Z}_{> 0}$, $w,w'\in \sigs^*$ and let $p$ be the longest common prefix of $w$ and $w'$.
There exists a decomposition $p=p_\ell p_r$ such that $|\rle(p_r)|\le 1$ and $\shrink{i}(p_\ell)$ is the longest common prefix
of $\shrink{i}(w)$ and $\shrink{i}(w')$.
\end{lemma}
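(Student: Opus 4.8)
The plan is to analyze how the parsing operation $\shrink{i}$ acts near the boundary of the common prefix $p$, distinguishing the two cases $i$ odd ($\shrink{i}=\rle$) and $i$ even ($\shrink{i}=\compress_{i/2}$). The key observation is that $\shrink{i}$ is \emph{context-insensitive far from the right end of a string but may be affected locally}: when we parse $w=pu$ and $w'=pu'$ with $u[1]\neq u'[1]$, the block structure inside $p$ is identical for both strings except possibly for the last one or two blocks of $p$, whose parsing can depend on the first symbol(s) of $u$ versus $u'$. I would set $p_r$ to be the suffix of $p$ consisting of these last ``unstable'' blocks and $p_\ell$ the remaining prefix, then verify that $\shrink{i}(p_\ell)$ is a prefix of both $\shrink{i}(w)$ and $\shrink{i}(w')$ and is maximal with this property.

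First, for the case $i$ odd: $\rle$ divides a string into maximal blocks of equal symbols. Writing $p = q\edot a^k$ where $a^k$ is the last run of $p$ (so $q$ ends with a symbol $\neq a$, or $q=\eps$), the run-length blocks of $pu$ agree with those of $q$ on the part coming from $q$, and the run containing the last symbols of $p$ may extend into $u$ (if $u$ starts with $a$'s); similarly for $pu'$. Since $u[1]\neq u'[1]$, at most one of $u,u'$ can start with $a$, so after removing the last run $a^k$ we get $p_\ell = q$ with $|\rle(p_r)|=|\rle(a^k)|\le 1$, and $\rle(q)$ is a common prefix of $\rle(pu)$ and $\rle(pu')$; maximality follows because the block structure first diverges within or just after $a^k$. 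Second, for $i$ even, $\compress_{j}$ ($j=i/2$) marks a block $a_ta_{t+1}$ exactly when $\hs_j(a_t)=0$ and $\hs_j(a_{t+1})=1$, and these marks are non-overlapping; the marking of a symbol $a_t$ inside $p$ depends only on $a_t$ and $a_{t-1},a_{t+1}$ — concretely, whether $a_t$ starts a marked pair depends on $a_t,a_{t+1}$, and whether it ends one depends on $a_{t-1},a_t$. Hence the marking of all of $p$ except its \emph{last symbol} is the same in $pu$ and $pu'$ (the last symbol of $p$ might get marked together with $u[1]$ in one string and not the other). So here one can even take $p_r$ to be just the last symbol of $p$, giving $|\rle(p_r)|=1$, with $p_\ell$ equal to $p$ minus its last symbol; then $\compress_j(p_\ell)$ is a common prefix of $\compress_j(pu)$ and $\compress_j(pu')$, and maximal because divergence happens at the block containing $p$'s last symbol. (Some care: when $p_\ell$'s last symbol was marked in one parse but now sits at the right end, one must check the resulting signatures still agree up to that point, which they do since the marked pair lay strictly inside $p_\ell$.)

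The main obstacle will be handling the boundary blocks cleanly and uniformly, in particular ruling out off-by-one problems: I must make sure that the divergence of $\shrink{i}(w)$ and $\shrink{i}(w')$ happens \emph{no earlier} than the position corresponding to $p_\ell$ (giving maximality of the common prefix $\shrink{i}(p_\ell)$) and \emph{no later} (so that $\shrink{i}(p_\ell)$ really is the whole common prefix, not a proper part of it). The subtle direction is maximality: I need that $\shrink{i}(w)$ and $\shrink{i}(w')$ actually differ at the next symbol after $\shrink{i}(p_\ell)$; this is where $u[1]\neq u'[1]$ is used together with the fact that distinct maximal blocks (resp. distinct single symbols with distinct marking-contexts) are replaced by distinct signatures, guaranteed by the convention that $\grammar$ contains no two equivalent production rules (equivalently, $\sstr$ is injective on signatures). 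Once the boundary block analysis is pinned down, the rest is a routine check that the block sequences agree symbol-by-symbol up to the chosen cut.
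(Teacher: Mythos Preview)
Your explicit choices of $p_\ell$ are too coarse and in general fail to satisfy the conclusion. In the even case, take $p=c_1\cdots c_m$ with $h_j(c_{m-1})=0$ and $h_j(c_m)=1$; then in both $w$ and $w'$ the pair $c_{m-1}c_m$ is marked and becomes a single signature $s$, so the longest common prefix of $\compress_j(w)$ and $\compress_j(w')$ already contains $s$. But with your choice $p_\ell=c_1\cdots c_{m-1}$, the symbol $c_{m-1}$ sits at the right end of $p_\ell$ and is therefore a singleton block there, so $\compress_j(p_\ell)$ ends in the terminal $c_{m-1}$ rather than $s$: $\compress_j(p_\ell)$ is not even a prefix of $\compress_j(w)$. (Your parenthetical asserts ``the marked pair lay strictly inside $p_\ell$'', but here the marked pair is $c_{m-1}c_m$, which does not.) The odd case has the mirror problem: if neither $u[1]$ nor $u'[1]$ equals the last symbol $a$ of $p$, the run $a^k$ is a complete block in both parsings, the common prefix of $\rle(w)$ and $\rle(w')$ extends through its signature, and your $\rle(q)$ is strictly shorter than the true LCP.

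The paper sidesteps all of this by reversing the order of definition. It lets $p_\ell$ be the prefix of $w$ (hence of $p$) obtained by \emph{expanding} the longest common prefix of $\shrink{i}(w)$ and $\shrink{i}(w')$; then $\shrink{i}(p_\ell)$ equals this LCP by construction, since block boundaries depend only on adjacent symbols and both parsings place a boundary right after $p_\ell$. The only thing left is to show $|\rle(p_r)|\le1$: the block starting at $p_r[1]$ differs between $w$ and $w'$, so in at least one of them it must extend past $p$ (all boundaries inside $p$ coincide), making $p_r$ a proper prefix of a single block; and any block is either a power of one symbol or has length exactly two. Your forward approach can be repaired by a finer case split on $h_j(c_{m-1}),h_j(c_m),h_j(u[1]),h_j(u'[1])$ (and analogously for $\rle$), but defining $p_\ell$ from the LCP side is shorter, uniform in the parity of $i$, and makes both the ``prefix'' and the ``maximal'' directions automatic.
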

\begin{proof}
Observe that the longest common prefix of $\shrink{i}(w)$ and $\shrink{i}(w')$
expands to a common prefix of $w$ and $w'$. We set $p_\ell$ to be this prefix and $p_r$ to be the corresponding
suffix of $p$. If $p_r = \eps$ we have nothing more to prove and thus suppose $p_r\ne \eps$.

Note that $p_r$ starts at the last position of $p$ before which both
$\shrink{i}(w)$ and $\shrink{i}(w')$ place block boundaries.
Recall that $\shrink{i}$ places a block boundary between two positions only based on the characters on these positions.
Thus, since the blocks starting with $p_r$ are (by definition of $p_\ell$) different in $w$ and $w'$,
in one of these strings, without loss of generality in $w$, this block extends beyond $p$.
Consequently, $p_r$ is a proper prefix of a block. Observe, however,
that a block may contain more than one distinct character only if its length is exactly two.
Thus, $|\rle(p_r)|= 1$ as claimed.
\end{proof}

Next, let us recall the notions introduced in \cref{sec:context_short}. Here, we provide slightly more formal definitions.
For arbitrary strings $w$, $x$, $y$ we say that $x\edot w\edot y$ (which is formally a triple $(x,w,y)$)
is \emph{extension} of $w$. We often identify the extension with the underlying concatenated string $xwy$,
remembering, however, that a particular occurrence of $w$ is distinguished.
If $x=\eps$ or $y=\eps$, we say that $x\edot w\edot y$ is a \emph{right extension} or \emph{left extension}, respectively.

Let us fix an extension $x\edot w \edot y$ of $w$. Consider a node $v$ at level $l$ of $\ustree(w)$.
This node represents a particular fragment $w[i..j]$ which has a naturally corresponding fragment of the extension.
We say that a node $v'$ at level $l$ of $\ustree(xwy)$ is a \emph{counterpart} of $v$ with respect to $x\edot w \edot y$
(denoted $v \approx v'$) if $\ussig(v)=\ussig(v')$ and $v'$ represents the fragment of $xwy$ which naturally corresponds to $w[i..j]$.

\begin{lemma}\label{lem:close}
If a level-$l$ node $v\in \ustree(w)$ has a counterpart $v'$ with respect to a right extension $\eps \edot w \edot y$,
then $\uschild(v,k)\approx \uschild(v',k)$ for any integer $k$, $\usleft(v)\approx \usleft(v')$, and $\usleft(\uspar(v))\approx \usleft(\uspar(v'))$.
Moreover, if $\ussig(v)\ne \ussig(\usleft(v))$, then $\uspar(\usleft(v))\approx \uspar(\usleft(v'))$.
In particular, the nodes do not exist for $v$ if and only if the respective nodes do not exist for $v'$.

Analogously, if $v$ has a counterpart $v'$ with respect to a left extension  $x \edot w \edot \eps$,
then $\uschild(v,k)\approx \uschild(v',k)$ for any integer $k$, $\usright(v)\approx \usright(v')$, and $\usright(\uspar(v))\approx \usright(\uspar(v'))$.
Moreover, if $\ussig(v)\ne \ussig(\usright(v))$, then $\uspar(\usright(v))\approx \uspar(\usright(v'))$.
\end{lemma}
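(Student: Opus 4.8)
The plan is to prove the lemma by induction on the level $l$, working with the right-extension case; the left-extension case is fully symmetric (it is obtained by reversing strings and swapping $\usleft \leftrightarrow \usright$, which is legitimate because the definitions of $\rle$ and $\compress_i$ are themselves symmetric under reversal in the sense that block boundaries depend only on pairs of adjacent symbols). So fix a right extension $\eps \edot w \edot y$, i.e.\ the extension is $wy$, and suppose $v$ is a level-$l$ node of $\ustree(w)$ with a counterpart $v'$ in $\ustree(wy)$. The key structural fact I would use is that $\cshrink{l}(w)$ is a prefix of (the level-$l$ string of) $wy$ up to the position where $v'$ ends: since $v\approx v'$, the symbols of $\cshrink{l}(w)$ at positions up to the one corresponding to $v$ agree with those of $\cshrink{l}(wy)$, and all of these level-$l$ symbols of $wy$ to the left of $v'$ originate (going one level down) from the same level-$(l-1)$ symbols as in $w$.

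First I would handle the children: $\uschild(v,k)\approx \uschild(v',k)$. The children of $v$ at level $l-1$ are exactly the level-$(l-1)$ symbols from which the symbol at $v$ was formed by $\shrink{l}$ (either an $\rle$ block $s_1^{k'}$ or a $\compress_i$ pair $s_1s_2$ or a singleton). Since $\ussig(v)=\ussig(v')$, the same production rule applies, so $v'$ has the same number of children with the same signatures, in the same left-to-right order, and they occupy the fragment of $wy$ naturally corresponding to the fragment of $w$ spanned by $v$'s children; hence $\uschild(v,k)\approx\uschild(v',k)$, and one side exists iff the other does.

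Next, $\usleft(v)\approx\usleft(v')$ and $\usleft(\uspar(v))\approx\usleft(\uspar(v'))$. Here I would argue that the entire level-$l$ prefix of $\cshrink{l}(w)$ ending at $v$ coincides, symbol for symbol, with the level-$l$ prefix of the level-$l$ string of $wy$ ending at $v'$: this follows because the parsing $\shrink{l}$ places block boundaries using only adjacent-symbol information, and inductively (using the child claim at level $l-1$ and the already-established prefix agreement one level down) the level-$(l-1)$ strings agree on the relevant prefix, so the block structure — and therefore the resulting level-$l$ symbols — agrees on that prefix too. From this, $\usleft(v)$ (the $(i-1)$-th level-$l$ symbol) has the same signature as $\usleft(v')$ and represents the corresponding fragment, so $\usleft(v)\approx\usleft(v')$; and $\uspar(v)$ at level $l+1$ together with its left neighbour $\usleft(\uspar(v))$ are determined by the block structure of $\cshrink{l}$ on a prefix that ends at or just past $v$, which again agrees, giving $\usleft(\uspar(v))\approx\usleft(\uspar(v'))$. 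The extra clause, $\uspar(\usleft(v))\approx\uspar(\usleft(v'))$ when $\ussig(v)\ne\ussig(\usleft(v))$: in this case there is a block boundary at level $l$ immediately between $\usleft(v)$ and $v$ in $w$ — this is forced because if $\usleft(v)$ and $v$ were in a common $\compress$ block their signatures would differ only through the block's rule but they could still be equal; the point is that when $\ussig(v)\ne\ussig(\usleft(v))$ and $l$ is (say) a $\compress$ level, $\usleft(v)$'s block at level $l$ is entirely to the left of $v$, hence lies in the agreeing prefix, so $\uspar(\usleft(v))$ is created from symbols inside the common prefix and has the same signature and position in $wy$.

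The main obstacle I anticipate is making the ``agreeing prefix'' argument airtight across the two types of levels ($\rle$ vs.\ $\compress_i$) and correctly pinning down \emph{exactly how far} the prefix agreement propagates — in particular, the subtlety that $\uspar(v)$ may extend slightly to the right of $v$ (when $v$ is not the rightmost child of its parent), so one must check that the block of $\cshrink{l}$ containing $v$'s position either lies wholly in the common prefix or, if it extends beyond $v$, does so identically in $w$ and $wy$ because the next symbol (if any) after $v$ in $w$ is the same as after $v'$ in $wy$ whenever that symbol exists in $w$ at all. The condition $\ussig(v)\ne\ussig(\usleft(v))$ is precisely what rules out the bad case where a left-extending block boundary decision could differ; formalizing why exactly this hypothesis suffices — and why it is not needed for the three unconditional conclusions — is the delicate part, and I would dispatch it by a careful case analysis on whether level $l$ is odd ($\rle$) or even ($\compress_i$), in each case writing out the local block of $v$ and of $\usleft(v)$ explicitly. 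The final sentence (existence of a node on the $w$ side iff existence on the $wy$ side) falls out of each $\approx$ claim, since $v\approx v'$ was assumed and the constructions above produce counterparts whenever the source node exists, while conversely a node of $wy$ in the agreeing prefix region forces the corresponding node of $w$ to exist.
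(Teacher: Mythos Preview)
Your inductive strategy and the treatment of children and of $\usleft(v)\approx\usleft(v')$ are sound; the paper runs the latter more crisply via the identity $\usleft(u)=\usleft(\uspar(\uschild(u,1)))$, which reduces the level-$l$ claim for $\usleft$ to the level-$(l-1)$ claim for $\usleft\circ\uspar$, but a direct prefix-agreement argument works too. The genuine gap is in your argument for $\usleft(\uspar(v))\approx\usleft(\uspar(v'))$. You propose to check that the level-$l$ block containing $v$ ``extends beyond $v$ identically in $w$ and $wy$,'' which would yield $\uspar(v)\approx\uspar(v')$; and you justify this by asserting that ``the next symbol after $v$ in $w$ is the same as after $v'$ in $wy$ whenever it exists.'' Both assertions are false. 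Take $w=caab$, $y=b$, and let $v$ be the level-$1$ node with $\ussig(v)=(a,2)$: then $\usright(v)$ has signature $b$ in $\ustree(w)$ but $(b,2)$ in $\ustree(wy)$, and (for suitable random bits at level~$2$) $\uspar(v)\not\approx\uspar(v')$---yet $\usleft(\uspar(v))\approx\usleft(\uspar(v'))$ still holds. The hypothesis $v\approx v'$ in a right extension gives you nothing about positions to the right of $v$.

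What actually matters is only the \emph{left} boundary of $\uspar(v)$'s block, and that boundary is determined by symbols at positions not exceeding the position of $v$, all of which lie in the common prefix. The paper packages this via \cref{lem:lcpd}: the longest common prefix $p$ of $\cshrink{l}(w)$ and $\cshrink{l}(wy)$ factors as $p=p_\ell p_r$ with $|\rle(p_r)|\le 1$ and $\shrink{l+1}(p_\ell)$ equal to the longest common prefix at level $l{+}1$. If $v$ lies in $p_\ell$ then $\uspar(v)\approx\uspar(v')$ outright; if $v$ lies in $p_r$ then all of $p_r$ sits inside a single level-$(l{+}1)$ block in each tree (the two blocks may differ on the right, but they share their left boundary), so $\usleft(\uspar(v))$ corresponds to the last symbol of $\shrink{l+1}(p_\ell)$ in either tree. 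The same factorization dispatches the ``Moreover'' clause cleanly: $\ussig(v)\ne\ussig(\usleft(v))$ forces $\usleft(v)\in p_\ell$ (since $p_r$ is a single run), whence $\uspar(\usleft(v))\approx\uspar(\usleft(v'))$. Your odd/even case analysis can be made to work, but only if you target the left-boundary statement rather than $\uspar(v)\approx\uspar(v')$; that reformulation is essentially \cref{lem:lcpd}.
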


\begin{proof}
Due to symmetry of our notions, we only prove statements concerning the right extension.
First, observe that $\uschild(v,k)\approx \uschild(v',k)$ simply follows from the construction
of a parse tree and the definition of a counterpart.

Next, let us prove that $\usleft(v)\approx \usleft(v')$. If $l=0$, this is clear since
all leaves of $\ustree(w)$ have natural counterparts in $\ustree(wy)$.
Otherwise, we observe that $\usleft(u)=\usleft(\uspar(\uschild(u,1)))$ holds for every $u$.
This lets us apply the inductive assumption
for the nodes $\uschild(v,1)\approx \uschild(v',1)$ at level $l-1$ to conclude that $\usleft(v)\approx \usleft(v')$.

This relation can be iteratively deduced for all nodes to the left of $v$ and $v'$,
and thus $v$ and $v'$ represent symbols of $\cshrink{l}(w)$ and $\cshrink{l}(wy)$ within the longest
common prefix $p$ of these strings.
\cref{lem:lcpd} yields a decomposition $p=p_\ell p_r$ where $p':=\shrink{l+1}(p_\ell)$ is the longest common prefix of $\cshrink{l+1}(w)$
and $\cshrink{l+1}(wy)$ while $|\rle(p_r)|\le 1$. Clearly, nodes at level $l+1$ representing symbols within this longest
common prefix are counterparts.
Thus, whenever $u$ represents a symbol within $p_\ell$ and $u'$ is its counterpart with respect to $\eps \edot w \edot y$,
we have $\uspar(u)\approx \uspar(u')$.

Observe that $\ussig(v)\ne \ussig(\usleft(v))$ means that $\usleft(v)$ represents a symbol within $p_\ell$, and this immediately
yields $\uspar(\usleft(v))\approx \uspar(\usleft(v'))$.
Also, both in $\ustree(w)$ and in $\ustree(wy)$ nodes representing symbols in $p_r$ have a common parent, so
$\usleft(\uspar(v))\approx \usleft(\uspar(v'))$ holds irrespective of $v$ and $v'$ representing symbols within $p_\ell$ or within $p_r$.
\end{proof}

\begin{lemma}\label{lem:lr}
Let $v\in \ustree(w)$. If $v$ has counterparts with respect to both $x\edot w \edot \eps$ and $\eps \edot w \edot y$, then $v$
has a counterpart with respect to $x \edot w \edot y$.
\end{lemma}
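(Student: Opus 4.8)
The plan is to prove \cref{lem:lr} by a careful induction that simultaneously tracks, for both the left extension $x \edot w \edot \eps$ and the right extension $\eps \edot w \edot y$, the longest common prefix and suffix structure that \cref{lem:lcpd} and \cref{lem:close} provide. Concretely, I would argue by induction on the level $l$ of the node $v \in \ustree(w)$. The base case $l=0$ is immediate: every leaf of $\ustree(w)$ has a natural counterpart in $\ustree(xwy)$, since the leaves of $\ustree(xwy)$ are exactly the characters of $xwy$ and $w$ occurs as a contiguous fragment. The goal is to lift counterparts from level $l$ to level $l+1$: given that $v$ at level $l+1$ has counterparts with respect to both one-sided extensions, I want to find a level-$(l+1)$ node in $\ustree(xwy)$ with the same signature representing the fragment of $xwy$ naturally corresponding to the fragment of $w$ that $v$ represents.

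The key observation is that $v$ at level $l+1$ arises from a block (of length $1$, $2$, or a run) among the children of $v$ at level $l$ of $\ustree(w)$; these children, being level-$l$ nodes, have counterparts with respect to $x \edot w \edot \eps$ (by \cref{lem:close} applied iteratively, since the leftmost child of $v$ has a counterpart and \cref{lem:close} propagates counterparts to all children and to left neighbors) — and symmetrically with respect to $\eps \edot w \edot y$. The point is that whether $\shrink{l+1}$ places a block boundary at a given position depends only on the two symbols adjacent to that position, i.e., only on the signatures of the level-$l$ nodes immediately surrounding the block that forms $v$. Since those surrounding level-$l$ nodes have counterparts with respect to $x \edot w \edot \eps$ in $\ustree(xw)$ and with respect to $\eps \edot w \edot y$ in $\ustree(wy)$, and since in $\ustree(xwy)$ the relevant local neighborhood of the fragment corresponding to $v$ looks like $\ustree(xw)$ on its left side and like $\ustree(wy)$ on its right side, the same block boundaries are drawn in $\ustree(xwy)$, producing a level-$(l+1)$ node with the same signature over the right fragment. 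Making this ``looks like on the left / looks like on the right'' gluing precise is the main obstacle: one has to ensure that the left boundary of the block forming $v$ is far enough from the right end of $w$ that it is not affected by $y$, and the right boundary far enough from the left end of $w$ that it is not affected by $x$. This is exactly the content of the $p_\ell p_r$ decomposition from \cref{lem:lcpd} and its use in the proof of \cref{lem:close}: the ``bad'' region near each end of $w$ where the block structure could change has $\rle$-length at most $1$.

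So the detailed argument I would run is: fix the fragment $w[i..j]$ represented by $v$. Apply \cref{lem:close} repeatedly to the left extension to obtain that all level-$l$ children of $v$, as well as level-$l$ nodes sufficiently far to the left, have counterparts in $\ustree(xw)$; the proof of \cref{lem:close} already establishes via \cref{lem:lcpd} that parents of level-$l$ nodes lying within the ``common prefix region'' $p_\ell$ also have counterparts, so $v$ itself — being the parent of its children, which lie within the deep common prefix of $\cshrink{l}(w)$ and $\cshrink{l}(xw)$ read from the appropriate side — has a counterpart with respect to $x \edot w \edot \eps$ that represents the correct fragment in $xw$. Symmetrically $v$ has a counterpart with respect to $\eps \edot w \edot y$ in $\ustree(wy)$. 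Now in $\ustree(xwy)$, I would show by the same local-boundary argument that the level-$l$ layer restricted to the fragment corresponding to $w$, together with a bounded margin on each side, is identical (signature by signature, position by position relative to the $w$-fragment) to the corresponding restriction in $\ustree(xw)$ on the left portion and in $\ustree(wy)$ on the right portion — here one uses that $v$'s block, having a counterpart from each side, sits in the overlap region where both descriptions agree. Hence $\shrink{l+1}$ creates in $\ustree(xwy)$ a block identical to the one that forms $v$, giving a level-$(l+1)$ counterpart of $v$ and closing the induction.

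I expect the main obstacle to be purely bookkeeping: pinning down precisely ``how far'' the counterpart relation extends on each side (how many level-$l$ nodes to the left of $v$ are guaranteed counterparts under the right extension, and dually), and checking that the $\rle$-length-$\le 1$ bad region from \cref{lem:lcpd} is strictly smaller than that guaranteed margin, so that the two one-sided pictures genuinely overlap at $v$'s block. Once that overlap is established, the conclusion is forced by the locality of $\shrink{l+1}$'s block-boundary rule, which has already been emphasized in the proof of \cref{lem:lcpd}. I would be careful to handle separately the cases where $v$'s block has length one (trivial — $v$ is just $\usleft(\uspar(\uschild(v,1)))$-style reasoning applies), length two (a $\compress$ or $\rle$ pair), and length $\ge 3$ (an $\rle$ run of equal signatures), but in all three cases the argument is the same: the block is determined by its endpoints and their immediate neighbors, all of which have counterparts from both sides.
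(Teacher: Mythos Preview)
Your induction on the level $l$ and the base case are right, and you correctly identify that the children $v_1,\ldots,v_k$ of $v$ have counterparts $v'_1,\ldots,v'_k$ in $\ustree(xwy)$ by the inductive hypothesis. But the heart of your argument --- that ``in $\ustree(xwy)$ the relevant local neighborhood \ldots\ looks like $\ustree(xw)$ on its left side and like $\ustree(wy)$ on its right side'' --- is asserted, not proved. The inductive hypothesis only gives you counterparts in $\ustree(xwy)$ for nodes of $\ustree(w)$; it tells you nothing about the level-$l$ node \emph{to the left of} $v'_1$ in $\ustree(xwy)$ (which may lie entirely outside the $w$-fragment if $v_1$ is the leftmost level-$l$ node of $\ustree(w)$). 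Without knowing that neighbor's signature, you cannot argue that $\shrink{l+1}$ places a block boundary just before $v'_1$, and your ``local-boundary'' reasoning stalls.

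The paper closes exactly this gap with one observation you are missing: $xwy$ is itself a \emph{one-sided} extension of $xw$ (namely $\eps\edot xw\edot y$) and of $wy$ (namely $x\edot wy\edot\eps$). Once you see this, \cref{lem:close} applies directly: $v^L_i\approx v'_i$ in the extension $\eps\edot xw\edot y$, so $\usright(\uspar(v^L_i))\approx\usright(\uspar(v'_i))$, which forces all the $v'_i$ to share a parent $v'$ and makes $v'_k$ its rightmost child; symmetrically, $v^R_i\approx v'_i$ in $x\edot wy\edot\eps$ gives $\usleft(\uspar(v^R_i))\approx\usleft(\uspar(v'_i))$, so $v'_1$ is the leftmost child of $v'$. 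That is the whole proof --- no $p_\ell p_r$ bookkeeping, no margin estimates, no case split on block length. Your plan to re-run \cref{lem:lcpd} at level $l$ inside $\ustree(xwy)$ and glue overlap regions by hand would, if carried out, essentially rediscover this reduction, but as written it does not supply the missing step.
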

\begin{proof}
We proceed by induction on the level $l$ of $v$. All nodes of level $0$ have natural counterparts with respect to any extension,
so the statement is trivial  for $l=0$, which lets us assume $l>0$.
Let $v^L$ and $v^R$ be the counterparts of $v$ with respect to $x\edot w \edot \eps$ and $\eps \edot w \edot y$, respectively.
Also, let $v_1,\ldots,v_k$ be the children of $v$.
Note that these nodes have counterparts $v^L_1,\ldots,v^L_k$ with respect to $x\edot w \edot \eps$ and
$v^R_1,\ldots,v^R_k$ with respect to $\eps \edot w \edot y$.
Consequently, by the inductive hypothesis they have counterparts $v'_1,\ldots,v'_k$ with respect to $x\edot w \edot y$.

Observe that $v^L_i \approx v'_i$ when $xwy$ is viewed as an extension $\eps \edot xw\edot y$ of $xw$.
Now, \cref{lem:close} gives $\usright(v^L)=\usright(\uspar(v^L_i))\approx \usright(\uspar(v'_i))$
and thus nodes $v'_1,\ldots,v'_k\in \ustree(xwy)$ have a common parent $v'$. We shall prove
that $v'$ is the counterpart of $v$ with respect to $x \edot w \edot y$. For this it suffices to prove
that it does not have any children to the left of $v'_1$ or to the right of $v'_k$.
However, since $\usright(v^L)\approx \usright(v')$ and $v^L_k$ is the rightmost child of $v^L$,
$v'_k$ must be the rightmost child of $v'$.
Next, observe that $v^R_i \approx v'_i$ when $xwy$ is viewed as an extension $x \edot wy \edot \eps$ of $wy$.
Hence, \cref{lem:close} implies that $\usleft(v^R)=\usleft(\uspar(v^R_i))\approx \usleft(\uspar(v'_i))$,
i.e., $\usleft(v^R) \approx \usleft(v')$. As $v^R_1$ is the leftmost child of $v^R$, $v'_1$ must be the leftmost child of $v'$.
\end{proof}

A node $v\in \ustree(w)$ is called \emph{context-insensitive} if it is preserved in any extension of $v$,
and \emph{left} (\emph{right}) \emph{context-insensitive}, if it is preserved in any left (resp. right) extension.
The following corollary translates the results of \cref{lem:close,lem:lr}
in the language of context-insensitivity. Its weaker version is stated in \cref{sec:context_short} as \cref{lem:context_insensitive}.

\begin{corollary}\label{cor:context_insensitive} Let $v$ be a node of $\ustree(w)$.
\begin{enumerate}[(a)]
\item\label{it:left} If $v$ is right context-insensitive, so are nodes $\usleft(v)$, $\usleft(\uspar(v))$, and all children of~$v$.
Moreover, if $\ussig(v)\ne \ussig(\usleft(v))$, then $\uspar(\usleft(v))$ is also right context-insensitive.
\item\label{it:right} If $v$ is left context-insensitive, so are nodes $\usright(v)$, $\usright(\uspar(v))$, and all children of~$v$.
Moreover, if $\ussig(v)\ne \ussig(\usright(v))$, then $\uspar(\usright(v))$ is also left context-insensitive.
\item\label{it:both} If $v$ is both left context-insensitive and right context-insensitive, it is context-insensitive.
\end{enumerate}
\end{corollary}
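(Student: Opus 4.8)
The plan is to obtain \cref{cor:context_insensitive} as an essentially immediate reformulation of \cref{lem:close,lem:lr}. Throughout I read ``$v$ is preserved in an extension'' as ``$v$ has a counterpart with respect to that extension'' — this is the form in which the whole section works, a counterpart being in particular a node representing the relevant fragment. Thus $v$ is right (left) context-insensitive exactly when it has a counterpart with respect to every right extension $\eps\edot w\edot y$ (resp. every left extension $x\edot w\edot\eps$), and context-insensitive exactly when it has a counterpart with respect to every extension $x\edot w\edot y$. With this translation in place, each clause of the corollary is read off from the corresponding clause of the two lemmas.

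For part (a), fix an arbitrary right extension $\eps\edot w\edot y$. Right context-insensitivity of $v$ supplies a counterpart $v'\in\ustree(wy)$ with $v\approx v'$, and \cref{lem:close} then gives $\usleft(v)\approx\usleft(v')$, $\usleft(\uspar(v))\approx\usleft(\uspar(v'))$, $\uschild(v,k)\approx\uschild(v',k)$ for every $k$, and, whenever $\ussig(v)\neq\ussig(\usleft(v))$, also $\uspar(\usleft(v))\approx\uspar(\usleft(v'))$. By the last sentence of \cref{lem:close}, each of these nodes exists on the $w$-side exactly when its image exists on the $wy$-side, so a nonexistent node merely renders the corresponding claim vacuous. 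Since $y$ was arbitrary, each of $\usleft(v)$, $\usleft(\uspar(v))$, the children of $v$, and (under the stated inequality) $\uspar(\usleft(v))$ has a counterpart with respect to every right extension of $w$, hence is right context-insensitive. Part (b) is obtained verbatim from the symmetric half of \cref{lem:close}, replacing $\usleft$ by $\usright$ and right extensions by left ones.

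For part (c), suppose $v$ is both left and right context-insensitive, and fix an arbitrary extension $x\edot w\edot y$. Left context-insensitivity yields a counterpart of $v$ with respect to $x\edot w\edot\eps$, and right context-insensitivity yields one with respect to $\eps\edot w\edot y$; \cref{lem:lr} then produces a counterpart of $v$ with respect to $x\edot w\edot y$. As the extension was arbitrary, $v$ is context-insensitive.

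There is no genuine obstacle here: all the combinatorial work is already carried out in \cref{lem:close,lem:lr}, and what remains is purely a matter of quantifying over all (right/left) extensions and of disposing of the degenerate cases where one of the referenced nodes does not exist — the latter being exactly what the closing clause of \cref{lem:close} was stated for. The only spot deserving a word of care is making sure one uses the ``counterpart'' form of the definitions consistently when invoking the lemmas, which is why I front-load the vocabulary remark.
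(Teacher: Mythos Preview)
Your proposal is correct and matches the paper's intended approach exactly: the paper presents \cref{cor:context_insensitive} as a direct translation of \cref{lem:close,lem:lr} into the language of context-insensitivity, with no further proof given. Your only addition---spelling out the quantification over all (left/right) extensions and noting that the existence clause at the end of \cref{lem:close} handles degenerate cases---is precisely the unpacking the paper leaves implicit.
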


We say that a collection $L$ of nodes in $\ustree(w)$ or in $\stree(w)$ forms a \emph{layer} if every leaf has exactly one ancestor in $L$.
Equivalently, a layer is a maximal antichain with respect to the ancestor-descendant relation.
Note that the left-to-right order of $\ustree(w)$ gives a natural linear order on $L$, i.e., $L$ can be treated as a sequence of nodes.
The sequence of their signatures is called the \emph{decomposition} corresponding to the layer.
Observe a single decomposition may correspond to several layers in $\ustree(w)$, but only one of them does not contain nodes with exactly one child.
In other words, there is a natural bijection between decompositions and layers in $\stree(w)$.

We call a layer (left/right) context-insensitive if all its elements are (left/right) context-insensitive.

We also extend the notion of context insensitivity to the underlying decomposition.
The decompositions can be seen as sequences of signatures.
For two decompositions $D$, $D'$ we define their concatenation $D\cdot D'$ to be
the concatenation of the underlying lists.
The following fact relates context-insensitivity with concatenation of words and their decompositions.

\twodecompositions*

\begin{proof}
Let $L$ and $L'$ be layers in $\ustree(w)$ and $\ustree(w')$ corresponding to $D$ and $D'$, respectively.
Note that $ww'$ can be seen as a right extension $\eps \edot w \edot w'$ of $w$ and as a left extension $w\edot w' \edot \eps$ of $w'$.
Thus, all nodes in $L\cup L'$ have counterparts in $\ustree(ww')$ and these counterparts clearly form a layer.
Consequently, $D\cdot D'$ is a decomposition of $ww'$.
To see that $D\cdot D'$ is context-insensitive if $D$ and $D'$ are, it suffices to note that any extension $x \edot ww' \edot y$
of $ww'$ can be associated with extensions $x \edot w \edot w'y$ of $w$ and $xw \edot w' \edot y$ of $w'$.
\end{proof}

\begin{fact}\label{fct:layer}
Let $v,v'\in \stree(w)$ be adjacent nodes on a layer $L$.
If $v$ and $v'$ correspond to the same signature $s$, they are children of the same parent.
\end{fact}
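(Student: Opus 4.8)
The plan is to locate the nodes $v$ and $v'$ precisely inside the uncompressed parse tree $\ustree(w)$ and then to observe that the $\rle$ step immediately above that level is exactly what merges them into a single node of $\stree(w)$, which is their common parent.

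First I would show that every non-root node of $\stree(w)$ carrying the signature $s$ occupies level $\slev(s)$ when viewed in $\ustree(w)$. Passing from $\ustree(w)$ to $\stree(w)$ deletes precisely the single-child nodes, so a non-root node of $\stree(w)$ is either a leaf (carrying a terminal, of level $0$) or has at least two children; on the other hand, an occurrence of $s$ in $\ustree(w)$ has two or more children only on the level at which it is produced by a collapse, and this level is $\slev(s)$, because $s$ does not occur on any lower level and on higher levels it persists only through single-child nodes. If $v$ were the root of $\stree(w)$, the layer $L$ would be the singleton $\{v\}$ and $v'$ would not exist; hence neither $v$ nor $v'$ is the root, both are level-$\slev(s)$ nodes of $\ustree(w)$, and, writing $\ell=\slev(s)$, each of $v,v'$ represents a symbol of $\cshrink{\ell}(w)$.

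Next I would use that consecutive nodes of a layer span consecutive intervals of leaves: if $v$ spans $w[a..b]$, then $v'$ spans $w[(b+1)..c]$; moreover, the symbol of $\cshrink{\ell}(w)$ directly following the one represented by $v$ is the unique level-$\ell$ node of $\ustree(w)$ covering leaf $b+1$, which is $v'$ itself (as $v'$ lies on level $\ell$ and covers leaf $b+1$). Thus $v$ and $v'$ are two consecutive occurrences of the same symbol $s$ in $\cshrink{\ell}(w)$. This forces $\ell$ to be even or $0$, since for odd $\ell$ the string $\cshrink{\ell}(w)=\rle(\cshrink{\ell-1}(w))$ is $1$-repetition-free; consequently the next step $\shrink{\ell+1}$ --- which is applied because $|\cshrink{\ell}(w)|\ge 2$ --- is again $\rle$. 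That $\rle$ step collapses the whole maximal run of $s$'s containing the two positions occupied by $v$ and $v'$, a run of length $k\ge 2$, into a single node $N$ on level $\ell+1$ of $\ustree(w)$ whose children in $\ustree(w)$ are exactly the $k$ level-$\ell$ occurrences of $s$ in that run. Since $N$ has $k\ge 2$ children it survives to $\stree(w)$, and each of those $s$-occurrences also survives (a leaf if $\ell=0$, and a node with at least two children otherwise), so in $\stree(w)$ the children of $N$ are precisely those occurrences; as $v$ and $v'$ are among them, we obtain $\spar{w}(v)=N=\spar{w}(v')$.

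The step I expect to be the main obstacle is the first one: justifying carefully that a node of $\stree(w)$ with signature $s$ must occupy level $\slev(s)$ of $\ustree(w)$. This rests on $\slev(\cdot)$ being an attribute of the signature rather than of a particular occurrence, on a symbol being produced exactly on its own level, and on $\stree(w)$ being obtained from $\ustree(w)$ by dissolving exactly the single-child nodes. Once this is in place, the remainder follows directly from how $\rle$ groups maximal runs and from the fact that in each $\shrink{i}$ the existence of a block boundary depends only on the two flanking symbols --- which is precisely why $\cshrink{i}(w)$ is $1$-repetition-free for odd $i$.
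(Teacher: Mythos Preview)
Your proof is correct and follows essentially the same route as the paper's: both locate $v,v'$ as adjacent symbols of $\cshrink{\ell}(w)$ for $\ell=\slev(s)$ and observe that $\shrink{\ell+1}$ puts them in the same block, which becomes their common parent in $\stree(w)$. The paper shortcuts your parity detour by noting directly that neither $\rle$ nor $\compress_i$ ever places a block boundary between two equal symbols (for $\compress_i$, because $h_i(s)=0$ and $h_i(s)=1$ cannot both hold), so the common parent follows immediately without first determining which of the two $\shrink{\ell+1}$ actually is.
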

\begin{proof}
  Let $l=\slev(s)$ and note that both $v$ and $v'$ both belong to $\Lambda_w(l)$,
i.e., they represent adjacent symbols of $\cshrink{l}(w)$.
However, $\shrink{l+1}$ never places a block boundary between two equal symbols.
Thus, $v$ and $v'$ have a common parent at level $l+1$.
\end{proof}

\section{Adding New Strings to the Collection}\label{sec:adding}

\subsection{Constructing Context-Insensitive Decompositions}\label{sec:constructing_cid}
Recall that the run-length encoding partitions a string into maximal blocks (called \emph{runs}) of equal characters
and represents a block of $k$ copies of symbol $S$ by a pair $(S,k)$, often denoted as $S^k$.
For $k=1$, we simply use $S$ instead of $(S,1)$ or $S^1$. 
In \cref{sec:single_string}, we defined the $\rle$ function operating on symbols as a component of our parse scheme.
Below, we use it just as a way of compressing a sequence of signatures.
Formally, the output of the $\rle$ function on a sequence of items
is another sequence of items,
where each maximal block of $k$ consecutive items $x$ is replaced with a single item $x^k$.

We store run-length encoded sequences as linked lists. This way we can create a new sequence consisting of $k$ copies
of a given signature $s$ (denoted by $s^k$) and concatenate two sequences $A$, $B$ (we use
the notation $A\cdot B$), both in constant time.
Note that concatenation of strings does not directly correspond to concatenation of the underlying lists:
if the last symbol of the first string is equal to the first symbol of the second string, two blocks need to be merged.

Decompositions of strings are sequences of symbols, so we may use $\rle$ to store them space-efficiently.
As mentioned in \cref{sec:context_short}, this is crucial for context-insensitive decompositions.
Namely, string $w$ turns out to have a context-insensitive decomposition $D$ such that $|\rle(D)|=O(\depth(w))$.
Below we give a constructive proof of this result.

\begin{algorithm}[ht]
\begin{algorithmic}[1]
\Function{$\layerop$}{$s$}
\State $\algptr := \itbegin(s)$
\State $\algqtr := \itend(s)$
\State $S = T = \eps$
\While{$P\neq Q$ \textbf{and} $\itparent(\algptr) \neq \itparent(\algqtr)$}
\If{$\itsig(P)\ne \itsig(\itright(P))$ \textbf{and} $\itparent(P)=\itparent(\itright(P))$}\label{alg:layer:if1}
\State $P' := P$ \Comment{Pointers $P'$ and $Q'$ are set for the purpose of analysis only.}
\State $P : =\itparent(P)$
\Else
\State $P' := \itrskip(P,\itrext(P))$
\State	$S := S \cdot \itsig(P)^{\itrext(P)}$
\State	$P := \itright(\itparent(P))$
\EndIf
\If{$\itsig(Q)\ne \itsig(\itleft(Q))$ \textbf{and} $\itparent(Q)=\itparent(\itleft(Q))$}\label{alg:layer:if2}
\State $Q':= Q$
\State $Q : =\itparent(Q)$
\Else
\State $Q':=\itlskip(Q,\itlext(Q))$
\State	$T := \itsig(Q)^{\itlext(Q)} \cdot T$
\State	$Q := \itleft(\itparent(Q))$
\EndIf
\EndWhile
\If{$\itsig(P)=\itsig(Q)$}
\State \Return $S \cdot \itsig(P)^{\itindex(Q)-\itindex(P)+1} \cdot T$
\Else
\State \Return $S \cdot \itsig(P)\cdot \itsig(Q) \cdot T$
\EndIf
\EndFunction
\end{algorithmic}
\caption{Compute a context-insensitive decomposition of a string $\str$ given by a signature $s$.}
\label{alg:layer}
\end{algorithm}

\begin{lemma}\label{lem:layerimpl}
Given $s\in \sigs(\grammar)$, one can compute the run-length encoding of a context-insensitive decomposition of $w=\sstr(s)$ in $O(\depth(w))$ time.
\end{lemma}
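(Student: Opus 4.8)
The plan is to analyze \cref{alg:layer} and show both correctness and the running-time bound. First I would argue that the algorithm maintains, at the start of each iteration of the \textbf{while} loop, the following invariant: $P$ and $Q$ point to nodes $v_P, v_Q\in\ustree(w)$ with $v_P$ weakly to the left of $v_Q$, such that $v_P$ is right context-insensitive, $v_Q$ is left context-insensitive, every node strictly to the left of $v_P$ on its level is right context-insensitive, every node strictly to the right of $v_Q$ on its level is left context-insensitive, and the sequence $S$ equals the decomposition of the part of $w$ strictly to the left of $v_P$ (as a layer of right context-insensitive nodes, stored run-length encoded), while $T$ equals the decomposition of the part strictly to the right of $v_Q$ (a layer of left context-insensitive nodes). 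Initially $P=\itbegin(s)$, $Q=\itend(s)$, $S=T=\eps$; leaves are context-insensitive (both ways), so the invariant holds. The case split in lines \ref{alg:layer:if1}--\ref{alg:layer:if2} corresponds exactly to the two cases in the construction sketch of \cref{sec:context_short}: either $\itsig(P)\ne\itsig(\itright(P))$ and $P$ shares a parent with its right neighbor, in which case (by \cref{cor:context_insensitive}\ref{it:left}, the ``Moreover'' clause) $\uspar(P)$ is right context-insensitive and we may move $P$ up a level without emitting anything; or else we emit the whole run $\itsig(P)^{\itrext(P)}$ (these nodes are right context-insensitive, being to the left of the right-context-insensitive node $\itrskip(P,\itrext(P))$, using \cref{fct:layer} to see they are siblings) and move $P$ to $\itright(\itparent(P))$, which is right context-insensitive by \cref{cor:context_insensitive}\ref{it:left} applied to $\usleft(\uspar(\cdot))$. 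The symmetric reasoning handles $Q$ and $T$ via part \ref{it:right}. Hence the invariant is preserved.

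Next I would handle loop termination and the return value. The loop ends when either $P=Q$ or $\itparent(P)=\itparent(Q)$. In the first case $S\cdot \itsig(P)\cdot T$ would be the decomposition, but since $P=Q$ the code returns $S\cdot\itsig(P)^{1}\cdot T$ via the $\itsig(P)=\itsig(Q)$ branch with $\itindex(Q)-\itindex(P)+1=1$. In the second case, $P$ and $Q$ are distinct children of a common parent $u$; every node strictly between them is a sibling, and by the invariant the leaves to the left of $P$ and to the right of $Q$ are already covered by $S$ and $T$, so the nodes $P,P{+}1,\dots,Q$ together with the layers $S$, $T$ form a layer of $\ustree(w)$. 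If $\itsig(P)=\itsig(Q)$ then all of $P,\dots,Q$ carry the same signature (since $u$ is a power node and these are consecutive children), so the run-length encoded middle part is $\itsig(P)^{\itindex(Q)-\itindex(P)+1}$; otherwise $u$ has exactly two children $P,Q$ and the middle part is $\itsig(P)\cdot\itsig(Q)$. Either way the returned sequence is the run-length encoding of a decomposition $D$ of $w$. Context-insensitivity of $D$ follows from \cref{cor:context_insensitive}\ref{it:both}: every emitted node is both right context-insensitive (those in $S$ and those in $P,\dots,Q$, the latter because they are children of $u$ which equals $\usleft(\uspar(\cdot))$-reachable... more simply, they all lie weakly left of the left-context-insensitive $Q$ and weakly right of the right-context-insensitive $P$, and one checks each is both) and left context-insensitive, hence context-insensitive.

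For the running time, I would observe that each iteration of the loop strictly increases $\itlevel(P)$ or moves $P$ rightward within a level, and symmetrically for $Q$; more carefully, in every iteration $P$ either ascends one level (first branch) or moves to the right sibling-chain's successor while staying at the same level but with its parent strictly higher... — the cleanest argument is that $\itlevel(P)$ is non-decreasing and strictly increases at least every second iteration, because after taking the \textbf{else} branch the new $P=\itright(\itparent(P))$ sits at level $\itlevel(\itparent(P))>\itlevel(P_{\mathrm{old}})$. Hence after $t$ iterations $\itlevel(P)+\itlevel(Q)\ge t$, and since both are bounded by the height of $\ustree(w)$, which is $O(\depth(w))$, the loop runs $O(\depth(w))$ times. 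Each iteration performs $O(1)$ pointer operations, each $O(1)$ worst-case by \cref{lem:uptr}, and appends $O(1)$ run-length-encoded blocks to $S$ or $T$ in $O(1)$ time via the linked-list representation. Thus the total time is $O(\depth(w))$, and in particular $|\rle(D)|=O(\depth(w))$ since only $O(1)$ blocks are produced per iteration.

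\textbf{Main obstacle.} I expect the delicate point to be nailing down the loop invariant precisely enough — in particular verifying that when $P$ ascends via the first branch, the node $\usleft(\uspar(P))$ (which becomes relevant once $P$ later takes the \textbf{else} branch) is indeed right context-insensitive and that $S$ still records \emph{exactly} the decomposition of the prefix strictly left of the new $P$ with no double-counting at the boundary, and symmetrically for $Q$; this requires careful bookkeeping of which level each emitted block lives on and an appeal to \cref{fct:layer} to guarantee that consecutive equal-signature nodes being collapsed by $\rle$ really are siblings. The termination/return-value case analysis (ensuring $P$ and $Q$ don't "cross") is a secondary subtlety: one must check the loop condition prevents $P$ from passing $Q$, which follows from the invariant that $v_P$ stays weakly left of $v_Q$ together with the observation that the two pointers only meet at a common node or acquire a common parent.
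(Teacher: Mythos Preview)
Your overall strategy---analyze \cref{alg:layer} via a loop invariant tracking context-insensitivity of $P$, $Q$, and the accumulated $S$, $T$---is exactly the paper's approach. However, you have the left/right directions reversed, and this is not a harmless relabeling: the closure rules in \cref{cor:context_insensitive} are directional.

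The correct invariant (which the paper states) is that $P$ is \emph{left} context-insensitive and $Q$ is \emph{right} context-insensitive. The reason is that $P$ moves via $\usright(\uspar(\cdot))$ and $\uspar(\cdot)$, and the relevant closure is part~\ref{it:right}: from a left-CI node $v$ one gets that $\usright(v)$, $\usright(\uspar(v))$, and (under the extra hypothesis $\ussig(v)\ne\ussig(\usright(v))$) $\uspar(\usright(v))$ are again left-CI. Your appeal to part~\ref{it:left} to conclude that $\usright(\uspar(P))$ is right-CI does not work: part~\ref{it:left} only propagates right-CI \emph{leftwards} (to $\usleft(v)$ and $\usleft(\uspar(v))$). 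Concretely, in the \textbf{if} branch you would need $\usright(P)$ to be right-CI to invoke the ``Moreover'' clause with $v=\usright(P)$, but right-CI of $P$ does not give right-CI of $\usright(P)$.

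A second, related gap: even if your one-sided invariants on $S$ and $T$ were maintainable, they would not yield full context-insensitivity of the final decomposition. The paper's invariant is stronger and cleaner: at every step the \emph{entire} sequence $S\cdot(\itsig(P),\itsig(\itright(P)),\ldots,\itsig(Q))\cdot T$ is already a context-insensitive decomposition of $w$. This holds because the middle nodes lie weakly right of the left-CI node $P$ (hence are left-CI) and weakly left of the right-CI node $Q$ (hence are right-CI), so by part~\ref{it:both} they are CI; and nodes moved into $S$ or $T$ were in the middle at an earlier iteration, hence already CI. Tracking this stronger invariant also cleanly handles your ``no double-counting at the boundary'' worry and the ``$P$ and $Q$ do not cross'' issue (the paper phrases the latter as $Q=\itright^m(P)$ for some $m\ge 0$, and shows $m$ does not go negative because $\itparent(P)\ne\itparent(Q)$ forces the shifts on the two sides to sum to at most $m$).

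Your running-time argument is fine; in fact both branches raise the level of $P$ (and of $Q$) by exactly one, so the loop runs at most $\depth(w)$ times, which is a bit stronger than your ``every second iteration'' claim.
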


\begin{proof}
The decomposition is constructed using procedure $\layerop$ whose implementation is given as \cref{alg:layer}.
Let us define $\itright^k$ as the $k$-th iterate of $\itright$ (where $\itright(\itnil)=\itnil$)
and $\itleft^k$ as the $k$-th iterate of $\itleft$.
Correctness of \cref{alg:layer} follows from the following claim.

\begin{claim}
Before the $(l+1)$-th iteration of the \textbf{while} loop,
$P$ points to a left context-insensitive node at level $l$ and $Q$ points to a right context-insensitive node at level $l$.
Moreover $Q=\itright^m(P)$ for some $m\geq0$ and $S \cdot (\itsig(\itright^0(P)),\ldots,\itsig(\itright^m(P))) \cdot T$
is a context-insensitive decomposition of $w$.
\end{claim}
\begin{proof}
Initially, the invariant is clearly satisfied with $m=|w|-1\geq0$ because all leaves are context-insensitive.
Thus, let us argue that a single iteration of the \textbf{while} loop preserves the invariant.
Note that the iteration is performed only when $m > 0$.

Let $P_1$ and $P_2$ be the values of $P$ and $Q$ before the $(l+1)$-th iteration
of the \textbf{while} loop.
Now assume that the loop has been executed for the $(l+1)$-th time.
Observe that $P'=\itright^{m_p}(P_1)$ and $Q'=\itleft^{m_q}(Q_1)$ for some $m_p,m_q\geq0$
and from $\itparent(P_1)\neq\itparent(Q_1)$ it follows that $m_p+m_q\leq m$.
Also, $S$ and $T$ are extended so that
$S \cdot (\itsig(\itright^0(P')),\ldots,\itsig(\itright^{m-m_p-m_q}(P')) \cdot T$ is equal to the decomposition
we got from the invariant. Moreover, observe that $P'$ points to the leftmost child of the new value of $P$,
and $Q'$ points to the rightmost child of the new~$Q$. Consequently, after these values are set, we have $Q=\itright^{m'}(P)$ for some $m'\ge 0$
and $S \cdot (\itsig(\itright^0(P)),\ldots,\itsig(\itright^{m'}(P))) \cdot T$ is a decomposition of $w$.

Let us continue by showing that the new values of $P$ and $Q$ satisfy the first part of the invariant.
Note that $m'\ge 0$ implies $P$ and $Q$ are not set to $\itnil$.
In the \textbf{if} block at line~\ref{alg:layer:if1}, \cref{cor:context_insensitive}(\ref{it:right}) implies
that $\itparent(P)=\itparent(\itright(P))$ indeed points to a left context-insensitive at level $l+1$.
The same holds for $\itright(\itparent(P))$ in the \textbf{else} block.
A symmetric argument shows that $Q$ is set to point to a right context-insensitive node at level $i+1$.

Along with \cref{cor:context_insensitive}(\ref{it:both}) these conditions imply
 that all nodes $\itright^i(P)$ for $0 \le i \le m''$ are context-insensitive and thus the new representation is context-insensitive.
\end{proof}

To conclude the proof of correctness of \cref{alg:layer}, we observe that after leaving the main loop
the algorithm simply constructs the context-insensitive decomposition mentioned in the claim.
Either $P$ points at the root of $\ustree(s)$ or $\itparent(P)=\itparent(Q)$.
If $P$ points at the root or $\itsig(P)=\itsig(Q)$, we only need to add a single
run-length-encoded entry between $S$ and $T$.
Otherwise, $\itparent(P)$ has exactly two children $P$ and $Q$ with
different signatures, so we need to add two entries
$\itsig(P)$ and $\itsig(Q)$ between $S$ and $T$.
\end{proof}

We conclude with a slightly more general version of \cref{lem:layerimpl}.
\begin{lemma}\label{lem:layerimpl2}
Given $s\in \sigs(\grammar)$ and lengths $|x|,|y|$ such that $\sstr(s)=xwy$, one can compute in time $O(\depth(xwy))$
a run-length-encoded context-insensitive decomposition of $w$.
\end{lemma}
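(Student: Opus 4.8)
The goal is to generalize \cref{lem:layerimpl} from computing a context-insensitive decomposition of the whole string $\sstr(s)$ to computing one of an arbitrary internal factor $w$, where $\sstr(s)=xwy$ and we are given $|x|,|y|$. The natural approach is to re-run essentially the same two-pointer walk as in \cref{alg:layer}, but instead of starting at the extreme leaves $\itbegin(s)$ and $\itend(s)$, start at the leaves of $\ustree(s)$ corresponding to positions $|x|+1$ and $|x|+|w|$ of $\sstr(s)$. First I would explain how to obtain pointers to those two leaves in $O(\depth(xwy))$ time: starting from $\itroot(s)$, descend using $\itrepr$ to locate which child contains the target position, following the $\itchild$ pointers; each step drops one level, and there are $O(\depth(xwy))$ levels, so this costs $O(\depth(xwy))$ per leaf. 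Call the resulting left pointer $P$ and right pointer $Q$.

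\textbf{The key structural point.} The crucial observation is that leaves of $\ustree(s)$ are context-insensitive not only as nodes of $\ustree(\sstr(s))$ but, more to the point, a single leaf viewed as representing the length-one factor $w[k]$ has a counterpart in the uncompressed parse tree of \emph{any} extension of that length-one string — in particular, all leaves are context-insensitive with respect to every extension, so the leaf at position $|x|+1$ is a left context-insensitive ``starting node'' for the factor $w$ and the leaf at position $|x|+|w|$ is a right context-insensitive ``ending node'' for $w$, exactly the roles played by $\itbegin$ and $\itend$ in the original proof. From that point on, the invariant of the claim inside the proof of \cref{lem:layerimpl} carries over verbatim with ``$w$'' reinterpreted as the factor: after the $l$-th iteration, $P$ points to a left context-insensitive (w.r.t.\ the factor) node at level $l$, $Q$ to a right context-insensitive one, $Q=\itright^{m}(P)$, and $S\cdot(\itsig(\itright^0(P)),\dots,\itsig(\itright^m(P)))\cdot T$ is a context-insensitive decomposition of $w$. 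The only genuinely new ingredient is that the monotone progress argument — the pointers rise one level per iteration and the gap $m$ never increases — must be checked in the two-sided (factor) setting rather than the whole-string setting; but since $\itparent(P)\ne\itparent(Q)$ is still the loop guard and \cref{cor:context_insensitive} is symmetric in left/right, this goes through unchanged, giving $O(\depth(xwy))$ iterations and hence $O(\depth(xwy))$ total time including the initial descents.

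\textbf{Termination / edge cases.} I would then note that because $w$ is a proper factor in general, the walk may terminate differently than in the whole-string case: either $P$ and $Q$ end up sharing a parent (as before), or one of them reaches a node that is the full node of $\ustree(s)$ it would otherwise step out of — but in fact, since we never merge past the boundary leaves and the loop only exits when $\itparent(P)=\itparent(Q)$ or $P=Q$, the same post-loop branch (adding either one run-length entry $\itsig(P)^{\itindex(Q)-\itindex(P)+1}$ when $\itsig(P)=\itsig(Q)$, or the two entries $\itsig(P),\itsig(Q)$ otherwise) applies; one must additionally observe that \cref{fct:layer} guarantees adjacent equal-signature layer nodes share a parent, so the run-length-encoded output indeed has $O(\depth(xwy))$ length. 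The hard part will be getting the boundary bookkeeping exactly right — in particular verifying that the initial descents correctly identify the leaves and that the first iteration's $\itright$/$\itleft$ calls never step outside the factor — but conceptually this is a direct reduction to \cref{lem:layerimpl}, so I would phrase the proof as ``run \cref{alg:layer} with $\itbegin(s),\itend(s)$ replaced by the two boundary leaves, which are computed in $O(\depth(xwy))$ time, and observe the correctness proof of \cref{lem:layerimpl} applies mutatis mutandis.''
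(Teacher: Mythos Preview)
Your proposal is correct and takes the same approach as the paper: run \cref{alg:layer} with the initial pointers replaced by the two boundary leaves of $w$ in $\ustree(xwy)$, found by descending from the root in $O(\depth(xwy))$ time (the paper also explicitly notes you must use arithmetic, not a linear scan, at power nodes $s\to s_1^k$). For correctness the paper invokes \cref{lem:close} to show that the successive values of $P,Q$ traced in $\ustree(xwy)$ are counterparts of those traced in $\ustree(w)$, so the output decomposition is literally the one produced by \cref{lem:layerimpl}; your direct re-reading of the invariant ``with $w$ reinterpreted as the factor'' is morally the same but tacitly relies on this counterpart relation, since context-insensitivity and \cref{cor:context_insensitive} are stated for nodes of $\ustree(w)$, not of $\ustree(xwy)$.
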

\begin{proof}
The algorithm is the same as the one used for  \cref{lem:layerimpl}.
The only technical modification is that the initial values of $P$ and $Q$ need to be set to leaves representing the first and the last position of $w$
in $xwy$. In general, we can obtain in $O(\depth(xwy))$ time a pointer to the $k$-th leftmost leaf of $\ustree(xwy)$.
This is because each signature can be queried for its length. Note that for nodes with more than two children we cannot scan
them in a linear fashion, but instead we need to exploit the fact that these children correspond to fragments of equal length
and use simple arithmetic to directly proceed to the right child.

Finally, we observe that by \cref{lem:close}, the subsequent values of $P$ and $Q$ in the implementation on $\ustree(xwy)$
are counterparts of the values of $P$ and $Q$ in the original implementation on $\ustree(w)$.
\end{proof}

\begin{remark}\label{remark:left-context}
In order to obtain a left context-insensitive decomposition we do not need to maintain the list $T$ in \cref{alg:layer} and it suffices to set $Q=\itparent(Q)$ at each iteration. Of course, \cref{lem:layerimpl2} needs to be restricted to left extensions
$x \edot w \edot \eps$. A symmetric argument applies to right context-insensitive decompositions.
\end{remark}

\subsection{Supporting Operations}
We now use the results of previous sections to describe the process of updating the collection
when $\makeop$, $\splitop$, and $\concop$ operations are executed.
Recall that we only need to make sure that the grammar $\grammar$
represents the strings produced by these updates.
The core of this process is the following algorithm.

\begin{algorithm}
\begin{algorithmic}[1]
\Function{$layer$}{$L$}
\While{$|L| > 1$}
\State $v := $ a lowest-level node of $\stree(\str)$ among all proper ancestors of $L$\label{line:ancestor}
\State $L := L \setminus \{\textrm{children of } v\} \cup \{v\}$
\EndWhile
\EndFunction
\end{algorithmic}
\caption{Iterate through all nodes above a layer $L$}
\label{alg:layers}
\end{algorithm}

\begin{lemma}
\cref{alg:layers} maintains a layer $L$ of $\stree(\str)$.
It terminates and upon termination, the only element of $L$ is the root of $\stree(\str)$.
Moreover, line~\ref{line:ancestor} is run exactly once per every proper ancestor of the initial $L$ in $\stree(\str)$
\end{lemma}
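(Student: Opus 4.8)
The plan is to verify the three claims of the lemma by a straightforward induction on the number of iterations of the \textbf{while} loop, treating $L$ always as the set of signatures stored together with the underlying layer of $\stree(w)$ (I use the correspondence between layers and decompositions established in \cref{sec:contexti}, in particular \cref{fct:layer}). The key observation to pin down first is that line~\ref{line:ancestor} is well-defined whenever $|L|>1$: if $L$ is a layer of $\stree(w)$ with more than one node, then it is not the single-node layer $\{r\}$, so $L$ has at least one proper ancestor in $\stree(w)$, and among finitely many such ancestors one of lowest level exists. I would also note that because $L$ is a layer (a maximal antichain), the set of proper ancestors of $L$ is exactly the set of internal nodes of $\stree(w)$ that lie strictly above $L$, i.e.\ whose whole subtree of leaves is ``covered'' by $L$ but which are themselves not in $L$.

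Next I would prove the loop invariant: \emph{$L$ is always a layer of $\stree(w)$, and the set of proper ancestors of $L$ strictly decreases by exactly the node $v$ chosen in line~\ref{line:ancestor} at each iteration}. For the layer part: let $v$ be a lowest-level proper ancestor of $L$ and let $c_1,\dots,c_k$ be its children in $\stree(w)$. I claim all $c_i$ currently lie in $L$. Indeed, each $c_i$ is either in $L$ or has a proper descendant in $L$ or a proper ancestor in $L$ (since $L$ is a maximal antichain, exactly one of these holds for every node); it cannot have a proper ancestor in $L$, because that ancestor would be a descendant of $v$ contradicting $v$ being a proper ancestor of $L$; and it cannot have a proper descendant $d\in L$ with $d\neq c_i$, because then $c_i$ itself would be a proper ancestor of $L$ of level strictly smaller than $\slev_{\stree}(v)$, contradicting minimality of $v$'s level. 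Hence each $c_i\in L$, so the update $L:=L\setminus\{c_1,\dots,c_k\}\cup\{v\}$ replaces a full sibling set by their parent, which is the standard ``contraction'' move on a layer and again yields a layer. That the set of proper ancestors loses exactly $v$: $v$ is no longer a proper ancestor (it is now in $L$), every former proper ancestor other than $v$ is still an internal node strictly above the new $L$ (its covered leaves are unchanged and it still does not lie in $L$), and no new node becomes a proper ancestor since $L$ only moved ``up'' at one place.

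Finally, termination and the exit condition: by the invariant the number of proper ancestors of $L$ strictly decreases each iteration and is always a nonnegative integer, so the loop runs at most $|{\text{internal nodes of }}\stree(w)|$ times and each internal node is selected as $v$ at most once; combined with the invariant (every former proper ancestor stays a proper ancestor until the iteration that removes it) this shows line~\ref{line:ancestor} is executed exactly once per proper ancestor of the initial layer $L$. On exit $|L|=1$; since $L$ is a layer of $\stree(w)$ consisting of a single node, that node must be an ancestor of every leaf, i.e.\ the root $r$ of $\stree(w)$. The only subtle point — and the one I'd be most careful about in writing it up — is the minimality argument showing that a lowest-level proper ancestor $v$ has \emph{all} of its children in $L$ (ruling out a child with a strictly deeper element of $L$ below it); everything else is routine bookkeeping about antichains.
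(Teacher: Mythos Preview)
Your proposal is correct and follows essentially the same route as the paper: the key step in both is that a lowest-level proper ancestor $v$ of $L$ has all of its children in $L$, after which replacing those children by $v$ yields a strictly smaller layer. The paper argues termination via the drop in $|L|$ (using that every internal node of $\stree(w)$ has at least two children) and leaves the ``exactly once per proper ancestor'' claim to the reader, whereas you spell it out via the monotone shrinking of the proper-ancestor set; that extra bookkeeping is a nice addition.

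One small slip to fix in your write-up: in the case analysis for why a child $c_i$ cannot have a proper ancestor in $L$, you write that such an ancestor ``would be a descendant of $v$''. It is the opposite: a proper ancestor of $c_i$ is $v$ or a proper ancestor of $v$. The former contradicts $v\notin L$; the latter, together with the fact that $v$ already has a descendant in $L$, contradicts $L$ being an antichain. Your conclusion is right, only the justification sentence is inverted.
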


\begin{proof}
Consider line~\ref{line:ancestor} of \cref{alg:layers}.
Clearly, $v \not\in L$ and every child of $v$ belongs to $L$.
Recall that every node in $\stree(w)$, in particular $v$, has at least two children.
Thus, by replacing the fragment of $L$ consisting of all children of $v$ with $v$ we obtain a new layer $L'$, such that $|L'| < |L|$.
Hence, we eventually obtain a layer consisting of a single node.
From the definition of a layer we have that in every parse tree there is only one single-node layer and its only element is the root of $\stree(w)$.
The Lemma follows.
\end{proof}

We would like to implement an algorithm that is analogous to \cref{alg:layers}, but operates on decompositions, rather than on layers.
In such an algorithm in every step we replace some fragment $d$ of a decomposition $\decomp$ with a signature $s$ that generates $d$.
We say that we \emph{collapse} a production rule $s \rightarrow d$.

In order to describe the new algorithm, for a decomposition $\decomp$ we define the set of \emph{candidate production rules} as follows.
Let $s_1^{\alpha_1},\ldots, s_k^{\alpha_k}$ be the run-length encoding of a decomposition $\decomp$.
We define two types of candidate production rules.
First, for every $i \in \{1, \ldots, k\}$ we have rules $s \rightarrow s_i^{\alpha_i}$.
Moreover, for $i \in \{1, \ldots, k-1\}$ we have $s \rightarrow s_is_{i+1}$.
The \emph{level} of a candidate production rule $s \rightarrow s_1'\ldots s_m'$ is the level of the signature $s$.

The following lemma states that collapsing a candidate production rule of minimal level corresponds to executing one iteration of the loop in \cref{alg:layers}.
\begin{lemma}\label{lem:cpr}
Let $L$ be a layer of $\stree(w)$ and let $D$ be the corresponding decomposition.
Consider a minimum-level candidate production rule $s \rightarrow s'_1\ldots s'_m$ for $\decomp$.
Let $v'_1,\ldots,v'_m$ be the nodes of $\stree(w)$ corresponding to $s'_1,\ldots,s'_m$.
These nodes are the only children of a node $v$ which satisfies $\ussig(v)=s$.
\end{lemma}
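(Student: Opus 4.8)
The plan is to translate the combinatorial meaning of a minimum-level candidate production rule for $D$ back into the parse tree $\stree(w)$, and to show that it corresponds exactly to choosing a lowest-level proper ancestor $v$ of the layer $L$, with the $v'_i$ being precisely its children. Let $l$ be the level of the candidate rule $s\to s'_1\ldots s'_m$, so $l=\slev(s)$. I will argue that $l$ is the smallest level over all proper ancestors of $L$, and then that the children of any lowest-level proper ancestor $v$ form a contiguous block $v'_1,\dots,v'_m$ of nodes on $L$ whose signatures are exactly the run or the adjacent pair witnessing the candidate rule.

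First I would recall how candidate rules arise: writing the run-length encoding of $D$ as $s_1^{\alpha_1}\cdots s_k^{\alpha_k}$, each rule is either of the "power" type $s\to s_i^{\alpha_i}$ (where $s$ is the $\rle$-symbol with $\spower{s}{s_i}{\alpha_i}$) or of the "pair" type $s\to s_is_{i+1}$ (where $s$ is the $\compress$-symbol with $\spair{s}{s_i}{s_{i+1}}$). In both cases $s$ exists in $\sigs(\grammar)$ because $\grammar$ is closed and these are exactly the symbols that $\shrink{}$ would introduce on the level just above $D$. Now fix a proper ancestor $v$ of $L$ of minimum level $l_0$; its children all lie on $L$ (otherwise a strictly-lower-level ancestor would exist, since every node of $\stree(w)$ has at least two children, each of which, if not on $L$, has a descendant on $L$ at a lower level than $v$). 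Here \cref{fct:layer} is the key tool: consecutive nodes on $L$ with the same signature share a parent, so the children of $v$ on $L$ either (i) all carry the same signature $s_i$, so $v$ is a $\rle$-node and its children form the full run $s_i^{\alpha_i}$, yielding a power-type candidate rule of level $l_0$, or (ii) $v$ has exactly two children with distinct signatures $s_i,s_{i+1}$ (by the structure of $\compress$-nodes), yielding a pair-type candidate rule of level $l_0$. Either way $l_0$ is the level of some candidate rule, so the minimum candidate-rule level $l$ satisfies $l\le l_0$.

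Conversely, any candidate rule of level $l$ corresponds to a node of $\stree(w)$ at level $l$ all of whose children lie on $L$: for a power-type rule $s\to s_i^{\alpha_i}$, the $\alpha_i$ consecutive nodes on $L$ with signature $s_i$ share a parent $v$ by \cref{fct:layer}, and since $s_i^{\alpha_i}$ is a maximal run in $\rle(D)$, the node $v$ (a $\rle$-node with production $\spower{(\ussig(v))}{s_i}{\alpha_i}$, so $\ussig(v)=s$) has exactly these $\alpha_i$ children — it cannot have more, as that would extend the run. For a pair-type rule $s\to s_is_{i+1}$ with $s_i\ne s_{i+1}$, I need that the two corresponding nodes on $L$ are siblings; here I would invoke that they sit on $\Lambda_w(l)$, represent adjacent symbols of $\cshrink{l}(w)$, and that $s$ with $\spair{s}{s_i}{s_{i+1}}$ means $\shrink{l+1}$ places them in one block — so they share a parent $v$ with $\ussig(v)=s$, which (being a $\compress$-node) has exactly those two children. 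Hence $l_0\le l$, giving $l_0=l$, and for a minimum-level candidate rule the associated node $v$ has $\ussig(v)=s$ with children exactly $v'_1,\dots,v'_m$.

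The main obstacle I anticipate is the careful handling of the pair-type case: ensuring that two adjacent unequal signatures on $L$ that happen to be paired by the candidate-rule definition really are siblings in $\stree(w)$ and not, say, the right end of one run and the left end of a separate block under different parents. This is where one must use that $s\to s_is_{i+1}\in\grammar$ (guaranteed by closedness and by the way $\shrink{l+1}$ parses level-$l$ symbols) together with the fact that a $\compress$-node has exactly two children; I would also want to double-check the boundary interaction between runs and pairs in $\rle(D)$ — i.e. that a pair-type candidate rule never straddles inside a run — which follows since $s_i\ne s_{i+1}$ forces $i$ and $i+1$ to index distinct maximal runs. The rest is routine bookkeeping with \cref{fct:layer} and the definitions of $\rle$ and $\compress$.
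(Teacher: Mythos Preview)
Your overall strategy is the same as the paper's: first show that a minimum-level proper ancestor of $L$ yields a candidate rule (hence $l\le l_0$), then show that the given minimum-level candidate rule corresponds to an actual node of $\stree(w)$ with the $v'_i$ as its only children. However, your ``conversely'' direction has a real gap that you should close explicitly by \emph{reusing} the inequality $l\le l_0$ you already proved.

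In the power-type case you write ``it cannot have more, as that would extend the run.'' This is not justified as stated. The common parent $v$ of the $\alpha_i$ nodes (obtained via \cref{fct:layer}) is an $\rle$-node, so all of its children have signature $s_i$; but an extra child $c$ of $v$ need not lie on $L$ --- it could lie \emph{above} $L$, in which case the nodes of $L$ below $c$ are proper descendants of $c$, carry signatures of level strictly below $\slev(s_i)$, and therefore do \emph{not} extend the run $s_i^{\alpha_i}$ in $D$. What rules this out is precisely $l\le l_0$: any node above $L$ has level $\ge l_0\ge l$, whereas $\slev(s_i)=l-1<l$, so $c$ cannot be above $L$ and must be on $L$, contradicting maximality of the run. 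The paper makes exactly this use of $l\le l_0$ (phrased as ``all nodes above $L$ have level at least $l$'', hence each $v'_i\in\stlayer_w(l-1)$).

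The pair-type case has the same missing step plus an off-by-one: you should be placing $v'_1,v'_2$ on $\stlayer_w(l-1)$ (not $\stlayer_w(l)$), representing adjacent symbols of $\cshrink{l-1}(w)$, and invoking $\shrink{l}$ (not $\shrink{l+1}$). Establishing $v'_1,v'_2\in\stlayer_w(l-1)$ again requires that their parents in $\stree(w)$ have level $\ge l$, which follows from $l\le l_0$. Once you make this dependency explicit in both cases, your argument goes through and matches the paper's.
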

\begin{proof}
Let $l = \slev(s)$ and let $v$ be a minimum-level node above $L$.
Observe that all children of $v$ belong to $L$ and the sequence of their signatures forms a substring of $D$.
By \cref{fct:layer} this substring is considered while computing candidate productions, and the level of the production
is clearly $l$.
Consequently, all nodes above $L$ have level at least $l$.
In other words, every node of $\stlayer_w(l-1)$ has an ancestor in $L$.
Since $\slev(s'_i)<l$, we have in particular $v_i'\in \stlayer_w(l-1)$,
i.e., the corresponding signatures $s'_1,\ldots,s'_m$ form a substring of $\cshrink{l-1}(w)$.
We shall prove that they are a single block formed by $\shrink{l}$.
It is easy to see that no block boundary is placed between these symbols.
If $l$ is even, we must have $m=2$ and $\shrink{l}$ simply cannot form larger blocks.
Thus, it remains to consider odd $l$ when $s'_i=s'$ for a single signature $s'$.
We shall prove that the block of symbols $s'$ in $\cshrink{l-1}(w)$ is not any longer. By \cref{fct:layer}
such a block would be formed by siblings of $v'_i$. However, for each of them an ancestor must belong to $L$.
Since their proper ancestors coincide with proper ancestors of $v'_i$, these must be the sibling themselves.
This means, however, that a block of symbols $s'$ in $D$ corresponding to $s'_1,\ldots,s'_m$ was not maximal,
a contradiction.
\end{proof}

\begin{lemma}\label{lem:compressdec}
Let $L$ be a layer in $\stree(\str)$ and $\decomp$ be its corresponding decomposition.
Given a run-length encoding of $\decomp$ of length $d$, we may implement an algorithm that updates $\decomp$ analogously to \cref{alg:layers}.
It runs in $O(d + \depth(\str))$ time.
The algorithm fails if $\depth(w) > 2\cdot \mword$.
\end{lemma}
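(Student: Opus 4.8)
The plan is to implement the decomposition-based analogue of \cref{alg:layers} by repeatedly finding and collapsing a minimum-level candidate production rule, invoking \cref{lem:cpr} to guarantee that each such collapse faithfully mirrors one iteration of line~\ref{line:ancestor} in \cref{alg:layers}. First I would maintain the decomposition $\decomp$ in its run-length encoding as a doubly linked list of entries $s_i^{\alpha_i}$, exactly as described at the start of \cref{sec:constructing_cid}, so that splitting a block, merging adjacent equal blocks, and replacing a short factor by a fresh signature all cost $O(1)$. Alongside this list I would keep a priority queue (keyed by signature level, which is an integer in $[0,2\mword]$, so a simple bucket/array-indexed structure suffices for $O(1)$ operations) holding one entry per current candidate production rule: the ``power'' rules $s\to s_i^{\alpha_i}$ for each list node and the ``pair'' rules $s\to s_is_{i+1}$ for each pair of adjacent list nodes. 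Each queue entry needs a pointer back into the list so that when it is extracted we can check it is still valid (the underlying factor has not changed since the entry was inserted) and, if so, perform the collapse.

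The main loop extracts a minimum-level valid candidate rule $s\to s'_1\cdots s'_m$; by \cref{lem:cpr} the nodes $v'_1,\ldots,v'_m$ are exactly the children of a node $v$ with $\ussig(v)=s$, so replacing that factor of $\decomp$ by $s$ produces the decomposition corresponding to the layer $L'=L\setminus\{\text{children of }v\}\cup\{v\}$, i.e.\ one step of \cref{alg:layers}. Concretely: for a power rule we replace one list entry $s_i^{\alpha_i}$ by the single entry $s^1$; for a pair rule we replace two adjacent entries $s_i^1 s_{i+1}^1$ by $s^1$ (note that \cref{lem:cpr} forces $\alpha_i=\alpha_{i+1}=1$ in the pair case, since otherwise $\shrink{l}$ would not form a two-symbol block). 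After the replacement I would merge the new entry with its (at most two) neighbours if they carry the same signature — using the dictionaries stored with $\grammar$ to look up or, if necessary, create the resulting $\rle$-signature — then delete from the priority queue all candidate rules touching the affected region and insert the $O(1)$ newly-created candidate rules (the new entry's power rule and the at most two new adjacent-pair rules). The loop terminates when the list has length one, and by the correctness of \cref{alg:layers} and \cref{lem:cpr} the surviving entry is the root of $\stree(\str)$; each collapse corresponds to a distinct proper ancestor of the initial layer, and there are $O(\depth(\str))$ of them above any layer (the depth of $\stree(\str)$ is at most $\depth(\str)+1$), so together with the $O(d)$ initial queue population the running time is $O(d+\depth(\str))$.

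For the failure condition, whenever we assign a signature to a newly created non-terminal we read off its level $\slev(s)$; if this ever exceeds $2\mword$ — equivalently if $\depth(w)>2\mword$ — the bit masks of \cref{sec:persistent_ds} and the level-indexed priority buckets overflow, so the algorithm detects this and aborts, matching the stated guarantee. The one subtlety I would be careful about is the validity check on extracted queue entries: because a collapse on one side of the list can change a block length or merge blocks on the other side, a candidate rule enqueued earlier may become stale, so each extracted entry must be cross-checked against the current list contents in $O(1)$ before being acted upon, and stale entries are simply discarded; since each collapse invalidates only $O(1)$ entries and creates only $O(1)$ new ones, the total number of queue operations stays $O(d+\depth(\str))$. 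The main obstacle is precisely this bookkeeping — proving that the set of enqueued-but-still-valid candidate rules at every moment coincides with the candidate production rules of the current decomposition in the sense of \cref{lem:cpr}, so that extracting the global minimum-level one is genuinely correct — but this follows by a straightforward induction using \cref{fct:layer} and \cref{lem:cpr}, since the only way a new minimum-level candidate can arise is adjacent to the most recent collapse.
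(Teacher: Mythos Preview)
Your approach is essentially the paper's: maintain the run-length encoding as a doubly linked list, bucket candidate rules by level (the paper calls these lists $P_i$), sweep levels in increasing order, and check extracted entries for staleness before acting on them. The paper is slightly lazier---it never deletes stale entries, only discards them on extraction---but that is an inessential variant.

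There is, however, a genuine gap in your running-time analysis. You write that ``there are $O(\depth(\str))$ of them [proper ancestors of $L$] above any layer (the depth of $\stree(\str)$ is at most $\depth(\str)+1$)''. This is false: the depth of $\stree(\str)$ bounds the length of a single root-to-leaf path, not the total number of nodes above a layer. Take $w=(ab)^n$ with random bits such that $ab$ pairs up at level $2$; the leaf layer already has $n$ pair-nodes as immediate proper ancestors, so there are $\Theta(n)$ collapses while $\depth(w)=O(\log n)$ w.h.p. The paper's accounting is the reverse of yours: it argues that each collapse shrinks the run-length-encoded list $Y$, so the number of collapses is at most $d-1$; the $O(\depth(\str))$ term pays only for scanning through the level buckets $P_1,\ldots,P_{\depth(\str)}$, many of which may be empty. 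You have the roles of $d$ and $\depth(\str)$ swapped.

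One smaller omission: to place a pair rule $s\to s_i s_{i+1}$ into the correct bucket you must compute its level, namely the least even $l>\max(\slev(s_i),\slev(s_{i+1}))$ with $h_{l/2}(s_i)=0$ and $h_{l/2}(s_{i+1})=1$. The paper observes this can be done in $O(1)$ by bitwise operations on the machine words storing the random bits $h_\bullet(s_i)$ and $h_\bullet(s_{i+1})$; without this remark the $O(1)$ cost per candidate rule is not justified.
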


\begin{proof}
By \cref{lem:cpr}, it suffices to repeatedly collapse the candidate production rule of the smallest level.
We maintain the run-length encoding of the decomposition $\decomp$ as a doubly linked list $Y$, whose every element corresponds to a signature and its multiplicity.
Moreover, for $i = 1, \ldots, 2\mword$ we maintain a list $P_i$ of candidate production rules of level $i$.
The candidate production rules of higher levels are ignored.
With each rule stored in some $P_i$ we associate pointers to elements of $Y$, that are removed when the rule is collapsed.
Observe that collapsing a production rule only affects at most two adjacent elements of $Y$, since we maintain a run-length encoding of the decomposition.

Initially, we compute all candidate production rules for the initial decomposition $\decomp$.
Observe that this can be done easily in time proportional to $d$.
Then, we iterate through the levels in increasing order.
When we find a level $i$, such that the list $P_i$ is nonempty, we collapse the production rules from the list $P_i$.
Once we do that, we check if there are new candidate production rules and add them to the lists $P_i$ if necessary.
Whenever we find a candidate production rule $s \rightarrow \strb$ to collapse, we first check if $\grammar$ already contains a signature $s_2$ associated with an equivalent production rule $s_2 \rightarrow \strb$ for some $s_2$.
If this is the case, the production rule from $\grammar$ is collapsed.
Otherwise, we collapse the candidate production rule, and add signature $s$ to $\grammar$.
Note that this preserves \cref{inv:consistency}.
Also, this is the only place where we modify $\grammar$.

Note that the candidate production rules are never removed from the lists $P_i$.
Thus, collapsing one candidate production rule may cause some production rules in lists $P_j$ ($j > i$) to become obsolete (i.e., they can no longer be collapsed).
Hence, before we collapse a production rule, we check in constant time whether it can still be applied, i.e., the affected signatures still exist in the decomposition.

As we iterate through levels, we collapse minimum-level candidate production rules.
This process continues until the currently maintained decomposition has length $1$.
If this does not happen before reaching level $2\mword$, the algorithm fails.
Otherwise, we obtain a single signature that corresponds to a single-node layer in $\stree(\str)$.
The only string in such a layer is the root of $\stree(\str)$, so the decomposition of length $1$ contains solely the signature of the entire string $\str$.
As a result, when the algorithm terminates successfully, the signature of $\str$ belongs to $\sigs(\grammar)$.
Hence, the algorithm indeed adds $\str$ to $\grammar$.

Regarding the running time, whenever the list $Y$ is modified by collapsing a production rule, at most a constant number of new candidate production rules may appear and they can all be computed in constant time, directly from the definition.
Note that to compute the candidate production rule of the second type, for two distinct adjacent signatures $s_1$ and $s_2$ in $\decomp$ we compute the smallest index $i > \max(\slev(s_1), \slev(s_2))$, such that $\hs_{i/2}(s_1) = 0$ and $\hs_{i/2}(s_2) = 1$.
This can be done in constant time using bit operations, because the random bits $\hs_i(\cdot)$ are stored in machine words.
Since we are only interested in candidate production rules of level at most $2\mword$, we have enough random bits to compute them.

Every time we collapse a production rule, the length of the list $Y$ decreases.
Thus, this can be done at most $d$ times.
Moreover, we need $O(d)$ time to compute the candidate production rules for the initial decomposition.
We also iterate through $O(\depth(\str))$ lists $P_i$.
In total, the algorithm requires $O(d + \depth(\str))$ time.
\end{proof}

\begin{corollary}\label{cor:decomposition-to-sig}
Let $\decomp$ be a decomposition of a string $w$.
Assume that for every signature $s$ in $\decomp$ we have $s \in \sigs(\grammar)$.
Then, given a run-length encoding of $\decomp$ of length $d$, we may add $w$ to $\grammar$ in $O(d + \depth(w))$ time.

The algorithm preserves \cref{inv:consistency} and fails if $\depth(w)>2\mword$.
\end{corollary}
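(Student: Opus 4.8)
The plan is to reduce \cref{cor:decomposition-to-sig} directly to \cref{lem:compressdec} by identifying the decomposition $\decomp$ with a layer of a parse tree and then invoking the algorithm from \cref{lem:compressdec} verbatim. The one subtlety is that \cref{lem:compressdec} was stated in terms of a layer $L$ of $\stree(w)$, whereas here we are only handed a run-length encoding of a sequence of signatures $\decomp$ together with the hypothesis that every signature occurring in $\decomp$ already lies in $\sigs(\grammar)$. So the first step is to argue that such a $\decomp$ does correspond to a genuine layer: since $\decomp$ is assumed to be a decomposition of $w$, by the definitions in \cref{sec:contexti} there is a layer $L$ of $\ustree(w)$ (and, collapsing one-child nodes, of $\stree(w)$) whose signature sequence is exactly $\decomp$. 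Note that $w$ itself need not (yet) be represented by $\grammar$ — \cref{inv:consistency} only guarantees representation of strings in $\coll$ — but the tree $\stree(w)$ is still well-defined as an abstract object built from the parse scheme of \cref{sec:single_string}, and the algorithm of \cref{lem:compressdec} never needs the root signature $\sstr^{-1}(w)$ to be present in advance; it constructs it.

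Second, I would simply run the algorithm of \cref{lem:compressdec} on the given run-length encoding of $\decomp$. By that lemma it performs exactly the updates of \cref{alg:layers} on the layer $L$: it repeatedly collapses a minimum-level candidate production rule, each such collapse being (by \cref{lem:cpr}) the replacement in $L$ of all children of some minimum-level node $v$ above $L$ by $v$. When it terminates successfully the decomposition has shrunk to a single signature, which is the signature of the root of $\stree(w)$, i.e.\ $\sstr^{-1}(w)$; hence $\sstr^{-1}(w)\in\sigs(\grammar)$ afterwards, which is exactly what ``$w$ is added to $\grammar$'' means (recalling the definition that $\grammar$ represents $w$ iff $\sstr^{-1}(w)\in\grammar$). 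The running time is $O(d+\depth(w))$ and the failure condition $\depth(w)>2\mword$ are both inherited verbatim from \cref{lem:compressdec}.

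Third, I would check that \cref{inv:consistency} is preserved. The algorithm of \cref{lem:compressdec} only ever modifies $\grammar$ by adding a signature $s$ together with its production rule $s\to\strb$ when $\strb$ is a block of signatures currently present in the maintained decomposition; at that moment all signatures of $\strb$ are in $\sigs(\grammar)$ — initially because that is the hypothesis of the corollary, and inductively because any signature introduced earlier in the run was added before being referenced. Hence $\grammar$ remains closed, and since we never remove anything, it still represents every $w'\in\coll$. Combined with the successful-termination statement that $\sstr^{-1}(w)$ has been added, \cref{inv:consistency} continues to hold (now additionally for $w$, should $w$ be placed in $\coll$ by the caller).

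The main obstacle is purely a matter of bookkeeping the interface boundary between ``a decomposition given as a run-length-encoded list of signatures, with $w$ possibly not yet in $\grammar$'' and ``a layer of $\stree(w)$'' — i.e.\ making precise that the abstract tree $\stree(w)$ and its layers are meaningful even before $w$'s root signature exists, and that the candidate-production-rule machinery of \cref{lem:cpr,lem:compressdec} only ever touches signatures that are either already present or are exactly the ones it is in the process of creating. Once that identification is in place, the corollary is an immediate restatement of \cref{lem:compressdec}.
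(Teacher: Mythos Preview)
Your proposal is correct and takes essentially the same approach as the paper: the corollary is an immediate consequence of \cref{lem:compressdec}, whose proof already establishes that the algorithm adds the root signature $\sstr^{-1}(w)$ to $\sigs(\grammar)$, preserves \cref{inv:consistency}, and has the stated running time and failure condition. The paper does not give a separate proof for the corollary at all; your additional remarks about $\stree(w)$ being well-defined before $w$ is represented and about closedness being maintained inductively are valid clarifications that the paper leaves implicit.
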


Using \cref{cor:decomposition-to-sig} we may easily implement $\makeop$, $\splitop$ and $\concop$ operations.

\begin{lemma}\label{lem:makeop}
Let $w$ be a string of length $n$.
We can execute $\makeop(w)$ in $O(n + \depth(\str))$ time.
This operation fails if $\depth(w) > 2\mword$.
\end{lemma}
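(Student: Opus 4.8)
The plan is to reduce $\makeop(w)$ to \cref{cor:decomposition-to-sig} applied to the most trivial decomposition of $w$: the sequence of its individual letters. First I would treat each letter of $w$ as a signature, ensuring it lies in $\sigs(\grammar)$: scan $w$ left to right, and for each letter $a \in \Sigma$ look it up in the terminal dictionary; if $a$ is missing, assign it a fresh signature $s$ with $s \to a$, set $\slength(s)=1$, $\slev(s)=0$, draw a random machine word for the bits $\hs_i(s)$, and register $s$ in the appropriate data structures (including the forest $F$ and the dynamic level-ancestor structure, each at $O(1)$ amortized/worst-case cost per new signature by the lemmas of \cref{sec:pointers}). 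This costs $O(n)$ time total and guarantees that every symbol appearing in the letter-sequence of $w$ is a valid signature in $\sigs(\grammar)$.

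Next I would observe that the sequence $\decomp := (w[1], w[2], \dots, w[n])$ is a decomposition of $w$ in the sense of \cref{sec:context_short}: it is precisely the signatures read off the leaves of $\ustree(w)$, which trivially form a layer (every leaf has exactly one ancestor — itself — in this set). Its run-length encoding is obtained by grouping maximal runs of equal letters; this has length $d \le n$ and is computed in $O(n)$ time as a linked list. Since every signature in $\decomp$ is in $\sigs(\grammar)$ by the previous step, \cref{cor:decomposition-to-sig} applies: we add $w$ to $\grammar$ in $O(d + \depth(w)) = O(n + \depth(w))$ time, the call preserves \cref{inv:consistency}, and it fails exactly when $\depth(w) > 2\mword$.

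Finally I would wrap up the bookkeeping: after $\grammar$ represents $w$, the signature $\sstr^{-1}(w)$ is available (it is the single element of the length-one decomposition returned by the algorithm of \cref{lem:compressdec}); if $w$ is new we assign it the next consecutive integer handle and record the signature–handle mapping, otherwise we return the existing handle, as described in \cref{sec:collection}. Summing the two phases gives the claimed $O(n + \depth(\str))$ running time, and the only failure mode is inherited from \cref{cor:decomposition-to-sig}, namely $\depth(w) > 2\mword$.

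The main obstacle — though it is really a matter of care rather than depth — is arguing the $O(n)$ bound for the first phase in the worst case: one must be sure that introducing a new terminal signature (updating the terminal dictionary, the forest $F$ with its ancestor-level bitmasks $\bmask$, the $\bcnt$ dictionary, and the dynamic level-ancestor structure) is genuinely $O(1)$, which is exactly what is promised "per creation of a new signature" in \cref{lem:ptr,lem:uptr} and the $\itfirst/\itlast$ lemma, so no new work is needed here. The other point to state explicitly is that we must not naively materialize $\ustree(w)$ (which can have $\Omega(n\log n)$ nodes, as noted in \cref{sec:comp}); working solely with the run-length-encoded letter decomposition and handing it to \cref{cor:decomposition-to-sig} sidesteps this entirely.
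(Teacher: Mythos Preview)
Your proposal is correct and follows exactly the paper's approach: treat the sequence of letters of $w$ as the trivial (leaf-level) decomposition, compute its run-length encoding in $O(n)$ time, and hand it to \cref{cor:decomposition-to-sig}. The extra details you spell out (creating missing terminal signatures, handle bookkeeping, avoiding materialization of $\ustree(w)$) are all reasonable elaborations, but the paper's own proof is just these two sentences.
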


\begin{proof}
Observe that $\str$ (treated as a sequence) is a decomposition of $\str$.
Thus, we may compute its run-length encoding in $O(n)$ time and then apply \cref{cor:decomposition-to-sig}.
\end{proof}

\begin{lemma}\label{lem:splitop}
Let $\str$ be a string represented by $\grammar$ and $1 \leq k \leq |\str|$.
Then, we can execute a $\splitop(\str, k)$ operation in $O(\depth(\str) + \depth(\str[1..k]) + \depth(\str[k+1..]))$ time.
This operation fails if $\max(\depth(\str[1..k]), \depth(\str[k+1..]) > 2\mword$.
\end{lemma}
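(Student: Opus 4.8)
The plan is to reduce $\splitop(\str,k)$ to two applications of \cref{cor:decomposition-to-sig}, one for the prefix $\str[1..k]$ and one for the suffix $\str[k+1..]$, where the required run-length-encoded decompositions are produced by \cref{lem:layerimpl2}. Let $s\in\sigs(\grammar)$ be the signature of $\str$, which is available because $\grammar$ represents $\str$ (\cref{inv:consistency}) and we store the handle-to-signature mapping. First I would invoke \cref{lem:layerimpl2} with $\sstr(s)=\eps\cdot\str[1..k]\cdot\str[k+1..]$ to obtain, in $O(\depth(\str))$ time, a run-length encoding $D_1$ of a context-insensitive decomposition of $\str[1..k]$; by \cref{remark:left-context} it actually suffices to compute a right context-insensitive one here, but that is only an optimization. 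Symmetrically, applying \cref{lem:layerimpl2} with $\sstr(s)=\str[1..k]\cdot\str[k+1..]\cdot\eps$ yields a run-length encoding $D_2$ of a (left) context-insensitive decomposition of $\str[k+1..]$, again in $O(\depth(\str))$ time. Since each iteration of \cref{alg:layer} advances the two pointers up one level and appends only $O(1)$ run-length blocks, both $D_1$ and $D_2$ have length $d_1,d_2=O(\depth(\str))$.

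Next I would feed $D_1$ to \cref{cor:decomposition-to-sig} to install $\str[1..k]$ into $\grammar$, and then $D_2$ to install $\str[k+1..]$. The hypothesis of that corollary, namely that every signature occurring in $D_i$ already lies in $\sigs(\grammar)$, holds because, as observed in the proof of \cref{lem:layerimpl2}, the pointers $P,Q$ used there range over genuine nodes of $\ustree(\sstr(s))=\ustree(\str)$; moreover every signature appearing in $\ustree(\str)$ also appears in $\stree(\str)$ (dissolving a one-child node leaves its signature on the surviving child, which carries the same signature), and by \cref{inv:consistency} the closed grammar $\grammar$ that represents $\str$ contains all signatures of $\stree(\str)$. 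Each of the two calls runs in $O(d_1+\depth(\str[1..k]))$ resp. $O(d_2+\depth(\str[k+1..]))$ time, preserves \cref{inv:consistency}, and fails exactly when the corresponding depth exceeds $2\mword$. Adding the $O(\depth(\str))$ cost of producing $D_1,D_2$ and using $d_1,d_2=O(\depth(\str))$ gives the claimed $O(\depth(\str)+\depth(\str[1..k])+\depth(\str[k+1..]))$ bound together with the stated failure condition. Finally, to respect the handle conventions of \cref{sec:collection}, I would look up the signatures of the two resulting strings in the signature dictionaries, return the existing handles if present, and otherwise assign fresh consecutive handles and record the new signature--handle pairs; this is $O(1)$ extra work. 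The degenerate case $k=|\str|$, in which the suffix is the empty fragment and is not inserted, is handled trivially by returning the handle of $\str$ itself.

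The only genuinely delicate point, and the step I would state most carefully, is the ``signatures already present'' argument above: it is what makes \cref{cor:decomposition-to-sig} applicable without creating any low-level signatures during $\splitop$, and it also explains why the running time must carry a $\depth(\str)$ term rather than depending only on the depths of the two halves (the context-insensitive decomposition of even a short substring of $\str$ can have run-length encoding of length $\Theta(\depth(\str))$, since it is harvested from $\ustree(\str)$). Everything else is routine bookkeeping: concatenation and $s^k$-construction of run-length-encoded lists are $O(1)$, and the invocations of \cref{lem:layerimpl2} and \cref{cor:decomposition-to-sig} are otherwise used as black boxes.
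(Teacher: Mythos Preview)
Your proposal is correct and follows essentially the same route as the paper: invoke \cref{lem:layerimpl2} twice to obtain run-length-encoded context-insensitive decompositions of $\str[1..k]$ and $\str[k+1..]$ in $O(\depth(\str))$ time, then feed each to \cref{cor:decomposition-to-sig}. The paper's proof is a two-line version of exactly this argument; your additional justification that the signatures in the decompositions already lie in $\sigs(\grammar)$ (via closedness of $\grammar$ and the fact that the pointers in \cref{alg:layer} range over $\ustree(\str)$) is correct and simply makes explicit what the paper leaves implicit.
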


\begin{proof}
Using \cref{lem:layerimpl2} we compute the run-length-encoded decompositions of $\str[1..k]$ and $\str[k+1..]$ in $O(\depth(\str))$ time.
Then, it suffices to apply \cref{cor:decomposition-to-sig}.
\end{proof}

\begin{lemma}\label{lem:concop}
Let $\str_1$ and $\str_2$ be two strings represented by $\grammar$.
Then, we can execute a $\concop(\str_1, \str_2)$ operation in $O(\depth(\str_1) + \depth(\str_2) +  \depth(\str_1 \cdot \str_2))$ time.
The operation fails if $\depth(\str_1 \cdot \str_2) > 2\mword$.
\end{lemma}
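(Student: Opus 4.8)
The plan is to mimic the proof of \cref{lem:splitop} almost verbatim, the only change being how the decomposition of the output string is assembled. First I would compute, using \cref{lem:layerimpl}, the run-length encodings of context-insensitive decompositions $D_1$ of $\str_1$ and $D_2$ of $\str_2$; this takes $O(\depth(\str_1))$ and $O(\depth(\str_2))$ time, and by that lemma the resulting encodings have lengths $O(\depth(\str_1))$ and $O(\depth(\str_2))$, respectively. (One could instead take a right context-insensitive decomposition of $\str_1$ and a left context-insensitive one of $\str_2$ via \cref{remark:left-context}, which is equally fine; a fully context-insensitive decomposition is in particular both, so \cref{lem:layerimpl} suffices.)

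Next I would concatenate them. Since $\str_1$ and $\str_2$ are represented by $\grammar$, the signatures occurring in $D_1$ and $D_2$ are signatures of nodes of $\stree(\str_1)$ and $\stree(\str_2)$, hence---because $\grammar$ is closed by \cref{inv:consistency}---they all lie in $\sigs(\grammar)$. By \cref{fact:two-decompositions}, the concatenation $\decomp := D_1 \cdot D_2$ is a decomposition of $\str_1 \cdot \str_2$. Its run-length encoding is obtained from those of $D_1$ and $D_2$ in constant time: one only has to merge the trailing block of $\rle(D_1)$ with the leading block of $\rle(D_2)$ when they carry the same signature. Thus $\decomp$ has a run-length encoding of length $d = O(\depth(\str_1) + \depth(\str_2))$.

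Finally I would apply \cref{cor:decomposition-to-sig} to $\decomp$: it adds $\str_1 \cdot \str_2$ to $\grammar$ in $O(d + \depth(\str_1 \cdot \str_2))$ time, preserves \cref{inv:consistency}, and fails precisely when $\depth(\str_1 \cdot \str_2) > 2\mword$. Substituting the bound on $d$ yields the claimed running time $O(\depth(\str_1) + \depth(\str_2) + \depth(\str_1 \cdot \str_2))$ and failure condition.

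There is no genuine obstacle here; the statement is essentially a corollary of the machinery already assembled, and is structurally identical to \cref{lem:splitop}. The only points needing care are the two routine observations above: that concatenating run-length encodings may require a single block merge, and that every signature of the two decompositions already belongs to $\sigs(\grammar)$ (so that \cref{cor:decomposition-to-sig} is applicable). Everything else is a direct reuse of \cref{lem:layerimpl}, \cref{fact:two-decompositions}, and \cref{cor:decomposition-to-sig}.
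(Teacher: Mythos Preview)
Your proposal is correct and matches the paper's own proof essentially step for step: compute context-insensitive decompositions of $\str_1$ and $\str_2$ via \cref{lem:layerimpl}, concatenate them using \cref{fact:two-decompositions} (merging at most one run-length block), and finish with \cref{cor:decomposition-to-sig}. The paper likewise remarks that one-sided context-insensitivity (\cref{remark:left-context}) would already suffice, just as you note.
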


\begin{proof}
We use \cref{lem:layerimpl} to compute the run-length-encoded context-insensitive decompositions of $\str_1$ and $\str_2$ in $O(\depth(\str_1) + \depth(\str_2))$ time.
By \cref{fact:two-decompositions} the concatenation of these decompositions is a decomposition of $\str_1 \cdot \str_2$.
Note that given run-length encodings of two strings, we may easily obtain the run-length encoding of the concatenation of the strings.
It suffices to concatenate the encodings and possibly merge the last symbol of the first string with the first symbol of the second one.
Once we obtain the run-length encoding of a decomposition of $\str_1 \cdot \str_2$, we may apply \cref{cor:decomposition-to-sig} to complete the proof.

Note that for the proof, we only need a right-context insensitive decomposition of $\str_1$ and a left-context insensitive decomposition of $\str_2$.
Thus, we apply the optimization of \cref{remark:left-context}.
However, it does not affect the asymptotic running time.
\end{proof}

\subsection{Conclusions}\label{sec:conclusions}

The results of this section are summarized in the following theorem.

\begin{theorem}\label{thm:data_structure}
There exists a data structure which maintains a family of strings $\coll$ and supports $\concop$ and $\splitop$ in $O(\log n +\log t)$
time, $\makeop$ in $O(n + \log t)$ time, where $n$ is the total length of strings involved in the operation and $t$
is the total input size of the current and prior updates (linear for each $\makeop$, constant for each $\concop$ and $\splitop$).
An update may fail with probability $O(t^{-c})$ where $c$ can be set as an arbitrarily large constant.
The data structure assumes that total length of strings in $\coll$ takes at most $\mword$ bits in the binary representation.
\end{theorem}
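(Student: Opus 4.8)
The plan is to assemble the theorem directly from the per-operation lemmas (\cref{lem:makeop,lem:splitop,lem:concop}) together with the depth bound (\cref{lem:small_depth}), handling the running time, the failure probability, and the word-size assumption in turn. First I would recall that every symbol a newly created by an update is equipped with a signature and a record storing its level, length, production rule, and the random bits $\hs_i$, all generated in $O(1)$ time (as described in \cref{sec:persistent_ds}); the auxiliary structures of \cref{lem:ptr,lem:uptr} (dynamic level ancestors, the masks $\bmask$, the dictionaries $\bcnt$ and the three signature dictionaries) are updated in $O(1)$ worst-case time per new signature, so these bookkeeping costs are subsumed in the stated bounds. I would then observe that \cref{inv:consistency} is preserved by each update: this is exactly what \cref{cor:decomposition-to-sig} guarantees, and $\makeop$, $\splitop$, $\concop$ each reduce to one (or two) applications of that corollary on a run-length-encoded decomposition of length $O(\depth)$ or $O(n)$. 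The handle mechanism --- storing the signature-to-handle bijection and returning the original handle when a string is recreated --- gives the deterministic output required by the problem statement; since equal strings receive equal signatures (again by the construction in \cref{sec:persistent_ds}), this is well defined.

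The running time is then a matter of converting the depth-parametrized bounds of \cref{lem:makeop,lem:splitop,lem:concop} into bounds in $n$ and $t$. Here $n$ is the total length of strings in $\coll$ and $t$ bounds the total input size of all updates, so the number of symbols over which we have ever revealed randomness is $\poly(t)$, and each such symbol generates a string of length at most $\poly(t)$ (in the non-persistent setting $|\sstr(s)|\le n\le t$, and in general the length still fits in $\mword$ bits by the last hypothesis of the theorem). Applying \cref{lem:small_depth} with $r=\Theta(\log t)$ and a union bound over all $\poly(t)$ symbols ever created, we get that with probability $1-O(t^{-c})$ every relevant string $w$ satisfies $\depth(w)=O(\log t)=O(\log n+\log t)$; in particular the failure condition $\depth(w)>2\mword$ of the lemmas is not triggered provided $\mword=\Omega(\log t)$, which holds because $\mword$ is at least the word size needed to index the input. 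Substituting $\depth=O(\log n+\log t)$ into \cref{lem:makeop,lem:splitop,lem:concop} yields $O(n+\log t)$ for $\makeop$ and $O(\log n+\log t)$ for $\splitop$ and $\concop$, as claimed. The failure probability statement follows since the only way an update fails is the depth exceeding $2\mword$, an event of probability $O(t^{-c})$ by the union bound above, with $c$ absorbing the constant in the exponent of $\poly(t)$ and hence adjustable to any constant.

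The main obstacle I expect is not any single estimate but the careful accounting of \emph{which} strings the union bound must range over and ensuring the randomness is fixed before the adversary chooses the update sequence. Each update $\concop(\str_1,\str_2)$ or $\splitop(\str,k)$ touches not only the newly created string but also intermediate decompositions, and \cref{lem:splitop}'s bound already reflects $\depth(\str)$ as well as $\depth(\str[1..k])$ and $\depth(\str[k+1..])$; I must argue that all of these are among the $\poly(t)$ symbols covered by \cref{lem:small_depth}, which is where the formalization in \cref{app:formalsymbols} --- defining $\symbols$ independently of the $\hs_i$ --- is needed so that the ``string $\to$ symbol $\to$ depth'' assignment is well defined and the randomness can be revealed lazily without correlating with adversarial choices. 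A secondary subtlety is that $t$ (not $n$) is the right parameter for the tail bound in the persistent setting, since there $n$ can be exponential in $t$ while the number of distinct symbols ever materialized is still polynomial in $t$; I would state the theorem's guarantees in terms of $t$ precisely for this reason, noting that in the non-persistent case $n\le t$ so the $\log n$ and $\log t$ terms coincide.
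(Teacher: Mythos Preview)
Your proposal is correct and follows the same route as the paper: invoke \cref{lem:makeop,lem:splitop,lem:concop} for the per-operation costs and \cref{lem:small_depth} to convert $\depth(\cdot)$ into $O(\log n+\log t)$. Two places where the paper is tighter than your write-up are worth noting.

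First, the union bound over ``all $\poly(t)$ symbols ever created'' is unnecessary, and it is the source of the obstacles you flag in your second paragraph. The theorem's guarantee is \emph{per update}: the current operation involves a constant number of strings (its inputs and the at most two outputs), each of length at most $n$, and \cref{lem:small_depth} with $r=c\ln t$ gives $\depth(w)\le 8(c\ln t+\ln|w|)\le 8(c\ln t+\ln n)$ for each of them with probability $1-t^{-c}$. A union over this constant collection already yields the claimed $O(t^{-c})$ failure probability for the current update. This per-update view makes your worries about adversarial correlation and about counting symbols disappear; the randomness model of \cref{app:formalsymbols} is needed only to make $\depth(w)$ well defined, not to justify a large union bound. (Relatedly, your intermediate claim ``$\depth(w)=O(\log t)$'' is wrong when $|w|$ is not $\poly(t)$; the paper keeps the $\ln n$ term throughout.)

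Second, you gloss over why the depth threshold $8(c\ln t+\ln n)$ is compatible with the ``fails if $\depth>2\mword$'' clause of \cref{lem:makeop,lem:splitop,lem:concop}. Saying ``$\mword=\Omega(\log t)$ suffices'' hides the constant. The paper makes this explicit: since (in this persistent setting) the total length of $\coll$ is at least $\max(n,t)$ and, by hypothesis, fits in $\mword$ bits, one has $8(c\ln t+\ln n)\le 8(c+1)\mword\ln 2$; replacing $\mword$ by $\mword'=4(c+1)\mword\ln 2=O(\mword)$ (a constant-factor word-RAM simulation) makes the failure threshold exactly $2\mword'$ and the algorithm can abort as soon as it observes a depth exceeding $8(c\ln t+\ln n)$.
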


\begin{proof}
Consider an update which creates a string $\str$ of length $n$.
By \cref{lem:small_depth}, we have that $$\mathbb{P}(\depth(\str) \leq 8(c \ln t + \ln n)) \geq 1 - e^{c \ln t} = 1 - 1/t^{c}.$$
Hence, an update algorithm may fail as soon as it notices that $\depth(\str) > 8(c \ln t + \ln n)$.
The total length of strings in $\coll$ is at least $\max(n,t)$, and thus $8(c \ln t + \ln n)\le 8(c+1)\mword\ln2$.
We extend the machine word to $\mword'=4(c+1)\mword\ln 2 = O(B)$ bits, which results in a constant factor overhead in the running time.

This lets us assume that updates implemented according to \cref{lem:makeop,lem:splitop,lem:concop} do not fail.
By \cref{lem:makeop}, $\makeop$ runs in $O(n+\depth(\str))$ time, and, by \cref{lem:splitop,lem:concop}, both $\concop$ and $\splitop$ run in $O(\log n +\depth(\str))$ time.
Since we may terminate the algorithms as soon as $\depth(\str)$ turns out to be larger than $8(c \ln t + \ln n)$,
we can assume that $\makeop(\str)$ runs in $O(n+\log t)$ time, while $\concop$ and $\splitop$ run in $O(\log n +\log t)$ time.
\end{proof}

Note that the failure probability in \cref{thm:data_structure} is constant for the first few updates. 
Thus, when the data structure is directly used as a part of an algorithm, the algorithm could fail with constant probability.
However, as discussed in \cref{app:model}, restarting the data structure upon each failure is a sufficient mean to eliminate failures 
still keeping the original asymptotic running time with high probability with respect to the total size of updates (see~\cref{lem:restart}).
Below, we illustrate this phenomenon in a typical setting of strings with polynomial lengths.

\begin{corollary}\label{cor:persistent}
Suppose that we use the data structure to build  a dynamic collection of $n$ strings, each of length at most $\poly(n)$.
This can be achieved by a Las Vegas algorithm whose running time with high probability is $O(N+M\log n)$ where $N$
is the total length of strings given to $\makeop$ operations and $M$ is the total number of updates ($\makeop$, $\concop$, and $\splitop$).
\end{corollary}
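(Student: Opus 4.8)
The plan is to read this off \cref{thm:data_structure} specialised to the polynomial-length regime, and then upgrade it to a Las Vegas algorithm using the restart technique of \cref{lem:restart}. First I would note that in this setting all the relevant quantities become logarithmic in $n$. Since the collection $\coll$ only grows and every string ever placed in it has length $\poly(n)$, the total length of strings in $\coll$ is $\poly(n)$, so the machine-word hypothesis of \cref{thm:data_structure} is satisfied with $\mword=\Theta(\log n)$; and the total input size of all updates is $t=O(N+M)$, which for a collection of $n$ polynomially-long strings is $\poly(n)$, so $\log t=O(\log n)$. Consequently, setting aside the possibility of failure, \cref{thm:data_structure} makes each $\makeop(w)$ run in $O(|w|+\log t)=O(|w|+\log n)$ time and each $\concop$ or $\splitop$ run in $O(\log n+\log t)=O(\log n)$ time, because the lengths of the strings involved in these operations are $\poly(n)$. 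Summing over the $M$ updates gives total running time $O(N+M\log n)$.

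It remains to remove the failures to obtain a Las Vegas algorithm. By \cref{thm:data_structure}, an update may fail with probability $O(t^{-c})$ for an arbitrarily large constant $c$, where $t$ is the cumulative input size up to that update. For every update performed once $t$ has exceeded a fixed constant, a union bound over all $M$ updates bounds the chance of any failure by $M\cdot\poly(n)^{-c}$, which is negligible for $c$ large; the only genuinely problematic updates are the first $O(1)$ ones, when $t$ is a small constant and the failure probability is itself a constant. To cope with these, I would apply \cref{lem:restart}: maintain a log of the updates issued so far, and whenever an update reports failure, discard the entire data structure, rebuild a fresh one with new randomness, and replay the logged updates. By the first paragraph, each such rebuild costs $O(N+M\log n)$.

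Finally, because the per-update failure probability $O(t^{-c})$ decays polynomially in the strictly increasing cumulative size $t$, the probability that a rebuild is itself interrupted by a failure shrinks geometrically with each successive rebuild, and \cref{lem:restart} then yields that with high probability only $O(1)$ rebuilds occur; hence the total running time, including replays, stays $O(N+M\log n)$ with high probability. The resulting algorithm is Las Vegas: the data structure always returns correct answers, the randomness affecting only the running time and whether an update aborts, and an abort is always detected and repaired by a restart. The one step that requires care is exactly this last point — the guaranteed per-update failure probability is small only after $t$ has grown, so a plain union bound does not suffice, and the argument instead rests on the restart analysis of \cref{lem:restart}, showing that the finitely many "bad" early updates cause merely $O(1)$ restarts with high probability and therefore do not affect the asymptotic running time.
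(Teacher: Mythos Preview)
Your proof is correct and takes the same route as the paper: specialise \cref{thm:data_structure} to the $\poly(n)$-length regime so that the word-size hypothesis holds and $\log t=O(\log n)$, sum the per-update costs to $O(N+M\log n)$, then invoke the restart mechanism of \cref{lem:restart} to upgrade to Las Vegas.

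One point of your informal explanation of \cref{lem:restart} is off, though the citation itself carries the argument. Successive rebuilds are independent trials with the \emph{same} success probability, so the failure probability of a given rebuild does not shrink; what shrinks geometrically is the probability of needing $\ge k$ rebuilds. Consequently the number of rebuilds is $O(\log n)$ with high probability, not $O(1)$ (only the expectation is $O(1)$). The reason the total time is still $O(N+M\log n)$ is not that there are few rebuilds, but that with high probability all rebuilds occur while the cumulative input size is still small (at most $\sqrt{N}$ in the proof of \cref{lem:restart}), so their combined cost is $O(\sqrt{N}\log n)=o(N)$. Your invocation of \cref{lem:restart} is the right move and delivers exactly this; only the surrounding commentary misstates the mechanism.
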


\begin{proof}
We use \cref{thm:data_structure} and restart the data structure in case of any failure.
Since the value of $n$ fits in a machine word (as this is the assumption in the word RAM model),
by extending the machine word to $O(B)$ bits, we may ensure that the total length of strings in $\coll$ (which is $\poly(n)$)
fits in a machine word.  Moreover, $t=\poly(n)$, so $\log t=O(\log n)$, i.e., the running time of $\makeop(|w|)$ is $O(|w|+\log n)$
and of $\splitop$ and $\concop$ is $O(\log n)$.
We set $c$ in \cref{thm:data_structure} as a sufficiently large constant
so that \cref{lem:restart} ensures that the overall time bound holds with the desired probability.
\end{proof}

\section{Dynamic String Collections: Lower Bounds}\label{sec:lb}

In this section we prove a lower bound for any data structure maintaining a dynamic collection
of strings under operations $\makeop(w)$, $\concop(w_1,w_2)$ and $\splitop(w,k)$ and
answering equality queries $\eqop(w_1,w_2)$. The strings are not persistent, that is, both $\concop(w_1,w_2)$
and $\splitop(w,k)$ destroy their arguments and return new strings. Furthermore,
$\eqop$ only needs to support comparing strings of length one.
In such a setting, if we are maintaining $\poly(n)$ strings of length at most $\poly(n)$,
the amortized complexity of at least one operations among $\concop$, $\splitop$, $\eqop$ is
$\Omega(\log n)$. The lower bound applies to any Monte Carlo structure returning correct answers
w.h.p.

The lower bound is based on a reduction from dynamic connectivity. A well-known
result by P\v{a}tra\c{s}cu and Demaine~\cite{logarithmic} is that any data structure maintaining
an undirected graph under operations $\insertop(u,v)$, $\deleteop(u,v)$ and answering queries
$\connectedop(u,v)$ either needs $\Omega(\log n)$ amortized time for an update or
$\Omega(\log n)$ amortized time for a query, where $\insertop(u,v)$ inserts a new edge $(u,v)$,
$\deleteop(u,v)$ deletes an existing edge $(u,v)$ and $\connectedop(u,v)$ checks if
$u$ and $v$ are in the same component.
\begin{shortv}
The lower bound is based on a reduction from dynamic connectivity. A well-known
result by P\v{a}tra\c{s}cu and Demaine~\cite{logarithmic} is that any data structure maintaining
an undirected graph under operations $\insertop(u,v)$, $\deleteop(u,v)$ and answering queries
$\connectedop(u,v)$ either needs $\Omega(\log n)$ amortized time for an update or
$\Omega(\log n)$ amortized time for a query.%
\end{shortv}
It can be assumed that the graph is
a collection of disjoint paths and the lower bound applies to any Monte Carlo structure returning
correct answers with high probability. For the reduction we represent every path with
a string, so that connectivity can be checked by inspecting one character of an appropriate string.

Because our reduction is not completely black-box,
we start with a brief review of the proof presented in ~\cite{logarithmic}.

\subsection{Lower Bound for Dynamic Connectivity}
To prove a lower bound on dynamic connectivity, we consider the following problem:
maintain $\sqrt{n}$ permutations $\pi_1,\pi_2,\ldots,\pi_{\sqrt{n}}\in S_{\sqrt{n}}$ on
$\sqrt{n}$ elements. An update $\updateop(i,\pi)$ sets $\pi_i := \pi$ and a query
$\verifyop(i,\pi)$ checks whether $\pi = p_i$,
where $p_i := \pi_{i} \circ \pi_{i-1} \circ \ldots \circ \pi_{1}$.
We assume that $n$ is a power of two and consider sequences
of $\sqrt{n}$ pairs of operations, each pair being $\updateop(i,\pi)$
with $i\in [1,\sqrt{n}]$ and $\pi\in S_{\sqrt{n}}$ chosen uniformly at random and $\verifyop(i,\pi)$ with
$i\in [1,\sqrt{n}]$ chosen uniformly at random and $\pi = \pi_{i} \circ \pi_{i-1} \circ \ldots \circ \pi_{1}$.
That is, the $\verifyop$ is asked to prove a tautology, and must gather enough information
to certify it by probing cells. Then, if the word size is $\Theta(\log n)$,
it can be proved that the expected total number of cell probes (and hence also the total time
complexity) must be $\Omega(n\log n)$ through an entropy-based argument. The essence
of this argument is that if we consider $\ell$ indices $q_1 < q_2 < \ldots q_\ell$
and execute at least one update $\updateop(i,\pi)$ with $i\in [q_{j-1}+1,q_j]$ for
every $j=2,3,\ldots,\ell$, then all partial sums $p_{q_j}$ are independent random
variables uniformly distributed in $S_{\sqrt{n}}$.

We encode the problem of maintaining such $\sqrt{n}$ permutations
as an instance of dynamic connectivity for disjoint paths. For every $i=0,1,\ldots,\sqrt{n}$
we create a layer of $\sqrt{n}$ nodes $v_{i,1},v_{i,2},\ldots,v_{i,\sqrt{n}}$. Then, for every
$i=1,2,\ldots,\sqrt{n}$ and $j=1,2,\ldots,\sqrt{n}$ we connect $v_{i-1,j}$ with $v_{i,\pi_i(j)}$.
Clearly, the resulting graph is a collection of disjoint paths. Furthermore,
$\verifyop(i,\pi)$ can be implemented by asking all $\sqrt{n}$ queries of the form
$\connectedop(v_{0,j},v_{i,\pi(j)})$. $\updateop(i,\pi)$ requires first removing all edges
between nodes from the $(i-1)$-th and $i$-th layer and then connecting them
according to $\pi$. In total, the whole sequence can be processed with $2n$ updates
and $n$ queries, and therefore the amortized complexity of $\insertop$,
$\deleteop$ or $\connectedop$ is $\Omega(\log n)$ on some input. The lower bound
immediately applies also to Las Vegas algorithms. It can be also argued that it
applies to Monte Carlo algorithms with error probability $n^{-c}$ for sufficiently
large $c$, see Section 6.4 of~\cite{logarithmic}.

\subsection{Reduction}
Now we explain how to further encode the instance of dynamic connectivity for disjoint paths
as maintaining a dynamic collection of strings. Intuitively, we represent every path with
a string.
The $j$-th string $s_j$ describes the path connecting $v_{0,j}$ and $v_{\sqrt{n},p_{\sqrt{n}}(j)}$.
More precisely, we maintain the invariant that $s_j = p_0(j) p_1(j) \ldots p_{\sqrt{n}}(j)$,
where every $p_i(j)$ is a separate character. Additionally, for every $j$ we prepare
a one-character string $c_j$ such that $c_j\leq c_k$ for $j\neq k$.

To implement $\connectedop(v_{0,j},v_{i,\pi(j)})$ we need to check if the $(i+1)$-th character of $s_j$
is equal to $\pi(j)$. We split $s_j$ into two parts $s'_js''_js'''_j$ such that $|s'_j|=i$, $|s''_j|=1$
and $|s'''_j|=\sqrt{n}-i$
and compare $s''_j$ with the previously prepared string $c_{\pi(j)}$. Then,
we merge $s'_j$, $s''_j$ and $s'''_j$ to recover the original $s_j$ and maintain the invariant.

The difficulty lies in implementing $\updateop(i,\pi)$. We need to replace all edges between
the $(i+1)$-th and $i$-th layer. We first split every $s_j$ into two parts $s'_j s''_j$ such that
$|s'_j| = i$ and $|s''_j| = \sqrt{n}-i+1$. Then we would like to create the new $s_j$
by concatenating $s'_j$ and $s''_{j'}$ such that $s'_j[i]=k$ and $s''_{j'}[1]=\pi(k)$
(in other words, $k = \pi_{i-1}(j)$). However, while we can extract every such $k$,
we are not able to find the corresponding $j'$ for every $j$ by only comparing
strings in our collection (and without augmenting the set of allowed operations).
However, recall that every $\pi$ is chosen uniformly at random. We take $j' = \pi(j)$,
or in other words set $\pi_i = \pi \circ p^{-1}_{i-1}$ instead of $\pi_i = \pi$.
It is easy to implement such update in our setting, because after choosing $\pi$
uniformly at random we just need to concatenate $s'_j$ and $s''_{\pi(j)}$ to form
the new $s_j$, for every $j=1,2,\ldots,\sqrt{n}$, but the meaning of $\updateop(i,\pi)$
is different now. Nevertheless, the expected total
number of cell probes must still be $\Omega(n\log n)$ by the same argument. That is,
if we consider $\ell$ indices $q_1 < q_2 < \ldots q_\ell$
and call $\updateop(i,\pi)$ with $i\in [q_{j-1}+1,q_j]$ for
every $j=2,3,\ldots,\ell$, then all $p_{q_j}$ are independent random variables uniformly
distributed in $S_{\sqrt{n}}$. So the amortized complexity of $\concop$, $\splitop$ or
$\eqop$ must be $\Omega(\log n)$. Again, the lower bound immediately
applies to Las Vegas algorithms and, by the same reasoning as in the lower bound for
dynamic connectivity, to Monte Carlo algorithms with error probability $n^{-c}$
for sufficiently large $c$.

\begin{theorem}
\label{thm:polylowerbound}
For any data structure maintaining a dynamic collection of $\poly(n)$ non-persistent strings of length
$\poly(n)$ subject to operations $\makeop$, $\concop$, $\splitop$, and
$\eqop$ queries, correct with high probability, the amortized complexity of
$\concop$, $\splitop$, or $\eqop$ is $\Omega(\log n)$.
\end{theorem}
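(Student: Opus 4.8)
The plan is to reduce from the cell-probe lower bound of P\v{a}tra\c{s}cu and Demaine~\cite{logarithmic}, going through the intermediate ``partial products of permutations'' problem recalled above rather than through dynamic connectivity directly. Recall that maintaining $\sqrt n$ permutations $\pi_1,\ldots,\pi_{\sqrt n}\in S_{\sqrt n}$ under $\updateop(i,\pi)$ and $\verifyop(i,\pi)$, on the random sequence of $\sqrt n$ pairs of operations described above and with word size $\Theta(\log n)$, forces $\Omega(n\log n)$ expected cell probes; the key structural fact powering the entropy argument is that for any indices $q_1<\cdots<q_\ell$ with at least one $\updateop$ performed in each interval $(q_{j-1},q_j]$, the partial products $p_{q_j}=\pi_{q_j}\circ\cdots\circ\pi_1$ are independent and uniformly distributed in $S_{\sqrt n}$. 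I would take this instance as the starting point and simulate it using only $\makeop$, $\concop$, $\splitop$, and $\eqop$ on a dynamic collection of strings.

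First I would fix the encoding: for every column $j\in[1,\sqrt n]$ maintain a string $s_j=p_0(j)p_1(j)\cdots p_{\sqrt n}(j)$, one character per entry, and additionally, using $\makeop$, precompute $\sqrt n$ single-character ``probe'' strings $c_1,\ldots,c_{\sqrt n}$ over pairwise distinct characters. The invariant is that the $(i{+}1)$-st character of $s_j$ equals the layer-$i$ endpoint of the path from $v_{0,j}$. Under this invariant a query $\connectedop(v_{0,j},v_{i,\pi(j)})$ is answered in $O(1)$ collection operations: $\splitop$ the string $s_j$ into $s'_js''_js'''_j$ with $|s'_j|=i$ and $|s''_j|=1$, evaluate $\eqop(s''_j,c_{\pi(j)})$, then $\concop$ the three pieces back to restore $s_j$. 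Note that $\eqop$ is used only between length-$1$ strings, as the statement allows, and all strings have length $O(\sqrt n)=\poly(n)$.

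The main obstacle, and the only non-black-box point, is implementing $\updateop(i,\pi)$, which must rewire all edges between layers $i-1$ and $i$. After $\splitop$-ting each $s_j$ as $s'_js''_j$ with $|s'_j|=i$, the intended new string would be $s'_j$ followed by $s''_{j'}$ where $j'=\pi_{i-1}(j)$ is the column whose layer-$(i-1)$ endpoint matches the new layer-$i$ target; but with only the four allowed operations there is no way to locate $j'$ from the collection. I would resolve this exactly as in the sketch above: since $\pi$ is drawn uniformly at random, instead of $\pi_i:=\pi$ I set $\pi_i:=\pi\circ p_{i-1}^{-1}$, which is still uniform, and which is realized simply by forming the new $s_j$ as $\concop(s'_j,s''_{\pi(j)})$ for every $j$. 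This costs $O(\sqrt n)$ collection operations per $\updateop$ and changes the \emph{meaning} of $\updateop$ but not the joint distribution of the partial products, because the structural fact above is preserved verbatim by this relabelling.

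Finally I would conclude by bookkeeping: the random sequence of $\sqrt n$ pairs $(\updateop,\verifyop)$ is simulated with $O(n)$ calls to $\makeop$, $\concop$, $\splitop$, and $\eqop$ ($O(\sqrt n)$ per simulated operation over $\Theta(\sqrt n)$ operations, plus $O(\sqrt n)$ $\makeop$s for setup) on a collection of $\poly(n)$ non-persistent strings of length $\poly(n)$, while the cell-probe lower bound still demands $\Omega(n\log n)$ probes in total; hence the amortized cost of at least one of $\concop$, $\splitop$, $\eqop$ is $\Omega(\log n)$. The extension to Las Vegas algorithms is immediate, and the extension to Monte Carlo algorithms with error probability $n^{-c}$ for sufficiently large $c$ follows exactly as in Section~6.4 of~\cite{logarithmic}, since a string-collection structure with such guarantees yields a permutation-maintenance structure with the same guarantees. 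I expect the plumbing of the $\updatep$-semantics change through the entropy argument to be the only subtle step; everything else is routine.
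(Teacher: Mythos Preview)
Your proposal is correct and follows the paper's argument essentially verbatim: the same string encoding $s_j=p_0(j)\cdots p_{\sqrt n}(j)$, the same split--equal--concat implementation of a connectivity test, and, crucially, the same non-black-box twist of replacing $\pi_i:=\pi$ by $\pi_i:=\pi\circ p_{i-1}^{-1}$ so that the update reduces to concatenating $s'_j$ with $s''_{\pi(j)}$ while preserving the independence/uniformity of the partial products needed for the entropy argument. The operation counting and the Las Vegas / Monte Carlo extensions also match the paper.
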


\section{Dynamic String Collections with Order Maintenance}\label{sec:om}

\newcommand{\prevst}{\mathtt{prev}}
\newcommand{\succst}{\mathtt{succ}}
\newcommand{\findst}{\mathtt{find}}
\newcommand{\insertst}{\mathtt{insert}}
\newcommand{\nilst}{\mathtt{nil}}
\newcommand{\sub}{\subseteq}
\newcommand{\sm}{\setminus}
\newcommand{\utrie}{\overline{T}}
\newcommand{\trie}{T}

\newcommand{\tval}{\mathtt{val}}
\newcommand{\tlink}{\mathtt{link}}
\newcommand{\tleaf}{\mathtt{leaf}}
\newcommand{\tsub}{\mathtt{sub}}
\newcommand{\tlca}{\mathtt{lca}}
\newcommand{\troot}{\mathtt{root}}
\newcommand{\tdepth}{\mathtt{depth}}
\newcommand{\tup}{\mathtt{up}}
\newcommand{\tdown}{\mathtt{down}}
\newcommand{\tmake}{\mathtt{make\_explicit}}
\newcommand{\taddleaf}{\mathtt{add\_leaf}}

In this section we describe how to extend the data structure developed in \cref{sec:collection} to support the following two types of queries
in constant time:
\begin{itemize}
  \item $\compop(\str_1,\str_2)$, for $\str_1,\str_2\in \coll$,  checks whether $\str_1<\str_2$,
    $\str_1=\str_2$, or $\str_1>\str_2$.
  \item $\lcpop(\str_1,\str_2)$, for $\str_1,\str_2\in \coll$, returns the length of the longest common prefix
    of $\str_1$ and $\str_2$.
\end{itemize}

In order to support these queries, we maintain the strings from $\coll$ in the lexicographic order.
The overall idea is similar to the one sketched in~\cite{Alstrup}.
However, due to the reasons already described in \cref{sec:comp}, our implementation requires some more insight.

For each non-negative integer $i$, we maintain a \emph{trie} $T_i$ that contains $\cshrink{i}(\str)$ for each string $\str$ in the data structure
with $\depth(w)\ge i$.
The tree is stored in a compacted form, that is we compress long paths consisting of nodes with exactly one child.
Since an edge of a trie may correspond to a long substring of some $\cshrink{i}(\str)$, roughly speaking, we store in it a pointer to a tree $\ustree(\str)$ that allows us to access the sequence of symbols represented by the edge.
Moreover, we maintain links between corresponding nodes in $T_i$ and $T_{i-1}$.
In order to update the tries, we develop pointers for traversing them, with similar objectives to that behind the pointers to $\ustree(\str)$ trees.

Whenever a string is added to our data structure, we update the tries $T_i$.
At the same time, we discover the string from $\coll$ that lies just before the inserted string in the lexicographically sorted list of all strings from $\coll$.
By combining this with an order-maintenance data structure~\cite{Dietz:1987,Bender:2002}, we are able to answer $\compop$ queries in constant time.
Moreover, by using a data structure for answering lowest common ancestor queries~\cite{Cole:2005} on the trie $T_0$, we are able to answer $\lcpop$ queries in constant time.
This idea was not used in~\cite{Alstrup}, where $\lcpop$ queries are answered in logarithmic time.

Let us now describe the algorithm more formally.
Suppose that a grammar $\grammar$ is maintained so that it satisfies \cref{inv:consistency}
and so that it may only grow throughout the algorithm.
Then, we may maintain a set of strings $\coll\sub \Sigma^+$ so that each string $w\in \coll$
is represented by a signature $s\in \sigs(\grammar)$ subject to the following operations:
\begin{itemize}
  \item queries $\compop(w_1,w_2)$ and $\lcpop(w_1,w_2)$ for every $w_1,w_2\in \coll$,
  \item update $\insertop(s)$ for  $s\in \sigs(\grammar)$  inserting the corresponding string $w=\sstr(s)$ to $\coll$.
\end{itemize}
Our solution supports queries in constant time while an update requires $O(\depth(w)+\log |\coll|)$ time.
Note that in order to implement the dynamic string collections, we store the order-maintenance component
for our collection $\coll$ of strings. Consequently, we call $\insertop$ for all strings
constructed with $\splitop$, $\concop$, and $\makeop$ operations.

\subsection{Simple $\lcpop$ Implementation}
As the technical side of our construction is quite complex, we begin with a simple implementation of the $\lcpop(w_1,w_2)$ operation,
which runs in $O(\depth(w_1)+\depth(w_2))$ time.
In the efficient implementation of the order-maintenance component, the $\insertop(s)$ operation uses similar ideas
to compute the longest prefix of $\sstr(s)$ which is also a prefix of some string in the collection $\coll$.
This involves dealing with many technical issues,  and thus it is easier to understand the key combinatorial idea by analyzing the $\lcpop(w_1,w_2)$ implementation described below.

The algorithm is based on~\cref{lem:lcpd}.
For a pointer $P$ to some $\ustree(s)$, we denote by $\itstring(P)$ the prefix of $\cshrink{\itlevel(P)}(\str)$ that ends immediately before the symbol represented by the node pointed by $P$
and by $\itsuf(P)$ the corresponding suffix of $\cshrink{\itlevel(P)}(\str)$.

\begin{algorithm}
\begin{algorithmic}[1]
\Function{$\lcpop$}{$s_1$, $s_2$}

\State $\algptr_1 := \itroot_{\ustree}(s_1)$
\State $\algptr_2 := \itroot_{\ustree}(s_2)$

\While{$\itlevel(\algptr_1) > \itlevel(\algptr_2)$} \label{alg:samelevel_beg}
\State $\algptr_1 := \itchild(\algptr_1, 1)$
\EndWhile

\While{$\itlevel(\algptr_2) > \itlevel(\algptr_1)$}
\State $\algptr_2 := \itchild(\algptr_2, 1)$
\EndWhile \label{alg:samelevel_end}

\For{$i=\itlevel(\algptr_1)$ \textbf{downto} $0$} \label{alg:lcp:for}
    \If{$\itsig(P_1) = \itsig(P_2)$}\label{alg:lcp:cond}
    	\State $k := \min(\itrext(P_1),\itrext(P_2))$\label{alg:lcp:k}
    	\State $P_1 = \itrskip(P_1,k)$
    	\State $P_2 = \itrskip(P_2,k)$
     \EndIf
    \If{$P_1 = \itnil$ \textbf{or} $P_2=\itnil$}
    \State \Return $\min(\slength(s_1),\slength(s_2))$
   	\EndIf
    \If{$i > 0$}\label{alg:lcp:lcp}
    	\State $P_1 = \itchild(P_1,1)$
    	\State $P_2 = \itchild(P_2,1)$
    \EndIf
\EndFor
\State $(i,i) := \itrepr(P_1)$
\State \Return $i-1$ \label{line:return}
\EndFunction
\end{algorithmic}
\caption{Pseudocode of $\lcpop$}
\label{alg:lcp}
\end{algorithm}

\begin{lemma}\label{lem:lcp}
  Let $\str_1$ and $\str_2$ be two strings represented by $\grammar$ given by the signatures $s_1,s_2\in\sigs(\grammar)$.
Then, we can execute an $\lcpop(s_1, s_2)$ operation in $O(\depth(\str_1) + \depth(\str_2))$ time.
\end{lemma}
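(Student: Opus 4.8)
The plan is to show that Algorithm~\ref{alg:lcp} correctly computes $\lcpop(w_1, w_2)$ and that it runs in the claimed time. The running time is immediate: the two initial \textbf{while} loops (lines~\ref{alg:samelevel_beg}--\ref{alg:samelevel_end}) descend to the leftmost leaf and take $O(\depth(w_1) + \depth(w_2))$ iterations in total, each iteration being a constant number of pointer operations (which cost $O(1)$ by \cref{lem:uptr}); the \textbf{for} loop on line~\ref{alg:lcp:for} runs once per level, again with $O(1)$ work per iteration. So the whole burden is correctness.

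For correctness, the key invariant I would maintain across the \textbf{for} loop is the following: at the start of the iteration for level $i$, the pointers $P_1$ and $P_2$ point to nodes at level $i$ of $\ustree(s_1)$ and $\ustree(s_2)$ respectively such that $\itstring(P_1)$ is the longest common prefix of $\cshrink{i}(w_1)$ and $\cshrink{i}(w_2)$ \emph{expanded up to a block boundary both strings agree on}, and moreover $\itstring(P_1)$ (equivalently $\itstring(P_2)$, which represents the same string) expands to a common prefix of $w_1$ and $w_2$ whose length is within $\slength(\itsig(P_1))$ (or $\slength(\itsig(P_2))$) of $|\lcpop(w_1,w_2)|$. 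This is exactly the content of \cref{lem:lcpd}: with $p$ the longest common prefix of $\cshrink{i}(w_1)$ and $\cshrink{i}(w_2)$, writing $p = p_\ell p_r$ with $\shrink{i}(p_\ell)$ the longest common prefix one level up and $|\rle(p_r)| \le 1$, the node after $p_\ell$ at level $i$ is where $P_1$ should be (its children, at level $i{-}1$, refine $p$ further). Descending to the leftmost child on line~\ref{alg:lcp:lcp} moves $P_1, P_2$ from representing $p_\ell$ at level $i$ to representing $\shrink{i}(p_\ell)$-worth of symbols at level $i{-}1$; the block $p_r$ that they may have skipped is at most one run, which is why the $\itrskip$ on the next iteration (guided by $k = \min(\itrext(P_1), \itrext(P_2))$) is enough to re-synchronise the two pointers before descending again.

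So the induction step is: entering level $i$, if $\itsig(P_1) = \itsig(P_2)$ we extend the common prefix past the maximal common run of that signature by calling $\itrskip$ with $k = \min(\itrext(P_1), \itrext(P_2))$ on both pointers (if one of them runs off the end, i.e. becomes $\itnil$, the shorter string is a prefix of the longer and we return $\min(\slength(s_1), \slength(s_2))$); otherwise the signatures already differ, so the common prefix of $\cshrink{i}(w_1)$ and $\cshrink{i}(w_2)$ ends exactly before $P_1$ / $P_2$. In either case the node now under $P_1$ is the first symbol of $\cshrink{i}(w_1)$ past the common prefix $p_\ell$ of level $i$, which matches the node under $P_2$ symbol-for-symbol up to but not including this position; descending to the leftmost children (when $i > 0$) re-establishes the invariant one level down, using \cref{lem:lcpd} to guarantee that the (at most one run long) tail $p_r$ is harmless. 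At level $0$ the pointers sit at leaves, so $\cshrink{0}(w_b) = w_b$ and $\itstring(P_1)$ is literally the longest common prefix of $w_1$ and $w_2$; line~\ref{line:return} reads off its length from $\itrepr(P_1)$. I would also have to note the boundary case handled by the first two \textbf{while} loops: after them, $P_1$ and $P_2$ are at leaves of equal level $\itlevel(P_1) = 0$ --- wait, more precisely they are brought to the \emph{same} level (both pointing at the leftmost leaf of their tree, which is at level $0$), so the \textbf{for} loop genuinely runs from $0$ downto $0$; the alignment by descending to leftmost children is valid because leftmost leaves are always context-insensitive and in particular their ancestors along the leftmost spine are counterparts under any left extension, so nothing is lost.

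The main obstacle I expect is making precise the bookkeeping between $\ustree$ levels and the strings $\cshrink{i}$ --- specifically, verifying that the single call to $\itrskip(P_b, k)$ with $k = \min(\itrext(P_1), \itrext(P_2))$ truly consumes the entire common run and leaves $P_1, P_2$ at the \emph{first point of genuine disagreement} at level $i$, rather than somewhere in the middle of a still-agreeing run. This is where \cref{lem:lcpd} does the real work: because $p_r$ (the part of the level-$i$ common prefix not captured one level up) satisfies $|\rle(p_r)| \le 1$, it is contained in a single run, so after one $\itrskip$ per level the pointers are correctly re-synchronised; if we skipped this subtlety the algorithm would undercount the LCP. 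Handling the $\itnil$ early-exit correctly (one string being a prefix of the other, detected at some intermediate level) and confirming it returns $\min(\slength(s_1), \slength(s_2))$ rather than a level-$i$ length is the other fiddly point, but it follows from the same invariant: if $P_b = \itnil$ then $\itstring$ of the surviving pointer is all of $\cshrink{i}(w_b)$ for the shorter string, whose expansion is all of $w_b$.
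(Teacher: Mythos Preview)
Your high-level strategy---maintain an invariant across levels, use \cref{lem:lcpd} to justify that one $\itrskip$ per level suffices, and read off the answer at level~$0$---is the same as the paper's. But you have misread the algorithm in a way that breaks your argument.

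The two \textbf{while} loops in lines~\ref{alg:samelevel_beg}--\ref{alg:samelevel_end} do \emph{not} descend to the leftmost leaves. The pointers are initialized at the \emph{roots} via $\itroot(s_1)$ and $\itroot(s_2)$, and at most one of the two loops executes at all: it lowers the pointer of the deeper tree along the leftmost spine until both pointers are at level $\min(\depth(w_1),\depth(w_2))$. So after the \textbf{while} loops, the pointers are not at level~$0$; the \textbf{for} loop then runs from that common level all the way down to~$0$. Your claim that ``the \textbf{for} loop genuinely runs from $0$ downto $0$'' is therefore wrong, and it directly contradicts your own (correct) earlier remark that ``the \textbf{for} loop runs once per level''. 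With your reading, the invariant you describe would have to be established in one step at level~$0$, which is impossible.

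Two smaller points. First, the clean invariant to carry is not ``$\itstring(P_1)$ is the LCP at level $i$''---that is what holds \emph{after} the $\itrskip$ in the body, at line~\ref{alg:lcp:lcp}. At the \emph{start} of each iteration the right statement is simply that $\itstring(P_1)=\itstring(P_2)$ and the longest common prefix of $\itsuf(P_1)$ and $\itsuf(P_2)$ consists of at most one distinct symbol (this is exactly what the $p_r$ with $|\rle(p_r)|\le 1$ from \cref{lem:lcpd} guarantees). Separating these two moments makes the induction step clean. Second, the appeal to context-insensitivity of leftmost leaves is unnecessary here; the proof needs only \cref{lem:lcpd} and the pointer interface, not the material from \cref{sec:contexti}.
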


\begin{proof}
The algorithm traverses the trees $\ustree(s_1)$ and $\ustree(s_2)$ top-down, and at every level $j$ it finds the longest common prefix of $\cshrink{j}(\str_1)$ and $\cshrink{j}(\str_2)$.
The pseudocode is given as \cref{alg:lcp}.
Let us now discuss the details.

The goal of lines~\ref{alg:samelevel_beg}--\ref{alg:samelevel_end} is to make both pointers point to nodes of the same level,
which is the minimum of $\depth(w_1)$ and $\depth(w_2)$.
At the beginning of each iteration of the \textbf{for} loop starting in line~\ref{alg:lcp:for}, the following invariants are satisfied.
\begin{enumerate}
  \item $\itlevel(P_1)=\itlevel(P_2)=i$,
  \item $\itstring(P_1)=\itstring(P_2)$, and
  \item the longest common prefix of $\itsuf(P_1)$ and $\itsuf(P_2)$ contains at most one distinct character.
\end{enumerate}
Moreover, we claim that at line~\ref{alg:lcp:lcp} of each iteration, $\itstring(P_1)=\itstring(P_2)$
is the longest common prefix of $\cshrink{i}(w_1)$ and $\cshrink{i}(w_2)$.

The first invariant is satisfied at the beginning of the first iteration, because the \textbf{while} loops ensures $\itlevel(\algptr_1) = \itlevel(\algptr_2)$. It is easy to see that each iteration of the \textbf{for} loop preserves it, as it decreases both $\itlevel(\algptr_1)$ and $\itlevel(\algptr_2)$ by one (unless $i=0$).
Let us now move to the other two invariants.

Before the first iteration, $\itstring(\algptr_1)=\itstring(\algptr_2)=\eps$ and $|\itsuf(P_1)|=1$ or $|\itsuf(P_2)|=1$
because $P_1=\itroot_{\ustree}(w_1)$ or $P_2=\itroot_{\ustree(w_2)}$.
Thus, the invariants are satisfied.

It remains to show that they are preserved in every iteration.
If $\itsig(P_1)\ne \itsig(P_2)$ at line~\ref{alg:lcp:cond}, then $\itsuf(P_1)$ and $\itsuf(P_2)$
start with different symbols and thus $\itstring(P_1)=\itstring(P_2)$ is already the longest
common prefix of $\cshrink{i}(w_1)$ and $\cshrink{i}(w_2)$.
Otherwise, the longest common prefix of $\itsuf(P_1)$ and $\itsuf(P_2)$ consists (by the invariant) of exactly
one distinct character, and the prefix can be easily seen to be $s^k$
where $s=\itsig(P_1)=\itsig(P_2)$ and $k$ is computed in line~\ref{alg:lcp:k}.
In the following two lines we shift $P_1$ and $P_2$ to move $s^k$ from the suffix to the prefix.

If $P_1=\itnil$ as a result of this operation, then $\cshrink{i}(w_1)$ can be easily seen to be a prefix of $\cshrink{i}(w_2)$,
and thus $w_1$ is a prefix of $w_2$.
Symmetrically, if $P_2=\itnil$, then $w_2$ is a prefix of $w_1$. In both cases we correctly report the final answer.

Thus, at line~\ref{alg:lcp:lcp} $\itstring(P_1)=\itstring(P_2)$ indeed is the longest common prefix of $\cshrink{i}(w_1)$ and $\cshrink{i}(w_2)$.
In particular, for $i=0$ this already allows determining the answer as we do after leaving the \textbf{for} loop.
Otherwise, we infer from Lemma~\ref{lem:lcpd} that the longest common prefix $p$ of $\cshrink{i-1}(w_1)$
and $\cshrink{i-1}(w_2)$ can be represented as $p=p_\ell p_r$ where $\shrink{i}(p_\ell) = \itstring(P_1)=\itstring(P_2)$
and $|\rle(p_r)|\le 1$. We move pointers $P_1$ and $P_2$ to the leftmost children
so that after this operation $p_\ell=\itstring(P_1)=\itstring(P_2)$. The longest
common prefix of $\itsuf(P_1)$ and $\itsuf(P_2)$ is then $p_r$ and thus, as claimed, it
consists of at most one distinct character.

Clearly, the running time of the entire function is proportional to the sum of heights of $\ustree(s_1)$ and $\ustree(s_2)$.
\end{proof}

\subsection{Tries}
Recall that a trie is a rooted tree whose nodes correspond to prefixes of strings in a given family of strings $F$.
If $\alpha$ is a node, the corresponding prefix is called the \emph{value} of the node and denoted $\tval(\alpha)$.
The node $\alpha$ whose value is $\tval(\alpha)$ is called the \emph{locus} of $\tval(\alpha)$.

The parent-child relation in the trie is defined so that the root is the locus of $\eps$,
while the parent~$\beta$ of a node $\alpha$ is the locus of $\tval(\alpha)$ with the last character removed.
This character is the \emph{label} of the edge between $\beta$ and $\alpha$.
If $\beta$ is a ancestor of $\alpha$, then $\tval(\beta,\alpha)$ denotes the concatenation
of edge labels on the path from $\beta$ to $\alpha$. Note that $\tval(\alpha)=\tval(r,\alpha)$ where $r$ is the root.
For two nodes $\alpha,\beta$ of a trie, we denote their lowest common ancestor by $\tlca(\alpha,\beta)$.

A node $\alpha$ is \emph{branching} if it has at least two children and \emph{terminal} if $\tval(\alpha)\in F$.
A \emph{compressed trie} is obtained from the underlying trie by dissolving all nodes except
the root, branching nodes, and terminal nodes. Note that this way we compress paths of nodes with single children
and thus the number of remaining nodes becomes bounded by $2|F|$.

In many applications, including this work, \emph{partially compressed} tries are constructed,
where some \emph{extra} nodes are also important and thus they are not dissolved.
In general, we refer to all preserved nodes of the trie as \emph{explicit} (since they are stored explicitly)
and to the dissolved ones as \emph{implicit}.
Edges of a (partially) compressed trie correspond to paths in the underlying tree
and thus their labels are strings in $\Sigma^+$. Typically, only the first character of these labels is stored explicitly
while the others are assumed to be accessible in a problem-specific implicit way.

\subsection{Overview}
Let $\depth(\coll)=\max_{w\in \coll} \depth(w)$. For $0\le i \le \depth(\coll)$, let $\coll_i = \{w \in \coll : \depth(w)\ge i\}$
and let $\utrie_i(\coll)$ be the trie of strings $\cshrink{i}(w)$ for $w\in \coll_i$.
We write $\utrie_i$ instead of $\utrie_i(\coll)$ if the set $\coll$ is clear from context.
If $w\in \coll_i$, then by $\tleaf_i(w)$ we denote the terminal node $\alpha$ of $\utrie_i(\coll)$ for which
$\tval(\alpha)=\cshrink{i}(w)$.

Observe that for each $i>0$ and each node $\alpha\in \utrie_i$ there exists a unique node $\beta\in \utrie_{i-1}$
such that $\tval(\alpha)=\shrink{i}(\tval(\beta))$. We denote it by $\beta = \tlink(\alpha)$.
Note that $\tlink(\tleaf_i(w))=\tleaf_{i-1}(w)$ and that $\tlink$ preserves the ancestor-descendant relation:

\begin{observation}\label{obs:link}
Let $\alpha,\beta$ be nodes of $\utrie_i$ for $i>0$. If $\beta$ is a (proper) ancestor of $\alpha$, then $\tlink(\beta)$
is a (proper) ancestor of $\tlink(\alpha)$.
\end{observation}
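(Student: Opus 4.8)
The plan is to reduce the statement to the immediate-parent case and then iterate. First I would recall the defining property of $\tlink$: for $i>0$ and a node $\alpha\in\utrie_i$, the node $\tlink(\alpha)\in\utrie_{i-1}$ is the unique node with $\tval(\tlink(\alpha))=\shrink{i}(\tval(\alpha))$. So it suffices to understand how $\shrink{i}$ interacts with the prefix order on strings, since ancestor-descendant in a trie is exactly the ``is a prefix of'' relation on the values of the nodes.

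The key claim I would isolate is: if $u$ is a prefix of $v$ (both being values of nodes of $\utrie_{i-1}$, i.e.\ both prefixes of strings $\cshrink{i-1}(w)$, $w\in\coll_i$), then $\shrink{i}(u)$ is a prefix of $\shrink{i}(v)$. This is where the ``context-insensitive'' flavour of the parsing enters: $\shrink{i}$ (which is either $\rle$ or some $\compress_j$) decides each block boundary using only local information --- for $\compress_j$, a boundary between two symbols depends solely on those two symbols; for $\rle$, a boundary depends only on whether two adjacent symbols are equal. Hence, reading $v$ left to right, the block structure on the prefix $u$ can only fail to agree with that of $v$ at the very last block of $u$, and only if that block extends past the end of $u$ in $v$. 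In that one boundary case the images still agree except possibly on the last symbol of $\shrink{i}(u)$, which forces me to be slightly careful: I should take $u=\tval(\beta)$ with $\beta$ a node of $\utrie_i$ via $\tlink$, but actually the cleanest route is to apply this directly to the values $\tval(\beta)$ and $\tval(\alpha)$ where $\beta$ is an ancestor of $\alpha$ in $\utrie_i$; then $\tval(\tlink(\beta))$ and $\tval(\tlink(\alpha))$ are prefixes of $\cshrink{i-1}(w)$ for a common $w$, and I apply the claim to conclude $\tval(\tlink(\beta))$ is a prefix of $\tval(\tlink(\alpha))$, i.e.\ $\tlink(\beta)$ is an ancestor of $\tlink(\alpha)$.

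For the ``proper'' refinement, I would argue by contraposition or by a length/injectivity remark: if $\beta$ is a proper ancestor of $\alpha$ but $\tlink(\beta)=\tlink(\alpha)$, then $\shrink{i}(\tval(\beta))=\shrink{i}(\tval(\alpha))$ with $\tval(\beta)$ a proper prefix of $\tval(\alpha)$; but both $\tval(\beta)$ and $\tval(\alpha)$ are prefixes of the single string $\cshrink{i-1}(w)$ for some $w\in\coll_i$, and on such a fixed string the map sending a prefix-length to the length of its $\shrink{i}$-image is non-decreasing, and is strictly increasing except within a single trailing partial block --- so this can be ruled out by exhibiting the extra full block of $\tval(\alpha)$ beyond $\tval(\beta)$ that must survive in $\shrink{i}(\tval(\alpha))$. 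Alternatively I can note the simpler fact that $\tlink$ restricted to nodes lying on the root-to-$\tleaf_i(w)$ path is injective because their $\shrink{i}$-preimages are distinct prefixes of $\cshrink{i}(w)$ --- wait, that is the wrong direction; $\tlink$ maps $\utrie_i$-prefixes to $\utrie_{i-1}$-prefixes via $\shrink{i}$, so I want: distinct prefixes of $\cshrink{i-1}(w)$ whose $\shrink{i}$-images are prefixes of $\cshrink{i}(w)$ are precisely the ``block boundary'' prefixes, and on those $\shrink{i}$ is strictly length-increasing. I'd phrase the proper case through this observation.

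The main obstacle I expect is the boundary bookkeeping in the key claim: making precise that ``$\shrink{i}$ preserves the prefix relation'' despite the fact that the last block of a prefix $u$ can be a proper (length-one-shorter) prefix of a block of $v$, and that $\rle$ may produce a pair $(a,k)$ versus $(a,k')$ with $k<k'$. The clean way to handle this is not to prove literal prefix-preservation for arbitrary strings, but to exploit that here $u$ and $v$ are both prefixes of the \emph{same} string $\cshrink{i-1}(w)$: then the only difference is the last (possibly truncated) block, and since a block of length exceeding two is monochromatic while the block structure is determined position-locally, $\shrink{i}(u)$ and $\shrink{i}(v)$ agree on all but at most the final symbol, and that final symbol of $\shrink{i}(u)$ is itself a ``run-prefix'' of the corresponding symbol of $\shrink{i}(v)$ --- which, combined with $\rle$ being applied uniformly, still leaves $\shrink{i}(u)$ a prefix of $\shrink{i}(v)$ because a shorter run $a^k$ with $k<k'$ is a prefix of $a^{k'}$ and $\rle$ turns both into single symbols only when we look at the \emph{maximal} block, which on the shared string $\cshrink{i-1}(w)$ is the same maximal block. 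Essentially the argument is identical in spirit to the proof of \cref{lem:lcpd}, and I would in fact cite \cref{lem:lcpd} directly: the longest common prefix of $\cshrink{i-1}(w)$ with itself truncated at $\tval(\beta)$ versus $\tval(\alpha)$ immediately gives that $\shrink{i}(\tval(\tlink(\beta)))$ extends to $\shrink{i}$ of a prefix of $\tval(\tlink(\alpha))$ up to a trailing run, settling both the ancestor and proper-ancestor statements at once.
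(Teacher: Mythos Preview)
The paper gives no proof here; the statement is labeled an \emph{Observation} and treated as immediate. The one-line argument is: pick $w\in\coll_i$ with $\alpha$ (hence also $\beta$) an ancestor of $\tleaf_i(w)$, so $\tval(\beta)$ and $\tval(\alpha)$ are the length-$k$ and length-$k'$ prefixes of $\cshrink{i}(w)$ with $k\le k'$; then by definition $\tval(\tlink(\beta))$ and $\tval(\tlink(\alpha))$ are the concatenations of the first $k$ and first $k'$ blocks of $\cshrink{i-1}(w)$ (with respect to $\shrink{i}$), so the former is a prefix of the latter, proper if $k<k'$.

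Your proposal does not close this argument and in fact contains a genuine gap. The ``key claim'' you isolate --- that $u$ a prefix of $v$ implies $\shrink{i}(u)$ a prefix of $\shrink{i}(v)$ --- goes the wrong way. Writing $u=\tval(\tlink(\beta))$ and $v=\tval(\tlink(\alpha))$, the hypothesis you are given is that $\shrink{i}(u)=\tval(\beta)$ is a prefix of $\shrink{i}(v)=\tval(\alpha)$, and you must conclude that $u$ is a prefix of $v$; that is the \emph{converse} of your claim. When you then write ``apply the claim to conclude $\tval(\tlink(\beta))$ is a prefix of $\tval(\tlink(\alpha))$'', you would need the hypothesis $u$ prefix of $v$, which is exactly the conclusion --- the step is circular. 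You notice the directional issue yourself (``wait, that is the wrong direction'') but never recover from it.

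Separately, the forward claim is not even true for arbitrary strings: for $\rle$, $aa$ is a prefix of $aaa$, but $\rle(aa)$ and $\rle(aaa)$ are single distinct symbols; for $\compress_j$ with $h_j(b)=0,h_j(c)=1$, $ab$ is a prefix of $abc$, but $\compress_j(ab)=ab$ is not a prefix of $\compress_j(abc)=a\,S_{bc}$. So the whole detour through prefix-preservation of $\shrink{i}$ should be abandoned. The only fact you need is the one you already noted in passing: $\tval(\tlink(\beta))$ and $\tval(\tlink(\alpha))$ are both prefixes of the \emph{same} string $\cshrink{i-1}(w)$, hence comparable, and counting blocks identifies which is longer.
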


A node $\alpha\in \utrie_i$ is a \emph{down-node} if it is root, branching, or terminal.
A node $\beta\in \utrie_{i}$ is an \emph{up-node} if $\beta=\tlink(\alpha)$ for some down-node
$\alpha\in \utrie_{i+1}$.
We define the (partially) compressed versions $\trie_i(\coll)$ of tries $\utrie_i(\coll)$ in which the explicit
nodes are precisely the up-nodes and the down-nodes. We also assume that each down-node $\alpha\in \utrie_{i}$
for $i>0$ explicitly stores a link to $\tlink(\alpha)$.

Our implementation is going to maintain the tries $\trie_i(\coll)$ subject to $\insertop(w)$  operations.
The actual goal of this is to store $\trie_0$ and augment it with auxiliary components helpful in answering the queries.
In short, to handle $\lcpop$ queries, we use a data structure of Cole and Hariharan~\cite{Cole:2005} for
lowest common ancestor queries in dynamic trees.
For $\compop$ queries, we use an order-maintenance data structure~\cite{Dietz:1987,Bender:2002}
applied for the Euler tour of~$\trie_0$.

%

With the aid of these component answering queries is simple, so our main effort lies in updating the data structure upon insertions $\insertop(w)$.
This is achieved in two phases. First, for each level $i\le \depth(w)$ we determine the \emph{mount} of $w$ in $\utrie_i(\coll)$
which is defined as the lowest node $\beta_i$ of $\utrie_i$ whose value $\tval(\beta_i)$
is a prefix of $\cshrink{i}(w)$. It is easy to see that in $\utrie_{i}(\coll\cup\{w\})$, $\beta_i$ is an ancestor of $\tleaf_i(w)$  (improper if $\cshrink{i}(w)$ is a prefix of $\cshrink{i}(w')$ for some $w'\in \coll$)
and that all internal nodes on the path from $\beta_i$ to $\tleaf_i(w)$ do not exist in $\utrie_i(\coll)$.
Consequently, after determining the mounts, we simply turn each $\trie_i(\coll)$ into $\trie_i(\coll\cup\{w\})$
by adding a new leaf as a child of the mount (unless the mount is already $\tleaf_i(w)$).

Since some of the mounts might be implicit in $\trie_i(\coll)$, we introduce a notion of \emph{trie pointers} to traverse the uncompressed tries
$\utrie_i(\coll)$, so that a trie pointer $\pi$ points to a node $\beta \in \utrie_i$.  Conceptually, they play a similar role to that of pointers to nodes of uncompressed parse trees; see \cref{sec:pointers}.

Let us describe the interface of trie pointers.
First of all, we have a primitive $\troot(i)$ to create a new pointer to the root of $\trie_i$.
Central parts of the implementation are two functions to navigate the tries starting from a node $\beta$ pointed by $\pi$:
\begin{itemize}
  \item $\tdown(\pi,  s, k)$ -- returns a pointer to a node $\alpha$ such that $\tval(\alpha)=\tval(\beta)s^{k'}$
  and $k'\le k$ is largest possible, as well as the optimal $k'$.
  \item $\tlink(\pi)$ -- returns a pointer to $\tlink(\beta)$.
\end{itemize}

Before we proceed with describing the interface for updating $\trie_i$, let us introduce a notion binding trie pointers with pointers to uncompressed parse trees. Let $\beta \in \utrie_i$.
Observe that if $\tval(\beta)$ is a (proper) prefix of $\cshrink{i}(w)$ for some string $w$,
then $\ustree(w)$ contains a unique node $v$ such that $\itstring(v)=\tval(\beta)$.
We say that a pointer $P$ to $\ustree(w)$ is \emph{associated} with $\beta$ (or with a trie pointer $\pi$ pointing to $\beta$) if it points to such a node $v$.  If $\tval(\beta)=\cshrink{i}(w)$, then
there is no such node $v$ and thus we assume that $\itnil$ is associated with $\beta$.

We have two ways of modifying the (partially) compressed tries:
\begin{itemize}
  \item \textbf{Internal node insertion.} $\tmake(\pi)$ -- given a trie pointer $\pi$ to a node $\beta \in \utrie_i$, make $\beta$ explicit in $\trie_i$.
  Returns a reference to $\beta\in \trie_i$.
  \item \textbf{Leaf insertion.} $\taddleaf(\pi,P)$ -- given a trie pointer $\pi$ to the mount of $w$ in $\utrie_i$, and an associated pointer $P$ to $\ustree(w)$,
update $\trie_i$ to represent $\coll\cup\{w\}$. Assumes the mount is already explicit in $\trie_i$. Returns a reference to $\tleaf_i(w)$.
\end{itemize}
Note that these operations may result in $\trie_i$ not being equal to $\trie_i(\coll)$ for any particular $\coll$ (since, e.g., $\trie_i(\coll)$
has a precisely defined set of explicit nodes.
However, we assume that the trie pointer operations (in particular $\tdown$ and $\tlink$) are allowed only if $\trie_i=\trie_i(\coll)$
and (in case of the $\tlink$ operation) $\trie_{i-1}=\trie_{i-1}(\coll)$ for a particular set $\coll$. Moreover, we assume
that any modification of $\trie_i$ invalidates all pointers to~$\utrie_i$ except for pointers to proper descendants of the node pointed by $\pi$.

The following result is proved in \cref{sec:trierepr,sec:triepointers}.
\begin{theorem}\label{thm:trie_pointers}
Tries $\trie_i$ can be implemented so that
internal node insertions, leaf insertions, as well as
$\troot$ and $\tlink$ operations run in worst-case $O(1)$ time, while
$\tdown$ operation works in worst-case $O(\log |\coll|)$ time.
Moreover, any alternating sequence of length $t$ of $\tdown$ and $\tlink$ operations (where the result of one operation is an argument of the subsequent) takes $O(t+\log |\coll|)$ time.
\end{theorem}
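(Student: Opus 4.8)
The plan is to give an explicit pointer-based representation of the partially compressed tries $\trie_i$ and of trie pointers, in the same spirit as the parse-tree pointers of \cref{sec:pointers}, and then implement the six interface operations on top of it. Each explicit node $\alpha$ of $\trie_i$ stores: a pointer to its parent edge together with the first symbol and the length of that edge's label; the length $|\tval(\alpha)|$; a pointer to a node of some $\ustree(w)$ that \emph{witnesses} $\alpha$ (i.e.\ has $\itstring(\cdot)=\tval(\alpha)$), which lets us recover edge labels and, crucially, move between levels; a dictionary of the children of $\alpha$ keyed by the first symbol of the outgoing edge, realised by a structure that supports search in $O(\log|\coll|)$ (a node has at most $|\coll|$ children) and insertion at a previously located slot in $O(1)$ worst case; and, when $i\ge 1$, the explicit link to $\tlink(\alpha)$. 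With each edge of $\trie_i$ I also keep a pointer to the edge of $\trie_{i-1}$ that contains the $\tlink$-image of its interior. This pointer is well defined: from the node-type definitions and \cref{lem:lcpd}, $\tlink$ maps a node of $\utrie_i$ that is neither root, branching, terminal, nor an up-node to a node of $\utrie_{i-1}$ of the same kind (up to the harmless boundary case where the string shrinks to a single symbol), so the $\tlink$-image of an open edge stays inside one open edge of $\trie_{i-1}$. Finally, a trie pointer to a (possibly implicit) $\beta\in\utrie_i$ is represented by the edge $(\gamma',\gamma)$ of $\trie_i$ containing $\beta$ (with $\gamma'=\gamma=\beta$ when $\beta$ is explicit), the offset $|\tval(\beta)|$, and a witnessing parse-tree pointer $P$ with $\itstring(P)=\tval(\beta)$.

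Given this representation, four of the operations are immediate constant-time pointer manipulations. $\troot(i)$ returns the pointer to the root. $\tlink(\pi)$ replaces the witness $P$ by $\itchild(P,1)$, which lands at the level of $\ustree(w)$ carrying $\cshrink{i-1}(w)$, reads the new offset off $\itrepr$, and obtains the new enclosing $\trie_{i-1}$-edge either from the edge record stored at $(\gamma',\gamma)$ (when $\beta$ is implicit) or from the link field of $\beta$ (when $\beta$ is explicit; by construction $\tlink$ of an explicit node is explicit). $\tmake(\pi)$ splits the edge $(\gamma',\gamma)$ at $\beta$: it allocates a new explicit node, rewires the two incident child/parent pointers, gives the new node $P$ as its witness, and splits the stored edge-to-edge pointer; the permitted invalidation of every $\utrie_i$-pointer except those into the proper subtree of $\beta$ is exactly what makes this purely local. $\taddleaf(\pi,P)$ hangs $\tleaf_i(w)$ off the explicit mount $\beta$ at the slot that the $\tdown$ which produced $\pi$ has already located, so the children-dictionary insertion is $O(1)$ worst case; it also sets the leaf's link to $\tleaf_{i-1}(w)$ and splices it into the sibling (Euler-tour) order in $O(1)$. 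Each operation is a bounded number of pointer updates.

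The substance is $\tdown$. Given $\pi$ at $\beta$ and a symbol $s$, I use the witness pointer to skip a run of $s$'s cheaply: $\itrext(P)$ gives the length of the maximal $s$-run starting at $\beta$ in the witness string and $\itrskip(P,\cdot)$ jumps along it in $O(1)$, so no part of a run is ever traversed symbol by symbol (this matters on even levels, where $\cshrink{i}(w)$ may contain a long run of one symbol). What costs a logarithmic term is re-situating the endpoint inside the compressed trie $\trie_i$: identifying the explicit edge now containing $\tval(\beta)s^{k'}$, and, whenever the descent is stopped by reaching an explicit (typically branching) node, searching that node's children dictionary for an outgoing $s$-edge in order to continue. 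A single $\tdown$ is made to run in $O(\log|\coll|)$ by performing these lookups through the per-node children structures (search $O(\log|\coll|)$) together with level-ancestor-type queries on $\trie_i$; returning $k'$ and the new trie pointer then takes $O(1)$ more.

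The part I expect to be hardest is the amortised bound for an alternating sequence of $\tdown$ and $\tlink$ operations in which each call is fed the output of the previous one — the access pattern of the $\insertop$ procedure, which processes levels $\depth(w),\depth(w)-1,\dots,0$, at each level following $\tlink$ by an $O(1)$-length chain of $\tdown$'s. Here I want $O(t+\log|\coll|)$ rather than $O(t\log|\coll|)$: the $O(1)$ witness work contributes $O(t)$, and the logarithmic lookups, which all occur along a single monotone walk through the family of compressed tries governed by \cref{obs:link}, should be chargeable against a potential built from that walk so that they sum to $O(\log|\coll|)$ overall. Carrying this accounting out carefully, while simultaneously keeping the trie pointer consistent with its witness parse pointer across the $\tlink$ steps (and handling the boundary cases in the edge-to-edge correspondence), is the crux of \cref{thm:trie_pointers}; the representation and the individual operations above are, by comparison, routine.
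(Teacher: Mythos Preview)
Your overall architecture is close to the paper's, and the constant-time operations ($\troot$, $\tmake$, $\taddleaf$) are essentially right. But two genuine gaps remain, and the second cannot be closed without filling the first.

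\textbf{Gap 1: $\tdown$ across many explicit nodes.} A single $\tdown(\pi,s,k)$ may have to follow an $s$-run in $\utrie_i$ that passes through an unbounded number of explicit nodes of $\trie_i$: between the starting $\beta$ and the target $\beta'$ there can be $\Theta(|\coll|)$ up-nodes and down-nodes all sitting on the same run. Your sketch walks these one by one, paying a dictionary lookup at each; the appeal to ``level-ancestor-type queries on $\trie_i$'' does not close the gap (you would need \emph{weighted} level-ancestor on a dynamic tree, and even that would not give what the amortisation needs). The paper's fix is to group the explicit nodes on each maximal $s$-uniform path $\Pi$ into a small auxiliary structure --- a finger search tree on their depths together with a hash table --- so that, given a target depth $d$, one locates $\tanc(\beta')$ and $\tdesc(\beta')$ in time
\[
O\bigl(1+\min(\phi(\beta'),\,\log|\Pi|)\bigr),\qquad
\phi(\beta')=\min\bigl(|\tval(\tanc(\beta'),\beta')|,\,|\tval(\beta',\tdesc(\beta'))|\bigr).
\]
The hash table yields the $\phi(\beta')$ term (probe $d\pm k\cdot\slength(s)$ for $k=0,1,\ldots$); the finger search tree gives the $\log|\Pi|$ fallback and $O(1)$ updates under node insertions. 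That refined bound, with its $\phi$-term, is indispensable below.

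\textbf{Gap 2: the amortisation.} You correctly flag this as the crux but give no argument. The paper's potentials are the $\phi$ above (after a $\tdown$) and its run-length variant
\[
\phi'(\beta)=\min\bigl(|\rle(\tval(\tanc(\beta),\beta))|,\,|\rle(\tval(\beta,\tdesc(\beta)))|\bigr)
\]
after a $\tlink$. Two structural facts drive the accounting. First, $|\rle(\tval(\tlink(\beta),\tlink(\alpha)))|\ge |\tval(\beta,\alpha)|$ for any $\alpha$ below $\beta$, since $\shrink{i}$ never places a block boundary between equal symbols. Second, if $\beta$ is implicit and $\alpha$ is its nearest down-node descendant, then the path $\tlink(\beta)\leadsto\tlink(\alpha)$ has no internal explicit nodes; proving this uses \cref{lem:lcpd} but is not a one-liner. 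Together they give $\phi'(\tlink(\beta))\ge\phi(\beta)-1$. On the $\tdown$ side, when $\phi'(\beta)>1$ the target $\beta'$ stays strictly inside the current edge and the step is $O(1)$; otherwise the step costs at most $\min(\phi(\beta'),\log|\coll|)$, i.e.\ it is paid for by the potential jump from $\le1$ to $\phi(\beta')$, capped at $\log|\coll|$. Since potential drops by at most one per step, the expensive $\tdown$'s telescope to $O(t+\log|\coll|)$. None of this goes through with only the worst-case $O(\log|\coll|)$ bound you state for $\tdown$.

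A smaller issue: your edge-to-edge $\tlink$ pointer is not literally well-defined. The $\tlink$-image of an edge interior of $\trie_i$ lies inside one edge of $\trie_{i-1}$ \emph{plus possibly one of its endpoints}; establishing even this is exactly the second structural fact above (together with its companion: between consecutive down-nodes there is at most one up-node). The paper sidesteps the edge-to-edge pointer by representing $\pi$ as $(\tdesc(\beta),P)$, storing $\tlink$ only at down-nodes, and, for implicit $\beta$, reaching the nearest down-node descendant in at most two explicit hops before following its link.
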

\noindent
In \cref{sec:trie_extra} we take care about the auxiliary components in $\trie_0$
and prove the following result:
\begin{lemma}\label{lem:trie_extra}
Auxiliary components of $\trie_0$ can be implemented so that operations $\compop(w,w')$ and $\lcpop(w,w')$ take constant time,
while internal node and leaf insertions require $O(\log |\coll|)$ time.
\end{lemma}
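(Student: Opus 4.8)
The plan is to show that the order-maintenance and lowest-common-ancestor infrastructure can be hung off $\trie_0$ with constant update overhead (beyond the $O(\log|\coll|)$ already incurred by the trie operations themselves), and that both queries reduce to constant-time primitives supported by these components. Throughout I assume \cref{thm:trie_pointers}, so I may treat $\trie_0 = \trie_0(\coll)$ as maintained under internal-node and leaf insertions, each already costing $O(\log |\coll|)$ by the preceding phase of $\insertop$.

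First I would handle $\lcpop$. The key observation is that for $w,w'\in\coll$, the length of the longest common prefix of $w=\cshrink{0}(w)$ and $w'=\cshrink{0}(w')$ equals $|\tval(\gamma)|$ where $\gamma=\tlca(\tleaf_0(w),\tleaf_0(w'))$ in the uncompressed trie $\utrie_0$; and since every $\tval$ of an explicit node is stored (or reconstructible in $O(1)$ via the length attribute of the associated parse-tree pointer), it suffices to locate the lowest \emph{explicit} ancestor of this LCA and read off the edge length. More cleanly, $\tlca$ of the two terminal nodes in the \emph{compressed} trie $\trie_0$ is an explicit ancestor of the true (possibly implicit) LCA, and the true LCA lies on a single compressed edge below it; knowing the first mismatching character position on that edge gives the exact value, but in fact the compressed-trie LCA already determines the answer up to the position within one edge, and since both leaves descend through that edge, the LCA value length is simply $|\tval(\tlca_{\trie_0}(\tleaf_0(w),\tleaf_0(w')))|$ — no mismatch search is needed because two distinct terminal nodes below a branching node split at that node. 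So I would maintain the Cole--Hariharan dynamic-tree LCA structure~\cite{Cole:2005} on $\trie_0$, which supports leaf insertions (and the internal-node insertions, which are just edge subdivisions) in $O(1)$ amortized or worst-case time per structural change and LCA queries in $O(1)$; each $\insertop$ triggers $O(1)$ structural changes to $\trie_0$ (one $\tmake$ plus one $\taddleaf$), so the overhead is $O(1)$, absorbed in the stated $O(\log|\coll|)$. The value length at an explicit node is stored incrementally: when a leaf or internal node is inserted, its depth-in-characters is computed from the parent's stored value length plus the edge length, the latter being available in $O(1)$ from the associated $\ustree$ pointer via $\slength$.

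Next I would handle $\compop$. The lexicographic order of the strings in $\coll$ coincides with the left-to-right order of the terminal nodes $\tleaf_0(w)$ in $\utrie_0$, which is exactly their order in an Euler tour (equivalently, a DFS preorder restricted to terminal nodes) of $\trie_0$. I would maintain an order-maintenance structure~\cite{Dietz:1987,Bender:2002} storing, for each explicit node, its entry and exit times in this tour; inserting a new leaf as the child of an already-explicit mount $\beta$ requires inserting two new order-maintenance items immediately after the appropriate existing child of $\beta$ (located by the label of the first edge, or by the fact that $\taddleaf$ returns the precise insertion position), and subdividing an edge via $\tmake$ inserts the new node's two tour positions adjacent to the child endpoint's. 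Each such operation is $O(1)$ (amortized, or worst-case with the stronger variants) in the order-maintenance data structure. Then $\compop(w,w')$ reduces to: if $\tleaf_0(w)=\tleaf_0(w')$ report equal; otherwise compare the entry-times of $\tleaf_0(w)$ and $\tleaf_0(w')$ in $O(1)$, which — because one terminal node is never an ancestor of another (all strings in $\coll$ over $\Sigma^+$ terminate, so... well, actually one string may be a proper prefix of another, in which case $\tleaf_0(w)$ is an ancestor of $\tleaf_0(w')$; the shorter string is then lexicographically smaller, and the ancestor's entry time precedes the descendant's, so the comparison is still correct). I should be slightly careful here: when strings are over $\Sigma^+$ and $\Sigma$ has no end-marker, $\tleaf_0(w)$ may be an internal node of $\trie_0$, but the entry-time comparison still yields the correct lexicographic verdict, since a proper prefix precedes its extension and ancestry is respected by entry times.

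The main obstacle I expect is the bookkeeping that connects the abstract trie updates of \cref{thm:trie_pointers} to concrete positional insertions into the auxiliary structures: when $\taddleaf(\pi,P)$ adds $\tleaf_0(w)$ under the mount, I must know \emph{where} among the mount's existing children the new branch falls (to place it correctly in the Euler-tour order), and this requires the mount's representation to expose the relevant child by first edge-label in $O(1)$ — this is a consistency requirement on the trie representation of \cref{sec:trierepr,sec:triepointers} rather than something provable here in isolation, so I would phrase the lemma's proof as: "assuming the trie representation returns, along with the reference to $\tleaf_i(w)$, the predecessor of the new subtree in the child order of the mount" (which the implementation does, since $\tdown$ already located the split point). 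A secondary subtlety is ensuring that the $O(1)$-time claim for order maintenance and dynamic LCA holds in the \emph{worst case} versus amortized; since \cref{lem:trie_extra} only claims $O(\log|\coll|)$ per insertion anyway, amortized $O(1)$ overhead suffices and matches the bounds in~\cite{Bender:2002,Cole:2005}, so no difficulty arises there. With those conventions fixed, the proof is a routine assembly: maintain both components on $\trie_0$, update them under the $O(1)$ structural changes per $\insertop$, and answer each query by one LCA or one order comparison.
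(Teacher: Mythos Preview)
Your treatment of $\lcpop$ is essentially the paper's: store $\tdepth$ at each explicit node, maintain the Cole--Hariharan dynamic LCA structure on $\trie_0$, and return $\tdepth(\tlca(\tleaf_0(w),\tleaf_0(w')))$. That part is fine.

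The gap is in $\compop$. You correctly identify the obstacle---when a new leaf is attached under the mount, you must know where it falls among the mount's existing children in order to splice the new $\tbeg/\tend$ entries into the Euler-tour list---but your resolution is wrong. You assert that ``$\tdown$ already located the split point'' and that the trie representation ``exposes the relevant child by first edge-label in $O(1)$''. Neither gives you what you need. The trie (see \cref{sec:trierepr}) stores children in a \emph{hash table} keyed by first label; $\tdown$ performs an exact lookup in that table. A hash table supports membership and retrieval, not predecessor queries, so nothing in the existing trie machinery tells you which sibling is immediately to the left of the new branch in the $\Sigma$-order. Without that, you cannot place the new Euler-tour entries correctly.

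The paper closes this gap by maintaining, \emph{in addition to} the hash table at each explicit node, an AVL tree of the outgoing edges ordered by first label. Every time the hash table is touched (which happens $O(1)$ times per internal-node or leaf insertion), the AVL tree is updated too, at cost $O(\log|\coll|)$; and when a new node $\beta$ is inserted, the AVL tree at its parent is queried in $O(\log|\coll|)$ time to find the preceding sibling, which yields the predecessor of $\tbeg(\beta)$ in the order-maintenance list. This AVL tree is not an incidental detail---it is exactly where the $O(\log|\coll|)$ in the lemma's statement comes from, not merely slack absorbed from elsewhere.
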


\subsection{$\insertop$ Implementation}
In this section we implement the $\insertop$ operation which adds a string $w$ to the family $\coll$.
This string is given by a signature $s\in \sigs(\grammar)$ representing it.

Recall that the main difficulty lies in determining the mounts $\beta_i$ of $w$ in each trie $\utrie_i(\coll)$
for $0\le i \le \depth(w)$.
The following two results following from \cref{lem:lcpd} provide the required combinatorial background.
\begin{corollary}\label{cor:lcpd}
Let $i>0$, $w,w'\in \coll_i$, $\alpha=\tlca(\tleaf_{i-1}(w),\tleaf_{i-1}(w')))$,
and $\beta = \tlink(\tlca(\tleaf_{i}(w),\allowbreak \tleaf_{i}(w')))$.
Then $\beta$ is an ancestor of $\alpha$ and $|\rle(\tval(\beta,\alpha))|\le 1$.
\end{corollary}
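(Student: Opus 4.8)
The plan is to apply \cref{lem:lcpd} with the two strings $w_1 := \cshrink{i-1}(w)$ and $w_2 := \cshrink{i-1}(w')$ and the parsing step $\shrink{i}$, and then translate the resulting string-level statement into the trie language. First I would observe that since $w,w'\in\coll_i$, both $\cshrink{i}(w)$ and $\cshrink{i}(w')$ are well-defined, and the trie nodes $\tleaf_{i-1}(w),\tleaf_{i-1}(w'),\tleaf_i(w),\tleaf_i(w')$ all exist. By definition of a trie, $\tval(\tlca(\tleaf_{i-1}(w),\tleaf_{i-1}(w')))$ is exactly the longest common prefix $p$ of $\cshrink{i-1}(w)$ and $\cshrink{i-1}(w')$; call its locus $\alpha$. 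Similarly $\tval(\tlca(\tleaf_i(w),\tleaf_i(w')))$ is the longest common prefix $p'$ of $\cshrink{i}(w)$ and $\cshrink{i}(w')$; its locus in $\utrie_i$, after applying $\tlink$, becomes a node $\beta\in\utrie_{i-1}$ with $\tval(\beta)=\shrink{i}^{-1}$-preimage... more precisely, by the definition of $\tlink$ we have $\shrink{i}(\tval(\beta))=p'$, i.e. $\tval(\beta)$ is the prefix of $\cshrink{i-1}(w)$ (equivalently of $\cshrink{i-1}(w')$, since $\tval(\beta)$ is a common prefix) that $\shrink{i}$ maps onto $p'$.

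Next I would invoke \cref{lem:lcpd}: it gives a decomposition $p=p_\ell p_r$ with $|\rle(p_r)|\le 1$ and $\shrink{i}(p_\ell)$ equal to the longest common prefix of $\shrink{i}(\cshrink{i-1}(w))=\cshrink{i}(w)$ and $\shrink{i}(\cshrink{i-1}(w'))=\cshrink{i}(w')$, i.e. $\shrink{i}(p_\ell)=p'$. Since $\shrink{i}$ restricted to prefixes that are "block-aligned" is injective in the relevant sense (a prefix of $\cshrink{i-1}(w)$ is determined by where $\shrink{i}$ places block boundaries, and $p_\ell$ is exactly such an aligned prefix by construction in the proof of \cref{lem:lcpd}), we get $p_\ell=\tval(\beta)$. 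Hence $\tval(\beta)=p_\ell$ is a prefix of $p=\tval(\alpha)$, so $\beta$ is an ancestor of $\alpha$ in $\utrie_{i-1}$, and $\tval(\beta,\alpha)=p_r$, which satisfies $|\rle(\tval(\beta,\alpha))|=|\rle(p_r)|\le 1$, as required.

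The main obstacle I expect is the bookkeeping in identifying $\tval(\beta)$ with $p_\ell$ rather than with some other $\shrink{i}$-preimage of $p'$: one must argue that the particular preimage singled out by $\tlink$ (namely the value of the $\utrie_{i-1}$-node linked from the $\utrie_i$-locus of $p'$) coincides with the canonical "aligned" preimage $p_\ell$ produced in \cref{lem:lcpd}. This follows because $\tlink$ is, by its very definition, the inverse of $\shrink{i}$ on trie values, and because $\cshrink{i}(w)$ arises from $\cshrink{i-1}(w)$ precisely by the block structure of $\shrink{i}$ — so the prefix of $\cshrink{i-1}(w)$ corresponding to $p'$ is exactly the one ending at a block boundary used by $\shrink{i}$, matching $p_\ell$. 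Everything else is a direct unwinding of the definitions of tries, loci, lowest common ancestors, and $\tlink$, together with \cref{obs:link} to double-check that the ancestor relation is consistent.
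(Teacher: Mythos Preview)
Your proposal is correct and follows essentially the same approach as the paper's proof: apply \cref{lem:lcpd} to $\cshrink{i-1}(w)$ and $\cshrink{i-1}(w')$, identify $\tval(\alpha)=p$ and $\tval(\beta)=p_\ell$, and read off the conclusion. Your extra paragraph on identifying $\tval(\beta)$ with $p_\ell$ is a reasonable elaboration of a step the paper dispatches with a single ``Observe that''.
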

\begin{proof}
  Let $u=\cshrink{i-1}(w)$, $u'=\cshrink{i-1}(w')$,
and let $p$ be the longest common prefix of $u$ and $u'$.
Note that $p = \tval(\alpha)$.
By \cref{lem:lcpd}, we have a factorization $p=p_\ell p_r$
such that $|\rle(p_r)|\le 1$ and $\shrink{i}(p_\ell)$ is the longest common prefix of $\shrink{i}(u)$ and $\shrink{i}(u')$.
Observe that the latter implies that $\tval(\beta)=p_\ell$ and thus that $\beta$ is an ancestor of $\alpha$.
Moreover, $\tval(\beta,\alpha)=p_r$ which yields the second part of the claim.
\end{proof}

\begin{lemma}\label{lem:inscorr}
For $0\le i\le \depth(w)$, let $\beta_i$ be the mount of $w$ in $\utrie_i(\coll)$.
Then $\beta_i$ is a descendant of $\tlink(\beta_{i+1})$ and $|\rle(\tval(\tlink(\beta_{i+1}),\beta_i))|\le 1$.
\end{lemma}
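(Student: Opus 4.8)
The plan is to unwind the definition of a mount and reduce everything to \cref{lem:lcpd} applied at level $i$, much as in the proof of \cref{cor:lcpd}. Recall that $\beta_{i+1}$ is the lowest node of $\utrie_{i+1}(\coll)$ whose value is a prefix of $\cshrink{i+1}(w)$, and $\beta_i$ the lowest node of $\utrie_i(\coll)$ whose value is a prefix of $\cshrink{i}(w)$. First I would handle the degenerate case: if $\depth(w)=i$ (so $\cshrink{i}(w)$ has length $1$) the claim about $\tlink(\beta_{i+1})$ is still meaningful only when $i<\depth(w)$, so in fact we may assume $0\le i<\depth(w)$, i.e. level $i+1$ exists for $w$; the statement is then about two consecutive levels both containing $w$.

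Next I would characterize $\tval(\beta_{i+1})$. Since $\utrie_{i+1}(\coll)$ is the trie of the strings $\cshrink{i+1}(w')$ for $w'\in\coll_{i+1}$, the value $\tval(\beta_{i+1})$ is the longest common prefix of $\cshrink{i+1}(w)$ with $\cshrink{i+1}(w')$, maximized over $w'\in\coll_{i+1}$; pick $w'$ attaining this maximum. Then $\tval(\tlink(\beta_{i+1}))=\shrink{i+1}(\tval(\beta_{i+1}))$ is a prefix of $\shrink{i+1}(\cshrink{i+1}(w))=\cshrink{i+2}(w)$ — wait, that is the wrong direction; I instead use that $\tlink$ maps a node at level $i+1$ to a node at level $i$ via $\tval(\alpha)=\shrink{i+1}(\tval(\tlink(\alpha)))$, so $\shrink{i+1}(\tval(\tlink(\beta_{i+1})))=\tval(\beta_{i+1})$, i.e. $\tval(\tlink(\beta_{i+1}))$ is a prefix of $\cshrink{i}(w)$ and of $\cshrink{i}(w')$ whose image under $\shrink{i+1}$ is exactly the longest common prefix of $\cshrink{i+1}(w)$ and $\cshrink{i+1}(w')$. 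Let $q$ be the longest common prefix of $\cshrink{i}(w)$ and $\cshrink{i}(w')$. By \cref{lem:lcpd} applied with $i+1$ in place of $i$ to the strings $\cshrink{i}(w)$ and $\cshrink{i}(w')$, we get $q=q_\ell q_r$ with $|\rle(q_r)|\le 1$ and $\shrink{i+1}(q_\ell)$ equal to the longest common prefix of $\cshrink{i+1}(w)$ and $\cshrink{i+1}(w')$; hence $q_\ell=\tval(\tlink(\beta_{i+1}))$. Since $q$ is a common prefix of $\cshrink{i}(w)$ and $\cshrink{i}(w')$, the node $\tlca(\tleaf_i(w),\tleaf_i(w'))$ (or its ancestor of value $q$ — the locus of $q$) exists in $\utrie_i(\coll)$ and is an ancestor of $\tleaf_i(w)$; call it $\gamma$. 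Then $\tval(\gamma)=q$, and $\tlink(\beta_{i+1})$ is the ancestor of $\gamma$ with value $q_\ell$, so $\tval(\tlink(\beta_{i+1}),\gamma)=q_r$ has $|\rle(\cdot)|\le 1$.

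It remains to connect this to $\beta_i$. Since $\beta_i$ is the \emph{lowest} node of $\utrie_i(\coll)$ whose value is a prefix of $\cshrink{i}(w)$, and $\gamma$ is some node whose value $q$ is a prefix of $\cshrink{i}(w)$, $\gamma$ is a (not necessarily proper) ancestor of $\beta_i$... no — actually $\beta_i$ is a descendant of $\gamma$, so $\tval(\gamma)$ is a prefix of $\tval(\beta_i)$, i.e. $\gamma$ is an ancestor of $\beta_i$; hence $\tlink(\beta_{i+1})$, being an ancestor of $\gamma$, is an ancestor of $\beta_i$. For the run-length bound, I note $\tval(\tlink(\beta_{i+1}),\beta_i)$ extends $\tval(\tlink(\beta_{i+1}),\gamma)=q_r$ on the right; so I must argue this extension does not increase the $\rle$-length, i.e. that $\tval(\gamma,\beta_i)$ is either empty or a run of the single symbol already ending $q_r$ (or $q_r$ is empty and $\tval(\gamma,\beta_i)$ is a single run). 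The key point will be that $\beta_i$ lies between $\gamma$ and $\tleaf_i(w)$ and $|\rle(\tval(\tlink(\beta_{i+1}),\beta_i))|$ is controlled because $\tval(\beta_i)$ is a maximal common prefix within $\coll_i$ while $\tval(\tlink(\beta_{i+1}))$ is a maximal common prefix within $\coll_{i+1}$ \emph{pulled back}; I expect one needs to invoke \cref{lem:lcpd} once more, now to $w$ and the string $w''\in\coll_i$ realizing $\beta_i$, together with the fact that $w'\in\coll_{i+1}\subseteq\coll_i$, to show the two witnesses' common prefixes with $\cshrink{i}(w)$ interleave so that the extra segment is a prefix of a single $\shrink{i+1}$-block. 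That block-structure argument is the main obstacle: making precise that going from the level-$(i+1)$ mount's pullback down to the level-$i$ mount only traverses (a prefix of) one block of $\shrink{i+1}$, because any block boundary strictly inside $\tval(\gamma,\beta_i)$ would already have been "seen" at level $i+1$ and would force $\tlink(\beta_{i+1})$ lower. I would isolate this as the crux and prove it by the same reasoning pattern as \cref{lem:lcpd}: block boundaries of $\shrink{i+1}$ depend only on the two adjacent symbols, so a boundary inside $q_r$ or inside $\tval(\gamma,\beta_i)$ common to $\cshrink{i}(w)$ and $\cshrink{i}(w')$ would lift to a longer common prefix at level $i+1$, contradicting maximality of $\tval(\beta_{i+1})$.
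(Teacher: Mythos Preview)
Your plan overshoots in one direction and undershoots in the other. By choosing $w'\in\coll_{i+1}$ to realize $\beta_{i+1}$, you land at $\gamma=\tlca(\tleaf_i(w),\tleaf_i(w'))$, and \cref{lem:lcpd} controls only the segment $\tval(\tlink(\beta_{i+1}),\gamma)$; the remaining segment $\tval(\gamma,\beta_i)$ lies \emph{outside} the common prefix of $\cshrink_i(w)$ and $\cshrink_i(w')$, so your last sentence (``common to $\cshrink_i(w)$ and $\cshrink_i(w')$'') is talking about the wrong witness. Your block-boundary patch is also not tight: ``no $\shrink_{i+1}$-boundary strictly inside'' does not force $|\rle|\le1$ when $\shrink_{i+1}=\compress$, since a full $2$-block has no interior boundary yet $|\rle|=2$; you would also have to rule out a boundary landing exactly at $|\tval(\beta_i)|$.

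The paper's fix is to reverse the choice of witness. Pick $w'\in\coll$ realizing the \emph{lower} mount, i.e.\ $\beta_i=\tlca(\tleaf_i(w),\tleaf_i(w'))$ in the trie for $\coll\cup\{w\}$, and set $\alpha=\tlca(\tleaf_{i+1}(w),\tleaf_{i+1}(w'))$. Now $\tval(\alpha)$ is a prefix of $\cshrink_{i+1}(w)$ realized by some string in $\coll$, so $\alpha$ is an \emph{ancestor} of $\beta_{i+1}$; hence $\tlink(\alpha)\preceq\tlink(\beta_{i+1})\preceq\beta_i$. A single application of \cref{cor:lcpd} to this same pair $w,w'$ gives $|\rle(\tval(\tlink(\alpha),\beta_i))|\le1$, and $\tval(\tlink(\beta_{i+1}),\beta_i)$ is a suffix of that, so done. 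In other words, once you bring in your $w''$ (the realizer of $\beta_i$) and apply \cref{lem:lcpd} to it, the whole detour through $w'$ and $\gamma$ is unnecessary: the auxiliary LCA now sits \emph{above} $\tlink(\beta_{i+1})$, so the $\rle$-bound already covers the target segment.
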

\begin{proof}
Observe that if $\tval(\beta)$ is a prefix of $\cshrink{i+1}(w)$, then $\tval(\tlink(\beta))$
must be a prefix of $\cshrink{i}(w)$. Consequently, $\beta_i$ is indeed a descendant of $\tlink(\beta_{i+1})$.

To prove that $|\rle(\tval(\tlink(\beta_{i+1}),\beta_i))|\le 1$, consider tries built for $\coll\cup \{w\}$.
If $\coll=\emptyset$, then all $\beta_i$ are root nodes, hence $\tlink(\beta_{i+1})=\beta_i$
and the condition clearly holds.
Otherwise, for some $w'\in \coll$ we have  $\beta_{i}=\tlca(\tleaf_{i}(w),\tleaf_{i}(w'))$.
Let $\alpha = \tlca(\tleaf_{i+1}(w),\tleaf_{i+1}(w'))$.
Note that $\beta_{i+1}$ is a descendant of $\alpha$ and, by \cref{obs:link}, $\tlink(\beta_{i+1})$ is consequently a descendant of $\tlink(\alpha)$.
Moreover, \cref{cor:lcpd} implies that $|\rle(\tval(\tlink(\alpha),\beta_i))|\le 1$
and thus $|\rle(\tval(\tlink(\beta_{i+1}),\beta_i))|\le 1$, as claimed.
\end{proof}

\begin{algorithm}
\begin{algorithmic}[1]
\Function{$\insertop$}{$s$}
		\State $L := \slev(s)$
		\State $P'_{L} := \itroot_{\ustree}(s)$
		\State $\pi'_{L} := \troot(L)$
	\For{$i = L$ \textbf{downto} $0$}\label{alg:insert:for}	
		\If{$P'_i \ne \itnil$}\label{alg:insert:if}
    \State $(\pi_i,k) := \tdown(\pi'_i, \itsig(P'_i), \itrext(P'_i))$
		\State $P_i := \itrskip(P'_i,k)$
	\Else
    \State $\pi_i := \pi'_i$
    \State $P_i := \itnil$
	\EndIf
	\If{$i > 0$}
		\State $\pi'_{i-1} := \tlink(\pi_i)$
		\If{$P_i \ne \itnil$}\; $P'_{i-1} := \itchild(P_i, 1)$
		\Else\; $P'_{i-1} := \itnil$
		\EndIf
	\EndIf
\EndFor
\For{$i = 0$ \textbf{to} $L$}
	\State $\beta'_i := \tmake(\pi'_i)$
	\State $\beta_i := \tmake(\pi_i)$
	\State $\tleaf_i(s) := \taddleaf(\pi_i,P_i)$
	\If {$i > 0$}
		\State $\tlink(\beta_i):=\beta'_{i-1}$
		\State $\tlink(\tleaf_i(s)):=\tleaf_{i-1}(s)$
	\EndIf
\EndFor
\EndFunction
\end{algorithmic}
\caption{Pseudocode of $\insertop$ implementation.}\label{alg:insert}
\end{algorithm}

We conclude this section with the actual implementation
and the proof of its correctness.
We extend our notation so that $\tleaf_i(s)=\tleaf_i(w)$
for $s\in\sigs(\grammar)$ representing $w$.

\begin{lemma}\label{lem:trie_insert}
Let $s\in\sigs(\grammar)$ be a signature representing a string $w$.
\Cref{alg:insert} provides a correct implementation of $\insertop(s)$
working in $O(\depth(w)+\log |\coll|)$ time.
\end{lemma}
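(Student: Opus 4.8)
The plan is to verify \cref{alg:insert} in two stages, matching the two \textbf{for} loops in the pseudocode: first show that the downward pass correctly computes, for each level $i$, a trie pointer $\pi_i$ to the mount $\beta_i$ of $w$ in $\utrie_i(\coll)$ together with an associated pointer $P_i$ to $\ustree(s)$; then show that the upward pass correctly turns every $\trie_i(\coll)$ into $\trie_i(\coll\cup\{w\})$ and restores all $\tlink$ pointers. Throughout, the invariant maintained by the downward loop will be that $\pi'_i$ points to a strict ancestor $\beta'_i$ of the mount $\beta_i$ with $\tval(\beta'_i)$ a prefix of $\cshrink{i}(w)$ and $|\rle(\tval(\beta'_i,\beta_i))|\le 1$, while $P'_i$ is the pointer to $\ustree(s)$ associated with $\beta'_i$ (or $\itnil$ when $\tval(\beta'_i)=\cshrink{i}(w)$ already). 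The base case $i=L=\slev(s)$ is immediate: $\beta'_L$ is the root of $\utrie_L$, $\tval=\eps$, and $\cshrink{L}(w)$ has length one, so a single $\tdown$ step suffices and $P'_L=\itroot_{\ustree}(s)$ is the right associated pointer.

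For the inductive step of the downward pass I would argue as follows. Given $\pi'_i$ pointing to $\beta'_i$ with the invariant, the call $\tdown(\pi'_i,\itsig(P'_i),\itrext(P'_i))$ extends $\tval(\beta'_i)$ by a maximal run of the signature at the current node of $\ustree(s)$; since $|\rle(\tval(\beta'_i,\beta_i))|\le 1$ this run is exactly the block separating $\beta'_i$ from $\beta_i$, so $\pi_i$ lands on $\beta_i$, and $P_i:=\itrskip(P'_i,k)$ is the associated pointer to $\ustree(s)$ (or $\itnil$ if $\cshrink{i}(w)$ has been fully consumed, which is detected by $\itrskip$ returning $\itnil$). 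Then $\pi'_{i-1}:=\tlink(\pi_i)$ points to $\tlink(\beta_i)$, which by \cref{lem:inscorr} is a strict ancestor of $\beta_{i-1}$ with $|\rle(\tval(\tlink(\beta_i),\beta_{i-1}))|\le 1$ — exactly the invariant at level $i-1$ — and $P'_{i-1}:=\itchild(P_i,1)$ is its associated pointer because descending to the leftmost child of the node representing $\itstring(P_i)=\tval(\beta_i)$ in $\cshrink{i}(w)$ corresponds to prepending nothing to the prefix at level $i-1$. The \textbf{if} guarding $P'_i\ne\itnil$ handles the degenerate case where $\cshrink{i}(w)$ is already a prefix of some string in $\coll$, in which case $\beta_i=\beta'_i$ and the associated pointer is $\itnil$; one checks that in this situation $\tval(\beta_i)=\cshrink{i}(w)$ forces $\tleaf_i(w)$ to coincide with $\beta_i$ and all lower levels behave consistently.

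For the upward pass, once every mount $\beta_i$ and its ancestor $\beta'_i$ are located by trie pointers, $\tmake(\pi'_i)$ and $\tmake(\pi_i)$ make them explicit in $\trie_i$ (needed because $\taddleaf$ and the $\tlink$ assignments require explicit nodes), and $\taddleaf(\pi_i,P_i)$ attaches the new leaf $\tleaf_i(w)$ below $\beta_i$ using the associated parse-tree pointer $P_i$ to describe the compacted edge; since all internal nodes on the path from $\beta_i$ to $\tleaf_i(w)$ are absent in $\utrie_i(\coll)$, this correctly produces $\trie_i(\coll\cup\{w\})$. Processing levels in increasing order is essential so that when we set $\tlink(\beta_i):=\beta'_{i-1}$ and $\tlink(\tleaf_i(s)):=\tleaf_{i-1}(s)$ the targets at level $i-1$ have already been made explicit. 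For the running time I would invoke \cref{thm:trie_pointers}: the downward pass is an alternating sequence of $\tdown$ and $\tlink$ calls of length $O(\depth(w))$, so it costs $O(\depth(w)+\log|\coll|)$; the upward pass performs $O(\depth(w))$ calls to $\tmake$, $\taddleaf$, and pointer assignments, each $O(1)$ except that the leaf insertion at level $0$ triggers the auxiliary-component update of \cref{lem:trie_extra} costing $O(\log|\coll|)$; all parse-tree pointer operations are $O(1)$ by \cref{lem:uptr}. The main obstacle I anticipate is the bookkeeping around $\beta'_i$ versus $\beta_i$ and the $\itnil$ edge cases: making precise the claim that $|\rle(\tval(\beta'_i,\beta_i))|\le 1$ (which is what licenses a single $\tdown$ step per level) and that the $P$-pointers stay correctly associated across the $\tlink$/$\itchild$ transitions, both of which hinge on a careful application of \cref{lem:lcpd} and \cref{lem:inscorr}; the rest is routine.
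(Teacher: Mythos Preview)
Your proposal is correct and follows essentially the same approach as the paper's proof: both establish the invariant that $\pi'_i$ points to an ancestor $\beta'_i$ of the mount $\beta_i$ with $|\rle(\tval(\beta'_i,\beta_i))|\le 1$ and $P'_i$ the associated parse-tree pointer, push this invariant down one level via \cref{lem:inscorr}, and then bound the time using the alternating-sequence clause of \cref{thm:trie_pointers} together with \cref{lem:trie_extra}. One small imprecision: $\beta'_i$ need not be a \emph{strict} ancestor of $\beta_i$ (the case $\beta'_i=\beta_i$ with $P'_i\ne\itnil$ is possible and is handled correctly by $\tdown$ returning $k'=0$); also note that at level~$0$ the internal-node insertions $\tmake(\pi'_0)$ and $\tmake(\pi_0)$, not just the leaf insertion, incur the $O(\log|\coll|)$ cost from \cref{lem:trie_extra}, though this does not affect the bound.
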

\begin{proof}
Let us start with the first part of the algorithm, including the \textbf{for} loop at line~\ref{alg:insert:for}.
Let $\beta_i$ be the mount of $w$ in $\utrie_i$.
We claim that the following conditions are satisfied:
 \begin{enumerate}[(a)]
   \item\label{it:one} $\pi_i$ and $\pi'_i$ point to nodes of $\utrie_i$,
   \item\label{it:two} $P'_i$ is a pointer to $\ustree(w)$ associated with $\pi_i'$ and $\pi'_i$ points to an ancestor $\beta'_i$ of $\beta_i$ such that $|\rle(\tval(\beta'_i,\beta_i))|\le 1$.
     \item\label{it:three} $P_i$ is a pointer to $\ustree(w)$ associated with $\pi_i$ and $\pi_i$ points to the mount $\beta_i$.
 \end{enumerate}

Observe that (\ref{it:one}) is clear from definition of each $\pi_i$ and $\pi'_i$.
The remaining two points are proved inductively.
We have $\itstring(P'_L)=\tval(\beta'_L)=\eps$ and $\tval(\beta'_L,\beta_L)=\tval(\beta_L)\in \{\eps,s\}$, where $s$ is the signature representing $w$.
Consequently, condition (\ref{it:two}) is indeed satisfied for $i=L$.
Next, observe that the \textbf{if} block at line~\ref{alg:insert:if} makes (\ref{it:two}) imply (\ref{it:three}).
Indeed, observe that $\tval(\beta'_i,\beta_i)$ is exactly the longest prefix of $\itsuf(P'_i)$ which can be expressed
as $\tval(\beta'_i,\gamma)$ for some descendant $\gamma\in \utrie_i$ of $\beta'_i$.
Consequently, pointers $\pi_i$ and $P_i$ satisfy the claimed conditions.
Finally, if $i>0$ we make $\pi'_{i-1}$ point to $\tlink(\beta_i)$ and $P'_i$ as the associated pointer to $\ustree(w)$.
By \cref{lem:inscorr} condition (\ref{it:two}) is satisfied for $i-1$.
Note that \cref{thm:trie_pointers} implies that this part of the algorithm runs in $O(\depth(w)+\log |\coll|)$
time since we have an alternating sequence of $O(\depth(w))$ subsequent $\tlink$ and $\tdown$ operations.

Correctness of the second part of the algorithm follows from the definition of mounts $\beta_i$.
Its running time is also $O(\depth(w)+\log |\coll|)$ by \cref{thm:trie_pointers,lem:trie_extra}.
\end{proof}

\subsection{Implementation of Tries}\label{sec:trierepr}
In this section we give a detailed description of how the tries $\trie_i$ are represented
and what auxiliary data is stored along with them.
We also show how this is maintained subject to $\tmake$ and $\taddleaf$ updates.

\subsubsection{Trie Structure}

Each edge of $\trie_i$ is represented by a tuple consisting of references to the upper endpoint $\alpha$, the lower endpoint $\beta$,
as well as a pointer $P$ such that $\itsig(P)$ is the first symbol of $\tval(\alpha,\beta)$ and $P$ is associated with $\alpha$.
Each explicit node $\alpha\in \trie_i$ stores a reference to the edge from the parent
and references to all the edges to the children of $\alpha$.
The latter are stored in a dictionary (hash table) indexed by the first character of the edge label.
If $\alpha$ has exactly one child, a reference to the edge to that child is also stored outside of the dictionary
so that the edge can be accessed without knowledge of the first character of its label.

Additionally, each $\alpha\in \trie_i$ stores its depth $\tdepth(\alpha)$, which is defined as the length
of the (unique) string $w$ such that $\tval(\alpha)=\cshrink{i}(w)$.
Note that this can be seen as the weighted length of $\tval(\alpha)$ in which the weight of a symbol $s$ is $\slength(s)$.
In particular, at level $i=0$ we have $\tdepth(\alpha)=|\tval(\alpha)|$.


\newcommand{\tanc}{\mathtt{anc}}
\newcommand{\tdesc}{\mathtt{desc}}
For a node $\alpha\in\utrie_i$ we define $\tanc(\alpha)$ ($\tdesc(\alpha)$) as the nearest ancestor (resp. descendant)
of $\alpha$ which is explicit in $\trie_i$. If $\alpha$ is explicit, we have $\tanc(\alpha)=\tdesc(\alpha)=\alpha$.


\begin{lemma}\label{lem:structure}
The trie structure can be updated in constant time subject to internal node insertion $\tmake(\pi)$ and leaf insertion $\taddleaf(\pi,P)$.
\end{lemma}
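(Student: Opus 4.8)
The plan is to describe the constant-time updates for each of the two operations separately, verifying that only $O(1)$ node records, edge records, and dictionary entries are touched in each case.

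First I would handle $\tmake(\pi)$, where $\pi$ is a trie pointer to a node $\beta\in\utrie_i$ that is currently implicit in $\trie_i$. Being implicit means $\beta$ lies strictly inside some edge $e=(\alpha,\gamma)$ of $\trie_i$, where $\alpha=\tanc(\beta)$ and $\gamma=\tdesc(\beta)$; these two explicit endpoints, together with a pointer $P_e$ associated with $\alpha$ whose signature is the first symbol of $\tval(\alpha,\gamma)$, are stored in the edge record, which we can reach from $\pi$ in $O(1)$ time (the trie-pointer interface, as set up for \cref{thm:trie_pointers}, lets us recover the surrounding explicit edge). To make $\beta$ explicit we create a new node record for $\beta$, split $e$ into two edges $e_1=(\alpha,\beta)$ and $e_2=(\beta,\gamma)$, and distribute the data: $e_1$ keeps the pointer $P_e$ (still associated with $\alpha$ and still having the correct first symbol), while $e_2$ needs a pointer associated with $\beta$ whose signature is the first symbol of $\tval(\beta,\gamma)$ — this can be obtained in constant time from the associated pointer carried by $\pi$ itself, since $\pi$ points to $\beta$ and therefore knows the node $v\in\ustree(w)$ with $\itstring(v)=\tval(\beta)$, from which a single $\itright$-type step yields a pointer with the needed first symbol. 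We then update $\alpha$'s child record for the relevant first character to point to $e_1$ (and refresh the single-child shortcut if applicable), set $\beta$'s parent-edge reference to $e_1$ and its single child-edge reference (and the corresponding dictionary entry) to $e_2$, and fix $\gamma$'s parent-edge reference to $e_2$. Finally we set $\tdepth(\beta)$, which is computable in $O(1)$ from $\tdepth(\alpha)$ plus the weighted length of $\tval(\alpha,\beta)$ (the latter being available from the length of the corresponding $\ustree(w)$ fragment). All of this is a bounded number of record creations, field assignments, and single dictionary operations, hence $O(1)$ worst-case time.

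Next I would handle $\taddleaf(\pi,P)$, where $\pi$ points to the mount $\beta$ of $w$ in $\utrie_i$ (assumed already explicit) and $P$ is an associated pointer to $\ustree(w)$. If the associated pointer is $\itnil$, then $\tval(\beta)=\cshrink{i}(w)$ already, so $\beta$ itself is $\tleaf_i(w)$: we simply mark $\beta$ terminal (if not already so) and return it. Otherwise we create a fresh leaf node $\lambda$ together with a new edge $e=(\beta,\lambda)$; the pointer stored on $e$ is $P$ (which is associated with $\beta$ and whose signature $\itsig(P)$ is exactly the first symbol of $\cshrink{i}(w)$ after the prefix $\tval(\beta)$), we install $e$ in $\beta$'s child dictionary under key $\itsig(P)$ (and refresh the single-child shortcut), set $\lambda$'s parent-edge reference to $e$, mark $\lambda$ terminal, and set $\tdepth(\lambda)=|w|$, which is known from the signature $s$. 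This is again a constant number of elementary updates.

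I do not expect a genuine obstacle here; the lemma is essentially a bookkeeping statement. The only point requiring a little care is making sure the pointer records attached to the two halves of a split edge (in $\tmake$) and to the new leaf edge (in $\taddleaf$) are the correct $\ustree(w)$ pointers with the right first symbols and the right associations — but this is exactly what the hypotheses supply ($\pi$ carries an associated parse-tree pointer, and $P$ is given as associated with the mount), so it reduces to invoking the constant-time navigation primitives of \cref{sec:pointers}. Everything else — dictionary insertions, depth computations via $\slength$, updating a bounded set of parent/child references — is constant-time by construction, so the claimed $O(1)$ bound follows.
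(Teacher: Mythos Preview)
Your proposal is correct and follows essentially the same approach as the paper: split the single enclosing edge for $\tmake$, create one new leaf edge for $\taddleaf$, and observe that all the needed data (endpoints, associated parse-tree pointer, depth) are already available in $O(1)$ from the trie pointer and the stored edge record. One small slip: the pointer $P$ carried by $\pi$ is \emph{itself} the pointer needed on the lower edge $e_2$, since by the paper's definition of ``associated'' the node $v$ with $\itstring(v)=\tval(\beta)$ already satisfies $\itsig(P)=$ first symbol of $\tval(\beta,\gamma)$; no extra $\itright$ step is required (and taking one would in fact yield a pointer associated with the wrong trie node).
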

\begin{proof}
First, let us focus on internal node insertion. Let $\beta\in\utrie_i$ be the node pointed by $\pi$.
Let $\pi = (\alpha,P)$ and let the edge from $\alpha$ to its parent be represented as $(\gamma,\alpha,P')$.
Note that this edge needs to be split into two edges: $(\gamma,\beta,P')$ and $(\beta,\alpha,P)$.
We thus update a reference to the parent edge at $\alpha$
and a reference to the appropriate child edge at $\gamma$.
Recall that the latter is stored in the dictionary under the $\itsig(P')$ key.

Finally, we need to create the representation of $\beta$.
This is mainly a matter of storing the references to the newly created edges.
The lower one is kept in the dictionary at $\beta$ under key $\itsig(P)$.
We also store $\tdepth(\beta)$ which can be easily determined from $\itrepr(P)$.
Consequently, internal node insertion can indeed be implemented in constant time.

Let us proceed with leaf insertion. If $P=\itnil$, there is nothing to do since $\beta=\tleaf_i(w)$.
Otherwise, we need to create a new node $\lambda=\tleaf_i(w)$ and a new edge $(\beta,\lambda,P)$.
References to that edge need to be stored in $\tleaf_i(w)$ and in the dictionary at $\beta$.
In the latter we insert it under $\itsig(P)$ key. By our assumption,
no such edge has previously existed. The last step is to store $\tdepth(\tleaf_i(w))=|w|$.
\end{proof}

\newcommand{\tpath}{\mathtt{path}}
\subsubsection{Navigation on Uniform Paths.}
Let $\Pi$ be a path in $\trie_i$ from an explicit node $\gamma$ to its proper explicit descendant $\alpha$.
We say that $\Pi$ is \emph{$s$-uniform} if $\tval(\gamma,\alpha)=s^k$ for some symbol $s$ and integer $k$.
An $s$-uniform path is \emph{maximal} if it cannot be extended to a longer $s$-uniform path.
Observe that maximal uniform paths are edge-disjoint.
If $\alpha$ is a child of $\gamma$ in $\trie_i$ we say that the edge from $\gamma$ to $\alpha$ is $s$-uniform.
Detection of uniform edges is easy:
\begin{fact}\label{fct:unicheck}
Given a representation $(\gamma,\alpha,P)$ of an edge of $\trie_i$, we can check in $O(1)$ time if
the edge is uniform.
\end{fact}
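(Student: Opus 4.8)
The plan is to read the edge label directly through the stored pointer and compare it against a single run of its first symbol. Write the edge as $(\gamma,\alpha,P)$, with $\gamma$ the upper endpoint and $\alpha$ the lower one. By the edge representation, $P$ points to the node $v\in\ustree(w)$ (for the string $w$ into whose parse tree $P$ was taken) with $\itstring(v)=\tval(\gamma)$; hence $\itlevel(P)=i$, the sequence $\itsuf(P)$ consists precisely of the symbols of $\cshrink{i}(w)$ starting at $v$, and $\tval(\gamma,\alpha)$ is a prefix of $\itsuf(P)$. Put $s:=\itsig(P)=\ussig(v)$, which is the first symbol of $\tval(\gamma,\alpha)$, and $\ell:=\slength(s)\ge 1$. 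Finally set $W:=\tdepth(\alpha)-\tdepth(\gamma)$; by the definition of $\tdepth$ (see \cref{lem:structure} and the trie structure description) this is exactly the weighted length of $\tval(\gamma,\alpha)$, where each symbol $t$ weighs $\slength(t)$. Since all weights are $\ge 1$, distinct prefixes of a symbol string have distinct weighted lengths, so $\tval(\gamma,\alpha)$ is the \emph{unique} prefix of $\itsuf(P)$ of weighted length $W$.

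The test is then: the edge is uniform if and only if $\ell$ divides $W$ and $\itrext(P)\ge W/\ell$; in that case it is $s$-uniform with $\tval(\gamma,\alpha)=s^{W/\ell}$. For the forward direction, if $\tval(\gamma,\alpha)=t^k$ then $t$ is its first symbol so $t=s$, its weighted length is $k\ell=W$ so $\ell\mid W$ and $k=W/\ell$, and $\itsuf(P)$ begins with $s^k$, whence its leading run of copies of $s$ has length at least $k$, i.e.\ $\itrext(P)\ge k=W/\ell$. For the converse, recall that $\itrext(P)$ equals the length of the maximal run of copies of $s$ at the start of $\itsuf(P)$; if $\itrext(P)\ge W/\ell$ then $s^{W/\ell}$ is a prefix of $\itsuf(P)$ of weighted length $(W/\ell)\cdot\ell=W$, and by the uniqueness noted above it must coincide with $\tval(\gamma,\alpha)$, so the edge is $s$-uniform.

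For the running time, each of $\itsig(P)$, $\slength(\cdot)$, the depths $\tdepth(\alpha),\tdepth(\gamma)$ (read from the node records), and $\itrext(P)$ is computed in $O(1)$ time, and the test is a single divisibility check and one division, so the whole procedure is $O(1)$; when it succeeds it also yields the symbol $s$ and the exponent $k=W/\ell$ for free, which is what the subsequent uses of uniform edges need. There is no real obstacle here beyond one point that wants care: the bridge between the arithmetic quantity $W=\tdepth(\alpha)-\tdepth(\gamma)$ (a string \emph{length}, not a symbol count) and the combinatorics of the edge label. That bridge is exactly the ``unique prefix of a given weighted length'' observation — it lets $W$ single out $\tval(\gamma,\alpha)$ as a prefix of $\itsuf(P)$, after which $\itrext(P)$ decides whether that prefix is one run of $s$. (One should also note, as part of the trie-edge invariant established in \cref{sec:triepointers}, that $P$ is chosen so that $\tval(\gamma,\alpha)$ really is a prefix of $\itsuf(P)$ and not the other way around; this is what makes $P$ usable for reading the whole edge label.)
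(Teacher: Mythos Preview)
Your argument is correct and follows the same approach as the paper: identify $s=\itsig(P)$, use that $\tval(\gamma,\alpha)$ is the prefix of $\itsuf(P)$ of weighted length $W=\tdepth(\alpha)-\tdepth(\gamma)$, and compare $W$ against the length of the initial $s$-run of $\itsuf(P)$ as measured by $\itrext(P)$.

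One small simplification relative to the paper: your divisibility check $\ell\mid W$ is redundant. If $W\le (\text{run length})\cdot\ell$, then the prefix $\tval(\gamma,\alpha)$ lies entirely inside the initial block of $s$'s in $\itsuf(P)$, so it is $s^m$ for some integer $m$ and hence $W=m\ell$ automatically. The paper therefore states only the single inequality $\tdepth(\alpha)-\tdepth(\gamma)\le \itrext(P)\cdot\slength(s)$. Your version is not wrong, just slightly more elaborate than necessary. (There is also a possible off-by-one in how $\itrext$ is counted---whether it includes the node $v$ itself---but that ambiguity is present in the paper as well and does not affect the idea.)
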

\begin{proof}
Note that the edge could only be $s$-uniform for $s=\itsig(P)$.
Observe that this is the case if and only if
$$\tdepth(\alpha)-\tdepth(\gamma) \le \itrext(P)\cdot \slength(s).$$
This is due to the fact that $\tval(\gamma,\alpha)$ is a prefix of $\itsuf(P)$.
\end{proof}

\newcommand{\tlocate}{\mathtt{locate}}
With a path $\Pi$ in $\trie_i$ we associate the uncompressed path $\overline{\Pi}$ in $\utrie_i$
with the same endpoints.
For each uniform path $\Pi$ we would like to efficiently locate in $\Pi$ any node of $\overline{\Pi}$.
More precisely, we consider the following $\tlocate_{\Pi}(d)$ operation: if $\overline{\Pi}$ contains
a node $\beta$ with $\tdepth(\beta)=d$, compute $\tdesc(\beta)$ and $\tanc(\beta)$; otherwise return $\nilst$.
The path $\Pi$ can be specified by any edge belonging to it.

We introduce a function $$\phi(\beta)=\min(|\tval(\tanc(\beta),\beta))|,|\tval(\beta,\tdesc(\beta))|)$$
to measure hardness of a particular query.

\begin{lemma}\label{lem:st}
For all maximal uniform paths $\Pi$ in $\trie_i$ we can maintain data structures which support
$\tlocate_{\Pi}(d)$ queries in $O(1+\min(\phi(\beta),\log |\Pi|))$ time if there is a sought node $\beta$
and in $O(1)$ time otherwise.
Moreover, the data structures (as well the collection of maximal uniform paths) can be maintained
in constant time subject to internal node insertions and leaf insertions.
\end{lemma}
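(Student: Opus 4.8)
The plan is to attach to each maximal uniform path $\Pi$ (say $s$-uniform, with explicit endpoints $\gamma_0$ at the top and $\gamma_m$ at the bottom, and $\sigma:=\slength(s)$) two structures \emph{indexed by the depths $\tdepth$ of the explicit nodes lying on $\Pi$}: (i)~a hash table mapping a depth $d$ to the explicit node of $\Pi$ at depth $d$ (if there is one), and (ii)~a level-linked balanced search structure holding the explicit nodes of $\Pi$ ordered by depth, supporting predecessor/successor search for a query depth in $O(\log|\Pi|)$ time and, crucially, insertion of a node next to a given one in $O(1)$ time; its leaves form the doubly linked list of the explicit nodes of $\Pi$, so the two neighbours of any node on $\Pi$ are available in $O(1)$. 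In addition, each explicit node of $\trie_i$ stores a pointer to the maximal uniform path object containing its parent edge, defined exactly when that edge is uniform -- testable in $O(1)$ by \cref{fct:unicheck}. Since $\overline{\Pi}$ visits precisely the depths $\tdepth(\gamma_0),\tdepth(\gamma_0)+\sigma,\ldots,\tdepth(\gamma_m)$, indexing by absolute depth (rather than by position along $\Pi$) is what lets us grow a uniform path at either end without reindexing.

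To answer $\tlocate_{\Pi}(d)$, given an edge $e=(\gamma,\alpha,P)$ of $\Pi$ and a depth $d$: follow the parent-uniform-path pointer of $\alpha$ to recover $\Pi$ (hence $\gamma_0,\gamma_m,\sigma$ and the structures (i),(ii)), and in $O(1)$ check that $d\in[\tdepth(\gamma_0),\tdepth(\gamma_m)]$ and $d\equiv\tdepth(\gamma_0)\pmod{\sigma}$; otherwise return $\nilst$. Let $\beta$ be the node of $\overline{\Pi}$ at depth $d$. If the hash table has an entry for $d$, then $\beta$ is explicit and $\tanc(\beta)=\tdesc(\beta)=\beta$. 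Otherwise, since $\Pi$ is $s$-uniform, $|\tval(\tanc(\beta),\beta)|=(d-\tdepth(\tanc(\beta)))/\sigma$ and $|\tval(\beta,\tdesc(\beta))|=(\tdepth(\tdesc(\beta))-d)/\sigma$, so $\phi(\beta)$ is exactly the number of $\sigma$-steps from $d$ to the nearer of $\beta$'s two explicit neighbours on $\Pi$. We then run two procedures in lock-step, performing one step of each per round and halting as soon as either succeeds: (a)~the $O(\log|\Pi|)$ predecessor search in (ii), returning $\tanc(\beta)$; and (b)~a two-sided linear scan probing the hash table at depths $d-\sigma,d+\sigma,d-2\sigma,d+2\sigma,\ldots$, which reaches the nearer explicit neighbour after $O(\phi(\beta))$ probes. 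Whichever finishes yields one of $\tanc(\beta),\tdesc(\beta)$, and the other is its neighbour on $\Pi$, read off the linked list of (ii) in $O(1)$. The total cost is $O(1+\min(\phi(\beta),\log|\Pi|))$ when $\beta$ exists and $O(1)$ otherwise.

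It remains to maintain the collection of maximal uniform paths together with (i),(ii) and the parent-uniform-path pointers under the two kinds of updates, in $O(1)$ each. A leaf insertion $\taddleaf(\pi,P)$ with $P\ne\itnil$ adds one leaf edge $(\beta,\tleaf_i(w),P)$; using \cref{fct:unicheck} we test in $O(1)$ whether it is uniform, say $t$-uniform, and whether $\beta$ is the bottom endpoint of a maximal $t$-uniform path -- then we append $\tleaf_i(w)$ (at depth $|w|$) to that path's structures -- or not -- then we start a fresh one-edge path; either way a single end-insertion, $O(1)$. An internal-node insertion $\tmake(\pi)$ splits an edge $(\gamma,\alpha,P')$ into $(\gamma,\beta,P')$ and $(\beta,\alpha,P)$; applying \cref{fct:unicheck} to the two new edges, if the old edge was uniform then $\beta$ is merely a new explicit node of the already-existing uniform path through it, and inserting it into the hash table and into (ii) next to the known neighbour $\gamma$ or $\alpha$ costs $O(1)$; if the old edge was not uniform but one or both halves now are, a constant number of maximal uniform paths are created, extended at the bottom, or -- when the lower half $(\beta,\alpha)$ shares its symbol with a uniform path hanging below $\alpha$ -- extended at the top by prepending $\beta$ as the new top endpoint. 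Because every structure is keyed by absolute depth, each such change is again a single $O(1)$ insertion, and updating the affected parent-uniform-path pointers is $O(1)$.

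The crux is meeting the $O(\phi(\beta))$ refinement, the $O(\log|\Pi|)$ ceiling, and the $O(1)$-per-insertion budget all at once; three design choices make it work: the lock-step execution of scan (b) and search (a), so the running time is the minimum of the two; a level-linked/finger search structure, so that the always-local insertions cost $O(1)$ rather than $O(\log|\Pi|)$; and keying every structure by absolute depth, so that growing or merging uniform paths never triggers a global reindexing. If one insists on worst-case (as in \cref{thm:trie_pointers}) rather than amortized $O(1)$ insertions, one uses the known level-linked finger search tree with constant worst-case updates at a given finger.
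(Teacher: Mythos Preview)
Your proposal is correct and follows essentially the same approach as the paper: both maintain, for each maximal uniform path, a hash table over the depths of its explicit nodes together with a finger search tree on those depths, answer queries by a two-sided linear hash-probe at $d\pm k\sigma$ capped by a $O(\log|\Pi|)$ tree search, and handle updates as $O(1)$ end- or neighbour-insertions into the finger tree. The only cosmetic difference is that the paper runs the linear scan first and falls back to the tree search after $\log|\Pi|$ failed probes, whereas you dovetail the two; both yield the same bound.
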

\begin{proof}
First, observe that the depths of all nodes on $\overline{\Pi}$ form an arithmetic progression whose difference is $\slength(s)$.
Thus, it suffices to know the depths $\tdepth(\gamma)$ and $\tdepth(\alpha)$ of endpoints and the length $\slength(s)$
to test if the sought node exists.
Consequently, for each path $\Pi$ we shall store these three quantities. If each uniform edge stores a pointer to the data structure responsible for the path containing that edge, we can make sure  that the path $\Pi$ can indeed by specified by one of the member edges.

To support queries with non-$\nilst$ answers, we shall maintain the depths of all nodes of $\Pi$ in two forms:
a finger search tree~\cite{BrodalLMTT03} and a dictionary (hash table) mapping a value to a finger pointing to the corresponding element of the search tree.
This way, we can find $\tanc(\beta)$ or $\tdesc(\beta)$
in $O(1+\phi(\beta))$ time by checking if the dictionary has an entry at $d\pm k \cdot \slength(s)$ for subsequent integers $k$ starting from $0$.
Once we detect $\tanc(\beta)$ or $\tdesc(\beta)$, the other can be determined in constant time since finger search tree supports
constant-time iteration.

Our query algorithm performs this procedure but stops once we reach $k\ge \log |\Pi|$. In this case it falls back to ordinary search
in the finger search tree, which clearly succeeds in $O(\log |\Pi|)$ time.
Consequently, the query algorithm runs in $O(1+\min(\phi(\beta),\log |\Pi|))$ time, as claimed.

Thus, it suffices to describe how to support internal node insertions and leaf insertions.
First, let us focus on internal node insertions. If the newly created node $\beta$ subdivides a uniform edge,
we shall insert its depth to the data structure of the corresponding path $\Pi$.
Since we know $\tanc(\beta)$ and $\tdesc(\beta)$, i.e., the predecessor and the successor of $\tdepth(\beta)$ in the finger search tree,
this element insertion takes constant time.

Otherwise, we need to check if the two newly created edge are uniform and, if so, if they extend an already existing uniform paths
or constitute new ones. All this can be easily accomplished using \cref{fct:unicheck}.
A new path $\Pi$ contains two nodes, so its data structure can be easily created in constant time.
The extension of an already existing path, on the other hand, reduces to insertion of a new depth larger or smaller than
all depths already stored. Again, finger search tree supports such insertions in constant time.

Finally, for leaf insertions we create one edge, and if it is uniform, we need to update our data structures.
We proceed exactly as in the second case for internal node insertions, since this edge either extends
an already existing path or constitutes a new one.
\end{proof}

\subsection{Implementation of Operations on Trie Pointers}\label{sec:triepointers}
In this section we implement $\tdown$ and $\tlink$ operations.
Recall that these operations require trees $\trie_i$ to be $\trie_i(\coll)$
for a fixed collection $\coll$.
In particular, we insist on explicit nodes being precisely the down-nodes and the up-nodes.

\subsubsection{Combinatorial Toolbox}

For a node $\alpha\in \utrie_i$, let us define $\tsub(\alpha)$ as the set of all strings $w\in \coll_i$
for which $\tleaf_i(w)$ is a descendant of $w$.

\begin{fact}\label{fct:link}
Let $i>0$ and let $\alpha,\beta$ be nodes of $\utrie_i$ such that $\alpha$
is a proper descendant of $\beta$.
The last character of $\tval(\tlink(\beta))$ is not the same as the first character of $\tval(\tlink(\beta),\tlink(\alpha))$.
Moreover, $|\rle(\tval(\tlink(\beta),\tlink(\alpha)))|\ge |\tval(\beta,\alpha)|$
\end{fact}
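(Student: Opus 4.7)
The plan is to split the argument according to the parity of $i$, since $\shrink{i}$ is $\rle$ for odd $i$ and $\compress_{i/2}$ for even $i$. The pivotal starting observation — common to both claims — is that $\tval(\tlink(\alpha)) = \tval(\tlink(\beta)) \cdot \tval(\tlink(\beta),\tlink(\alpha))$, and applying $\shrink{i}$ to this concatenation must yield $\tval(\alpha) = \tval(\beta)\cdot \tval(\beta,\alpha)$, because $\tval(\beta)=\shrink{i}(\tval(\tlink(\beta)))$ and $\tval(\alpha)=\shrink{i}(\tval(\tlink(\alpha)))$ with $\beta$ an ancestor of $\alpha$. Since $\shrink{i}$ processes its input as a sequence of consecutive blocks, this forces it to place a block boundary exactly at the cut separating $\tval(\tlink(\beta))$ from the remainder of $\tval(\tlink(\alpha))$.

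For the first claim I would use that $\shrink{i}$ never places a block boundary between two equal adjacent characters. If $i$ is odd, the map $\rle$ groups maximal constant runs, so consecutive blocks necessarily begin with distinct characters. If $i$ is even, the input string $\tval(\tlink(\alpha))$ lives at level $i-1$, hence is the output of $\rle$ and therefore $1$-repetition-free; any two adjacent characters differ, in particular those flanking the cut.

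For the second claim, I would again split on parity. When $i$ is even, the substring $\tval(\tlink(\beta),\tlink(\alpha))$ inherits the $1$-repetition-free property of its parent, so its run-length encoding has length equal to the substring length itself; and since every symbol of $\tval(\beta,\alpha)$ originates from a $\compress_{i/2}$-block of length $1$ or $2$, we obtain $|\rle(\tval(\tlink(\beta),\tlink(\alpha)))|=|\tval(\tlink(\beta),\tlink(\alpha))|\ge |\tval(\beta,\alpha)|$ at once. When $i$ is odd, each symbol of $\tval(\beta,\alpha)$ corresponds to a maximal constant run collapsed by $\rle$ inside $\tval(\tlink(\alpha))$, and these runs lie entirely inside the substring because of the block boundary at the cut. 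I would then verify that they remain maximal when the substring is considered in isolation, which delivers exactly $|\tval(\beta,\alpha)|$ runs in its run-length encoding.

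The one step needing real care is this final verification, since restricting to a substring can in principle enlarge a boundary run by merging it with material outside. Here, however, the leftmost run starts at the very beginning of the substring (so nothing lies to its left to merge with), the rightmost run ends at the very end, and any two consecutive runs inherited from $\tval(\tlink(\alpha))$ use different characters by the definition of $\rle$; hence maximality is preserved on all sides. This is the only nonroutine point, and it is precisely what makes proving Claim~1 first a convenient order — even though Claim~1 itself is not strictly invoked in the odd case of Claim~2, the same block-boundary observation underpins both.
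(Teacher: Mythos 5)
Your proof is correct and takes essentially the same route as the paper: both arguments hinge on observing that $\tval(\beta)$ being a prefix of $\tval(\alpha)$ forces $\shrink{i}$ to place a block boundary at the cut between $\tval(\tlink(\beta))$ and the remaining suffix of $\tval(\tlink(\alpha))$, from which the first claim follows since the characters flanking a boundary must differ, and the second follows from $\tval(\beta,\alpha)=\shrink{i}(\tval(\tlink(\beta),\tlink(\alpha)))$ together with $|\shrink{i}(w)|\le|\rle(w)|$. Your explicit parity split merely unpacks the general facts that the paper's two-sentence proof invokes implicitly (for $\compress$ the boundary claim is vacuous since the input is $1$-repetition-free, and for $\rle$ the inequality is equality), so the content is the same, just spelled out in more detail.
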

\begin{proof}
Let $\alpha'=\tlink(\alpha)$, $\beta'=\tlink(\beta)$, $w_{\alpha}=\tval(\alpha')$
and $w_{\beta}=\tval(\beta')$. The fact that $\shrink{i}(w_{\beta})$ is a proper prefix of $\shrink{i}(w_{\alpha})$
also means that $\shrink{i}$ places a block boundary in $w_{\alpha}$ just after $w_{\beta}$.
Since a boundary is never placed between two equal adjacent characters, this implies that the last character of $\tval(\tlink(\beta))$
is indeed different from the first character of $\tval(\tlink(\beta),\tlink(\alpha))$.

The second part of the claim easily follows from $\tval(\beta,\alpha)=\shrink{i}(\tval(\tlink(\beta),\tlink(\alpha)))$
and $|\shrink{i}(w)|\le |\rle(w)|$ for any integer $i$.
\end{proof}

\begin{lemma}\label{lem:nodown}
Let $\beta\in \utrie_i$ for $i>0$ and let $\alpha$ be the nearest descendant of $\beta$
which is a down-node.
The path $\tlink(\beta) \leadsto \tlink(\alpha)$ has no internal explicit nodes.
\end{lemma}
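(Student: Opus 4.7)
The plan is a proof by contradiction: assume some internal node $\gamma$ of the path $\tlink(\beta) \leadsto \tlink(\alpha)$ in $\utrie_{i-1}$ is explicit, and derive a contradiction by translating the explicitness back to $\utrie_i$. I first reduce to the case where $\beta$ is not a down-node (otherwise take $\alpha = \beta$, so the path has length $0$ and there is nothing to show). Under this assumption, the minimality of $\alpha$ forces $\beta$ and each interior $\beta_j$ on the path $\beta = \beta_0 \to \beta_1 \to \cdots \to \beta_k = \alpha$ in $\utrie_i$ to have a unique child and to be neither terminal nor root. The crucial setup tool is \cref{fct:link} applied to $\beta,\alpha$, which ensures that $\shrink{i}$ must place a block boundary right after $\tval(\tlink(\beta))$; hence for every such $\gamma$ one has the factorization $\shrink{i}(\tval(\gamma)) = \tval(\beta) \cdot \shrink{i}(\tval(\tlink(\beta),\gamma))$, and the block structure of $\tval(\tlink(\beta),\gamma)$ is simply the block structure of $\tval(\tlink(\beta),\tlink(\alpha))$ restricted to the first $|\tval(\tlink(\beta),\gamma)|$ positions, with at most the last block truncated.

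Ruling out $\gamma$ as an up-node is the easier half. If a down-node $\delta \in \utrie_i$ witnessed this, then $\tval(\delta) = \shrink{i}(\tval(\gamma))$. When $|\tval(\gamma)|$ lies at a block boundary of $\tval(\tlink(\alpha))$, the factorization makes $\tval(\delta)$ a prefix of $\tval(\alpha)$, pinning $\delta$ to some interior $\beta_j$, which is non-down-node by assumption. When $|\tval(\gamma)|$ is inside a block, the truncation causes $\tval(\delta)$ and $\tval(\alpha)$ to disagree in a single symbol, so $\delta$ branches off the path $\beta \leadsto \alpha$ at some $\beta_{j-1}$; this forces $\beta_{j-1}$ to be branching, contradicting either the non-down-node assumption on an interior $\beta_j$ (if $j-1\ge 1$) or on $\beta$ itself (if $j-1=0$).

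The down-node case is where the main effort goes. The plan is to exhibit a witness $w' \in \coll_{i-1}$ whose path in $\utrie_{i-1}$ passes through $\gamma$ and diverges from that of some leaf descendant $w_\alpha$ of $\alpha$ at $\gamma$ (such a $w'$ exists whether $\gamma$ is terminal or branching), and then transfer this divergence up to $\utrie_i$. A short argument using $|\cshrink{i-1}(w')| = 1$ eliminates the corner case $w' \notin \coll_i$: it forces $|\tval(\gamma)| \le 1$, hence $\tlink(\beta)$ and $\beta$ to be roots. For $w' \in \coll_i$, \cref{lem:lcpd} yields a decomposition $p = p_\ell p_r$ of the LCP $p$ with $|\rle(p_r)| \le 1$, and \cref{fct:link} forbids $p_r$ from straddling the boundary at $|\tval(\tlink(\beta))|$ (otherwise $p_r$ would contain two distinct adjacent symbols). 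Therefore $|p_\ell| \ge |\tval(\tlink(\beta))|$, and the LCA $\beta^* \in \utrie_i$ of $\tleaf_i(w')$ and $\tleaf_i(w_\alpha)$ has $\tval(\beta)$ as a prefix of $\tval(\beta^*)$.

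The finish is a small case analysis on the position of $\beta^*$ relative to $\alpha$ (they are comparable since both are ancestors of $\tleaf_i(w_\alpha)$); being the LCA of two distinct leaves, $\beta^*$ is branching and hence a down-node. Then $\beta^* = \beta$ makes $\beta$ a down-node (contradiction); $\beta^* = \beta_j$ for some interior $j$ makes $\beta_j$ a down-node (contradiction); and $\beta^*$ equal to or below $\alpha$ forces $w'$ to pass through $\tlink(\alpha)$ in $\utrie_{i-1}$, contradicting its divergence at $\gamma$. The main technical obstacle throughout is juggling the block structure so that $\shrink{i}$ does not collapse the divergence visible in $\utrie_{i-1}$ into an identical prefix in $\utrie_i$; the block-boundary guarantee provided by \cref{fct:link} is precisely what rules out the bad alignments and drives both halves of the proof.
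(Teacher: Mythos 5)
Your plan is correct, and it reaches the conclusion by a genuinely different route from the paper. Both proofs hinge on the same two facts — the block boundary at $|\tval(\tlink(\beta))|$ coming from \cref{fct:link} applied to the pair $(\beta,\alpha)$, and the $|\rle|\le 1$ constraint from \cref{lem:lcpd}/\cref{cor:lcpd} — but the surrounding structure differs. The paper argues directly: it observes that every down-node is $\tlca(\tleaf_{i-1}(w_1),\tleaf_{i-1}(w_2))$ and every up-node is $\tlink(\tlca(\tleaf_i(w_1),\tleaf_i(w_2)))$ for some pair $w_1,w_2\in\coll$, then proves for an arbitrary such pair that neither candidate is internal, splitting on whether both $w_1,w_2\in\tsub(\alpha)$ or not; the second case introduces an auxiliary witness $w\in\tsub(\alpha)$ and, crucially, uses the nearest down-node \emph{ancestor} $\gamma$ of $\beta$ to place $\rho=\tlca(\tleaf_i(w),\tleaf_i(w_1))$ at or above $\gamma$. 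You instead argue by contradiction from an alleged internal explicit node and split on its type. Your up-node half is especially elementary: it stays entirely at level $i$, tracking where the witnessing down-node $\delta\in\utrie_i$ must land relative to the path $\beta\leadsto\alpha$, with no appeal to leaves or LCAs at all; the paper's treatment of up-nodes, by contrast, piggybacks on the pair $(w_1,w_2)$ machinery throughout. Your down-node half is in spirit the same argument as the paper's ``one of $w_1,w_2\notin\tsub(\alpha)$'' case (your $w'$ and $w_\alpha$ play the roles of the paper's $w_1$ and $w$), but you do not need the auxiliary ancestor $\gamma$, because you already have the divergence pinned to the specific internal node. Two small presentational nits: \cref{fct:link} as \emph{stated} only gives distinct characters; the block boundary at $|\tval(\tlink(\beta))|$ is established inside its proof (via the $\tlink$ structure) together with the locality of boundary placement, so you should cite that reasoning rather than the statement alone. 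Also, the LCA $\beta^*$ of $\tleaf_i(w')$ and $\tleaf_i(w_\alpha)$ need not be branching if one leaf is an ancestor of the other — but it is then terminal, so it is still a down-node, and your contradiction survives.
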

\begin{proof}
If $\beta$ is a down-node, then $\alpha=\beta$ and thus the claim holds trivially.
Consequently, we shall assume that $\beta$ is not a down-node.
Let $\beta'=\tlink(\beta)$ and $\alpha'=\tlink(\alpha)$.
Let us also define $\gamma$ as the nearest ancestor of $\beta$ being a down-node,
and $\gamma'=\tlink(\gamma)$.

By \cref{obs:link}, $\gamma'$ is a proper ancestor of $\beta'$ and $\beta'$ of $\alpha'$.
Moreover, by \cref{fct:link}, the last character of $\tval(\gamma',\beta')$ is distinct from the first character of $\tval(\beta',\alpha')$.

For a pair of fixed strings $w_1,w_2\in \coll$ let us define $\delta = \tlca(\tleaf_{i-1}(w_1),\tleaf_{i-1}(w_2))$ and
$\delta'=\tlink(\tlca(\tleaf_i(w_1),\tleaf_i(w_2)))$.
We shall prove that neither $\delta$ nor $\delta'$ is an internal node of $\beta'\leadsto \alpha'$.
Since any down-node is of the former form and any up-node of the latter form, this shall conclude the proof.
Note that (by \cref{cor:lcpd}) $\delta'$ is an ancestor of $\delta$.

First, suppose that $w_1,w_2 \in \tsub(\alpha)$. Then, $\tlca(\tleaf_i(w_1),\tleaf_i(w_2))$
is a descendant of $\alpha$ and thus, by \cref{obs:link}, $\delta'$ is a descendant of $\alpha'$.
Consequently, indeed neither $\delta$ nor $\delta'$ can be a proper ancestor of $\alpha'$.

Thus, one of the strings $w_1$, $w_2$ (without loss of generality $w_1$) does not belong to $\tsub(\alpha)$.
Let us fix a string $w\in \tsub(\alpha)$ and define $\rho = \tlca(\tleaf_i(w),\tleaf_i(w_1))$ as well as $\rho'=\tlca(\tleaf_{i-1}(w),\tleaf_{i-1}(w_1))$.
Note that $\rho$ is a proper ancestor of $\alpha$ and, since the path $\gamma\leadsto \alpha$ contains no internal down-nodes,
an ancestor of $\gamma$. Consequently, by \cref{obs:link}, $\tlink(\rho)$ is an ancestor of $\gamma'$.
By \cref{cor:lcpd}, $\tlink(\rho)$ is an ancestor of $\rho'$ and $|\rle(\tval(\tlink(\rho),\rho'))|\le 1$.
Since the last character of $\tval(\gamma',\beta')$ is distinct from the first character of $\tval(\beta',\alpha')$,
this implies that $\rho'$ is an ancestor of $\beta'$. Since $w\in \tsub(\alpha')$, this yields
that no ancestor of $\tleaf_{i-1}(w_1)$ is an internal node of the $\beta'\leadsto \alpha'$ path.
This in particular holds for both $\delta$ and $\delta'$.
\end{proof}

\begin{lemma}\label{lem:oneup}
Let $\gamma,\alpha$ be down-nodes in $\utrie_i$ such that $\gamma$ is a proper ancestor of $\alpha$
and the path $\gamma \leadsto \alpha$ contains no internal down-nodes.
This path contains at most one internal up-node.
\end{lemma}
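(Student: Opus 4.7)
The plan is to argue by contradiction: assume the path contains two internal up-nodes $\beta_1,\beta_2$ with $\beta_1$ a proper ancestor of $\beta_2$, and fix down-nodes $\alpha_j\in\utrie_{i+1}$ with $\tlink(\alpha_j)=\beta_j$ ($\tlink$ being injective, since $\tval(\alpha)$ uniquely determines $\tval(\tlink(\alpha))=\shrink{i+1}^{-1}(\tval(\alpha))$ among trie prefixes). I would first eliminate degenerate positions of each $\alpha_j$: a root or terminal $\alpha_j$ would force $\beta_j$ to be the root or some $\tleaf_i(\cdot)$, both of which are down-nodes, impossible for an internal node of a path with no internal down-nodes. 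So each $\alpha_j$ is branching. Combined with \cref{obs:link}, the remaining possibilities are (a) $\alpha_1$ is a proper ancestor of $\alpha_2$ in $\utrie_{i+1}$, or (b) $\alpha_1,\alpha_2$ are incomparable with branching LCA $\mu\in\utrie_{i+1}$.

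The engine of the proof is the structural observation that every internal node of $\gamma\leadsto\alpha$ has exactly one child (it is neither branching, nor terminal, nor the root, yet the path continues through it). Consequently the subtree of $\beta_1$ in $\utrie_i$ is a single linear chain $\beta_1\to\beta_2\to\cdots\to\alpha$ that can branch only below $\alpha$, and in particular every descendant of $\beta_1$ must first traverse the whole chain.

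I would then invoke \cref{cor:lcpd} to control edge labels on the chain. In case~(a), picking $w_j,w_j'\in\tsub(\alpha_j)$ from distinct children-subtrees of the branching $\alpha_j$ (so their pairwise LCA at level $i+1$ equals $\alpha_j$) yields down-nodes $\delta_j=\tlca(\tleaf_i(w_j),\tleaf_i(w_j'))$ descending from $\beta_j$ with $|\rle(\tval(\beta_j,\delta_j))|\le 1$. The structural observation forces each $\delta_j$ to sit at $\alpha$ or below, and combining $j=1,2$ (the two paths overlap on $\beta_2\to\alpha$) shows that every edge on the chain $\beta_1\to\cdots\to\alpha$ carries a single symbol $e$. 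Case~(b) is handled analogously: picking $w_j\in\tsub(\alpha_j)$ with $\tlca(\tleaf_{i+1}(w_1),\tleaf_{i+1}(w_2))=\mu$ and applying \cref{cor:lcpd} yields a down-node $\delta$ with $|\rle(\tval(\tlink(\mu),\delta))|\le 1$; \cref{obs:link} together with injectivity of $\tlink$ places $\tlink(\mu)$ strictly above $\beta_1$, while the structural observation places $\delta$ at $\alpha$ or below, so the unique path $\tlink(\mu)\to\delta$ routes through $\beta_1,\beta_2,\alpha$ and is uniformly labelled~$e$.

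The contradiction then comes from \cref{fct:link} applied to $\alpha_2$ and any leaf descendant $\tleaf_{i+1}(w)$ of $\alpha_2$: the first character of $\tval(\beta_2,\tleaf_i(w))$ must differ from the last character of $\tval(\beta_2)$. Yet both equal $e$, because the edge entering $\beta_2$ and the unique edge leaving $\beta_2$ (that $\tleaf_i(w)=\tlink(\tleaf_{i+1}(w))$ is forced to use by \cref{obs:link} and the single-child property) both lie on the $e$-uniform chain. The main obstacle I anticipate is case~(b): certifying that the $|\rle|\le 1$ chain from $\tlink(\mu)$ genuinely routes through $\beta_2$ rather than taking a sideways detour, which is precisely where the combination of the single-child structural observation and \cref{obs:link} is indispensable.
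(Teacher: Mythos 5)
Your proof is correct and relies on the same combinatorial tools as the paper's: \cref{cor:lcpd} to show the chain from an internal up-node to $\alpha$ is uniformly labelled, and \cref{fct:link} to get the character mismatch at an up-node, so it is essentially the same argument. The paper phrases it more economically, applying both lemmas to a single arbitrary internal up-node $\beta$ and showing $\beta$ must be the highest internal node of $\gamma\leadsto\alpha$ with $|\rle(\tval(\beta,\alpha))|=1$; this sidesteps your two-node contradiction setup, your case~(b) (which is in fact vacuous, as the prefix relation is also reflected, not just preserved, by $\tlink$), the redundant second pair $w_2,w_2'$, and the injectivity of $\tlink$ (which you justify in the wrong direction — $\tval(\alpha)$ determining $\tval(\tlink(\alpha))$ gives well-definedness, not injectivity — though the fact itself is true and you never actually need it).
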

\begin{proof}
Suppose that $\beta'\in \utrie_{i+1}$ is a down-node such that $\beta=\tlink(\beta')$
is an internal up-node on the path $\gamma \leadsto \alpha$.
Let $w,w'\in \coll$ be strings such that $\beta' = \tlca(\tleaf_{i+1}(w),\tleaf_{i+1}(w'))$.
Since $\beta$ is not a down-node, $\beta'$ is a proper ancestor of $\tleaf_{i+1}(w)$.
Let $\delta = \tlca(\tleaf_{i}(w),\tleaf_{i}(w'))$. By~\cref{cor:lcpd},
$\delta$ is a descendant of $\beta$ and $|\rle(\tval(\beta,\delta))|\le 1$.
Since $\delta$ is a down-node, this means that $\delta$ is a descendant of $\alpha$
and, in particular, $|\rle(\tval(\beta,\alpha))|=1$.

By \cref{fct:link} applied for $\beta'$ and $\tleaf_{i+1}(w)$, the last character of $\tval(\beta)$
(i.e., the last character of $\tval(\gamma,\beta)$)
is distinct from the first character of $\tval(\beta,\tleaf_{i}(w))$ (i.e., the first character of $\tval(\beta,\alpha)$).
Consequently, $\beta$ is uniquely determined as the highest internal node of $\gamma \leadsto \alpha$
satisfying $|\rle(\tval(\beta,\alpha))|=1$.
\end{proof}

\subsubsection{Implementation Details}

A trie pointer $\pi$ to a node $\beta\in\utrie_i$ is implemented as a pair $(\tdesc(\beta),P)$,
where $P\neq\itnil$ if and only if $\beta\in\trie_i$, i.e. $\beta$ is an explicit node.
Otherwise, $P$ points to a node of $\ustree(w)$ associated with $\beta$, for some $w\in\tsub(\beta)$.

\begin{lemma}\label{lem:link}
Given a trie pointer $\pi$, the pointer $\tlink(\pi)$
can be computed in $O(1)$ time.
\end{lemma}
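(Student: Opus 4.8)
The plan is to give a constant-time implementation of $\tlink(\pi)$ by case analysis on whether the input pointer $\pi=(\tdesc(\beta),P)$ points to an explicit node of $\trie_i$ (i.e.\ $P=\itnil$) or an implicit one ($P\ne\itnil$). Write $\beta\in\utrie_i$ for the pointed node and $\beta^{*}=\tlink(\beta)\in\utrie_{i-1}$ for the node we must return, together with a pointer $P^{*}$ to $\ustree(w)$ associated with $\beta^{*}$ when $\beta^{*}$ is implicit in $\trie_{i-1}$.

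First I would handle the \emph{explicit} case. If $\beta$ is a down-node of $\utrie_i$ with $i>0$, then by the representation we maintain, $\beta$ stores an explicit link to $\tlink(\beta)$; so we can read it off and return $(\tdesc(\beta^{*}),\itnil)$ directly (when $\beta^{*}$ is explicit), or $(\tdesc(\beta^{*}),P^{*})$ where $P^{*}=\itchild(P_\beta,1)$ for a pointer $P_\beta$ associated with $\beta$, when $\beta^{*}$ is implicit—using \cref{lem:close} to descend one level in $\ustree(w)$. If $\beta$ is explicit but not a down-node, it is an up-node: $\beta=\tlink(\alpha)$ for some down-node $\alpha\in\utrie_{i+1}$. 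Here the point of \cref{lem:nodown} and \cref{lem:oneup} is exactly that such $\beta$ sits in a very controlled spot—on the path between two consecutive down-nodes $\gamma\leadsto\alpha$ there is at most one internal up-node, and $\beta=\tlink(\alpha)$ means $\tval(\beta,\alpha)=\shrink{i}(\text{something})$, so $\tlink(\beta)$ must again be a down-node in $\utrie_{i-1}$ (the lca of $\tleaf_{i-1}(w),\tleaf_{i-1}(w')$ for the same witness pair); hence $\tlink(\beta)$ is explicit in $\trie_{i-1}$ and we can again store/read it in $O(1)$.

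Finally the \emph{implicit} case: $\beta$ is implicit, so $\pi=(\tdesc(\beta),P)$ with $P\ne\itnil$ associated with $\beta$ and pointing into $\ustree(w)$ for some $w\in\tsub(\beta)$. Then $\itstring(P)=\tval(\beta)$ is a proper prefix of $\cshrink{i}(w)$. We obtain the associated pointer to $\beta^{*}$ in $\ustree(w)$ by descending one level: by the parsing construction, $\itstring(\itchild(P,1))=\shrink{i}(\itstring(P))=\tval(\beta^{*})$, so $P^{*}:=\itchild(P,1)$ is associated with $\beta^{*}$, and this is one $\itchild$ call. It remains to produce $\tdesc(\beta^{*})$. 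Here I expect the main obstacle to be arguing that $\tdesc(\beta^{*})$ is accessible in $O(1)$ without walking down the trie: the key combinatorial input is \cref{lem:nodown}, which says the path $\tlink(\beta)\leadsto\tlink(\alpha)$ (where $\alpha$ is the nearest down-node descendant of $\beta$) has no internal explicit nodes, so $\tdesc(\beta^{*})=\tlink(\alpha)$—and $\tlink(\alpha)$ is exactly the link we store at the explicit node $\tdesc(\beta)=\alpha$ (or is reachable in $O(1)$ from its stored link and the edge representation of $\trie_{i-1}$). Thus in every case $\tlink(\pi)$ is computed with a constant number of dictionary look-ups, pointer $\itchild$ calls, and reads of stored links, giving the claimed $O(1)$ bound.
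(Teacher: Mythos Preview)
Your overall structure (case-split on explicit vs.\ implicit, use \cref{lem:nodown} to pin down $\tdesc(\beta^{*})$, and set $P^{*}=\itchild(P,1)$) matches the paper, but two of your cases have genuine gaps.

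\textbf{Up-node case.} You assert that if $\beta$ is an up-node then $\tlink(\beta)$ is a down-node of $\utrie_{i-1}$ and can be ``stored/read''. Neither part is justified. First, only down-nodes store their $\tlink$ pointer (see the overview in \cref{sec:om}), so nothing is stored at an up-node to read. Second, the argument you sketch (``lca of the same witness pair'') confuses levels: if $\alpha\in\utrie_{i+1}$ is a down-node with $\beta=\tlink(\alpha)$, then \cref{cor:lcpd} only says $\beta$ is an \emph{ancestor} of $\tlca(\tleaf_i(w),\tleaf_i(w'))$, not equal to it; there is no reason $\tlink(\beta)$ should be an lca at level $i-1$. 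The paper does something different here: since an up-node that is not a down-node has a unique child in $\trie_i$, it reads an associated pointer $P$ off the edge to that child and then falls through to the same logic as the implicit case.

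\textbf{Implicit case.} Two assumptions you make need not hold. You write $\tdesc(\beta)=\alpha$ for $\alpha$ the nearest down-node descendant, but $\tdesc(\beta)$ can be an up-node; the paper invokes \cref{lem:oneup} to conclude that $\alpha$ is either $\tdesc(\beta)$ or the unique child of $\tdesc(\beta)$, and only then reads the stored link $\alpha'=\tlink(\alpha)$. You also conclude $\tdesc(\beta^{*})=\tlink(\alpha)$ directly from \cref{lem:nodown}, but that lemma says only that the path $\beta^{*}\leadsto\alpha'$ has no \emph{internal} explicit nodes; the endpoint $\beta^{*}$ may itself be explicit (namely the parent of $\alpha'$ in $\trie_{i-1}$). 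The paper resolves this by comparing $\tdepth$ of the parent of $\alpha'$ with $|\itstring(P)|$ (retrievable from $\itrepr(P)$).

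Two minor points: in the down-node subcase, $\tlink(\beta)$ is by definition an up-node and hence always explicit, so your ``when $\beta^{*}$ is implicit'' branch is vacuous; and the identity you wrote, $\itstring(\itchild(P,1))=\shrink{i}(\itstring(P))$, is reversed --- it should read $\shrink{i}(\itstring(\itchild(P,1)))=\itstring(P)$ (your conclusion $P^{*}=\itchild(P,1)$ is nonetheless correct).
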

\begin{proof}
Let $\beta\in \utrie_i$ be the node represented by $\pi$ and let $\pi=(\tdesc(\beta),P)$.
If $\beta$ is a down-node, then a reference to $\tlink(\beta)$ is stored at $\beta$ and $\tlink(\beta)$ is explicit, therefore $(\tlink(\beta), \itnil)$ can be returned as the resulting trie pointer.

If $\beta$ is an up-node, a pointer $P$ to an associated node of $\ustree(w)$ for some $w\in \coll$
can be read from the representation of the edge going from $\beta$ to its only child.
Thus, we shall assume that $\beta$ is not explicit and thus $P\ne \itnil$.

Let $\alpha$ be the nearest descendant of $\beta$ which is a down-node.
By \cref{lem:oneup}, $\alpha=\tdesc(\beta)$ or $\alpha$ is the only child of $\tdesc(\beta)$ in $\trie_i$.
In either case, $\alpha$ can be determined in constant time and so can be $\alpha':=\tlink(\alpha)$,
because down-nodes explicitly store their link pointers.

Let $\beta'=\tlink(\beta)\in \utrie_{i-1}$ be the sought (possibly implicit) node.
By \cref{lem:nodown}, $\tpath(\beta',\alpha')$ path contains no explicit internal nodes.
Thus, either $\tdesc(\beta')=\alpha'$ or $\beta'$ is the (explicit) parent of $\alpha'$ in $\trie_{i-1}$.
Note that the latter holds exactly if the parent of $\alpha'$ is at depth exactly $|\itstring(P)|$.
Since nodes of $\trie_{i-1}$ store their depth and since $|\itstring(P)|$ can be easily retrieved from $\itrepr(P)$,
this condition can be checked in constant time. We return $(\tdesc(\beta'),\itchild(P,1))$ in either case.
\end{proof}

\begin{lemma}\label{lem:down}
  Given a trie pointer $\pi$, a signature $s\in \sigs(\grammar)$, and an integer $k$, the pair $(\pi',k')=\tdown(\pi,s,k)$
can be computed in $O(1+\min(\phi(\beta'),\log |\coll|))$ time where $\beta'$ is the node $\beta'$ pointed by $\pi'$
and $\phi(\beta')=\min(|\tval(\tanc(\beta'),\beta')|,|\tval(\beta',\tdesc(\beta'))|)$.
\end{lemma}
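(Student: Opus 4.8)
The plan is to follow the same top-down strategy that is already visible in the shape of the $\insertop$ pseudocode and in the combinatorial lemmas proven just before the statement, but now carefully accounting for the representation of trie pointers $\pi = (\tdesc(\beta), P)$ and the cost of navigating a uniform path. Recall that $\tdown(\pi, s, k)$ must return a pointer to the lowest descendant $\beta'$ of $\beta$ with $\tval(\beta') = \tval(\beta) s^{k'}$ for the largest admissible $k' \le k$, together with that $k'$. The first step is to locate, among the children edges at $\tdesc(\beta)$ (or, if $\beta$ is implicit, directly on the one edge through $\beta$), the edge $e$ whose first label character is $s$; this is a single dictionary lookup, hence $O(1)$. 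If no such edge exists, then $k' = 0$ and $\beta' = \beta$, so we simply return $\pi$ unchanged. Otherwise $e$ lies on a maximal $s$-uniform path $\Pi$ (here \cref{fct:unicheck} certifies that $e$ is $s$-uniform, since the label starts with $s$ and the next symbol along $\tval$ would, by our invariant that explicit nodes are exactly down- and up-nodes, have to agree with $s$ as long as we stay inside a run), and we use the $\tlocate_\Pi$ data structure of \cref{lem:st}.

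Concretely, the target depth is $d = \tdepth(\beta) + k\cdot\slength(s)$ if that depth does not exceed the depth of the lower endpoint of $\Pi$, and otherwise it is the depth of that lower endpoint (this caps $k'$ at the largest value for which $\beta s^{k'}$ is still a prefix of some string in $\coll_i$). We call $\tlocate_\Pi(d)$; it returns $\tanc(\beta')$ and $\tdesc(\beta')$ in $O(1 + \min(\phi(\beta'), \log|\Pi|))$ time, or $\nilst$ in $O(1)$ time, and $|\Pi| \le |\coll|$ because each leaf of the trie contributes at most one explicit endpoint to a uniform path. From $\tdesc(\beta')$ and $\tanc(\beta')$ together with the pointer $P$ stored on the edge (suitably advanced by $\itrskip$ to skip the appropriate number of copies of $s$, using the arithmetic $\tdepth(\beta') - \tdepth(\beta) = k'\cdot\slength(s)$), we assemble the output trie pointer in the form $(\tdesc(\beta'), P')$, where $P' = \itnil$ exactly when $\beta'$ is explicit. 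The value $k'$ is read off as $(\tdepth(\beta') - \tdepth(\beta))/\slength(s)$. When the answer is $\nilst$, i.e.\ no node at the computed depth exists, we clamp: the largest valid $k'$ corresponds to the lower endpoint of $\Pi$ if that endpoint's depth is of the form $\tdepth(\beta) + k'\cdot\slength(s)$ with $k' \le k$ — but in fact our capping of $d$ already guarantees the sought node exists whenever $\Pi$ extends at least one full symbol past $\beta$, so the genuine $\nilst$ case only arises when $\beta$ already sits at the bottom of the run, giving $k' = 0$ in $O(1)$ time.

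The main obstacle, and the part that needs the most care, is the bookkeeping at the boundary between $\beta$ being explicit versus implicit, and correctly identifying which maximal uniform path $\beta$ sits on when $\beta$ is implicit. When $\pi = (\tdesc(\beta), P)$ with $P \ne \itnil$, the node $\beta$ lies strictly inside an edge of $\trie_i$; to run $\tlocate$ we must recover the maximal uniform path containing the relevant edge, and we must be sure that the run of $s$'s we are extending through $\beta$ is not broken by $\beta$ itself (it is not, since $\beta$ is implicit so it is neither branching nor terminal nor an up-node, hence cannot be the place where a run boundary is forced). A second delicate point is verifying the claimed running time bound $O(1 + \min(\phi(\beta'), \log|\coll|))$ rather than something larger: this is inherited verbatim from \cref{lem:st} once we observe $|\Pi| \le |\coll|$, so the only real work is to check that every step outside the $\tlocate$ call — the dictionary lookup, the depth arithmetic, the $\itrskip$ on $P$, the assembly of the result — is genuinely $O(1)$, which it is because all of these touch only a constant number of records and pointer operations (\cref{lem:uptr}).
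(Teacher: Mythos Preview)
Your proposal has a genuine gap: you assume that the edge $e$ whose label begins with $s$ is automatically $s$-uniform, and you try to justify this by appealing to the invariant that explicit nodes are exactly down-nodes and up-nodes. That inference is incorrect. An edge of $\trie_i$ is a compressed path in $\utrie_i$, and nothing in the structure of down-nodes and up-nodes forces its label to be a power of a single symbol; the label may perfectly well be $s\cdot s'\cdots$ with $s'\ne s$. \Cref{fct:unicheck} does not certify uniformity, it \emph{tests} for it, and the test can fail. The paper's proof therefore splits on this test: when the edge is not $s$-uniform, the answer $\beta'$ lies strictly inside that single edge and is obtained in $O(1)$ time via $\itrext$ and $\itrskip$ on the stored pointer $Q$, with no call to $\tlocate$ at all.

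A second, related problem is your ``capping'' of the target depth at the lower endpoint $\gamma'$ of the maximal $s$-uniform path $\Pi$. That is also wrong. Maximality of $\Pi$ only means the edge below $\gamma'$ (if any) is not $s$-uniform; it can still \emph{start} with one or more copies of $s$, so $\beta'$ may lie strictly below $\gamma'$. The paper handles this by recursing once from $\gamma'$, noting that the recursion immediately falls into the non-uniform case just described and hence terminates in $O(1)$. Similarly, your treatment of the implicit case (``directly on the one edge through $\beta$'') glosses over the same distinction: the paper first checks whether $\beta'$ still lies above $\gamma=\tdesc(\beta)$ (constant time via $\itrext(P)$ and depth arithmetic), and only if not does it recurse from the explicit node $\gamma$. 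Without the uniform/non-uniform case split and the bounded one-step recursion, your argument does not actually reach the stated time bound.
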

\begin{proof}
Let $\beta\in \utrie_i$ be the node represented by $\pi$ and let $\pi=(\tdesc(\beta),P)$.
First, suppose that $\beta$ is explicit.
If $\beta$ does not have a child $\gamma$ with edge $(\beta,\gamma,Q)$ such that $s=\itsig(Q)$, we return $(\pi',k')=(\pi,0)$.
Otherwise, the dictionary lets us retrieve $\gamma$ and an associated pointer $Q$.
First, suppose that the edge is not $s$-uniform (which can be detected using \cref{fct:unicheck}).
In this case $\tdesc(\beta')=\gamma$ and it is not hard to see that $k'=\itrext(Q,k)$
and $\pi'=(\gamma,\itrskip(Q,k'))$.

If the edge is uniform, we extend it to the maximal $s$-uniform path $\Pi$
and run the \linebreak
$\tlocate_{\Pi}(\tdepth(\beta)+k\cdot \slength(s))$ operation of \cref{lem:st}.
Let $\gamma'$ be the lower endpoint of $\Pi$. If $\beta'$ is not a proper descendant
of $\gamma$, this way we locate $\tanc(\beta')$ and $\tdesc(\beta')$.
The associated pointer can be easily located as
$\itrskip(Q,\lfloor\frac{\tdepth(\beta)}{\slength(s)}\rfloor+k-\lfloor\frac{\tdepth(\tanc(\beta'))}{\slength(s)}\rfloor)$,
where $Q$ is the pointer stored in an edge $(\tanc(\beta'),\tdesc(\beta'),Q)$.

Otherwise, we recursively proceed from $\gamma'$ trying to follow $\lfloor\frac{\tdepth(\beta)}{\slength(s)}\rfloor+k-\lfloor\frac{\tdepth(\gamma')}{\slength(s)}\rfloor$
copies of $s$. This time, we are however guaranteed that reaching a child of $\gamma'$ is impossible and
thus the previous case immediately yields a result.

Finally, suppose that $\beta$ is implicit and let $\gamma=\tdesc(\beta)$.
First, we check if $\beta'$ is an ancestor of $\gamma$.
This happens if $k':=\min(k,\itrext(P))$ does not exceed $\frac{\tdepth(\gamma)-\tdepth(\beta)}{\slength(s)}$.
Since $\tdepth(\gamma)$ is stored and $\tdepth(\beta)=|\itstring(P)|$ can be retrieved from $\itrepr(P)$,
this condition can be checked in constant time. If the result is positive, we return $(\pi',k')$
where $\pi'=(\gamma,\itrext(P,k'))$.
Otherwise, we recursively proceed trying to follow $k-\frac{\tdepth(\gamma)-\tdepth(\beta)}{\slength(s)}$ copies
of $s$ starting from $\gamma$. As before, one of the previous cases are used for $\gamma$,
so the procedure terminates in at most two steps.
\end{proof}

\begin{lemma}\label{lem:amort}
An alternating sequence of $\tlink$ and $\tdown$ operations (such that the result of a given operation
is the argument of the subsequent one) runs in $O(t+\log|\coll|)$ time where $t$ is the length of the sequence.
\end{lemma}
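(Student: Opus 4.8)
The plan is to amortize the cost of the alternating sequence using the potential function $\phi$ appearing in \cref{lem:down,lem:st}. Recall that a single $\tdown$ step, producing a trie pointer to a node $\beta'$, costs $O(1+\min(\phi(\beta'),\log|\coll|))$, whereas a single $\tlink$ step costs only $O(1)$ by \cref{lem:link}. The $\log|\coll|$ term is a crude worst-case bound that is charged at most once per $\tdown$ invocation in which the cheap search for $\tanc(\beta')$ or $\tdesc(\beta')$ fails to terminate quickly; we will absorb it in the additive $\log|\coll|$ of the final bound by arguing it can occur at most once in the whole sequence (or, more conservatively, by noting that after one ``expensive'' $\tdown$ the pointer has descended into a region of $\utrie_i$ whose uniform path has already been shortened). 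More robustly, I would split the analysis: pay $O(1)$ up front for every operation in the sequence, and separately account for the extra $\min(\phi(\beta'),\log|\coll|)$ term of each $\tdown$ against a decrease in potential.

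The key combinatorial ingredients are \cref{fct:link,lem:nodown,lem:oneup}. Consider two consecutive steps $\tdown$ then $\tlink$ (or the reverse), moving from a node $\beta$ at level $i$ to $\beta'$ at level $i-1$ via $\tlink$, or moving down within a level via $\tdown$. By \cref{fct:link}, $|\rle(\tval(\tlink(\beta),\tlink(\alpha)))|\ge |\tval(\beta,\alpha)|$ for a descendant $\alpha$ of $\beta$ — so moving through a $\tlink$ can only \emph{lengthen} uniform stretches, never create the need for a second expensive probe at the same level. Combined with \cref{lem:nodown} (the link of a uniform-path segment between a node and its nearest down-node descendant has no internal explicit nodes) and \cref{lem:oneup} (such a segment contains at most one internal up-node), the structure of the explicit tries guarantees that once a $\tdown$ has paid $\Theta(\phi(\beta'))$ to walk $\phi(\beta')$ symbols toward $\tanc(\beta')$ or $\tdesc(\beta')$, the subsequent $\tlink$ lands inside a part of $\trie_{i-1}$ where the pointed node is either already explicit or its nearest explicit descendant is the parent's only child — i.e., $\phi$ has essentially been ``spent.'' Formally, I would define a potential $\Phi$ equal to (a constant times) the current value $\phi(\beta_{\text{cur}})$ of the node pointed by the current trie pointer, show $\Phi\ge 0$ always, $\Phi=O(\log|\coll|)$ at the start (since the initial pointer is $\troot$ with $\phi=0$, actually $\Phi_0=0$), and that each $\tdown$ satisfies $\text{(actual cost)}\le O(1) + \Phi_{\text{before}} - \Phi_{\text{after}} + O(\log|\coll|)\cdot[\text{first expensive step}]$ while each $\tlink$ satisfies $\text{(actual cost)} = O(1) \le O(1) + \Phi_{\text{before}} - \Phi_{\text{after}} + (\text{the increase } \Phi_{\text{after}}-\Phi_{\text{before}})$, and the $\tlink$-induced increase in $\Phi$ is itself bounded by what the preceding $\tdown$ already accounted for via \cref{fct:link}. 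Summing the telescoping terms over the sequence of length $t$ yields $O(t+\log|\coll|)$.

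The main obstacle I anticipate is handling the interaction between the two kinds of potential increase cleanly: a $\tlink$ can move to a node whose $\phi$-value is larger than that of its preimage (\cref{fct:link} bounds $\rle$-lengths, not plain lengths, and the uniform run could be long), so the naive potential is not monotone under $\tlink$. The resolution is that \cref{lem:nodown} forces the target of a $\tlink$ to sit on a uniform path with \emph{no internal explicit nodes between it and the nearest down-node}, so $\tdesc(\beta')$ and $\tanc(\beta')$ are at bounded ``$\rle$-distance'' — and crucially, when the following $\tdown$ then descends, \cref{lem:st}'s doubling search pays exactly $O(1+\min(\phi,\log|\coll|))$, so the potential paid by one level is recovered at the next. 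I would make this precise by choosing $\Phi$ to measure not raw length but the number of explicit-node-free uniform stretches the pointer currently sits in, weighted by their $\rle$-lengths, and check that $\tlink$ leaves this quantity unchanged up to an additive constant while $\tdown$ decreases it by exactly the amount matching its actual (non-constant) cost, modulo the one-time $O(\log|\coll|)$ fallback to ordinary finger-search-tree search. The remaining verification — that the $O(\log|\coll|)$ fallback is triggered at most $O(1)$ times, or that its total contribution over the alternating sequence telescopes — follows because after such a fallback the pointer is at an explicit node, resetting the relevant potential to $0$, and $\Phi$ can only be raised back above $\log|\coll|$ by a $\tdown$ that itself pays for the increase.
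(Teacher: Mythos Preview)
Your overall plan—a potential-function amortization keyed to $\phi$—is the right one, and you have correctly spotted the main obstacle: a single potential $\phi(\beta)=\min(|\tval(\tanc(\beta),\beta)|,|\tval(\beta,\tdesc(\beta))|)$ is not controlled under $\tlink$, because $\tlink$ can map a short edge to an arbitrarily long uniform run. But your proposed fix remains vague, and the amortization inequality you wrote down has the sign backwards.

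The paper's remedy is to alternate between \emph{two} potentials: after a $\tdown$ (hence before the next $\tlink$) use $\phi(\beta)$ as above; after a $\tlink$ (hence before the next $\tdown$) use instead
\[
\phi'(\beta)=\min\bigl(|\rle(\tval(\tanc(\beta),\beta))|,\,|\rle(\tval(\beta,\tdesc(\beta)))|\bigr).
\]
Then \cref{fct:link} together with \cref{lem:nodown,lem:oneup} yields $\phi'(\tlink(\beta))\ge\phi(\beta)-1$; and one checks directly from the implementation of $\tdown$ that if $\phi'(\beta)>1$ then $\tdown$ runs in constant time (it never invokes $\tlocate$) and $\phi(\beta')\ge\phi'(\beta)-1$. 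So every operation drops the potential by at most~$1$. This two-potential trick is exactly the missing piece: \cref{fct:link} compares an $\rle$-length at level $i-1$ to a \emph{plain} length at level $i$, so one must switch metrics at the $\tlink$ step.

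The direction of amortization is the reverse of what you wrote. An expensive $\tdown$ occurs only when the \emph{before}-potential satisfies $\phi'(\beta)\le 1$, and its cost is $O(1+\min(\phi(\beta'),\log|\coll|))$, i.e.\ bounded by the \emph{after}-potential: the potential \emph{increases} at expensive steps, not decreases. The bookkeeping is then: between two consecutive expensive $\tdown$s the potential must fall from $\phi(\beta')$ back to at most~$1$, and since each cheap step drops it by at most~$1$, at least $\phi(\beta')-1$ cheap steps intervene. Summing, the total expensive cost is at most $t$ plus the cost of the final expensive $\tdown$, which is capped at $\log|\coll|$. Your claim that the $\log|\coll|$ fallback fires at most $O(1)$ times is neither proved nor needed; only the \emph{last} expensive $\tdown$ is not amortized against subsequent cheap steps, and it alone contributes the additive $\log|\coll|$.
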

\begin{proof}
Let us define two potential functions.
To analyze situation after $\tdown$ (before $\tlink$) we use
 $$\phi(\beta)=\min(|\tval(\tanc(\beta),\beta)|,|\tval(\beta,\tdesc(\beta))|),$$
 while after $\tlink$ (before $\tdown$)
 $$\phi'(\beta)=\min(|\rle(\tval(\tanc(\beta),\beta))|,|\rle(\tval(\beta,\tdesc(\beta)))|).$$

We shall prove that no single operation decreases the potential by more than one
and that the slow, superconstant-time case of $\tdown$ is required only if the potential is
increased from $0$ or $1$ to at least a value proportional to the running time.

Let us analyze the $\tdown$ operation. First, suppose that $\phi'(\beta)>1$.
Then, $\beta$ is implicit and, since $|\rle(\tval(\beta,\tdesc(\beta)))|>1$,
the resulting node $\beta'$ is also an implicit ancestor of $\tdesc(\beta)$.
Moreover,
$$|\tval(\beta',\tdesc(\beta')|\ge |\rle(\tval(\beta',\tdesc(\beta')))|\ge |\rle(\tval(\beta,\tdesc(\beta)))|-1$$
and
$$|\tval(\tanc(\beta'),\beta')|\ge |\rle(\tval(\tanc(\beta'),\beta'))|\ge |\rle(\tval(\tanc(\beta),\beta))|,$$
i.e., $\phi(\beta')\ge \phi(\beta)-1$ as claimed.
Finally, recall that $\tdown$ works in constant time in this case.

Thus, it suffices to prove that the if $\tdown$ in \cref{lem:down} requires superconstant time,
it the running time of $\tdown$ is bounded by the potential growth.
 This is, however, a straightforward consequence of the formulation of \cref{lem:down}.

 Consequently, it suffices to analyze $\tlink$. We shall prove that $\phi'(\tlink(\beta))\ge \phi(\beta)-1$ for every $\beta\in \utrie_i$ and $i>0$.
 If $\beta$ is explicit, there is nothing to prove.
 Let $\alpha$ be the nearest descendant of $\beta$ which is a down-node, and let $\gamma$ be the lowest ancestor of $\beta$
 whose parent (in $\utrie_i$) is a down-node. Also, let $\alpha'=\tlink(\alpha)$, $\beta'=\tlink(\beta)$, and $\gamma'=\tlink(\gamma)$.
 By \cref{lem:nodown} applied to $\gamma$, there is no explicit inner node on the $\gamma'\leadsto \alpha'$ path.
 Thus, by the second part of \cref{fct:link}, we have
 $$|\rle(\tval(\tanc(\beta'),\beta'))|\ge|\rle(\tval(\gamma',\beta'))|\ge |\tval(\gamma,\beta)|\ge |\tval(\tanc(\beta),\beta)|-1$$
 and
 $$|\rle(\tval(\beta',\tdesc(\beta')))|\ge |\rle(\tval(\beta',\alpha'))|\ge |\tval(\beta,\alpha)|\ge |\tval(\beta,\tdesc(\beta))|.$$
 Consequently, $\phi'(\beta')\ge \phi(\beta)-1$, as claimed.
\end{proof}

Note that \cref{lem:structure,lem:st,lem:amort} immediately yield \cref{thm:trie_pointers}.

\subsection{Auxiliary Components for Answering Queries}\label{sec:trie_extra}
In this section we give a detailed description of auxiliary data structures stored with $\trie_0$
in order to provide constant-time answers to $\lcpop$ and $\compop$ queries.

First of all, for each $w\in \coll$ we store a reference to $\tleaf_0(w)$ so that given a handle of $w$
we can get to the corresponding terminal node in constant time.
Observe that $\lcpop(w_1,w_2)=\tdepth(\tlca(\tleaf_0(w_1),\allowbreak\tleaf_0(w_2)))$.
Since every explicit node stores its depth, answering $\lcpop$ queries reduces to $\tlca$ queries on $\trie_0$.
Consequently, we maintain for $T_0$ a data structure of Cole and Hariharan~\cite{Cole:2005} for LCA queries in dynamic trees.
It allows $O(1)$ time queries and supports constant-time internal node insertions and leaf insertions as one of the several possible update operations.

\newcommand{\tbeg}{\mathtt{beg}}
\newcommand{\tend}{\mathtt{end}}

Supporting $\compop$ queries is a little bit more challenging. For this, we treat $\utrie_0$ as an ordered trie (based on the order on $\Sigma$)
and we maintain an Euler tour the trie. It is formally a list $L$ in which every explicit node $\alpha$ corresponds to exactly two elements $\tbeg(\alpha)$ and $\tend(\alpha)$. The order on $L$ is such that a description of a subtree rooted at $\alpha$ starts with $\tbeg(\alpha)$, ends with $\tend(\alpha)$ and between these elements contains the description of subtrees of $\alpha$, ordered from left to right.
In this setting, $\tval(\alpha)<\tval(\beta)$ if and only if $\tbeg(\alpha)$ precedes $\tbeg(\beta)$ in $L$.
Hence, in order to answer $\compop$ queries it suffices to store an order-maintenance data structure~\cite{Dietz:1987,Bender:2002} for $L$.

Such a data structure allows inserting a new element to the list if its predecessor or successor is provided.
Thus, when we add a new node to $\trie_0$, we need to determine such a predecessor or successor. For this, in each explicit node of $\trie_0$
we maintain an AVL tree storing references to edges going to the children. The AVL is ordered by the first characters of edge labels.
Hence, whenever we modify the hash table storing these references, we also need to update the AVL tree. This happens $O(1)$ times
for each node insertion, and thus takes $O(\log |\coll|)$ time. Now, for a newly added node $\beta$
the predecessor of $\tbeg(\beta)$ can be determined as follows: If $\beta$ has a preceding sibling $\alpha$, it is $\tend(\alpha)$.
Otherwise, it is $\tbeg(\gamma)$ where $\gamma$ is the parent of $\beta$. A successor of $\tend(\beta)$ can be retrieved analogously.
Determining the preceding or succeeding sibling of $\beta$ is implemented using the AVL tree of children of $\gamma$
and takes $O(\log |\coll|)$ time.
Consequently, the component for $\compop$ allows $O(1)$-time queries and requires $O(\log |\coll|)$ time per update of $\trie_0$.

The discussion above proves \cref{lem:trie_extra}.

\subsection{Conclusions}\label{sec:trie_conc}

\begin{theorem}\label{thm:persistent_compare}
The data structure described in \cref{sec:collection} can be augmented to support $\lcpop$ and $\compop$
queries in worst-case constant time (without failures). The running time of updates $\makeop$, $\concop$, and $\splitop$ increases by a constant factor only.
\end{theorem}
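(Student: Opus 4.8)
The plan is to attach the trie machinery of \cref{sec:om} to the collection of \cref{thm:data_structure} and then check three things: that every string-creating operation can hand its result to $\insertop$, that the resulting overhead is only a constant factor, and that $\lcpop$ and $\compop$ reduce to constant-time primitives on $\trie_0$. I would start by noting that the grammar $\grammar$ built in \cref{sec:collection} grows monotonically and satisfies \cref{inv:consistency}, which is exactly the setting assumed by \cref{lem:trie_insert,lem:trie_extra,thm:trie_pointers}. So I augment the data structure to additionally maintain all the tries $\trie_i$, the Cole--Hariharan dynamic-LCA structure on $\trie_0$, the order-maintenance structure for the Euler tour of $\trie_0$, and a handle-indexed table of references to the leaves $\tleaf_0(w)$ for $w\in\coll$.

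For the update side, recall that each of $\makeop$, $\concop$, $\splitop$ (by \cref{lem:makeop,lem:splitop,lem:concop}) finishes by producing a signature $s\in\sigs(\grammar)$ representing the new string $w$; I simply append a call $\insertop(s)$ afterwards. By \cref{lem:trie_insert} this maintains all the $\trie_i$ and the auxiliary components correctly and runs in $O(\depth(w)+\log|\coll|)$ time. Exactly as in the proof of \cref{thm:data_structure}, I abort the update as soon as $\depth(w)$ is seen to exceed $8(c\ln t+\ln n)$ for a suitably large constant $c$; by \cref{lem:small_depth} this keeps the failure probability polynomially small and lets me treat $\depth(w)$ as $O(\log n+\log t)$, and since $|\coll|$ is bounded by the number of previous updates, $\log|\coll|=O(\log t)$. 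Hence the additional work is $O(\log n+\log t)$ per update, that is, a constant-factor blow-up over the bounds of \cref{thm:data_structure}.

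For the queries, given the handles I fetch $\tleaf_0(w_1)$ and $\tleaf_0(w_2)$ in $O(1)$ time. I answer $\lcpop(w_1,w_2)$ via the identity $\lcpop(w_1,w_2)=\tdepth(\tlca(\tleaf_0(w_1),\tleaf_0(w_2)))$ established in \cref{sec:trie_extra}: the Cole--Hariharan structure returns the LCA in worst-case constant time and every explicit node stores its depth. For $\compop$ I use that $w_1<w_2$ iff $\tbeg(\tleaf_0(w_1))$ precedes $\tbeg(\tleaf_0(w_2))$ in the maintained Euler tour (and $w_1=w_2$ iff the handles agree), decided in $O(1)$ time by the order-maintenance structure. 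Neither query inspects the random bits $\hs_i(\cdot)$, so both are deterministically correct and never fail, which is the ``without failures'' part of the statement. Assembling these observations with \cref{thm:data_structure,lem:trie_insert,lem:trie_extra} yields the theorem.

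The step I expect to be the only real subtlety is the bookkeeping around when $\insertop(s)$ is invoked: \cref{thm:trie_pointers} requires $\trie_i=\trie_i(\coll)$ at the start of each trie-pointer operation, so the call must happen after the parse-tree side of the update has finished modifying $\grammar$; and one must observe that the single depth-abort already used on the parse-tree side simultaneously caps the cost of $\insertop$, so that no separate failure analysis is needed. Everything else is a routine combination of the cited results.
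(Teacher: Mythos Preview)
Your proposal is correct and follows essentially the same approach as the paper: call $\insertop$ after each update and observe that its $O(\depth(w)+\log|\coll|)$ cost is absorbed by the existing $O(\log n+\log t)$ (resp.\ $O(n+\log t)$) bounds, while queries are the constant-time primitives of \cref{lem:trie_extra}. The paper's own proof is a one-line version of exactly this; your additional remarks about the depth-abort simultaneously bounding the $\insertop$ cost and about queries being deterministic are valid elaborations (just note that $\splitop$ produces two new strings, so two $\insertop$ calls are needed there).
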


\begin{proof}
After each update we perform $\insertop$ to insert the newly created string to the order-maintenance components.
Its running time $O(\log |\coll|)$ is dominated by the $O(\log t)$ term in the update times.
\end{proof}

\section{Pattern Matching in a Set of Anchored Strings}\label{sec:anchored}
In this short section we relate 2D range reporting with pattern matching.
We adapt this rather folklore technique to our setting,
in which strings can be compared in constant time,
but do not have integer identifiers consistent with the lexicographic order.
Alstrup et al.~\cite{Alstrup} used very similar tools.
Subsequent results of Mortensen~\cite{Mortensen:2003}, however, let us slightly improve the update times.

\begin{definition}
  We call a pair $(s_1,s_2)$ of strings from $\Sigma^+$ an \emph{anchored}
  string.
  An anchored string $(p_1,p_2)$
  is called a \emph{substring}
  of another anchored string $(s_1,s_2)$ if and only if
  $p_1$ is a suffix of $s_1$ and $p_2$ is a prefix of $s_2$.
\end{definition}

Anchored strings can also be seen as strings containing exactly
one special character $|\notin \Sigma$ -- an anchor.
Thus, a pair $(s_1,s_2)$ can be seen as a string $s_1|s_2$ and
$(p_1,p_2)$ is a substring of $(s_1,s_2)$ if and only if
$p_1|p_2$ is a substring of $s_1|s_2$.

Assume that we have a set $\coll$ of strings such that for any two elements $w_1,w_2$
of $\coll$ we can in constant time find out if $w_1$ is lexicographically smaller than $w_2$
and whether $w_1$ is a prefix of $w_2$ (the latter check can be reduced to verifying if $\lcpop(w_1,w_2)=|w_1|$).
Moreover, suppose that $\coll$ is closed under string reversal, i.e., $w^R\in \coll$ for every $w\in \coll$.
We wish to maintain a dynamic multiset of anchored strings $A\subseteq \coll\times \coll$
so that we can find
interesting elements of $A$ containing some given anchored string.
More formally, we want to support the operation $\anchfind(p_1,p_2,occ)$,
where $p_1,p_2\in \coll$, reporting $occ$ elements $a\in A$ such that $(p_1,p_2)$
is a substring of $a$.
If there is less than $occ$ such elements, $\anchfind(p_1,p_2,occ)$ reports
all of them.
In order to efficiently distinguish elements of $A$, we use a unique integer key
for each of them. Keys are provided while we perform an insertion and are later used
to identify the element. In particular, they are reported by $\anchfind$
and deletion uses them as an argument.

\begin{theorem}[Anchored index]\label{thm:ai}
  Let $n$ be the size of the multiset $A$.
  There exists a 
data structure supporting the $\anchfind(*,*,k)$ operation
in worst-case time $O(\log{n}+k)$ and insertion or deletion
to the set $A$ in worst-case time $O(\log{n})$.
\end{theorem}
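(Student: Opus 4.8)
The plan is to reduce the anchored index problem to a standard 4-sided (or 3-sided) 2D orthogonal range reporting query, using the comparison and LCP machinery available on $\coll$.

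\textbf{Setup.} First I would observe that an anchored string $(s_1,s_2)$ is a ``substring'' of a stored pair $(t_1,t_2)$ exactly when $s_1$ is a \emph{suffix} of $t_1$ and $s_2$ is a \emph{prefix} of $t_2$. Since $\coll$ is closed under reversal, ``$s_1$ is a suffix of $t_1$'' is equivalent to ``$s_1^R$ is a prefix of $t_1^R$''. So both conditions become prefix conditions, and a prefix condition of the form ``$u$ is a prefix of $v$'' is exactly the condition that $v$ lies in the lexicographic interval $[u, u^{+}]$, where $u^{+}$ denotes $u$ with its last character incremented — equivalently, $v$ lies between $u$ and the smallest string that is lexicographically larger than $u$ and is not an extension of $u$. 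Crucially, to test membership in this interval we only need to compare $v$ with strings in $\coll$, and the interval endpoints can be represented via the stored string $u$ itself together with an ``include extensions'' flag; using $\compop$ and the prefix test (reduced to $\lcpop(u,v)=|u|$) all such comparisons run in $O(1)$.

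\textbf{Coordinates and the range structure.} I would then assign to each stored pair $(t_1,t_2)\in A$ the planar point $(t_1^R,\, t_2)$, where the first coordinate is interpreted in the lexicographic order on $\coll$ restricted to the first components (reversed), and the second coordinate in the lexicographic order on the second components. Since $\coll$ does not come with integer identifiers consistent with lexicographic order, I would maintain two order-maintenance data structures (Dietz--Sleator \cite{Dietz:1987,Bender:2002}) — one for the reversed first components appearing in $A$, one for the second components — so that each coordinate can be compared in $O(1)$ and a new coordinate inserted given a neighbor in $O(\log n)$ time (the neighbor is found by a binary search that uses $O(\log n)$ constant-time comparisons, or by reusing the trie-based order component). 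A query $\anchfind(p_1,p_2,k)$ becomes: report $k$ points of $A$ whose first coordinate lies in $[p_1^R, (p_1^R)^{+})$ and whose second coordinate lies in $[p_2, p_2^{+})$ — a 4-sided range reporting query. Here the query rectangle's corners are given implicitly by $p_1^R$ and $p_2$, and locating them in the two order-maintenance structures costs $O(\log n)$. Then I would plug in Mortensen's dynamic 2D orthogonal range reporting structure \cite{Mortensen:2003}, which supports such queries in $O(\log n + k)$ time and updates in $O(\log n)$ time (for 4-sided queries; a 3-sided variant would also suffice after the standard trick of treating a half-open interval as two-sided). The unique integer keys attached to the inserted anchored strings are stored alongside the points and simply returned by the reporting procedure, and deletions are performed by key via an auxiliary dictionary mapping keys to the corresponding point handles.

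\textbf{Obstacle.} The main subtlety is that Mortensen's range structure — like essentially all such structures — assumes coordinates are integers (or at least come from a totally ordered universe that the structure can probe), whereas our coordinates are strings with only a constant-time comparison oracle and no precomputed ranks. The resolution is exactly the order-maintenance layer: an order-maintenance label \emph{is} an integer-like token supporting $O(1)$ comparison and $O(\log n)$ insertion, so by composing the two structures we get genuine integer coordinates at the cost of an extra $O(\log n)$ per update, which is absorbed into the claimed bound. One must check that inserting a brand-new string as a coordinate (finding where it goes among existing coordinates of $A$) costs only $O(\log n)$: this follows because we can binary-search the sorted list maintained inside the range structure, each comparison being a constant-time $\compop$/$\lcpop$ call. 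With that in place, all four operations meet the stated bounds, proving the theorem.
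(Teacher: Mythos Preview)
Your reduction is the same as the paper's: map each $(s_1,s_2)\in A$ to the point $(s_1^R,s_2)$ and answer $\anchfind(p_1,p_2,k)$ by a 4-sided orthogonal range-reporting query, invoking Mortensen's dynamic structure for the latter. Two points are worth noting.

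First, the order-maintenance layer is unnecessary, and your justification for it is off. The paper observes that Mortensen's data structure only requires a constant-time comparison oracle on each coordinate, not integer ranks; since you already have $O(1)$-time $\compop$ and the prefix test via $\lcpop$, you can feed the strings directly to Mortensen. (For the query endpoints the paper introduces a fresh symbol $\lastsymb$ larger than everything in $\Sigma$ and uses $p\lastsymb$ as the right end of the interval---this is the clean version of your ``$u^+$ / include-extensions flag'' device.) Your claim that order-maintenance produces ``genuine integer coordinates'' is shaky anyway: such labels are subject to relabelling, so they are not stable integers a rank-based structure could store; what they provide is $O(1)$ comparison, which you already had.

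Second, you have not handled the multiset aspect. Distinct keys may carry the same anchored string $(s_1,s_2)$ and hence map to the same point, whereas Mortensen's structure stores a \emph{set} of points. The paper resolves this by keeping, for each distinct point $q$, a list $\mathcal{L}_q$ of all keys mapped to $q$; the point $q$ is inserted into the range structure only when $|\mathcal{L}_q|$ first becomes $1$ and deleted when it drops back to $0$, and at reporting time one iterates over $\mathcal{L}_q$ for each reported $q$. This is an easy patch but needs to be stated.
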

\begin{proof}
We reduce our case to the problem of maintaining the set
two-dimensional points on plane, so that
we can report a points within some
orthogonal rectangle.
Mortensen~\cite{Mortensen:2003} proposed a data structure
for the problem of dynamic two-dimensional orthogonal
range reporting with worst-case update time of $O(\log{n})$
and query time of $O(\log{n}+k)$, where
$n$ is the size of the currently stored set of points.
Note that the data structure only assumes comparisons on the coordinates
with worst-case $O(1)$ time overhead, as opposed to an explicit
mapping of coordinates to integers.

In the reduction we introduce a new letter $\lastsymb$ larger than all letters in $\Sigma$.
We observe that for two strings $w,w'\in \coll$ we can efficiently lexicographically compare $w$ and $w\lastsymb$ with $w'$ and $w'\lastsymb$.
This is because the order between $w\lastsymb$ and $w'\lastsymb$ is the same as between $w$ and $w'$,
while $w\lastsymb$ is smaller than $w'$ whenever $w$ is smaller than $w'$ but is not a prefix of $w'$.

The reduction of anchored strings to two-dimensional
points works as follows: the element of $(s_1,s_2)\in A$
is mapped to a point $(s_1^R,s_2)$.
In order to answer $\anchfind(p_1,p_2,k)$, where $p_1,p_2\in \coll$, we
query the orthogonal range data structure for
$k$ points in a rectangle
$[p_1^R,p_1^R\lastsymb]\times [p_2,p_2\lastsymb]$.
Clearly, $p_1^R$ is a prefix of $s_1^R$ if $s_1^R$ is both
lexicographically not smaller than $p_1^R$ and not greater than $p_1^R\lastsymb$.
Similarly, $p_2$ is a prefix of $s_2$ if and only if
$s_2$ is not smaller than $p_2$ and not greater than $p_2\lastsymb$.

As the data structure of Mortensen only supports sets of points,
we also maintain a mapping $\mathcal{M}$ from integer keys to
points, and for each distinct point $q=(s_1^R,s_1)$ we maintain
a list of keys $\mathcal{L}_q$ mapped to $q$.
We insert a point $q$ into the range reporting structure only
when $|\mathcal{L}_q|=1$ after the insertion
and delete $q$ when $\mathcal{L}_q$
becomes empty.

For each point $q$ reported by the range-reporting structure
we actually report all keys from $\mathcal{L}_q$, which does
not increase the asymptotic reporting time of $O(\log{n}+k)$.
\end{proof}

\section{Pattern Matching in a Non-Persistent Dynamic Collection}\label{sec:pm}
In this section we develop a dynamic index, i.e., a dynamic collection
of strings, which supports pattern matching queries.
As opposed to the data structure of \cref{sec:collection}, the dynamic index is not persistent,
i.e., $\concop$ and $\splitop$ destroy their arguments.
Formally, we maintain a multiset $\coll$ of strings that is initially empty.
The index supports the following updates:
\begin{itemize}
  \item $\makeop(\str)$, for $\str\in \Sigma^+$, results in $\coll:=\coll\cup \{\str\}$.
      \item $\dropop(\str)$, for $\str\in \coll$, results in $\coll:=\coll\setminus \{\str\}$.
   \item $\concop(\str_1,\str_2)$, for $\str_1,\str_2\in \coll$, results in $\coll:=\coll\cup \{\str_1\str_2\}\setminus \{w_1,w_2\}$.
  \item $\splitop(\str,k)$, for $\str\in \coll$ and $1\leq k<|\str|$, results in
    $\coll:=\coll\cup \{\str[..k], \str[(k + 1)..]\}\setminus\{w\}$.
\end{itemize}

Again, strings in $\coll$ are represented by handles so that input to each operation takes constant space.
Since $\coll$ is a multiset, each string created by an update is assigned a fresh handle.
Hence, as opposed to the data structure of \cref{sec:collection}, equal strings may have different handles.

The data structure supports pattern matching queries for (external) patterns $p$.
More precisely, $\findop(p)$ for $p\in \Sigma^+$ determines
if $p$ occurs in some string $w\in \coll$.
Optionally, it also reports all the occurrences as \emph{stubs}, which can be turned to an actual occurrence
(a handle of a particular $w\in \coll$ plus a position within $w$) at an extra cost.
We could modify the implementation of the $\findop$ operation to report, for a given threshold $\tau$,
no more than $\tau$ arbitrary occurrences. However, the choice of the occurrences reported would depend on the internal randomness
of our data structure. Hence, one would need to make sure that future operations do not depend on this choice.

In the description we assume that $n$ is the total input size of the updates made so far.
Since $\splitop$ and $\makeop$ destroy their arguments, $n$ is also an upper bound on the total
length of strings in $\coll$ (both after and before the current update).
From \cref{lem:small_depth} it follows that a string of length at most $n$ has $\depth(\str) = O(\log n)$ with high probability.
Thus, our update algorithms fail as soon as they encounter a string of larger depth.
Since the running time of each update is $\poly(n)$, each update is successful with high probability.

%

Our data structure consists of separate components which we use for long (longer than $\tau$) and for short patterns (shorter than $\tau$).
The data structure for long patterns described in \cref{sec:long} is similar to the persistent data structure of \cref{sec:collection,sec:om}.
Since our approach in this case results in an additive overhead of $\tau := \log^2 n$ in the query, for short patterns it is
more efficient to use a tailor-made solution, which we present in \cref{sec:short}.
By maintaining both data structures and querying the appropriate one, we obtain a running time, which is only \emph{linear} in the length of the queried string.

\newcommand{\lists}{\mathcal{L}}

\subsection{Permanent Pointers to Characters of $\coll$}\label{sec:lists}
A common ingredient of our solutions for both long and short patterns is a data structure $\lists(\coll)$
which provides pointers to specific characters of strings $w\in \coll$.
These pointers are preserved in the natural way while $\concop$ and $\splitop$ operations are performed.
This is actually the reason why $\concop$ and $\splitop$ destroy their arguments: pointers to characters
are reused.

\begin{lemma}\label{lem:lists}
We can maintain a data structure $\lists(\coll)$ providing pointers to characters of strings $w\in \coll$.
This data structure supports the following updates: $\makeop(w)$ and $\dropop(w)$ in $O(|w|)$ time,
$\splitop(w)$ in $O(\log |w|)$ time, and $\concop(w_1,w_2)$ in $O(\log|w_1w_2|)$ time.
Moreover, given a handle to $w\in \coll$ and a position $i$, $0<i\le |w|$, in $O(\log |w|)$ time it returns
a pointer to $w[i]$. Converting a pointer into a handle and a position, also works in $O(\log |w|)$ time.
\end{lemma}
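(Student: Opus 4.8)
The plan is to realize $\lists(\coll)$ via a standard balanced-tree-of-leaves representation of each string, augmented so that the tree pointers can be shared (cut and concatenated) rather than rebuilt. First I would represent each $w\in\coll$ as a balanced binary tree $B_w$ (e.g.\ a weight-balanced tree or a treap) whose leaves, in left-to-right order, are exactly the characters $w[1],\ldots,w[|w|]$, and where every internal node stores the number of leaf descendants. A pointer to $w[i]$ is simply a reference to the $i$-th leaf of $B_w$; the identity of the leaf object never changes under the operations, which is what gives permanence. For $\makeop(w)$ we build $B_w$ bottom-up from $|w|$ fresh leaf objects in $O(|w|)$ time, and for $\dropop(w)$ we traverse $B_w$ and discard all its nodes, again $O(|w|)$. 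For $\splitop(w)$ at position $k$ we perform the usual split of the balanced tree into $B_{w[..k]}$ and $B_{w[(k+1)..]}$, reusing all leaf objects, in $O(\log|w|)$ time; for $\concop(w_1,w_2)$ we join $B_{w_1}$ and $B_{w_2}$ into a single balanced tree in $O(\log|w_1w_2|)$ time. Both split and join are textbook operations on weight-balanced trees or treaps with the stated time bounds.

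Next I would handle the two directions of the pointer/position correspondence. Given a handle to $w$ and a position $i$, I descend from the root of $B_w$, at each node going left or right according to the stored subtree sizes; this reaches $w[i]$ in $O(\mathrm{height}(B_w))=O(\log|w|)$ time. Conversely, given a leaf pointer I need to recover which string $w$ it currently belongs to and its position within $w$; for this I walk from the leaf up to the root, accumulating the number of leaves to its left (adding the left-sibling subtree size whenever the step is from a right child), which yields the position in $O(\log|w|)$ time, and the root carries (a reference to) the handle of the string it currently represents. The only subtlety is keeping the ``root $\to$ handle'' association correct after split and join: after each such operation there are new roots, so I would update the $O(1)$ new root nodes (or rather, re-point the handle records at them) in constant extra time, which is absorbed into the $O(\log)$ cost already charged.

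The main obstacle is making pointers genuinely \emph{permanent} while still supporting $O(\log)$ split and join: a plain treap/weight-balanced tree already reuses leaf objects under split and join, so this is essentially free, but one must be careful that the ``walk to the root'' used by the pointer-to-position query still terminates quickly. This requires that the trees remain balanced in the worst case (for a treap the $O(\log)$ bound is only with high probability, which matches the paper's overall guarantees; alternatively one uses deterministic weight-balanced trees to get worst-case bounds) and that parent pointers are maintained along every split/join, which standard implementations do at no asymptotic cost. A secondary point is that when a node changes parent during a rotation or a join, no stored information at that node needs updating (only subtree sizes along the spine, which are already updated in the $O(\log)$ work), so leaf objects truly never need to be touched except at creation and deletion; hence the $O(|w|)$ cost is incurred only by $\makeop$ and $\dropop$, as claimed. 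I would conclude by noting that all time bounds are worst-case for the deterministic weight-balanced variant (and w.h.p.\ for the treap variant), which suffices for the applications in \cref{sec:long,sec:short}.
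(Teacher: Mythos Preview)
Your proposal is correct and essentially coincides with the paper's own proof: the paper represents each $w\in\coll$ by an AVL tree (with nodes in in-order correspondence to $w[1],\ldots,w[|w|]$ and augmented with subtree sizes), implements $\makeop$/$\dropop$ by building/destroying the tree, $\splitop$/$\concop$ via standard AVL split/join, and realizes the pointer--position conversions by root-to-node descent and node-to-root ascent exactly as you describe. The only inessential differences are your use of leaves (rather than all nodes) to store characters and your choice of weight-balanced trees or treaps instead of AVL trees.
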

\begin{proof}
We store an AVL tree for each $w\in \coll$ and a mapping between the handle of strings $w\in \coll$ and roots of the trees. The $i$-th (in the in-order traversal) node of the tree corresponding to $w\in \coll$
represents $w[i]$. After augmenting the trees so that each node stores the size of its subtree, we
can determine the node representing $w[i]$ in $O(\log |w|)$ time.

For $\makeop(w)$ and $\dropop(w)$ we simply create or destroy a tree corresponding to $w$.
Operations $\splitop$ and $\concop$ reduce to splitting and joining AVL trees, which can be done in logarithmic time.
The former also requires finding the $i$-th node, with respect to which we perform the split.

We implement pointers to characters simply as pointers to the corresponding nodes of the AVL tree.
This way determining the handle and the position only requires traversing a path from the pointed node to the root of the tree,
and determining a pointer to the given $w[i]$ reduces to retrieving the $i$-th node of the AVL tree
corresponding to $w\in \coll$.
\end{proof}

We shall use pointers provided by $\lists(\coll)$ to implement occurrence stubs for $\findop$.
More precisely, a stub is going to be a pointer and a relative position with respect to that pointer.
By \cref{lem:lists}, an occurrence in $w\in \coll$ represented by such a stub can be identified in $O(\log |w|)$ time.

\subsection{Long Patterns}\label{sec:long}

We start with describing the main ideas of our solution. In the description we assume $|p|>1$ but we shall actually
use this approach for $|p|>\log^2 n$ only.
Our solution shares some ideas with an index for grammar-compressed strings by Claude and Navarro~\cite{DBLP:journals/fuin/ClaudeN11}
and a dynamic index in compressed space by Nishimoto et al.~\cite{DBLP:journals/corr/NishimotoIIBT15}.

\subsubsection{Combinatorial Background}
Let us consider an occurrence $w[i..j]$ of a pattern $p$ in a string $w\in \coll$.
Following the idea of~\cite{DBLP:journals/fuin/ClaudeN11}, we shall associate it with the lowest node $v$ of $\stree(w)$ which represents a fragment
containing $w[i..j]$ (called the \emph{hook} of the occurrence) and with its signature $s=\ussig(v)$.
Note that the considered occurrence of $p$ in $w$ yields an occurrence of $p$ in $\sstr(s)$
and if $\sstr(s)$ is partitioned according to the production corresponding to $s$, the occurrence
is not contained in any fragment of $\sstr(s)$ created this way.
This yields a non-trivial partition $p=p_1p_2\ldots p_m$ of the pattern.
More formally, if $s\to s_1s_2\ldots s_k$, then for some
$j\leq k-m+1$, $p_1$ is a suffix of $\sstr(s_j)$,
for each $d=1,\ldots,{m-2}$, $\sstr(s_{j+d})=p_{1+d}$
and $p_{m}$ is a prefix of $\sstr(s_{j+m-1})$.
Such an occurrence of $p$ in $\sstr(s)$ is called an \emph{anchored occurrence} of $p$ in $s$
and the anchored string $p_1|p_2\ldots p_m$ is called the \emph{main anchor} of the occurrence.
The main idea of queries to our data structure is to identify the hooks of each occurrence.
For this, we determine all symbols $s$ in $\grammar(\coll)$ which admit an anchored occurrence of $p$.

For each signature $s$ with $\slev(s)>0$ let us define an anchored string $\sanch(s)$ as follows:
if $s=(s_1,s_2)$, then $\sanch(s)=\sstr(s_1)|\sstr(s_2)$, and if $s=(s',k)$, then $\sanch(s)=\sstr(s')|\sstr(s')^{k-1}$.
\begin{lemma}\label{lem:relanch}
  Let $s\in \sigs(\grammar)$.
A pattern $p=p_\ell p_r$ has an anchored occurrence in $s$ with main anchor $p_\ell|p_r$
if and only if $p_\ell|p_r$ is a substring of $\sanch(s)$. If so, the number of anchored occurrences and their relative positions (within $\sstr(s)$)
can be determined in constant time.
\end{lemma}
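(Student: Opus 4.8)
The plan is to prove \cref{lem:relanch} by directly analyzing the two possible forms of a signature $s$ with $\slev(s)>0$ and relating anchored occurrences of $p$ in $\sstr(s)$ to substring occurrences of the anchor $p_\ell|p_r$ in $\sanch(s)$. First I would unfold the definition: an anchored occurrence of $p=p_\ell p_r$ in $s$ with main anchor $p_\ell|p_r$ is an occurrence of $p$ in $\sstr(s)$ which, when $\sstr(s)$ is split according to the production for $s$, straddles the first block boundary it can — i.e., $p_\ell$ is a non-empty suffix of the string generated by the block immediately left of the boundary and $p_r$ a non-empty prefix of what follows, with the intermediate blocks exactly matching. The key observation is that because $p_\ell$ and $p_r$ are both non-empty, an anchored occurrence must straddle \emph{some} boundary between two consecutive children in the production $s\to s_1\cdots s_k$, and the position of that boundary is what we need to enumerate.

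Next I would split into the two cases dictated by the production type. If $s\to s_1 s_2$ (a \compop-type signature), then $k=2$, there is exactly one internal boundary, and an anchored occurrence with anchor $p_\ell|p_r$ exists iff $p_\ell$ is a suffix of $\sstr(s_1)$ and $p_r$ is a prefix of $\sstr(s_2)$, i.e., iff $p_\ell|p_r$ is a substring of $\sstr(s_1)|\sstr(s_2)=\sanch(s)$; there is then exactly one such occurrence and its position is $|\sstr(s_1)|-|p_\ell|+1$, computable in $O(1)$ using stored lengths. If $s\to (s',q)$ with $q\ge 2$ (an \rle-type signature), then $\sstr(s)=\sstr(s')^q$, and every internal boundary lies between two copies of $\sstr(s')$; an anchored occurrence straddling the $t$-th boundary ($1\le t\le q-1$) with anchor $p_\ell|p_r$ requires $p_\ell$ to be a suffix of $\sstr(s')$, then some number of full copies of $\sstr(s')$ in the middle, then $p_r$ a prefix of a trailing copy. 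The crucial reduction is that such a straddling pattern exists iff $p_\ell|p_r$ is a substring of $\sstr(s')|\sstr(s')^{q-1}=\sanch(s)$: the "middle" part of the anchored occurrence is necessarily a run of complete copies of $\sstr(s')$ (since each $p_{1+d}$ for $1\le d\le m-2$ must equal $\sstr(s_{j+d})=\sstr(s')$), so the whole anchor $p_\ell|p_r$ factors through $\sanch(s)$, and conversely any substring occurrence of $p_\ell|p_r$ in $\sanch(s)$ can be shifted to land after any of the remaining boundaries. Counting: if $p_\ell|p_r$ occurs in $\sanch(s)$ at all, let $c$ be the number of full copies of $\sstr(s')$ strictly between $p_\ell$ and $p_r$ (which is forced, hence unique and $O(1)$-computable from $|p|$, $|p_\ell|$, $|\sstr(s')|$); then the number of boundaries the occurrence can be slid across is $q-1-c$, giving exactly $q-1-c$ anchored occurrences with this anchor, at relative positions forming an arithmetic progression with common difference $|\sstr(s')|$ — again all derivable in constant time from the stored lengths. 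Taking the two cases together yields the biconditional and the constant-time claim.

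The main obstacle I anticipate is the \rle{} case bookkeeping: one must be careful that the definition $\sanch(s)=\sstr(s')|\sstr(s')^{q-1}$ captures \emph{all} anchored occurrences and not just those straddling the first boundary, and that the "no more than one distinct anchor per occurrence" aspect is handled — i.e., that the map from anchored occurrences to their main anchors is well behaved. The subtlety is that a single pattern $p$ could in principle be split as $p_\ell p_r$ in several ways for the same $s$; but once $s$ is fixed, the block structure of $\sstr(s)$ forces $p_\ell$ to be exactly the suffix of $p$ matching from the boundary, so for each boundary position the factorization is determined, and distinct factorizations correspond to genuinely different (non-overlapping-block) situations. I would state this carefully and note it reduces to: in $\sstr(s')^q$, an occurrence of $p$ that is not contained in a single copy straddles a boundary and then, reading right-to-left from that boundary, determines $p_\ell$. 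The remaining steps — extracting positions via the stored $\slength(\cdot)$ values and simple arithmetic — are routine and I would not grind through them; the lemma then feeds into building the anchored index of \cref{thm:ai} over the $\sanch(s)$ anchors.
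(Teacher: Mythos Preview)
Your proposal is correct and follows essentially the same approach as the paper: a case split on the production type of $s$, with the pair case yielding a single occurrence at position $1+|\sstr(s_1)|-|p_\ell|$, and the power case $s\to(s',q)$ yielding $q-m+1$ occurrences in arithmetic progression (your $q-1-c$ with $c=m-2$). The ``obstacle'' you anticipate about $\sanch(s)$ capturing boundaries beyond the first is handled exactly as you suggest (and as the paper does): once $p_\ell|p_r$ is a substring of $\sstr(s')|\sstr(s')^{q-1}$, the $m$ it determines immediately gives the count $q-m+1$ and the starting positions; there is no further subtlety.
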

\begin{proof}
Note that if $p_\ell|p_r$ is a substring of $\sanch(s)=w_\ell|w_r$, then the corresponding occurrence of $p$ in $\sstr(s)$ is clearly anchored
with main anchor $p_\ell|p_r$.
Thus, let us suppose that $p$ has an anchored occurrence in $s$ with main anchor $p_\ell|p_r$ and let $p_1p_2\ldots p_m$ be the partition of $p$ this occurrence yields ($p_1=p_\ell$, $p_2\ldots p_m=p_r$).
Let us consider two cases.
First, if $s=(s_1,s_2)$, then $m=2$ and $p_1|p_2 = p_\ell|p_r$ is clearly a substring of $\sstr(s_1)|\sstr(s_2)=w_\ell|w_r$.
The starting position is $1+|w_\ell|-|p_\ell|$.

Next, suppose that $s=(s',k)$ and let $w'=\sstr(s')$. Note that $m\le k$, $p_2=\ldots=p_{m-1}=w'$, $p_1$ is a suffix of $w'$, and $p_m$ is a prefix of $w'$. Consequently, $p_1|p_2\ldots p_m = p_\ell | p_r$ is a substring of $w'|w'^{m-1}$ and thus of $w'|w'^{k-1}=w_\ell|w_r$.
However, there are actually $k-m+1$, i.e., $1+\lfloor{\frac{|w_r|-|p_r|}{|w_\ell|}}\rfloor$,
occurrences which start at subsequent positions $1+i|w_\ell|-|p_\ell|$ for $1\le i \le k-m+1$.
\end{proof}

\paragraph{Potential Anchors}
\cref{lem:relanch} reduces pattern matching for $p$ to pattern matching of anchored strings $p_\ell|p_r$ (with $p=p_\ell p_r$)
in a set $\sanch(\ussig(v))$ for $v\in \stree(w)$ and $w\in \coll$. What is important, we do not have to consider all $|p|-1$ possible anchored strings that can be obtained by inserting an anchor into $p$.
As stated in the following Lemma, only a few such anchored patterns are necessary.
An analogous observation appeared in a recent paper by Nishimoto et al.~\cite{DBLP:journals/corr/NishimotoIIBT15}.

\begin{lemma}\label{lem:maina}
Consider the run-length encoding $s_1^{\alpha_1}\cdots s_k^{\alpha_k}$ of a context-insensitive decomposition $D$ of a pattern $p$.
If $p=p_\ell p_r$ such that $p_\ell|p_r$ is a main anchor for an anchored occurrence of $p$ in some $s\in\sigs(\grammar)$, then $p_\ell = s_1$ or $p_\ell=s_1^{\alpha_1}\cdots s_i^{\alpha_i}$ for some $i\in\{1,\ldots,k-1\}$.
\end{lemma}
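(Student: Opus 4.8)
The plan is to combine \cref{lem:relanch} with the combinatorial properties of context-insensitive decompositions established in \cref{sec:contexti}. Suppose $p=p_\ell p_r$ is a main anchor for an anchored occurrence of $p$ in some $s\in\sigs(\grammar)$, and let $v$ be the hook of the corresponding occurrence inside $\stree(\sstr(s))$, so that $\ussig(v)=s$ and the occurrence of $p$ in $\sstr(s)$ straddles the boundary between the fragments of $\sstr(s)$ cut out by the production for $s$. The key observation is that this occurrence of $p$ naturally identifies a node structure inside $\ustree(p)$: the boundary inside $p$ where the anchor sits corresponds to a boundary between two consecutive symbols of $\cshrink{\slev(s)-1}(p)$ (or, when $s$ is a power signature, between consecutive copies at an even lower level), because $\shrink{}$ places block boundaries solely on the basis of the symbols at those positions, and these positions are the same whether we look at $p$ standing alone or as a factor of $\sstr(s)$.

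First I would make precise the statement that the split point $|p_\ell|$ inside $p$ is a boundary of the uncompressed parse tree $\ustree(p)$ at some level $l=\slev(s)-1$ (the level just below that of $s$), using \cref{lem:close} applied to the right extension $\eps\edot p\edot y$ and left extension $x\edot p\edot\eps$ that together realize the occurrence of $p$ inside $\sstr(s)$; this is exactly the situation handled by \cref{lem:lr} and it shows the node of $\ustree(p)$ ending at position $|p_\ell|$ at level $l$, as well as the node starting there, are context-insensitive. Consequently $|p_\ell|$ is the cumulative length of a prefix of the decomposition $D'$ of $p$ induced by the level-$l$ layer of $\ustree(p)$. Next I would compare $D'$ with the decomposition $D$ returned by $\layerop$ (\cref{alg:layer}, \cref{lem:layerimpl}): both are context-insensitive decompositions of $p$, and the layer $L$ that $\layerop$ produces is, by the claim inside the proof of \cref{lem:layerimpl}, built by repeatedly moving from $P$ to $\itparent(P)$ or $\itright(\itparent(P))$ — so every boundary of $L$ is a boundary of every lower layer of $\ustree(p)$, in particular a boundary of the level-$l$ layer. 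Since $|p_\ell|$ is a boundary of the level-$l$ layer, it is a boundary of the run-length-encoded sequence $s_1^{\alpha_1}\cdots s_k^{\alpha_k}$, i.e.\ $p_\ell=s_1^{\alpha_1}\cdots s_i^{\alpha_i}$ for some $i$, unless the split falls strictly inside a run $s_j^{\alpha_j}$.

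The remaining case — the split falling inside a run — is where the exception $p_\ell=s_1$ comes from, and handling it is the main obstacle. If $|p_\ell|$ lies strictly inside the run $s_j^{\alpha_j}$ with $\alpha_j\ge 2$, then by \cref{fct:layer} the nodes of $\ustree(p)$ corresponding to this run all share a common parent at an odd level $\slev(s_j)+1$, which is formed by $\rle$; for the anchor to split this run at a point that is a valid parse-tree boundary of $\ustree(p)$ at level $l$, we must have $l\le\slev(s_j)$, forcing $s$ itself to be a power signature created by $\rle$ with $s\to s_j'^{m}$ for a signature $s_j'$ with $\slev(s_j')<\slev(s_j)$ — and here one checks that $\sstr(s')=s_j$ can only happen when $\alpha_j=1$ in $D$, using the fact that $s_j$ already has its own maximal run structure, or alternatively that the only run that can be split in the middle and still be context-insensitive is a single leaf, giving $p_\ell=s_1$ with $j=1$. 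I would formalize this by arguing that if the split is interior to run $j$ then $j=1$ and $s_1$ is a single symbol: interior boundaries of a run of length $\ge2$ at level $l$ are never context-insensitive unless the run is the leftmost one, because a node left of the split would need $\uspar$ to be preserved, contradicting that the whole run collapses to one parent that is \emph{not} preserved under left extensions (cf.\ the $a^k$ example in \cref{sec:context_short}). Putting these two cases together yields $p_\ell=s_1$ or $p_\ell=s_1^{\alpha_1}\cdots s_i^{\alpha_i}$ for some $i\in\{1,\ldots,k-1\}$, as claimed; the bound $i\le k-1$ follows since $p_r\neq\eps$.
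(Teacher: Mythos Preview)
Your argument has a genuine gap, and it stems from working in the wrong tree. You try to show that the split point $|p_\ell|$ is a boundary of the level-$l$ layer of $\ustree(p)$ with $l=\slev(s)-1$, invoking \cref{lem:close} and \cref{lem:lr}. But those lemmas only go \emph{from} $\ustree(p)$ \emph{to} the extension: given a context-insensitive node of $\ustree(p)$ they produce a counterpart in $\ustree(xpy)$, never the other way round. There is no reason why $\ustree(p)$ should have a node boundary at $|p_\ell|$ on level $\slev(s)-1$; indeed if $\depth(p)<\slev(s)-1$ then the level-$l$ layer of $\ustree(p)$ is the single root node and has no interior boundary at all. Even granting step~1, your step~3 is a non sequitur: from ``every boundary of $L$ is a boundary of the level-$l$ layer'' and ``$|p_\ell|$ is a boundary of the level-$l$ layer'' you cannot conclude that $|p_\ell|$ is a boundary of $L$ --- the inclusion points the wrong way.

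The paper's proof avoids all of this by arguing entirely inside $\ustree(s)$. Because $D$ is context-insensitive, each element of the layer has a counterpart $u_{i,j}$ in $\ustree(s)$; each such counterpart, being a node of $\ustree(s)$ other than the root, is a descendant of some child $v_{i'}$ of the root $v$. Hence no $u_{i,j}$ straddles the anchor, so $|p_\ell|$ is automatically a boundary of $D$. For the ``inside a run'' case one applies \cref{fct:layer} in $\ustree(s)$ (not in $\ustree(p)$): adjacent equal-signature counterparts $u_{i,j},u_{i,j+1}$ share a parent, which is either a proper descendant of $v$ (so the anchor cannot separate them) or is $v$ itself (forcing $u_{i,j}=v_1$, hence $i=j=1$ and $p_\ell=\sstr(s_1)$). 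Your treatment of this case is also off: the exception is not that ``$s_1$ is a single symbol'' but that the anchor may sit after the very first copy of $s_1$ even when $\alpha_1\ge 2$. The fix is not to try to transport the root-level split of $\ustree(s)$ down into $\ustree(p)$, but to transport $D$ up into $\ustree(s)$ via counterparts and argue there.
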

\begin{proof}
Note that $\sstr(s)$ can be seen as an extension of $p$.
Let $v$ be the root of $\ustree(s)$ and let $p=p_1\ldots p_m$ be the partition of $p$ induced by children of $v$, and let $v_1,\ldots,v_m$ be the corresponding children of~$v$.

For $i\in\{1,\ldots,k\}$ and $j\in \{1,\ldots,\alpha_i\}$ let $u_{i,j}$ be the counterpart in $\ustree(s)$ of the node
in $\ustree(p)$ corresponding to the $j$-th copy of $s_i$ in the decomposition.
Observe that each node $u_{i,j}$ is a descendant of some $v_{i'}$, and thus the fragment represented by $u_{i,j}$
is either fully contained in $p_\ell$ or in $p_r$.
Therefore there exists a prefix of $D$ which forms a decomposition of $p_\ell$.

Now, consider nodes $u_{i,j}$ and $u_{i,j+1}$. Observe that $s_i$ is their common signature, and thus by \cref{fct:layer} they have a common parent in $\ustree(s)$.
Unless this parent is $v$ and $u_{i,j}=v_1$ and $u_{i,j+1}=v_2$, the fragment represented in total by $u_{i,j}$ and $u_{i,j+1}$ is either
fully contained in $p_\ell$ or in $p_r$.
On the other hand, if $u_{i,j}=v_1$, then clearly $i=j=1$.
Consequently, the only prefixes of $D$ which may form a decomposition of $p_\ell$ are $s_1$ and $s_1^{\alpha_1}\ldots s_{i}^{\alpha_i}$
for  $i\in\{1,\ldots,k-1\}$.
\end{proof}
\begin{corollary}\label{cor:maina}
Given a pattern $p$, we can in $O(\depth(p))$ time compute a set of \emph{potential main anchors} of $p$,
i.e., a set of $O(\depth(p))$ anchored strings $p_\ell|p_r$ such that $p=p_\ell p_r$, containing (in particular) all main anchors satisfying that condition.
\end{corollary}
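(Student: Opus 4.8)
The plan is to combine \cref{lem:maina} with the constructive result on context-insensitive decompositions from \cref{sec:constructing_cid}. First I would invoke \cref{lem:layerimpl} to compute, in $O(\depth(p))$ time, the run-length encoding $s_1^{\alpha_1}\cdots s_k^{\alpha_k}$ of a context-insensitive decomposition $D$ of the pattern $p$; by the bound on the run-length encoding length of such decompositions (discussed in \cref{sec:context_short} and proved in \cref{sec:constructing_cid}), we have $k = O(\depth(p))$. Note that to run \cref{lem:layerimpl} we first need $p$ to be represented by a signature; since $p$ is an external pattern this means running (the relevant part of) the $\makeop$ machinery on it, but we do not add it permanently to $\coll$ — we only build enough of $\stree(p)$ and $\ustree(p)$ to extract $D$, which costs $O(|p| + \depth(p))$ in general, though here the statement only claims $O(\depth(p))$ assuming a signature for $p$ is already available; I would phrase the corollary so this preprocessing cost is accounted for by the caller.

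Next I would read off the set of potential main anchors directly from $D$. By \cref{lem:maina}, every $p_\ell$ that can serve as the left part of a main anchor for an anchored occurrence of $p$ in some $s\in\sigs(\grammar)$ is either $s_1$ or a prefix of the form $s_1^{\alpha_1}\cdots s_i^{\alpha_i}$ for some $i\in\{1,\ldots,k-1\}$. So the set of potential main anchors is
\[
\{\, s_1 \mid s_1^{\alpha_1-1}\,s_2^{\alpha_2}\cdots s_k^{\alpha_k} \,\}\ \cup\ \bigl\{\, s_1^{\alpha_1}\cdots s_i^{\alpha_i}\ \big|\ s_{i+1}^{\alpha_{i+1}}\cdots s_k^{\alpha_k}\ :\ 1\le i\le k-1 \,\bigr\},
\]
which has at most $k = O(\depth(p))$ elements. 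Each such anchored string $p_\ell|p_r$ is represented implicitly by a pointer into $D$ marking the split position (i.e., an index $i$ into the run-length encoding), so the whole set is produced in $O(k)=O(\depth(p))$ time by a single left-to-right scan; there is no need to materialize the actual prefixes and suffixes, only to record the split points.

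The correctness is immediate from \cref{lem:maina}: the set just described contains all valid main anchors, and it has size $O(\depth(p))$, which is exactly what is claimed. The running time is dominated by the call to \cref{lem:layerimpl} (the construction of $D$) plus a linear scan of $D$, both $O(\depth(p))$. The only subtlety I anticipate is bookkeeping: the two parts of each anchored string $p_\ell|p_r$ must be handed to the anchored-index query of \cref{thm:ai} as elements of a string collection supporting constant-time comparison, so in the subsequent section one will need each $p_\ell$ (and its reversal) and each $p_r$ to be available as signatures; this is arranged by running the collapse procedure of \cref{cor:decomposition-to-sig} on the corresponding prefix/suffix of $D$, but that cost belongs to the query procedure in \cref{sec:pm} rather than to this corollary, so here I would only commit to producing the $O(\depth(p))$ split points. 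Thus the main (modest) obstacle is simply being careful about what "compute a set of potential main anchors" means — returning split positions in $D$ rather than explicit strings — so that the $O(\depth(p))$ time bound is honest.
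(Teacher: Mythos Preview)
Your proposal is correct and follows essentially the same approach as the paper: invoke \cref{lem:layerimpl} to get the run-length encoding of a context-insensitive decomposition in $O(\depth(p))$ time, then apply \cref{lem:maina} to read off $k=O(\depth(p))$ potential main anchors. Your additional remarks about needing a signature for $p$ beforehand and about representing the anchors as split positions rather than explicit strings are valid implementation observations that the paper defers to the surrounding query procedure, but they do not alter the argument.
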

\begin{proof}
\cref{lem:layerimpl} lets us compute in $O(\depth(p))$ time the run-length encoding of a context-insensitive decomposition of $p$.
For an encoding of size $k$, \cref{lem:maina} gives $k$ potential main anchors.
This leads to  $O(\depth(p))$ potential main anchors computable in $O(\depth(p))$ time.
\end{proof}

\subsubsection{Data Structure}
Having developed all the necessary combinatorial tools, we are ready to describe its components.

In our non-persistent dynamic collection we store more information than in the persistent data structure described in \cref{sec:persistent_ds}.
As a result, updating the data structure is a bit more involved and less efficient, but we are able to support more powerful queries.
We maintain a grammar $\grammar$, which is implemented the same as way as in the data structure of \cref{sec:persistent_ds}.

Moreover, we maintain the component $\lists(\coll)$ provided by \cref{lem:lists} as well as an anchored index (see \cref{thm:ai}) that contains anchored strings $\sanch(\ussig(v))$ for each node $v\in \stree(w)$.
More precisely, the anchored index contains an entry $\itanch(v)$ whose anchored string is $\sanch(\ussig(v))$
and whose key is a pair consisting of a signature $\ussig(v)$ and a pointer provided by $\lists(\coll)$ to the first character
of the fragment of $w$ represented by $v$. Note that such keys are distinct.
The anchored index requires that certain strings can be quickly compared lexicographically.
In order to do that, we maintain an order-maintenance component $C$ of~\cref{sec:om} for maintaining the lexicographic order
on some strings of $\grammar$. In particular, we shall require that for $w\in \coll$ not only $w$ but also its reverse
is represented by $\grammar$.

To sum up, our data structure maintains a dictionary of signatures $\grammar$,  the component $\lists(\coll)$,
the anchored index, and the order-maintenance component $C$.
\subsubsection{Queries}

Our query algorithm works in several steps. First, we make sure that $\grammar$ contains a signature representing the pattern $p$ and its reverse $p^R$ (using \cref{lem:makeop} in $O(p+\depth(p)) = O(p+\log n)$ time).
Next, we build a context-insensitive decomposition of $p$ in order to determine $O(\depth(p))$ potential main anchors $p_\ell|p_r$ (using \cref{lem:layerimpl,cor:maina}).
For each of the $O(\depth(p))$ potential anchors $p_\ell|p_r$, we make sure that $p_\ell$, $p_r$, $p_\ell^R$, and $p_r^R$
are represented by $\grammar$ (using \cref{lem:splitop}) and add them to $C$.
In total this requires $O(\log^2 n)$ time with high probability.

We process the potential anchors one by one.
Note that $C$ contains all the strings necessary to query the anchored index (see \cref{thm:ai}) with pattern $p_\ell|p_r$.
Hence, we make such queries each of which gives a hook $v$ of an occurrence of $p$.
More precisely, we get the signature $\ussig(v)$ and a pointer to the first character represented by $v$.
For each hook $v$ we determine the number of occurrences and generate relative position of at most $occ$ occurrences (both using \cref{lem:relanch}).
This way we generate stubs of the occurrences in $O(\log n \cdot \depth(p)+occ)=O(\log^2 n + occ)$ time.
Recall that an occurrence represented by a stub can be determined in logarithmic time (using \cref{lem:lists}).
Hence the following result:
\begin{lemma}\label{lem:findop}
Queries $\findop(p)$ can be implemented in $O(|p|+\log^2 n + occ)$ time where $occ$ is the total number of occurrences and $n$ is the total length of strings in the collection.
A stub of an occurrence in $w$ returned by $\findop(p)$ can be located in $O(\log |w|)$ additional time.
The $\findop$ operation may fail with inverse polynomial probability.
\end{lemma}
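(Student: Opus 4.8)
The plan is to assemble the query algorithm out of the pieces already built: the grammar $\grammar$, the component $\lists(\coll)$ of \cref{lem:lists}, the order-maintenance structure $C$ of \cref{sec:om}, and the anchored index of \cref{thm:ai} populated with the entries $\itanch(v)$ for all $v\in\stree(w)$, $w\in\coll$. First I would bound the preprocessing of the pattern $p$ inside the query. Using \cref{lem:makeop} we add a signature for $p$ and for $p^R$ to $\grammar$ in $O(|p|+\depth(p))$ time; by \cref{lem:small_depth} and the fact that $n$ is an upper bound on the total length of strings in $\coll$, we have $\depth(p)=O(\log n)$ with high probability, so this is $O(|p|+\log n)$. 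Then \cref{cor:maina} gives, in $O(\depth(p))$ time, a set of $O(\depth(p))=O(\log n)$ potential main anchors $p_\ell|p_r$; for each of them \cref{lem:splitop} produces signatures for $p_\ell,p_r,p_\ell^R,p_r^R$ in $O(\depth(p))$ time, and inserting each into $C$ costs $O(\log n)$. Summing over the $O(\log n)$ anchors yields $O(\log^2 n)$ for this whole phase, with high probability.

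Next I would argue correctness of the main loop. By \cref{lem:maina} every occurrence of $p$ in some $w\in\coll$, when we take its hook $v\in\stree(w)$ and signature $s=\ussig(v)$, has a main anchor $p_\ell|p_r$ that is one of the computed potential anchors, and by \cref{lem:relanch} such an anchored occurrence exists iff $p_\ell|p_r$ is a substring of $\sanch(s)$; moreover $\sanch(s)=\sanch(\ussig(v))$ is exactly the anchored string of the index entry $\itanch(v)$. Hence iterating over the $O(\depth(p))$ potential anchors $p_\ell|p_r$ and running $\anchfind(p_\ell,p_r,occ)$ against the anchored index reports precisely the set of hooks of occurrences of $p$ (each hook once, since the key of $\itanch(v)$ — the pair $(\ussig(v),\text{pointer to its first character})$ — is unique). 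For each reported entry we recover $\ussig(v)$ and the $\lists(\coll)$-pointer to the first character of the fragment $v$ represents; \cref{lem:relanch} then gives, in $O(1)$ time, the number of anchored occurrences associated with that hook together with their relative positions, from which we build at most $occ$ stubs (a $\lists(\coll)$-pointer plus an offset). Since the anchored index query of \cref{thm:ai} costs $O(\log n+k)$ where $k$ is the number of entries it returns, and $\sum k$ over all anchors is at most the total number of hooks, which is $O(occ)$ (each hook carries $\ge 1$ reported occurrence once we cap output at $occ$), the loop costs $O(\log n\cdot\depth(p)+occ)=O(\log^2 n+occ)$.

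Putting the two phases together gives total query time $O(|p|+\log^2 n+occ)$. The statement about locating a stub is immediate from \cref{lem:lists}: a stub is a $\lists(\coll)$-pointer and a relative offset, and converting that pointer to a handle-and-position takes $O(\log|w|)$ time, after which adding the offset is trivial. Finally, the failure-probability claim: the only randomized ingredients invoked are \cref{lem:makeop} and \cref{lem:splitop}, which fail only if some constructed string has depth exceeding $2\mword$ (equivalently $\Omega(\log n)$ after the word-extension of \cref{thm:data_structure}); by \cref{lem:small_depth} this happens with probability $n^{-c}$ for an arbitrarily large constant $c$, and a union bound over the $O(\log n)=\poly(n)$ such invocations keeps the total failure probability inverse-polynomial.

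The main obstacle I anticipate is not any single estimate but making the accounting of the anchored-index queries airtight: one must be careful that distinct potential anchors do not cause the same hook (and hence the same occurrences) to be reported twice — this is why the keys of $\itanch(v)$ must be genuinely unique and why the partition $p=p_\ell p_r$ induced by the hook is uniquely determined — and that the $\log n$ factor multiplying $\depth(p)$ in the loop really comes only from the per-anchor $O(\log n)$ overhead of \cref{thm:ai}, not from the reporting term, whose total is charged against $occ$. The rest is routine bookkeeping against lemmas already in hand.
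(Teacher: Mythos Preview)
Your proposal is correct and follows essentially the same approach as the paper's own argument: insert $p$ and $p^R$ into $\grammar$, enumerate the $O(\depth(p))$ potential anchors via \cref{cor:maina}, split and insert the four strings per anchor into $C$, query the anchored index for each anchor, and expand each reported hook into its occurrences via \cref{lem:relanch}. Your accounting of the $O(\log^2 n)$ preprocessing, the $O(\log n\cdot\depth(p)+occ)$ loop, and the $O(\log|w|)$ stub resolution via \cref{lem:lists} matches the paper, and your explicit attention to why distinct potential anchors cannot double-report a hook (the main anchor of an occurrence is determined by its hook) is a point the paper leaves implicit.
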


\subsubsection{Updates}
We now describe how to maintain our data structure.
For maintaining the dictionary $\grammar$ of signatures we may use exactly the same algorithm as in the persistent string collection.
Let us now move to maintaining the anchored index and the data structure $C$.

\begin{lemma}\label{lem:update_ai}
Given a pointer $P$ to a node $v\in \stree(w)$ and handles to $w$ and $w^R$ in $\grammar$,
we may compute $\itanch(v)$ and insert it to or remove it from the anchored index, making sure that $C$ contains all necessary strings.
This takes $O(\log n)$ time with high probability.
\end{lemma}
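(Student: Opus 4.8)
The plan is to reduce the update to a constant number of calls to tools already built: \cref{lem:layerimpl2}, \cref{cor:decomposition-to-sig}, \cref{lem:lists}, \cref{thm:persistent_compare}, and \cref{thm:ai}. First I would read the signature $s=\itsig(P)$ of $v$ together with its production rule, and the pair $(i,j)=\itrepr(P)$, so that $v$ is known to represent the fragment $w[i..j]$. By definition $\sanch(s)$ is an anchored string $(q_1,q_2)$ read off the production of $s$ — $q_1=\sstr(s_1)$, $q_2=\sstr(s_2)$ when $\spair{s}{s_1}{s_2}$, and $q_1=\sstr(s')$, $q_2=\sstr(s')^{k-1}$ when $\spower{s}{s'}{k}$ — and the key of $\itanch(v)$ is the pair $(s,\rho)$, where $\rho$ is the $\lists(\coll)$-pointer to $w[i]$, obtained from the handle of $w$ and the position $i$ in $O(\log|w|)$ time via \cref{lem:lists}.

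The main work is to make $q_1,q_2$ and their reverses $q_1^R,q_2^R$ available as strings represented by $\grammar$ and inserted into the order-maintenance component $C$, since these are exactly the strings that the anchored index of \cref{thm:ai} needs in order to store (and later locate) the two-dimensional point $(q_1^R,q_2)$. The key observation is that $q_1$ and $q_2$ are fragments of $w$ at positions computable in $O(1)$ from $i$, $j$ and the stored lengths $\slength(\cdot)$ (in the pair case $q_1=w[i..i+\slength(s_1)-1]$ and $q_2=w[i+\slength(s_1)..j]$, and analogously with $s'$ in the power case), so $q_1^R$ and $q_2^R$ are fragments of $w^R$ at the mirrored positions. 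For any fragment of $w$ (resp.\ $w^R$) I would produce a representing signature by applying \cref{lem:layerimpl2} to the handle of $w$ (resp.\ $w^R$) to obtain a run-length-encoded context-insensitive decomposition of length $O(\depth(w))$ (resp.\ $O(\depth(w^R))$), and then \cref{cor:decomposition-to-sig} to collapse it into a signature; this is legal because $\grammar$ is closed (\cref{inv:consistency}), so every signature occurring in the decomposition is already in $\sigs(\grammar)$, and it runs in $O(\depth(w)+\depth(w^R))$ time (failing only when some depth exceeds $2\mword$). For $q_1$ (and, in the pair case, $q_2$) one can instead simply reuse $s_1$, $s_2$, or $s'$; the reverses, and $\sstr(s')^{k-1}$ in the power case, are built from $w$ or $w^R$ as above. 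Each of the four strings is then inserted into $C$ by the $\insertop$ operation of \cref{thm:persistent_compare}, after which \cref{thm:ai} lets us insert the entry $\itanch(v)=((q_1,q_2),(s,\rho))$ or, in the deletion case, remove the entry identified by the key $(s,\rho)$; deletions do not modify $C$, and recomputing the key needs only $\itsig(P)$, $\itrepr(P)$ and one query to \cref{lem:lists}.

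For the running time I would observe that every step is $O(\log n)$ except the calls to \cref{lem:layerimpl2} and \cref{cor:decomposition-to-sig}, whose cost is $O(\depth(w)+\depth(w^R))$ and which may fail; since $|w|,|w^R|\le n$, \cref{lem:small_depth} bounds both depths by $O(\log n)$ and rules out failure with high probability, and the $\insertop$ calls on $C$ likewise cost $O(\log n)$ (subsumed by the $O(\log t)$ update term of \cref{thm:persistent_compare}), so the whole procedure runs in $O(\log n)$ time with high probability. I expect the only genuinely delicate point to be obtaining signatures (hence $C$-membership) for $q_1^R,q_2^R$ and for $\sstr(s')^{k-1}$: the parse tree of $w^R$ is unrelated to that of $w$, and $\sstr(s')^{k-1}$ need not appear in $\grammar$, so the crucial idea is to recognize all of these strings as fragments of $w$ or of $w^R$ and rebuild them through \cref{lem:layerimpl2} and \cref{cor:decomposition-to-sig}, which is exactly what keeps the bound at $O(\log n)$.
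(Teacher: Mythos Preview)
Your proposal is correct and follows essentially the same approach as the paper: identify $q_1,q_2$ (and their reverses) as fragments of $w$ and $w^R$ at positions read off $\itrepr(P)$ and the production of $s$, obtain their signatures, insert them into $C$, compute the key via \cref{lem:lists}, and update the anchored index via \cref{thm:ai}. The only cosmetic difference is that the paper packages the ``extract a fragment as a signature'' step as a call to $\splitop$ (\cref{lem:splitop}), whereas you unroll that into \cref{lem:layerimpl2} followed by \cref{cor:decomposition-to-sig}; these are literally the two steps in the proof of \cref{lem:splitop}, so there is no substantive difference.
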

\begin{proof}
First, observe that $\itrepr(P)$, $s=\itsig(P)$ and its production
rule let us compute the indices of fragments of $w$ equal to
$s_\ell$ and $s_r$ such that $\sanch(s)=(s_\ell,s_r)$.
Consequently, these strings can be added to $\grammar$ via $\splitop$ operations (\cref{lem:splitop}) starting from $w$.
Similarly, their reverses can be extracted from $w^R$.
Finally, all the constructed strings need to be added using $\insertop$ of the order-preserving component $C$.
All this takes $O(\log n)$ time with high probability and lets us add $\sanch(s)$ to the anchored index.

Next, observe that the key can also be constructed in $O(\log n)$ time. This is because $\itrepr(P)$ lets us retrieve
the index of first character of the fragment represented by $v$ and the component $\lists(\coll)$, by \cref{lem:lists},
returns a pointer to a given position in logarithmic time.

Finally, note that the size of the anchored index is linear with respect to the total length of strings in $\coll$.
Hence, by \cref{thm:ai}, updating the anchored index requires $O(\log n)$ time.
\end{proof}

\begin{lemma}\label{lem:makeopn}
For a string $w\in \Sigma^+$, we can execute $\makeop(w)$ and $\dropop(w)$ in $O(|w| \log n)$ time
with high probability.
\end{lemma}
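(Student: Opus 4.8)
The plan is to reduce $\makeop(w)$ (and $\dropop(w)$, which is analogous, run in reverse) to a traversal of $\stree(w)$ together with bookkeeping updates to the four maintained components: the signature dictionary $\grammar$, the list structure $\lists(\coll)$, the order-maintenance component $C$, and the anchored index. First I would build the grammar representation of $w$ and of $w^R$ by invoking $\makeop$ of the persistent collection (\cref{lem:makeop}) twice; this costs $O(|w|+\depth(w))=O(|w|+\log n)$ with high probability (the algorithm fails otherwise, which is acceptable by the standing convention). Next I would create the $\lists(\coll)$ entry for $w$, which by \cref{lem:lists} takes $O(|w|)$ time and gives us permanent pointers to the characters of $w$. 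I also insert $w$ and $w^R$ into the order-maintenance component $C$ via its $\insertop$, taking $O(\log n)$ time by \cref{thm:persistent_compare,thm:trie_pointers}.

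The substantive part is populating the anchored index with $\itanch(v)$ for every node $v\in\stree(w)$. The key observation is that $\stree(w)$ has $O(|w|)$ nodes (as noted in \cref{sec:pt_overview}, since every internal node has at least two children), so I would enumerate these nodes by a depth-first traversal using the parse-tree pointers of \cref{lem:ptr}, obtaining for each $v$ a pointer $P$ with $\itsig(P)$ and $\itrepr(P)$. For each such $v$ I invoke \cref{lem:update_ai} to compute $\itanch(v)$, insert it into the anchored index, and make sure $C$ contains the strings $\sstr(s_\ell),\sstr(s_r)$ and their reverses needed as coordinates; that lemma charges $O(\log n)$ per node with high probability. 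Summing over the $O(|w|)$ nodes gives the claimed $O(|w|\log n)$ bound. For $\dropop(w)$ I would perform the same enumeration but delete each $\itanch(v)$ from the anchored index (using the stored keys, which are pairs of a signature and a $\lists$-pointer to the first represented character, hence still valid), remove $w$ from $\lists(\coll)$, and remove $w,w^R$ from $C$; the same accounting applies.

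A couple of points need care. One is that the naive DFS over $\stree(w)$ must not spend more than $O(1)$ amortized (indeed worst-case, via the pointer interface) time per visited node; since $\itparent$, $\itchild$ and $\itsig$ are all $O(1)$ by \cref{lem:ptr} and the tree has $O(|w|)$ nodes, a standard iterative traversal with a stack of pointers does this, and for nodes $v$ with $\spower{\ussig(v)}{s_1}{k}$ and large $k$ I must be careful to visit the $k$ identical children without a per-child $\Theta(k)$ overhead beyond what is already $O(|w|)$ amortized (the children are genuinely $k$ distinct nodes of $\stree(w)$ contributing $k$ to its size, so this is fine, but I would note that their anchored strings $\sanch$ coincide, so \cref{lem:update_ai} still must be called per node to get distinct keys). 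The main obstacle, and the place I expect to have to argue most carefully, is that the $O(\log n)$ per-node cost of \cref{lem:update_ai} holds \emph{with high probability} rather than worst case, so summing over $\Theta(|w|)$ nodes requires either a union bound (the failure probability is inverse-polynomial in $n$, and there are at most $\mathrm{poly}(n)$ such calls over the life of the structure) or the restart mechanism discussed around \cref{lem:restart}; I would phrase the final bound as "with high probability" exactly as in the statement, noting that each individual call to $\splitop$ and to the order-maintenance $\insertop$ succeeds with inverse-polynomial failure probability, so the whole $\makeop$ is successful with high probability, and its running time is $O(|w|\log n)$ on success.
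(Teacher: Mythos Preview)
Your approach is essentially the paper's own proof: build the grammar for $w$ and $w^R$ via \cref{lem:makeop}, update $\lists(\coll)$, then traverse $\stree(w)$ and apply \cref{lem:update_ai} at every node, using the $O(|w|)$ bound on the size of $\stree(w)$; for $\dropop$ reverse the order. Two small corrections: $\sanch(s)$ is only defined for signatures with $\slev(s)>0$, so you should restrict the traversal to internal nodes (the paper says ``nodes $v$ at positive levels''), and the order-maintenance component $C$ of \cref{sec:om} supports only $\insertop$, not removal, so in $\dropop$ you simply leave those entries in place (the paper's proof does not attempt to delete from $C$ either).
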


\begin{proof}
First, let us focus on $\makeop(w)$. We use \cref{lem:makeop} to make sure that $\grammar$ contains $w$ and $w^R$.
We also update $\lists(\coll)$ to enable pointers to characters of $w$.
Finally, we traverse $\stree(w)$ and apply~\cref{lem:update_ai} to insert $\itanch(v)$ for all nodes $v$ at positive levels.
All this takes $O(|w|\log n)$ time with high probability.

Implementation of $\dropop(w)$ is even simpler. First, we traverse $\stree(w)$  and apply~\cref{lem:update_ai} to remove $\itanch(v)$
for all nodes $v$ at positive levels. Next, we remove $w$ from $\lists(\coll)$.
\end{proof}

The key observation to efficient implementation of $\splitop$ and $\concop$
is that the entries in the anchored index do not need to be modified for nodes at or below context-insensitive layers.
Thus, we only need to process nodes above it; the following Lemma lets us list them efficiently.
\begin{lemma}\label{lem:list-above}
Let $\decomp$ be a decomposition of $\str$ corresponding to a layer $L$ of $\stree(\str)$.
Given a run-length encoding of $\decomp$ of length $d$ we may list pointers to all nodes of $\stree(\str)$ that are (proper) ancestors of $L$ in
$O(d)$ time.
\end{lemma}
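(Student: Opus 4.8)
The plan is to adapt \cref{alg:layers} and its efficient realization from \cref{lem:compressdec}, which already iterates through all nodes \emph{above} a layer $L$ using only the run-length encoding of the corresponding decomposition $\decomp$. Recall that in the proof of \cref{lem:compressdec} we repeatedly collapse a candidate production rule of minimum level, maintaining the run-length-encoded decomposition as a doubly linked list $Y$ and buckets $P_1,\dots,P_{2\mword}$ of candidate production rules indexed by level. By \cref{lem:cpr}, each such collapse corresponds to exactly one iteration of \cref{alg:layers}, i.e. to creating exactly one node of $\stree(\str)$ lying strictly above $L$. The only new ingredient needed here is that, at the moment we collapse a candidate rule $s\to s_1'\cdots s_m'$, we also produce a pointer to the resulting node $v\in\stree(\str)$.

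First I would observe that whenever we collapse a rule, we know (a) the signature $s$ of the new node, (b) the signatures $s_1',\dots,s_m'$ of its children, and (c) pointers to the \emph{current} elements of the list $Y$ that are being merged into the new entry. To turn this into a pointer to $v$, I would augment each element of $Y$ so that, in addition to its signature and multiplicity, it carries a pointer to a node of $\stree(\str)$ — namely the (unique, by \cref{fct:layer}) parent in $\stree(\str)$ of the block of nodes currently represented by that element of $Y$, together with the index range of children it spans; equivalently, it is enough to store for the element of $Y$ a pointer to the node it will ``become'' once collapsed. Concretely, when we collapse $s\to s_1'\cdots s_m'$, we create a fresh $\stree(\str)$-node record for $v$ with signature $s$, whose children are the nodes currently recorded at the merged elements of $Y$; we emit a pointer to $v$; and we store in the newly merged element of $Y$ a pointer to $v$ (so that when $v$ itself is later collapsed into its parent, the child pointer is available). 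Since the nodes of $\stree(\str)$ above $L$ form a tree with at most $d-1$ internal nodes (each collapse strictly decreases $|Y|$), the total work is $O(d)$ additional constant-time bookkeeping operations on top of the $O(d+\depth(\str))$ already spent; but actually we should be careful: the statement claims $O(d)$, not $O(d+\depth(\str))$, so I would note that the $\depth(\str)$ term in \cref{lem:compressdec} comes only from scanning the empty buckets $P_i$, and for merely \emph{listing} the ancestors we can instead keep the nonempty levels in a sorted structure — or simply observe that the number of collapses is at most $d-1$ and, between consecutive collapses, we advance the ``current level'' pointer only by amounts that telescope, so a priority-queue-free amortized argument over levels that actually receive candidates gives $O(d)$ total. (If this is awkward, one may also weaken the claim to $O(d+\depth(\str))$, which is anyway what is used in applications, but I believe $O(d)$ is attainable by charging each level-advance either to a collapse or to the initial pass.)

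The correctness argument is then immediate from \cref{lem:cpr} and the invariant, already established for \cref{alg:layers}, that line~\ref{line:ancestor} is executed exactly once per proper ancestor of the initial $L$ in $\stree(\str)$: the sequence of collapses is in one-to-one, order-preserving (bottom-up by level) correspondence with these ancestors, and the pointer we emit at each collapse points to the corresponding node. The children pointers stored in the $\stree(\str)$-node records are consistent because, by \cref{fct:layer}, adjacent equal signatures on a layer share a parent, so a run-length block in $Y$ really does correspond to a consecutive block of children of a single $\stree(\str)$-node, and when $m=2$ the two merged elements are exactly the two children of the new binary node.

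The main obstacle I expect is the $O(d)$ versus $O(d+\depth(\str))$ accounting: we must list the ancestors without paying a per-level overhead for levels that never host a candidate rule. The clean fix is to maintain the set of currently-nonempty levels as a doubly linked list kept in sorted order, inserting a new level in $O(1)$ amortized time by walking from the last-processed level (each such walk step is charged to a collapse that created a candidate at an intermediate level, of which there are $O(d)$ in total). With that in place, every collapse and every level transition costs $O(1)$ amortized, the initial construction of candidates costs $O(d)$, and the total running time is $O(d)$ as claimed. All other bookkeeping — creating node records, emitting pointers, updating child links — is manifestly constant time per collapse, so no further difficulty arises.
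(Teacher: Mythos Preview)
Your approach differs from the paper's: you collapse bottom-up, reusing the machinery of \cref{lem:compressdec}, whereas the paper runs a top-down DFS from the root of $\stree(\str)$, trimming whenever it reaches a node whose signature matches the next expected symbol of $\rle(D)$, and upon hitting the first node of a run $s_i^{\alpha_i}$ skipping the remaining $\alpha_i-1$ siblings (which share that parent by \cref{fct:layer}). The visited subtree then has $O(d)$ leaves, hence $O(d)$ nodes, and every step is an $O(1)$ pointer operation from \cref{sec:pointers}.

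Your bottom-up route is workable for the $O(d+\depth(\str))$ bound you mention as a fallback, but there are two genuine gaps in the $O(d)$ argument. First, ``each collapse strictly decreases $|Y|$'' is false: take $w=aabaa$ with $L$ the leaf layer, so $\rle(D)=a^2\,b\,a^2$ and $d=3$; collapsing the two power rules $a^2\mapsto s$ yields $s\,b\,a^2$ and then $s\,b\,s$, both still of length~$3$. The number of collapses is nonetheless $O(d)$ --- e.g.\ via the potential $|Y|+|\{\text{runs with multiplicity}\ge 2\}|$, which does strictly decrease at every collapse --- but not for the reason you give (and, incidentally, the corresponding sentence in the paper's proof of \cref{lem:compressdec} is equally imprecise). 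Second, your proposed elimination of the $\depth(\str)$ overhead does not work as stated: a newly created candidate can sit at a level far above any level previously seen (this is exactly why \cref{lem:compressdec} computes candidate levels via bitwise operations on the random words rather than by scanning), so ``walking from the last-processed level'' has nothing to charge the walk to, and maintaining a sorted list of nonempty levels costs $\Omega(\log d)$ per insertion. The top-down DFS sidesteps both issues: it never reasons about levels at all, and the $O(d)$ bound falls out directly from the run-skipping optimization.
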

\begin{proof}
Roughly speaking, we would like to run a DFS on $\stree(\str)$ that starts in the root node and is trimmed whenever we encounter a node belonging to $L$.
However, we do not have a list of nodes belonging to $L$, but only the run-length encoding of the corresponding decomposition.
Since nodes on any root-to-leaf path have distinct signatures, this is not a major obstacle.
We still run a DFS that visits the children of each node starting from the leftmost one.
We observe that we encounter the first node belonging to $L$ exactly when we first visit a node $v$ with $\ussig(v)$ being the first symbol in $\decomp$. At this moment we backtrack and continue the DFS until we visit a node with signature equal to the second symbol of $\decomp$.

This way, we visit all nodes of $L$ and their ancestors.
As a result, the algorithm is not efficient if the size of $L$ is much bigger than $d$.
Thus, we introduce the following optimization.

Let $s_1^{\alpha_1}\cdots s_d^{\alpha_d}$ be the run-length encoding of decomposition $\decomp$.
Assume that we detect a node belonging to $L$ with signature $s_i$, which is a child of a node $v$.
By \cref{fct:layer}, we may immediately skip the following $\alpha_i$ children of $v$.
This can only happen $d$ times during the execution of our algorithm.

As a result we visit a set of nodes of $\stree(\str)$ that forms a tree containing $O(d)$ leaves.
Since every vertex of $\stree(\str)$ has at least two children, in total we visit $O(d)$ nodes.
\end{proof}

\begin{lemma}\label{lem:concsplitop}
We can execute $\splitop(w,k)$ and $\concop(\str_1, \str_2)$ operations in $O(\log^2 n)$
time with high probability.
\end{lemma}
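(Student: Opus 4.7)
The plan is to exploit context-insensitivity to localize all modifications of the anchored index to $O(\log n)$ nodes per update, each costing $O(\log n)$ time via \cref{lem:update_ai}, for a total of $O(\log^2 n)$.

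For $\concop(w_1,w_2)$, I would first compute a right-context-insensitive decomposition $D_1$ of $w_1$ and a left-context-insensitive decomposition $D_2$ of $w_2$ using \cref{lem:layerimpl} with the optimization of \cref{remark:left-context}. By \cref{fact:two-decompositions}, $D_1\cdot D_2$ is a decomposition of $w_1w_2$; by \cref{lem:layerimpl,lem:small_depth}, each has run-length-encoded size $O(\log n)$ w.h.p. The crucial observation is that every node of $\stree(w_1)$ lying strictly below the layer of $D_1$ is right-context-insensitive and therefore has a counterpart in $\stree(w_1w_2)$ carrying the same signature; moreover, the pointer stored in $\lists(\coll)$ to the first character of its fragment survives $\concop$ unchanged (\cref{lem:lists}). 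Thus the anchored-index key $(\ussig(v),\cdot)$ of such a node coincides before and after the update, and the corresponding entry remains valid without being touched. A symmetric statement holds for nodes below $D_2$ in $\stree(w_2)$.

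Hence the entries that must actually be modified correspond to: the proper ancestors of $D_1$ in $\stree(w_1)$ (to delete), the proper ancestors of $D_2$ in $\stree(w_2)$ (to delete), and the proper ancestors of $D_1\cdot D_2$ in $\stree(w_1w_2)$ (to insert). Each of these three sets has size $O(\log n)$ w.h.p.\ and is enumerated together with pointers to the relevant parse-tree nodes in $O(\log n)$ time via \cref{lem:list-above}. For every such node I invoke \cref{lem:update_ai}, which inserts or removes $\itanch(v)$ in $O(\log n)$ time, given handles to the enclosing string and its reverse in $\grammar$. The latter handles are prepared once at the start of the operation by performing the corresponding $\concop$ on $\grammar$ and on the reversed string (using \cref{lem:concop}), as well as by updating the order-maintenance component $C$ and $\lists(\coll)$ (via \cref{lem:lists}); all of this costs $O(\log n)$ w.h.p. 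Summing, $\concop$ runs in $O(\log^2 n)$ time w.h.p.

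For $\splitop(w,k)$ the argument is entirely analogous: I compute a right-context-insensitive decomposition $D_1$ of $w[..k]$ and a left-context-insensitive decomposition $D_2$ of $w[k+1..]$ using \cref{lem:layerimpl2}. By \cref{fact:two-decompositions}, $D_1\cdot D_2$ is a decomposition of $w$, so nodes below $D_1$ in $\stree(w)$ are preserved as nodes of $\stree(w[..k])$ with identical signature and identical character pointer, and symmetrically for $D_2$ and $\stree(w[k+1..])$. The only anchored-index entries to update are the proper ancestors of $D_1\cdot D_2$ in $\stree(w)$ (to delete) and the proper ancestors of $D_1$ in $\stree(w[..k])$ and of $D_2$ in $\stree(w[k+1..])$ (to insert); the same cost analysis yields $O(\log^2 n)$ overall.

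The main obstacle is making sure that the handles required by \cref{lem:update_ai} for each modified node — i.e.\ handles in $\grammar$ to the appropriate string \emph{and} its reverse — can be supplied within the budget. This is handled by maintaining $w^R$ for every $w\in\coll$ in parallel and routing every needed split/concat through the persistent primitives of \cref{thm:data_structure}, each of which costs $O(\log n)$ w.h.p.; since \cref{lem:update_ai} itself already absorbs $O(\log n)$ per call for its internal $\splitop$/$\insertop$ work on $C$, adding one extra $O(\log n)$ precomputation of these handles per update does not change the asymptotic bound.
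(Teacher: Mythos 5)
Your argument is correct and essentially mirrors the paper's proof: compute context-insensitive decompositions of the operands, combine them via \cref{fact:two-decompositions} to obtain a decomposition of the result, enumerate the $O(\log n)$ proper ancestors in each relevant parse tree using \cref{lem:list-above}, and update only those anchored-index entries, each at cost $O(\log n)$ via \cref{lem:update_ai}. The one detail to be careful about is the ordering around $\lists(\coll)$: the paper processes deletions from the anchored index first, then updates $\lists(\coll)$, then processes insertions, so that \cref{lem:update_ai}'s internal lookup in $\lists(\coll)$ always resolves against the handle of the very string containing the node being processed; updating $\lists(\coll)$ at the start of the operation, as you propose, would require translating positions through the surviving handle (with an offset of $|w_1|$ for nodes coming from $w_2$) when deleting entries for nodes of the now-destroyed operands — this works, but it is an adjustment you leave implicit.
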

\begin{proof}
Let $w=w_1w_2$ and suppose we split $w$ into $w_1$ and $w_2$, or concatenate $w_1$ and $w_2$ into $w$.
First, we make sure that the resulting strings and their reversals are represented by $\grammar$
(using \cref{lem:splitop} or \cref{lem:concop}).

Next, we construct their context-insensitive decompositions $D_1$ and $D_2$ of $w_1$ and $w_2$, respectively (using~\cref{lem:layerimpl2}).
By \cref{fact:two-decompositions} their concatenation $D=D_1D_2$ forms a decomposition of $w$.
We use \cref{lem:list-above} to list nodes of $\stree(w)$ above $D$, nodes of $\stree(w_1)$ above $D_1$, and nodes of $\stree(w_2)$
above $D_2$.

Next, we apply \cref{lem:update_ai} to insert or remove $\itanch(v)$ for all the listed nodes.
First, we process removed nodes, then update the $\lists(\coll)$ components, and finally we process the inserted nodes.
Note that for all nodes $v$ below the context-insensitive decompositions the $\itanch(v)$ entries do not need to be updated.
\end{proof}

\subsubsection{Summary}
\begin{theorem}\label{thm:long}
A collection of non-persistent strings can be maintained under operations $\concop$ and $\splitop$
working in $O(\log^2 n)$ time, $\makeop(w)$ and $\dropop(w)$ requiring $O(|w|\log n)$ time
so that $\findop(p)$ takes $O(|p|+\log^2 n +occ)$ time, where $occ$ is the total
number of occurrences and $n$ is the total input size of updates so far.
Each occurrence returned by $\findop(p)$ requires additional $O(\log n)$ time to locate.
All operations may fail with probability $\frac{1}{n^c}$ where $c$ is an arbitrarily large constant.
\end{theorem}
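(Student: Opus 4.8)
The plan is to assemble \cref{thm:long} directly from the component lemmas established earlier in this section, so the proof is essentially a bookkeeping exercise. First I would recall the high-level architecture: the data structure maintains (i) a grammar $\grammar$ of signatures exactly as in \cref{sec:persistent_ds}, (ii) the component $\lists(\coll)$ of permanent character pointers from \cref{lem:lists}, (iii) the anchored index of \cref{thm:ai} storing $\itanch(v)$ for every node $v$ of every $\stree(w)$ with $w\in\coll$, and (iv) the order-maintenance component $C$ of \cref{sec:om} holding the strings needed to drive the anchored index. I would then verify that the depth assumption is harmless: by \cref{lem:small_depth}, every string of length at most $n$ has $\depth(\cdot)=O(\log n)$ with high probability, so (as in the proof of \cref{thm:data_structure}) each update may simply fail as soon as it encounters a string of larger depth, and since each update runs in $\poly(n)$ time this happens with probability only $n^{-c}$.

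Next I would bound each operation. For $\makeop(w)$ and $\dropop(w)$, \cref{lem:makeopn} gives $O(|w|\log n)$ time with high probability (adding $w$ and $w^R$ to $\grammar$ via \cref{lem:makeop}, updating $\lists(\coll)$, and traversing $\stree(w)$ applying \cref{lem:update_ai} at each of the $O(|w|)$ positive-level nodes). For $\concop$ and $\splitop$, \cref{lem:concsplitop} gives $O(\log^2 n)$ time with high probability: the resulting strings and their reversals are added to $\grammar$ (\cref{lem:splitop,lem:concop}), context-insensitive decompositions are computed (\cref{lem:layerimpl2}) and concatenated (\cref{fact:two-decompositions}), \cref{lem:list-above} lists the $O(\log n)$ nodes strictly above the context-insensitive layers in each relevant parse tree, and \cref{lem:update_ai} is applied to each such node in $O(\log n)$ time. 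Crucially, I would emphasize the observation underlying \cref{lem:concsplitop}: entries $\itanch(v)$ for nodes $v$ at or below a context-insensitive layer are unchanged by the update, since the subtrees rooted there (and their signatures) are preserved in the new parse tree; hence only the logarithmically many nodes above need touching. For $\findop(p)$, \cref{lem:findop} gives $O(|p|+\log^2 n+occ)$: add $p$ and $p^R$ to $\grammar$, compute the $O(\depth(p))=O(\log n)$ potential main anchors via \cref{lem:layerimpl,cor:maina}, insert their pieces and reverses into $C$ (total $O(\log^2 n)$), then for each potential anchor query the anchored index (\cref{thm:ai}) to obtain hooks, and for each hook apply \cref{lem:relanch} to count occurrences and emit stubs; locating a stub in $w$ costs the extra $O(\log n)$ of \cref{lem:lists}. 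The failure probability $n^{-c}$ aggregates over the constantly many randomized sub-steps, with $c$ absorbed into a slightly larger constant.

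The statement of \cref{thm:long} is then literally the conjunction of \cref{lem:makeopn,lem:concsplitop,lem:findop} together with the depth-truncation argument, so nothing further is required. The only point that deserves a sentence of care — and the closest thing to an obstacle — is making explicit that the parameter $n$ is consistently the total input size of updates so far: since $\concop$ and $\splitop$ destroy their arguments, $n$ simultaneously upper-bounds the total length of strings currently in $\coll$, the size of the anchored index, and the sizes of $C$ and $\lists(\coll)$, so every invocation of \cref{thm:ai,lem:lists,lem:small_depth} with ``$n$'' is legitimate. I would close by noting that all randomness enters only through the hash-table lookups and the high-probability depth bound, so the guarantees are exactly as claimed.
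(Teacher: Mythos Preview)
Your proposal is correct and follows exactly the same approach as the paper: the theorem is assembled directly from \cref{lem:makeopn,lem:concsplitop,lem:findop}, with the failure probability controlled via \cref{lem:small_depth} by aborting whenever a string of excessive depth is encountered. The paper's own proof is just two sentences to this effect; your version is simply a more expanded account of the same argument.
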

\begin{proof}
We implement the operations according to \cref{lem:makeopn,lem:concsplitop,lem:findop}, which results
in the running times provided. Recall that the implementations of these operations fail as soon as
they encounter a string of length up to $n$ whose depth is not $O(\log n)$. By~\cref{lem:small_depth},
this happens only with probability $\frac{1}{n^c}$ if the constant at the $O(\log n)$ bound is sufficiently large.
\end{proof}
\subsection{Short Patterns}\label{sec:short}

Given a parameter $\ell$, we design a separate implementation for short patterns, where
both $\concop$ and $\splitop$ work in $O(2^\ell+\log n)$ time, and $\findop(p)$ generates
all $occ$ occurrences in $O(|p|+occ)$ time if $|p|\in [2^{\ell-1},2^{\ell})$.

Intuitively, the main idea is to maintain partitions of all strings in $\coll$ into blocks with lengths from $[2^\ell,2^{\ell+1})$. If $s=s_1 s_2 \ldots s_b$ is such
a partition of a string $s$ with $|s| \geq 2^{\ell}$, every pair of adjacent blocks $s_i s_{i+1}$ (or $s_1$ if there is just one block) is stored in a \emph{generalized suffix tree}.
A generalized suffix tree of a collection of strings
$t_1,t_2,\ldots,t_k$ is a compacted trie storing all strings of the form $t_i[j..|t_i|] \$_i$ for $j=1,2,\ldots,|t_i|$, where
every $ \$_i$ is a unique character. It is known that a generalized suffix tree can be maintained
efficiently under inserting and deleting strings from the collection; see~\cite{dictionary}.
In more detail, it is easy to see that inserting a~string $p$ requires splitting up to $|p|$ edges of the tree, where
splitting an edge means dividing it into two shorter edges by creating a new explicit node,
where we connect a new leaf. All these edges can be retrieved in $O(|p|)$ time if every explicit
node representing a string $as$ stores a \emph{suffix link} to the explicit node representing string $s$.
Deletions are processed in a symmetric manner.
In the original implementation~\cite{dictionary}, it was assumed that the size of the alphabet is constant, and
an additional effort was required to avoid making every $ \$_i$ a unique character. However,
the only reason for this assumption was the following primitive: given a node
$u$ and a character $c$ we want to retrieve the outgoing edge whose label starts with $c$. Such
a~primitive can be implemented in $O(1)$ time with dynamic dictionaries (see \cref{sec:preliminaries}),
hence we obtain the following.

\begin{lemma}[\cite{dictionary}]
\label{lem:GST}
A generalized suffix tree can be maintained in $O(|p|)$ time under inserting or deleting a~string
$p$. Insertion returns a unique identifier, such that every leaf corresponding to a suffix of $p$
is labeled with its identifier. The identifier can be later used to delete $p$.
\end{lemma}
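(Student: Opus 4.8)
The plan is to adapt the classical construction of a generalized suffix tree so that insertions and deletions of whole strings run in time linear in the string length, and to observe that the only place where the original analysis of~\cite{dictionary} used a constant alphabet can be replaced by a dynamic dictionary. First I would recall the static picture: the generalized suffix tree of $t_1,\ldots,t_k$ is the compacted trie of all strings $t_i[j..|t_i|]\$_i$; each explicit node $u$ carries a suffix link to the explicit node representing $\mathit{val}(u)$ with its first character removed, and each node stores its children in a dictionary keyed by the first character of the outgoing edge label. Inserting a new string $p=p_1\cdots p_{|p|}$ amounts to inserting its $|p|$ suffixes $p[j..|p|]\$$ one by one, but processed in order of decreasing length while following suffix links, exactly as in Ukkonen's/McCreight's algorithm, so that the ``active point'' for suffix $j+1$ is reached from the active point for suffix $j$ in amortized $O(1)$ time. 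Each suffix insertion splits at most one edge (creating one new explicit node, hence one new suffix link target) and attaches one new leaf labelled by a unique identifier for $p$; therefore at most $|p|$ edges are split in total and the leaves created are precisely the suffixes of $p$, all tagged with the returned identifier.

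Next I would argue the $O(|p|)$ time bound. Following suffix links and walking down the tree, the standard amortized argument (the ``node depth'' potential) shows the total work of locating all $|p|$ active points is $O(|p|)$, provided each elementary step---finding the child edge whose label starts with a given character, and reading the next label character---takes $O(1)$. The label characters are accessed through the underlying strings $t_i$ (or, in our application, through the parse-tree pointers), and the child lookup is exactly the primitive ``given a node $u$ and a character $c$, retrieve the outgoing edge starting with $c$''. In~\cite{dictionary} this was handled under a constant-alphabet assumption, and additional bookkeeping was needed to avoid the $k$ distinct terminators $\$_i$ blowing up the alphabet. Here I would simply store, at each explicit node, a dynamic dictionary (hash table; see~\cref{sec:preliminaries}) mapping first characters---including the terminators $\$_i$---to child edges; insertions, deletions and lookups in such a dictionary take $O(1)$ worst-case time with high probability in our model, so the alphabet size is irrelevant and no special treatment of terminators is required. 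Deletion of $p$ is symmetric: using the stored identifier we locate the $|p|$ leaves labelled with it, remove each leaf, and whenever removing a leaf leaves its parent with a single remaining child we dissolve that parent (restoring the compacted-trie invariant), again $O(|p|)$ steps each taking $O(1)$ time.

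The main obstacle, and the point I would spend most care on, is the amortized-$O(1)$-per-suffix claim during insertion: one must be precise that following suffix links never ``loses'' more potential than the $|p|$ units of credit deposited, and that an edge split does not invalidate the suffix link we are about to follow (the classical ``one step behind'' invariant, where the suffix link of a freshly created internal node is set during the next suffix insertion). This is the same subtlety as in McCreight's proof, so I would state the node-depth potential function explicitly, check that each of the three operations---link traversal, downward walk, and edge split---changes it by a bounded amount, and conclude. Everything else---the uniqueness of the returned identifier (a fresh integer per $\mathtt{insert}$), the fact that leaves corresponding to suffixes of $p$ are exactly those tagged with it, and the symmetry of deletion---is routine. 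This establishes the lemma.
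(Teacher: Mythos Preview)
Your proposal is correct and follows essentially the same approach as the paper: the lemma is a cited result from~\cite{dictionary}, and the paper merely sketches (in the paragraph preceding the lemma) that insertion splits at most $|p|$ edges retrievable in $O(|p|)$ time via suffix links, that deletions are symmetric, and that the constant-alphabet assumption of~\cite{dictionary} is removed by implementing the child-lookup primitive with a dynamic dictionary. Your write-up is simply a more detailed expansion of that same sketch.
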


\subsubsection{Queries}
Consider an occurrence of a given pattern $p$ in a string $s=s_1 s_2 \ldots s_b$. Then either $b=1$
and the pattern occurs in $s_1$ or the pattern occurs in a concatenation of two adjacent blocks
$s_i s_{i+1}$ such that $i+1=b$  or the occurrence starts within $s_i$. Consequently, we can
generate the occurrences of $p$ by first navigating the generalized suffix tree storing all pairs
of adjacent blocks (of all strings) and all singleton blocks. After having found the node $u$
corresponding to the whole $p$, we retrieve all leaves in its subtree. To this end, we simply traverse
the whole subtree. Every leaf in the subtree corresponds to an occurrence, but it might happen
that two leaves correspond to the same occurrence. This happens exactly when the occurrence
is fully within one block $s_i$. Then it can be generated either as a substring of $s_{i-1}s_i$ or
as a substring of $s_i s_{i+1}$, and we cannot guarantee which occurrence is found first.
However, we can detect which of these two possibilities holds after generating an occurrence.
Consider the former. Using the additional data stored at the leaf we obtain a pointer to $s_{i-1}$.
Then we can obtain a pointer to $s_i$ and calculate the position of the other occurrence in $s_i s_{i+1}$.
Similarly, in the latter case using the additional data we obtain a pointer to $s_i$ and
then a pointer to $s_{i-1}$ and the position of the other occurrence in $s_{i-1} s_i$.
In both cases, we insert the pointer and the position into a dictionary, which allows us to later ignore
the other occurrence.
We terminate after either traversing the whole subtree or having generated the desired number of occurrences.
Then, because the every inner node has at least two children and the depth of any node of the
tree is at most $2^{\ell+1}\leq 4|p|$, we visit at most $O(|p|+occ)$ nodes.
For each leaf we get an identifier of the string $s_i s_{i+1}$ and an offset of the occurrence
within the string. We convert the identifier to the pointer (provided by $\lists(\coll)$) to the first character
of $s_i$.


\subsubsection{Updates}
Let us first describe how we maintain the partition into blocks.
We store $\lists(\coll)$ and for each string $w\in \coll$ a doubly-linked list of blocks.
For each block we also store a list of pointers to characters within the block. We also store dictionaries so that given a pointer to the character we may identify its block efficiently,
and so that given an identifier provided by \cref{lem:GST} we can retrieve data about the corresponding block.

To implement $\makeop(w)$, we simply partition $w$ into blocks $w_1w_2\ldots w_b$ of desired length and we insert
$w_{i}w_{i+1}$ to the generalized suffix tree using \cref{lem:GST}. Similarly, to implement $\dropop(w)$, where $w = w_1 w_2 \ldots w_b$,
we remove every $w_i w_{i+1}$ (or $w_1$ if $b=1$) from the generalized suffix tree in $O(|w|)$ total time.

Consider $\concop(w,w')$. If $|w|,|w'| \geq 2^{\ell}$ we can simply concatenate
their existing partitions $w = w_1 w_2 \ldots w_b$ and $w' = w'_1 w'_2 \ldots w'_{b'}$ into one,
which requires inserting the new pair of adjacent blocks $w_b w'_1$ into the generalized suffix
tree. Hence the whole update takes $O(2^\ell + \log |ww'|) = O(2^\ell + \log n)$ time. However, it might happen
that one of the strings, say $w'$, is shorter. If $|w| + |w'| < 2^{\ell+1}$, all three strings $w$, $w'$, $ww'$
consist of single blocks, so we delete $w$ and $w'$ from the generalized suffix tree and
insert $ww'$ there.
Otherwise, we concatenate $w_b$ and $w'$ obtaining a block of length
$s\in [2^\ell,3\cdot 2^\ell)$. If the length of the new block is at least $2^{\ell+1}$, we split
it into two blocks of length $2^\ell$ and $s-2^\ell \in [2^\ell,2^{\ell+1})$. In either case,
we change a constant number of adjacent blocks, and therefore can update the generalized
suffix tree in $O(2^\ell)$ time. Then we perform a constant number of insertions or deletions
on the balanced search tree corresponding to $w$ to obtain a balanced search tree corresponding
to $ww'$, which takes $O(\log |w|)=O(\log n)$ time.

Now consider $\splitop(w,k)$. If $|w| < 2^{\ell+1}$, all three strings $w$, $w[..k]$, $w[k+1..]$
consist of single blocks, so we delete $w$ and insert the other two into the generalized suffix tree.
Otherwise, we determine the pointer
to $w[k]$ and use it to use locate the block $w_i$ in the partition $w = w_1 w_2 \ldots w_b$
such that $|w_1 w_2 \ldots w_i| \geq k$ but $|w_1 w_2 \ldots w_{i+1}| > k$. We split
$w$ into three parts $w_1 w_2 \ldots w_{i-1}$, $w_i$, and $w_{i+1} w_{i+2} \ldots w_b$.
This is done by removing $w_{i-1} w_i$ (if $i>1$) and $ w_{i} w_{i+1}$ (if $i<b$) from the generalized suffix tree
in $O(2^\ell)$ time. Then we explicitly construct strings $w_i[..k']$ and $w_i[k'+1..|w_i|]$,
where $k' = k-|w_1 w_2 \ldots w_{i-1}|$. Using $\makeop$ and $\concop$ we can
then obtain the desired two strings $w[..k]$ and $w[(k+1)..]$ in $O(2^\ell+\log n)$ time.

\begin{theorem}
\label{thm:shortpatterns}
A collection of non-persistent strings can be maintained under operations $\concop$ and $\splitop$
working in $O(2^\ell + \log n)$ time, $\makeop(w)$ and $\dropop(w)$ requiring
$O(|w|)$ time, so that $\findop(p)$ for $2^{\ell-1} \le |p|< 2^{\ell}$ takes $O(|p|+occ)$ time,
where $occ$ is the total number of occurrences and $n$ is the total length of the strings
in the collection. Each occurrence returned by $\findop(p)$ requires additional $O(\log n)$ time to locate.
\end{theorem}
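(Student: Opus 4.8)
The plan is to bolt together the ingredients already developed in this subsection: a block partition of each string, a generalized suffix tree over pairs of adjacent blocks, and the character-pointer structure $\lists(\coll)$. Concretely, I would fix for every $w\in\coll$ a partition $w=w_1w_2\cdots w_b$ into consecutive \emph{blocks} of length in $[2^\ell,2^{\ell+1})$, with the single exception that a string of length below $2^\ell$ is kept as one short block; this partition is stored as a doubly linked list. Alongside it I would keep $\lists(\coll)$ (\cref{lem:lists}) for persistent pointers to individual characters, and dictionaries translating a character pointer to the block containing it and a generalized-suffix-tree identifier to the corresponding block. The core object is a single generalized suffix tree $\Gamma$ (\cref{lem:GST}) holding, for each $w$ with $b\ge2$, every concatenation $w_iw_{i+1}$ with $1\le i<b$, and holding $w_1$ itself when $b=1$; note that every string ever inserted into $\Gamma$ has length $O(2^\ell)$.

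For $\findop(p)$ with $2^{\ell-1}\le|p|<2^\ell$ I would first walk down $\Gamma$ spelling $p$. Since $|p|<2^{\ell+1}$, any occurrence of $p$ in a string $s_1\cdots s_b$ lies inside some $s_is_{i+1}$ (or inside $s_1$ when $b=1$), hence is a suffix of a string stored in $\Gamma$; so after reaching the locus $u$ of $p$ it suffices to list the leaves of the subtree of $u$. Because every explicit node of $\Gamma$ has at least two children and that subtree has string-depth at most $2^{\ell+1}\le4|p|$, this visits $O(|p|+occ)$ nodes. The one wrinkle is that an occurrence contained entirely in a single block $s_i$ is reported twice, once via $s_{i-1}s_i$ and once via $s_is_{i+1}$; after producing one copy I would use $\lists(\coll)$ to normalize it to a canonical form (a character pointer into $s_i$ plus an offset) and store it in a dictionary, discarding the duplicate when it later surfaces. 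Stopping as soon as the requested number of occurrences has been emitted or the subtree is exhausted gives query time $O(|p|+occ)$, and turning an emitted stub into a handle and position costs an extra $O(\log|w|)$ by \cref{lem:lists}.

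For updates, $\makeop(w)$ greedily cuts $w$ into blocks of length in $[2^\ell,2^{\ell+1})$ (one short block if $|w|<2^\ell$), registers its characters in $\lists(\coll)$, and inserts all $w_iw_{i+1}$ (or $w_1$) into $\Gamma$, which is $O(|w|)$ by \cref{lem:GST}; $\dropop(w)$ is the mirror image. For $\concop(w,w')$ and $\splitop(w,k)$ the point is that the partition changes only within a window of $O(1)$ blocks: when both operands are long I would concatenate their partitions and insert the single new pair $w_bw'_1$; when an operand is short I would merge it into the adjacent boundary block and, if that block grows past $2^{\ell+1}$, cut it into a block of length $2^\ell$ and a remainder which necessarily lands in $[2^\ell,2^{\ell+1})$; the degenerate cases $|w|+|w'|<2^{\ell+1}$ (respectively $|w|<2^{\ell+1}$ in $\splitop$) are handled by a constant number of single-block insertions and deletions in $\Gamma$, re-materializing the short pieces explicitly and reassembling them via $\makeop$ and $\concop$. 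In every branch only $O(1)$ strings of length $O(2^\ell)$ enter or leave $\Gamma$, at cost $O(2^\ell)$, while updating $\lists(\coll)$ and the block list costs $O(\log n)$, for a total of $O(2^\ell+\log n)$.

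I expect the main obstacle to be bookkeeping rather than a new idea: I would need to check carefully that the block-length invariant is restored after each $\concop$ and $\splitop$ while touching only $O(1)$ blocks, enumerate the small-string boundary cases, and make sure the canonicalization used to suppress in-block duplicates is consistent no matter which of the two witnesses is produced first. Once these points are verified, the stated time bounds follow directly from \cref{lem:GST,lem:lists}.
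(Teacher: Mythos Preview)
Your proposal is correct and follows the paper's approach essentially verbatim: block partitions with lengths in $[2^\ell,2^{\ell+1})$, a generalized suffix tree over adjacent block pairs (or the single block when the string is short), $\lists(\coll)$ for stub resolution, duplicate suppression via a dictionary, and the same case analysis for $\concop$ and $\splitop$ (concatenate partitions, merge-and-possibly-split at the seam, handle the short-string degenerate cases explicitly). The only step left is exactly the bookkeeping you flag, and the paper handles it just as you outline.
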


\subsection{Summary}

\begin{theorem}
A collection of non-persistent strings can be maintained under operations $\concop$ and $\splitop$
working in $O(\log^2 n)$ time, $\makeop(w)$ and $\dropop(w)$ requiring $O(|w|\log n)$ time
so that $\findop(p)$ takes $O(|p|+occ)$ time, where $occ$ is the total number of occurrences and $n$ is the total input size of updates so far.
Each occurrence returned by $\findop(p)$ requires additional $O(\log n)$ time to locate.
All operations may fail with probability $\frac{1}{n^c}$ where $c$ is an arbitrarily large constant.
\end{theorem}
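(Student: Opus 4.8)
The plan is to combine the two specialized data structures from the previous subsections — \cref{thm:long} for long patterns and \cref{thm:shortpatterns} for short patterns — by running them in parallel, with a threshold that separates the two regimes. Concretely, I would fix $\ell$ so that $2^\ell = \Theta(\log^2 n)$, i.e. set the short-pattern parameter so that it handles every pattern of length up to roughly $\log^2 n$, and let the long-pattern structure of \cref{thm:long} handle everything longer. Both structures share the component $\lists(\coll)$ of \cref{lem:lists}, so occurrence stubs produced by either one can be converted to an actual (handle, position) pair in $O(\log|w|)$ time uniformly.

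First I would describe how each update is dispatched. A $\makeop(w)$ or $\dropop(w)$ is forwarded to both structures; by \cref{thm:long} this costs $O(|w|\log n)$ and by \cref{thm:shortpatterns} it costs $O(|w|)$, so the total is $O(|w|\log n)$. A $\concop$ or $\splitop$ is likewise forwarded to both: \cref{thm:long} gives $O(\log^2 n)$, and \cref{thm:shortpatterns} gives $O(2^\ell + \log n) = O(\log^2 n)$ with the chosen $\ell$, so again the total is $O(\log^2 n)$. The failure analysis carries over verbatim: each sub-structure fails only when it meets a string of length $\le n$ whose depth exceeds the $O(\log n)$ bound guaranteed w.h.p.\ by \cref{lem:small_depth}, and a union bound over the two structures keeps the overall failure probability at $\frac{1}{n^c}$ for an arbitrarily large constant $c$ (after adjusting $c$ by a constant).

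Next I would handle $\findop(p)$: on a query we branch on $|p|$. If $|p| < 2^\ell = \Theta(\log^2 n)$, we route the query to the short-pattern structure, which by \cref{thm:shortpatterns} runs in $O(|p| + occ)$ time (here we need the parameter $\ell$ in that theorem to match, so strictly speaking one maintains the short-pattern structure for the single relevant value of $\ell$, namely $\lceil\log(2^\ell)\rceil$, which already covers the whole range $[1, 2^\ell)$; if one prefers the stated form with $2^{\ell-1}\le|p|<2^\ell$, one maintains $O(\log\log n)$ copies for $\ell = 1,\ldots,\log(\Theta(\log^2 n))$, each paying $O(|p|)$ per update on strings it indexes, which is still $O(|w|\log n)$ in total — but a single copy indexing blocks of length $\Theta(\log^2 n)$ already suffices). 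If $|p| \ge 2^\ell = \Omega(\log^2 n)$, we route the query to the long-pattern structure, which by \cref{thm:long} runs in $O(|p| + \log^2 n + occ) = O(|p| + occ)$ time, since in this regime $\log^2 n = O(|p|)$. In either case each reported occurrence is a stub that can be located in $O(\log n)$ additional time via the shared $\lists(\coll)$ component (\cref{lem:lists}), which is exactly the guarantee claimed in the statement.

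The only genuinely delicate point — and the one I would expect a referee to probe — is the bookkeeping of the shared $\lists(\coll)$ structure and the pointer/identifier conventions: both sub-structures refer to characters of $\coll$ through the same permanent pointers, so a single $\lists(\coll)$ instance must be updated exactly once per $\makeop/\dropop/\concop/\splitop$ and the two sub-structures must read from it rather than maintaining private copies. Once that single instance is factored out, the rest is a routine "run both in parallel and pick the right one" argument, and the running times add up as above. I would close by simply stating that the theorem follows by instantiating \cref{thm:long} and \cref{thm:shortpatterns} with $2^\ell = \Theta(\log^2 n)$ and dispatching updates and queries as described, with the failure probability controlled by a union bound over the two components.
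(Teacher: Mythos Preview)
Your high-level plan---run the long-pattern structure of \cref{thm:long} alongside $O(\log\log n)$ instances of \cref{thm:shortpatterns} and dispatch queries by $|p|$---is exactly the paper's approach, and your cost accounting for updates and queries matches theirs. One small inaccuracy: a single instance of \cref{thm:shortpatterns} only handles patterns in a dyadic window $[2^{\ell-1},2^\ell)$, so you do need the $O(\log\log n)$ copies, not just one.

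There is, however, a genuine gap you do not address: $n$ is not a fixed parameter but the total input size so far, which grows with every update. Hence the threshold $\log^2 n$, and in particular the number of short-pattern instances you must maintain, grows over time. When $n$ crosses a new doubling point you need a fresh instance $I_{\ell+1}$ that already contains every string currently in the collection; building it from scratch at that moment would blow the worst-case update bound. The paper handles this by a background rebuilding process: as soon as $n$ reaches $n_{\ell-1}=2^{2^{(\ell-1)/2}}$ it starts constructing $I_{\ell+1}$ by replaying the entire update history, interleaving a constant number of replay steps with each real update, and shows that the replay finishes before $I_{\ell+1}$ is actually needed (since the replay cost is $o(n_\ell-n_{\ell-1})$). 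Without this mechanism your argument only yields amortized bounds, not the worst-case bounds claimed in the theorem; you should add the background-rebuilding step (or an equivalent deamortization) to close the gap.
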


\begin{proof}
First of all, we maintain the data structure of \cref{thm:long} in order to support queries for long patterns.
Second, we maintain an instance $I_\ell$ of the data structure of \cref{thm:shortpatterns} for each $\ell$ satisfying
$1\le 2^{\ell-1} \le \log^2 n$.
To answer $\findop$ query we use the former data structure if $|p|\ge \log^2 n$ and
one of the latter otherwise. This way the query time is $O(|p|+occ)$ irrespective of $|p|$.
As far as $\makeop(w)$ and $\dropop(w)$ is concerned, \cref{thm:long} requires $O(|w|\log n)$ time
and each $I_\ell$ uses $O(|w|)$ time, which sums up to $O(|w|\log\log n)$ in total.
For $\splitop$ and $\concop$, \cref{thm:long} spends $O(\log^2 n)$ time, while each $I_\ell$ uses $O(2^\ell+\log n)$ time.
The latter sums up to $O(\log^2n + \log n \log \log n) = O(\log^2 n)$ in total.

Notice, however, that as $n$ grows over time we need to introduce instances $I_\ell$ for subsequent values of $\ell$.
More precisely, $I_{\ell+1}$ needs to be ready at latest when $n$ reaches $n_{\ell}:=2^{2^{\ell/2}}$.
Thus, at the moment when $n$ reaches $n_{\ell-1}=2^{2^{(\ell-1)/2}}$ we shall start building that instance.
For this we use a background process which replays the history of updates and performs them on $I_{\ell+1}$.

Note that replaying prior operations takes $O(n_{\ell-1}\cdot (2^\ell+\log n_{\ell-1}))=O(n_{\ell-1}^{\sqrt{2}})=o(n_{\ell}-n_{\ell-1})$ time,
while simulating on $I_{\ell+1}$ operations for $n$ between $n_{\ell-1}$ and $n_{\ell}$ takes time proportional to executing
them on $I_{\ell}$. Consequently, if the background process is faster by a large enough constant than the process
maintaining $I_{\ell}$, $I_{\ell+1}$ is going to be ready early enough.
\end{proof}

\section{Pattern Matching in a Persistent Dynamic Collection}\label{sec:ppm}
In this section we build another dynamic index, in which, as opposed to the index of \cref{sec:pm},
the arguments of $\splitop$ and $\concop$ may be preserved or destroyed, whichever is desired at a given moment.
Formally, we maintain a set $\coll$ of strings that is initially empty.
The index supports the following updates:
\begin{itemize}
  \item $\makeop(\str)$, for $\str\in \Sigma^+$, results in $\coll:=\coll\cup \{\str\}$.
      \item $\dropop(\str)$, for $\str\in \coll$, results in $\coll:=\coll\setminus \{\str\}$.
   \item $\concop(\str_1,\str_2)$, for $\str_1,\str_2\in \coll$, results in $\coll:=\coll\cup \{\str_1\str_2\}$.
   This operation might be followed by $\dropop(w_1)$ and $\dropop(w_2)$ at no extra cost.
  \item $\splitop(\str,k)$, for $\str\in \coll$ and $1\leq k<|\str|$, results in
    $\coll:=\coll\cup \{\str[..k], \str[(k + 1)..]\}$.
       This operation might be followed by $\dropop(w)$ at no extra cost.
\end{itemize}

As in \cref{sec:collection}, each string in $\coll$ has an integer \emph{handle} assigned by the operation which created it.
Moreover, if an update creates a string which has already been added to $\coll$, the original handle is returned.
Otherwise, a fresh handle is assigned to the string. We assume that handles are consecutive integers starting from $0$
to avoid non-deterministic output.

The $\findop(p)$ query for $p\in \Sigma^+$ checks whether $p$ occurs in some string $w\in \coll$.
The user might decide to report all the occurrences at an extra cost depending on the output size.
As opposed to the non-persistent index, every occurrence of $p$ is reported using the handle of the text $w\in \coll$ and the position of the occurrence of $p$ within $w$.

\subsection{Implementation of the Index}\label{sec:internal_index}
The index is built on the top of the data structures of \cref{sec:collection} maintaining a grammar $\grammar$.
It can be seen as a separate component, which sometimes extends $\grammar$ and employs the tools of \cref{sec:om}
to maintain the lexicographic order on some strings represented by $\grammar$.
This leads to the following interpretation, which is more general than the interface presented above.

We maintain a grammar $\grammar$ (as in \cref{sec:collection}, subject to extension using $\makeop$, $\concop$, and $\splitop$) and a subset $\coll$
of strings represented by $\grammar$ being texts for which we are to perform pattern matching.
For technical reasons, we only allow $w\in \coll$ if $\sstr^{-1}(w^R)\in \sigs(\grammar)$,
i.e., we require that the reverse of a string in $\coll$ must be represented by $\grammar$.
Our operations are as follows:

\begin{itemize}
  \item $\actop(s,s')$, for $s,s'\in \sigs(\grammar)$ such that $\sstr(s)=\sstr(s')^R$, results in $\coll := \coll\cup \{\sstr(s)\}$,
  \item $\dactop(s,s')$, for $s,s'\in \sigs(\grammar)$ such that $\sstr(s)=\sstr(s')^R$, results in $\coll := \coll\setminus \{\sstr(s)\}$,
  \item $\findop(p)$, for $p\in \Sigma^+$, checks if $p$ occurs in any string $w\in \coll$ and (optionally) reports all the occurrences.
\end{itemize}

The running times of the $\actop$ and $\dactop$ operations depend on the size of the set $\grammar(\coll)$.
Recall that $\grammar(\coll)$ can be seen as the minimal superset of $\coll$ which is closed with respect
to expanding any production rule; see~\cref{app:formalsymbols}.
This set can be characterized as follows:
\begin{observation}\label{obs:ppm}
  Let us fix a set $\coll\sub \Sigma^+$. A signature $s$ belongs to $\sigs(\grammar(\coll))$
if and only if there exists a string $w\in \coll$ and a node $v\in \stree(w)$ such that $\ussig(v)=s$.
\end{observation}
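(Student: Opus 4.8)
The plan is to prove both inclusions by a direct structural argument using the definition of $\grammar(\coll)$ as the closure of $\coll$ under expansion of production rules (formalized in \cref{app:formalsymbols}) together with the fact that every symbol $s\in\sigs(\grammar)$ uniquely determines the parse tree $\stree(s)$ (as noted in \cref{sec:pt_overview}).

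For the ``if'' direction: suppose $w\in\coll$ and $v\in\stree(w)$ with $\ussig(v)=s$. We walk down the path from the root of $\stree(w)$ to $v$. The root carries the signature $\sstr^{-1}(w)$, which belongs to $\grammar(\coll)$ since $w\in\coll$. Each step from a node $u$ to one of its children replaces $\ussig(u)$ by a symbol appearing on the right-hand side of the production rule for $\ussig(u)$ (recall that by \cref{sec:pt_overview} each node of $\stree(w)$ corresponds to an application of a production rule, so the children of $u$ carry exactly the symbols of the right-hand side of $\ussig(u)$'s rule). Since $\grammar(\coll)$ is closed under expanding production rules, every such child signature again lies in $\sigs(\grammar(\coll))$. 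Inducting along the root-to-$v$ path gives $s=\ussig(v)\in\sigs(\grammar(\coll))$.

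For the ``only if'' direction: since $\grammar(\coll)$ is the minimal set containing $\{\sstr^{-1}(w):w\in\coll\}$ and closed under rule expansion, it suffices to show that the set $S$ of signatures that arise as $\ussig(v)$ for some $w\in\coll$ and some $v\in\stree(w)$ itself contains all $\sstr^{-1}(w)$ and is closed under expansion. The former holds because the root of $\stree(w)$ has signature $\sstr^{-1}(w)$. For closure: if $s\in S$, witnessed by $w\in\coll$ and $v\in\stree(w)$ with $\ussig(v)=s$, then the children of $v$ in $\stree(w)$ carry exactly the symbols of the right-hand side of the production rule for $s$, and each such child is again a node of $\stree(w)$; hence each of those symbols lies in $S$. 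Thus $S$ is closed under expansion, so $\sigs(\grammar(\coll))\subseteq S$, which is the claimed inclusion. (The only subtlety is the edge case $\slev(s)=0$: then $s$ is a terminal, $\stree(s)$ is a single node, and the statement holds trivially with $v$ that node.)

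The main obstacle I anticipate is purely bookkeeping rather than conceptual: one must be careful that the notion ``node of $\stree(w)$ carries signature $s$'' coincides with the notion of $s$ appearing in the recursive unfolding used to define $\grammar(\coll)$ in \cref{app:formalsymbols} — in particular that dissolving single-child nodes (when passing from $\ustree$ to $\stree$) does not drop any signature that the closure would include, and conversely. This is handled by observing that a dissolved node and its unique child carry the same signature, so the sets of signatures occurring in $\ustree(w)$ and in $\stree(w)$ coincide, and the unfolding in \cref{app:formalsymbols} is exactly the unfolding of the production rules, i.e.\ of the parse tree. Once this identification is made explicit, both inclusions are immediate from minimality and closure.
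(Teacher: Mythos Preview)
Your proof is correct. In the paper this statement is labeled an \emph{Observation} and is given without proof; it is treated as immediate from the definition of $\grammar(\coll)$ in \cref{app:formalsymbols} (as the union $\bigcup_{w\in\coll}\grammar(w)$, equivalently the minimal closed superset mentioned just before the observation) together with the fact from \cref{sec:pt_overview} that $\stree(w)$ is precisely the parse tree of $\grammar(w)$, so its node signatures are exactly the symbols of $\grammar(w)$. Your argument spells out both inclusions carefully via the closure/minimality characterization, which is a perfectly valid route; the only remark is that the more direct unpacking (signatures in $\grammar(\coll)$ $=$ signatures in some $\grammar(w)$ $=$ signatures at nodes of some $\stree(w)$) makes the observation a one-liner, which is presumably why the paper omits the proof.
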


Like in \cref{sec:collection}, the success probabilities and running times also depend on the total size $t$ of prior updates. This includes all operations triggered by the user,
including those on the grammar ($\makeop$, $\concop$, and $\splitop$) and those on the index ($\actop$ and $\dactop$).
On the other hand, operations called internally by the components of the index are not counted.
\begin{itemize}
  \item $\actop$ runs in $O(1+|\grammar(\coll')\setminus \grammar(\coll)|\cdot (\log n + \log t + \log|\grammar(\coll')|))$ time with probability $1-\frac{1}{t^c}$, where $n=|\sstr(s)|$ and  $\coll'=\coll\cup\{\sstr(s)\}$.
  \item $\dactop$ runs in $O(1+|\grammar(\coll)\setminus \grammar(\coll')|\cdot (\log n + \log t + \log|\grammar(\coll)|))$ time with probability $1-\frac{1}{t^c}$, where $n=|\sstr(s)|$ and $\coll'=\coll \setminus \{\sstr(s)\}$.
  \item $\findop$ runs in $O(|p|+\depth(p)(\log |p|+\log t + \log |\grammar(\coll)|))$ time with probability $1-\frac{1}{t^c}$ plus  $O(\depth(w))$ for each reported occurrence of $p$ in $w\in\coll$.
\end{itemize}

\subsubsection{Data Structure}
Our data structure relies on the combinatorial tools developed in \cref{sec:pm} for the non-persistent index as
well as on the anchored index (\cref{thm:ai}).
However, we do not introduce a separate entry for each $v\in \stree(w)$ for $w\in \coll$,
but instead, we create a single entry for each $s \in \sigs(\grammar(\coll))$. This is possible due to \cref{obs:ppm}.
The entry $\itanch(s)$ consists of the anchored string $\sanch(s)$ and has $s$ as the key. 
Moreover, we store some extra data for each $s\in \sigs(\grammar(\coll))$: the information whether $\sstr(s)\in \coll$ and a linked
list $\spars(s)$ of \emph{parents} of $s$: signatures $s'\in \grammar(\coll)$ whose corresponding
symbol's production rule contains $s$
(formally, either $s'\to s^k$ for some integer $k$, $s'\to s_1s$ for some signature $s_1\in \sigs(\grammar(\coll))$, or $s'\to ss_2$ for some signature $s_2\in \sigs(\grammar(\coll))$).
Additionally, whenever $s'\in \spars(s)$, we store at $s'$ a pointer allowing to access the corresponding element of $\spars(s)$.
Note that any signature $s'$ needs at most two such pointers (one for each distinct signature in the right-hand side of the production rule).
 
\subsubsection{Queries}
By \cref{obs:ppm}, the set of keys is the same as in the non-persistent variant. 
Hence, the algorithm developed to prove \cref{lem:findop} can be used to determine whether $p$ occurs in any string $w\in \coll$.
The running time is $O(|p|+\depth(p))$ to construct the $O(\depth(p))$ potential anchors (see \cref{cor:maina}),
and for each potential anchor $O(\log |p|+\log t)$ (with high probability) to insert the necessary strings to the order-maintenance component
as well as $O(\log |\grammar(\coll)|)$ to query the anchored index.
In total, this gives $O(|p|+\depth(p)(\log |p|+\log t + \log |\grammar(\coll)))$ time with probability $1-\frac{1}{t^c}$.

In the reporting version, the anchored index gives (for each potential anchor $p_\ell|p_r$ with $p=p_\ell p_r$) the list of all signatures $s\in \sigs(\grammar(\coll))$ for which there exists an anchored occurrence of $p$ in $\sstr(s)$ with main anchor $p_\ell|p_r$.
\cref{lem:relanch} lets us identify all these anchored occurrences. Combining this data over all potential anchors,
we obtain (in extra linear time) the list of all anchored occurrences of $p$ in all $\sstr(s)$, where $s\in \sigs(\grammar(\coll))$.
In particular, for every occurrence of $p$ in $w\in \coll$ with hook $v$, the corresponding anchored occurrence of $p$ in $\ussig(v)$ 
is listed. 
Consequently, for every reported signature $s$ it suffices to list the fragments of $w\in \coll$ corresponding to nodes $v$ with $s=\ussig(v)$.
We shall do that in $O(\depth(w))$ time per fragment so that listing an occurrence in $w$ also works in $O(\depth(w))$ extra time, as announced.

More precisely, the following recursive procedure achieves $O(1+\depth(w)-\slev(s))$ time per fragment reported.
First, if $\sstr(s)\in \coll$, we simply report the whole string (in constant time).
Next, we recurse on all parents of $s$ listed in $\spars(s)$.
For each parent $s'$ and each reported fragment, we compute one (if $s'\to s_1s_2$) or $k$ (if $s'\to s^k$) fragments
represented by node with signature $s$.
Note that the inequality $\slev(s')\ge \slev(s)+1$ lets is inductively prove the claimed bound on the running time.
Finally, for each fragment and for each anchored occurrence of $p$ in $s$ we report an occurrence of $p$ in some $w\in \coll$.

\subsubsection{Updates}
The updates are slightly simpler than in the non-persistent version because we just need to maintain
the set $\sigs(\grammar(\coll))$ along with some auxiliary data.

Let us start with the implementation of $\actop$ inserting a string $w$ to $\coll$. 
First, we add $\sstr^{-1}(w)$ to $\sigs(\grammar(\coll))$, recursively adding some further signatures.
During this procedure, we traverse the parse tree $\stree(w)$ in order to add the
needed signatures to $\sigs(\grammar(\coll))$.
Note that for each $v\in\stree(w)$ such that $\ussig(v)$ has to be added, all signatures
corresponding to the ancestors of $v$ in $\stree(w)$ have to be added to $\sigs(\grammar(\coll))$ as well.
We start processing $v$ by adding $\ussig(\spar{w}(v))$ to $\spars(s)$
(unless $v$ is the root). If $s$ is already in $\sigs(\grammar(\coll))$, there is nothing more to do.
Otherwise, we start by recursively processing the children of $v$.
If $s\to s_1s_2$, then we process both children, but if $s\to s_1^k$, we process just the first one (since the further children
have the same signature, and already the first call makes sure that it is added to $\grammar(\coll)$). 
Next, we add $\itanch(s)$ to the anchored index of \cref{thm:ai}. For this, we insert some strings to the order-maintenance
component (we split out the appropriate fragments of $w$ and $w^R$),
and then actually insert the entry. This takes $O(\log n+\log t + \log|\grammar(\coll)|)$ time with high probability for a single symbol.
Clearly, in total, we only process signatures from $\sigs(\grammar(\coll')\sm\grammar(\coll))$, at most two children of every such symbol,
and $\sstr^{-1}(w)$ (on which we spend constant time even if $\sstr^{-1}(w)$ is already in $\grammar(\coll)$).
Finally, we mark that $\sstr^{-1}(w)$ represents a string in $\coll$.

The implementation of $\dactop$ follows the same steps in the reverse order:
First, we unmark $\sstr^{-1}(w)$, and next we recursively remove signatures from $\sigs(\grammar(\coll))$.
While processing $s$ we maintain a path in $\stree(w)$ from the root to a node $v$ such that  $\ussig(v)=s$ and the signatures of nodes above have just been removed from $\sigs(\grammar(\coll))$. We start by removing $s':=\ussig(\spar{w}(v))$ from $\spars(s)$. We use the pointer stored with $s'$ to do it in constant time.
If $\spars(s)$ is still non-empty or if $\sstr(s)\in \coll$, we do not need to remove $s$ from $\sigs(\grammar(\coll))$ and thus there is nothing to do.
Otherwise, we recursively process the children of $v$.
Again, if $s\to s_1s_2$, we process both children, but if $s\to s_1^k$, we process the first one only.
Next, we remove $\itanch(s)$ from the anchored index.
Again, it is easy to see that we spend $O(\log n+\log t + \log|\grammar(\coll)|)$ time per symbol in $\grammar(\coll)\sm\grammar(\coll')$,
which results in the claimed running time.

\subsection{Conclusions}
In this section we describe how the component developed in \cref{sec:internal_index} can be used
to implement the interface of the persistent dynamic index.
The following auxiliary fact, based on the context-insensitive decompositions (\cref{sec:contexti})
and the results of \cref{sec:adding} lets us bound the running times of the $\splitop$ and $\concop$
implementations:

\begin{fact}\label{fct:persistent_index}
  Let $w_1,w_2\in \Sigma^+$ and let $w=w_1w_2$. We have $|\grammar(w)\sm \grammar(\{w_1,w_2\})| = O(\depth(w))$, $|\grammar(w_1)\sm \grammar(w)| = O(\depth(w_1))$,
and $|\grammar(w_2)\sm \grammar(w)| = O(\depth(w_2))$.
\end{fact}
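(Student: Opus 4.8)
The plan is to translate everything into a statement about nodes of parse trees via \cref{obs:ppm}, which identifies $\sigs(\grammar(\coll))$ with the set of signatures labelling nodes of the trees $\stree(w)$, $w\in\coll$; in particular $\grammar(\{w_1,w_2\})=\grammar(w_1)\cup\grammar(w_2)$ as sets of signatures, and (since each $\grammar(\cdot)$ is closed) a signature $s$ determines its entire subtree $\stree(s)$, all of whose signatures lie in $\grammar(w_i)$ as soon as $s$ does. Each of the three bounds will then follow from exhibiting a suitable layer in the relevant parse tree and bounding the number of nodes strictly above it: by \cref{lem:list-above}, if a decomposition corresponding to a layer $L$ of $\stree(\str)$ has run-length encoding of length $d$, then $\stree(\str)$ has only $O(d)$ proper ancestors of $L$, hence at most $O(d)$ distinct signatures occur strictly above $L$.

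For the first bound I would use \cref{lem:layerimpl2}, once with the middle part $w_1$ (so $x=\eps$, $y=w_2$) and once with the middle part $w_2$ (so $x=w_1$, $y=\eps$), to obtain run-length-encoded context-insensitive decompositions $D_1$ of $w_1$ and $D_2$ of $w_2$. Crucially, \cref{lem:layerimpl2} works in time $O(\depth(w))$, so $|\rle(D_1)|,|\rle(D_2)|=O(\depth(w))$ — this is exactly what keeps the bound at $O(\depth(w))$ instead of $O(\depth(w_1)+\depth(w_2))$. By \cref{fact:two-decompositions}, $D:=D_1D_2$ is a decomposition of $w$; let $L$ be the corresponding layer of $\stree(w)$. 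Every node at or below $L$ lies in the subtree rooted at some node of $L$, whose signature is an entry of $D_1$ or $D_2$; since a context-insensitive node of $\ustree(w_i)$ is preserved, that signature labels a node of $\stree(w_i)$ and therefore belongs to $\grammar(w_i)$, and hence so do all signatures in its subtree. Thus any signature in $\grammar(w)\setminus\grammar(\{w_1,w_2\})$ labels a node strictly above $L$, and \cref{lem:list-above} gives $|\grammar(w)\setminus\grammar(\{w_1,w_2\})|=O(|\rle(D)|)=O(\depth(w))$.

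For the second bound I would instead take, via \cref{lem:layerimpl} together with the right-context-insensitive variant of \cref{remark:left-context}, a run-length-encoded \emph{right}-context-insensitive decomposition $D_1$ of $w_1$ with $|\rle(D_1)|=O(\depth(w_1))$, and let $L_1$ be its layer in $\stree(w_1)$. By the right-context-insensitivity part of \cref{cor:context_insensitive}, every descendant of a right-context-insensitive node is again right-context-insensitive, so all nodes at or below $L_1$ are right-context-insensitive; since $w=w_1w_2$ is a right extension of $w_1$, each such node has a counterpart in $\ustree(w)$ with the same signature (cf.\ \cref{lem:close}), and hence its signature lies in $\grammar(w)$. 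Consequently every signature of $\grammar(w_1)\setminus\grammar(w)$ can only label nodes of $\stree(w_1)$ lying strictly above $L_1$, and \cref{lem:list-above} bounds their number by $O(|\rle(D_1)|)=O(\depth(w_1))$. The third bound is fully symmetric, using a left-context-insensitive decomposition of $w_2$ and the fact that $w$ is a left extension of $w_2$.

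I expect the main obstacle to be the bookkeeping needed to move correctly between $\ustree$ and $\stree$: a single-child node of $\ustree(w_i)$ carries the signature of its child, so one must check that a signature of $\grammar(w_i)$ genuinely labels a node of the \emph{compressed} tree $\stree(w_i)$, and that "strictly above the layer" is an equivalent condition in $\ustree$ and in $\stree$ (it is, because the layers $L$, $L_1$ chosen are the ones without single-child nodes, and because context-insensitivity propagates down whole single-child chains). The other point deserving care is confirming that \cref{lem:layerimpl2} really outputs a decomposition whose run-length encoding has size $O(\depth(w))$ — proportional to the depth of the \emph{entire} string $w$ rather than of the fragment $w_1$ — since that is precisely what makes the first bound $O(\depth(w))$. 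Finally, one should note that $w_1,w_2\in\Sigma^+$ are nonempty, so no degenerate cases arise.
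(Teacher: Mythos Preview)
Your proposal is correct and follows the same core idea as the paper: exhibit context-insensitive decompositions whose signatures lie in the intersection of the relevant grammars, and bound the symmetric difference by the number of parse-tree nodes strictly above the corresponding layer.

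The paper organizes the argument slightly differently. It uses a \emph{single} pair of fully context-insensitive decompositions $D_1,D_2$ of $w_1,w_2$ for all three bounds, observing (since \cref{lem:layerimpl} and \cref{lem:layerimpl2} output the same decomposition, just computed inside different trees) that $|\rle(D_i)|=O(\min(\depth(w),\depth(w_i)))$; this one estimate then feeds all three inequalities. You instead build separate decompositions for each bound (context-insensitive via \cref{lem:layerimpl2} for the first, one-sided via \cref{lem:layerimpl}/\cref{remark:left-context} for the others), which is a bit more work but avoids the need to argue the $\min$ bound. The paper also cites \cref{cor:decomposition-to-sig} for the ``nodes above the layer'' count where you cite \cref{lem:list-above}; both suffice, and your choice in fact gives the marginally tighter $O(d)$ rather than $O(d+\depth(\cdot))$. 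The $\ustree$/$\stree$ bookkeeping you worry about is harmless for the reasons you state.
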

\begin{proof}
Let $D_1$ and $D_2$ be the context-insensitive decompositions of $w_1$ and $w_2$, respectively.
Note that $\grammar(D_i) \sub \grammar(w)\cap \grammar(w_i)$ for $i=1,2$ where the context-insensitive decomposition $D_i$ is identified with the set
of its signatures. \cref{lem:layerimpl2} implies $|\rle(D_i)|=O(\min(\depth(w),\depth(w_i)))$, while \cref{cor:decomposition-to-sig}
yields that ${|\grammar(w)\sm {\grammar(D_1\cdot D_2)}|={O(\depth(w)+|\rle(D_1\cdot D_2)|)}}$ and \linebreak $|\grammar(w_i)\sm \grammar(D_i)|\le O(\depth(w_i)+|\rle(D_i)|)$.
Combining these results, we obtain the claimed bounds.
\end{proof}

We are now ready to prove the main result of this section:

\begin{theorem}\label{thm:persistent_index}
There exists a data structure which maintains a persistent dynamic index $\coll$ and supports $\concop$ and $\splitop$ in $O((\log n + \log t)(\log N + \log t))$
time, $\makeop$ and $\dropop$ in $O(n(\log N + \log t))$ time,
and $\findop$ in $O(|p|+(\log |p|+\log t)(\log N+\log t) + occ(\log N+\log t))$ where $n$ is the total length of strings involved in the operation,
$N$ is the total length of strings in $\coll$, and $t$ is the total input size of the prior updates (linear for each $\makeop$, constant for each $\concop$, $\splitop$, and $\dropop$).
An update may fail with probability $O(t^{-c})$ where $c$ can be set as an arbitrarily large constant.
The data structure assumes that total length of strings in $\coll$ takes at most $\mword$ bits in the binary representation.
\end{theorem}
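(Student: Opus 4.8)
The plan is to build the persistent index as a lightweight wrapper over three components already available in the paper: the persistent string collection of \cref{thm:data_structure} (which maintains the grammar $\grammar$ together with the deterministic integer handles), the order-maintenance structure of \cref{sec:om}, and the internal index of \cref{sec:internal_index}, whose operations $\actop$, $\dactop$, and $\findop$ come with running times expressed through $|\grammar(\coll')\setminus\grammar(\coll)|$, $\depth$, and $\log|\grammar(\coll)|$. Each user update that creates a string $w$ (namely $\makeop$, $\concop$, or $\splitop$) is translated into: (i) the corresponding grammar update of \cref{lem:makeop,lem:concop,lem:splitop}, producing the signature $\sstr^{-1}(w)$; (ii) the symmetric update run on reversed strings, producing $\sstr^{-1}(w^R)$ --- this is possible because $w^R=w_2^Rw_1^R$ for $\concop(w_1,w_2)$, while $(w[..k])^R$ and $(w[(k+1)..])^R$ are obtained by splitting $w^R$, and the reverses of the arguments are already in $\grammar$ by the invariant that every active string has its reverse represented; and (iii) a call to $\actop(\sstr^{-1}(w),\sstr^{-1}(w^R))$. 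Symmetrically, $\dropop(w)$ becomes a single $\dactop$ call. Returning the original handle when $w$ already belongs to $\coll$ is inherited verbatim from \cref{thm:data_structure}.

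Next I would fix the parameter bounds. Exactly as in the proof of \cref{thm:data_structure}, the assumption that the total length of active strings occupies at most $\mword$ bits lets me extend the machine word by a constant factor and assume that every created string $w$ satisfies $\depth(w)=O(\log|w|+\log t)$; an update aborts (declaring failure) the moment this is violated, which by \cref{lem:small_depth} happens with probability $O(t^{-c})$. I would also bound the grammar size: each $\makeop(w)$ (and its reverse) inserts $O(|w|+\depth(w))$ signatures and each $\concop$ or $\splitop$ (with its reverse) inserts $O(\depth)$ of them, so $|\grammar|=O(t\,(\log N+\log t))$ and hence $\log|\grammar(\coll)|\le\log|\grammar|=O(\log N+\log t)$. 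Together with $n\le N$ (the involved strings end up inside $\coll$), this makes the multiplicative factor $\log n+\log t+\log|\grammar(\cdot)|$ occurring in the $\actop$/$\dactop$ running times equal to $O(\log N+\log t)$, and likewise the $\log|\grammar(\coll)|$ appearing in the $\findop$ bound.

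I would then combine these estimates with \cref{fct:persistent_index}. For $\concop(w_1,w_2)$ with $w=w_1w_2$ and $n=|w|$, the grammar updates cost $O(\log n+\log t)$, and since $\grammar(\coll')\setminus\grammar(\coll)\subseteq\grammar(w)\setminus\grammar(\{w_1,w_2\})$, \cref{fct:persistent_index} gives $|\grammar(\coll')\setminus\grammar(\coll)|=O(\depth(w))=O(\log n+\log t)$, so $\actop$ costs $O((\log n+\log t)(\log N+\log t))$; the $\splitop$ case is identical, using the bounds $|\grammar(w_i)\setminus\grammar(w)|=O(\depth(w_i))$. For $\makeop(w)$ and $\dropop(w)$ the set of signatures touched has size $|\grammar(\{w\})|=O(|w|)$ (the parse tree $\stree(w)$ has $O(|w|)$ nodes), so $\actop$/$\dactop$ cost $O(n(\log N+\log t))$, dominating the $O(n+\log t)$ grammar part. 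Finally, $\findop(p)$ runs in the internal-index time $O(|p|+\depth(p)(\log|p|+\log t+\log|\grammar(\coll)|))$ plus $O(\depth(w))$ per reported occurrence; substituting $\depth(p)=O(\log|p|+\log t)$, $\log|\grammar(\coll)|=O(\log N+\log t)$ and $\depth(w)=O(\log N+\log t)$, and absorbing the leftover $(\log|p|)^2$ and $\log|p|\log t$ terms into $O(|p|)$ in the regime $|p|>N$, yields exactly $O(|p|+(\log|p|+\log t)(\log N+\log t)+occ(\log N+\log t))$.

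The step I expect to need the most care is the parameter bookkeeping in the second paragraph, because of a feature special to the persistent setting: $N$, the total length of the currently active strings, can be exponential in the total input size $t$, so the failure probability must be charged to $t$ rather than to $N$ --- hence the necessity of aborting as soon as a string of depth $\omega(\log|w|+\log t)$ appears --- while the claimed running times genuinely contain $\log N$. This is consistent only because arithmetic on string lengths is assumed to take $O(1)$ time regardless of magnitude, so that $\log N$ enters solely through the sizes of balanced auxiliary structures (the order-maintenance list and the anchored index of \cref{thm:ai}); I would have to verify that $\log|\grammar|$ is genuinely absorbed into $O(\log N+\log t)$ and that no step secretly pays $\log N$ multiplicatively where \cref{fct:persistent_index} only permits $O(\log n+\log t)$.
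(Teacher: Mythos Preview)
Your proposal is correct and follows essentially the same route as the paper's own proof: wrap the grammar of \cref{thm:data_structure} (maintaining both $w$ and $w^R$), then translate each user update into grammar updates plus $\actop$/$\dactop$ calls to the internal index of \cref{sec:internal_index}, and invoke \cref{fct:persistent_index} to bound the symmetric difference of $\grammar(\coll)$ under $\concop$/$\splitop$.

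Two places where the paper is a bit cleaner than your write-up. First, you bound $\log|\grammar(\coll)|$ indirectly via $|\grammar|=O(t(\log N+\log t))$; the paper instead observes that $|\grammar(\coll)|=O(N)$ outright, since every signature in $\grammar(\coll)$ is $\ussig(v)$ for some $v\in\stree(w)$ with $w\in\coll$ and $|\stree(w)|=O(|w|)$. Second, for the $\findop$ bound you absorb $(\log|p|)^2$ into $O(|p|)$ when $|p|>N$; the paper simply notes that if $|p|>N$ one can return ``no'' immediately, so one may assume $\log|p|\le\log N$ throughout. Both shortcuts remove the need for the careful parameter bookkeeping you flagged in your final paragraph, but your versions are also correct.
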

\begin{proof}
We use the data structure of \cref{sec:collection} to maintain a grammar $\grammar$ so that it contains symbols representing $w$ and $w^R$ for every string
$w$ inserted to $\coll$. By \cref{thm:data_structure}, this takes $O(n+\log t)$ for $\makeop$, and $O(\log n + \log t)$ for $\splitop$ and $\concop$.
Pattern matching is supported using the extension of \cref{sec:internal_index}. Thus, we must maintain $\coll$ by activating
and deactivating the appropriate strings.
The size of $\grammar(\coll)$ is clearly $O(N)$, and what remains to be seen is how much this sets changes due to each operation.
First of all, $|\grammar(s)|\le |\sstr(s)|$, so the running time of the activation during $\makeop$ and deactivation during $\dropop$
can be bounded by $O(n(\log n + \log t + \log N))$. In the implementation of the $\splitop$ and $\concop$, we first activate the newly created string,
and then (optionally) deactivate the input strings. By \cref{fct:persistent_index}, the change in $\grammar(\coll)$ is bounded by the depths of the added and
removed strings, which are $O(\log n + \log t)$ with high probability. Consequently, $\splitop$ and $\concop$ indeed work in $O((\log n + \log t)(\log N + \log t))$ time. The bound on the running time of $\findop$ follows from the facts that $|\grammar(\coll)|=O(N)$ and $\depth(p)\le \log |p|+\log t$
with high probability. Moreover, if $|p|>N$, we can immediately give a negative answer, so we also have $\log |p|\le \log N$.
\end{proof}

\begin{corollary}\label{cor:persistent_index}
Suppose that we use the data structure to build  a dynamic index with $n$ strings, each of length at most $\poly(n)$.
This can be achieved by a Las Vegas algorithm whose running time with high probability is $O(N\log n +M\log^2 n)$ where $N$
is the total length of strings given to $\makeop$ operations and $M$ is the total number of updates ($\makeop$, $\concop$, and $\splitop$).
Pattern matching in such an index takes $O(|p|+\log^2 n + occ\log n)$ time with high probability.
\end{corollary}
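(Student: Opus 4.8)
The plan is to derive this exactly as \cref{cor:persistent} is derived from \cref{thm:data_structure}, i.e. to run the data structure of \cref{thm:persistent_index} and restart it from scratch (replaying the whole history of operations) whenever an update reports a failure. Since a Las Vegas guarantee is required, two things must be checked: (a) after instantiating the size parameters a \emph{successful} run attains the claimed running times, and (b) the restarting strategy does not inflate these running times, which is precisely what \cref{lem:restart} (deferred to \cref{app:model}) provides.

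First I would fix the word size. By the word-RAM assumption the number $n$ of strings fits in a machine word, and since every string has length at most $\poly(n)$, the total length $N$ of the strings in $\coll$ is also $\poly(n)$; extending the machine word to $\mword'=O(\mword)$ bits (a constant-factor slowdown) lets us assume $N$ fits in a word, as \cref{thm:persistent_index} requires. Likewise the total input size $t$ of all prior updates is $\poly(n)$, hence $\log t=O(\log n)$ and $\log N=O(\log n)$, and whenever $\findop(p)$ is asked with $|p|>N$ we answer negatively immediately, so $\log|p|=O(\log n)$ as well. Substituting these bounds into \cref{thm:persistent_index}, a successful run executes $\makeop(w)$ and $\dropop(w)$ in $O(|w|\log n)$ time, $\concop$ and $\splitop$ in $O(\log^2 n)$ time, and $\findop(p)$ in $O(|p|+\log^2 n+occ\log n)$ time. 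Summing over the whole sequence of updates, the $\makeop$ operations contribute $O(N\log n)$ in total (since their lengths sum to $N$), while each of the $M$ updates among $\makeop$, $\concop$, $\splitop$ contributes at most $O(\log^2 n)$, giving the overall $O(N\log n+M\log^2 n)$ bound for a successful run, and the per-query bound for pattern matching is immediate.

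Finally I would handle failures. By \cref{thm:persistent_index} each update fails with probability $O(t^{-c})$ for an arbitrarily large constant $c$ of our choice, and a run that does not fail has polynomial running time; choosing $c$ large enough and invoking \cref{lem:restart} shows that restarting on every failure preserves the asymptotic time bound with high probability with respect to the total size of updates (equivalently, with respect to $n$, since $t=\poly(n)$), which yields the corollary. The only genuinely delicate point is that the very first updates (when $t$ is small) may fail with constant probability, so a naive bound would not be high-probability overall; this is exactly the interaction that \cref{lem:restart} is designed to absorb, by arguing that only $O(1)$ restarts occur in expectation before $t$ crosses a suitable threshold and that afterwards restarts become geometrically unlikely, so the expected and high-probability running times coincide with the single-run bound up to constants. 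Everything else is routine substitution of $\log t,\log N,\log|p|=O(\log n)$ into \cref{thm:persistent_index}.
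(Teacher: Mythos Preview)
Your proposal is correct and matches the paper's intended approach: the paper gives no explicit proof of this corollary, leaving it as an immediate consequence of \cref{thm:persistent_index} in exact parallel to how \cref{cor:persistent} follows from \cref{thm:data_structure}, and you have spelled out precisely that parallel (word-size extension, substituting $\log t,\log N,\log|p|=O(\log n)$, summing over updates, and invoking \cref{lem:restart} with $c$ chosen large enough).
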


\section{An application: Finding Patterns in a History of Edits}\label{sec:timeline}
As an application of the dynamic string collection data structure,
we study the problem of maintaining the history of
edits of a text, so that efficient searching in the history
is possible.

More formally, we consider a text $T$ that is being edited.
At every step an edit of one of the following types is performed:
   (1) insert a single letter somewhere in the text,
   (2) remove a block of the text,
   (3) move a nonempty block of the text to some other position (cut-paste).
We assume that the whole history starts with an empty text $T_1$,
consists of $t$ versions $T_i$, and for each $i\in [2,t]$, $T_i$ is
a result of performing an edit of type (1), (2) or (3) on the text
$T_{i-1}$.

\begin{shortv}
Our data structure supports a pattern matching query.
Given a string $p$ it may determine if $p$ occurs in any version $T_i$.
Moreover, it can then list the occurrences in chronological order.
\end{shortv}

\begin{definition}
Let $T_1,T_2,\ldots,T_t$ be the history of edits of a text $T$.
We say that a pattern $p$ occurs in the history of $T$
if it is a substring of some $T_i$, $1\leq i\leq t$.
Additionally, we say that $p$ first occurs in $T_i$, if
it does not occur in any $T_j$ for $j<i$.
\end{definition}

We would like to be able to list the occurrences a given pattern within the history of edits.
In order to do that, we first need to determine what a \emph{distinct}
occurrence is.

Imagine each inserted character is a given a unique identifier.
The identifier assigned during the operation (1) never changes.
If decided that two occurrences of the pattern are distinct
whenever the sequences of their identifiers are equal,
we would guarantee that moving a block (operation (3)) does
not introduce additional occurrences within that block.
However, such definition would also result in the identification
of the occurrences of some pattern that are not contiguous
in time.
For instance, if $T_1=\eps$, $T_2=a$, $T_3=ac$, $T_4=abc$ and $T_5=ac$,
then the pattern $ac$ cannot be found in $T_4$, but still the
occurrences in $T_3$ and $T_5$ would be identified.

\begin{definition}
Let $T_1,\ldots,T_t$ be the history of edits of $T$.
For pattern $p$, let $o_1$ and $o_2$
be the occurrences
of $p$ within the history $T$.
The occurrence $o_i$ is located in version $T_{\tau(o_i)}$ at position $\delta(o_i)$.
Denote by $id(o_i)$ the sequence of letter identifiers of the fragment $T_{\tau(o_i)}[\delta(o_i)..(\delta(o_i)+|p|-1)]$
which is the occurrence of $o_i$.
We say that $o_1$ and $o_2$ such that $\tau(o_1)\leq \tau(o_2)$ are distinct if and only if
$id(o_1)\neq id(o_2)$ or there exists no such occurrence $o_3$ of $p$ for which both
$\tau(o_1)<\tau(o_3)<\tau(o_2)$ and $id(o_3)=id(o_1)$ hold.
\end{definition}


Below we also show how to list the occurrences
in the chronological order of their first appearances.

\subsection{Listing the Occurrences in Order -- the Static Case}\label{sec:timelinestatic}

Let again $T_1,\ldots,T_t$ be the history of edits of the text $T$.
Our goal is to support efficient listing of occurrences of some pattern
in the history, in chronological order.
We describe the occurrences $o$ by the pairs $(\tau(o),\delta(o))$, as defined above.
If the occurrences $o_1,o_2,\ldots$ are to be listed, we require that
$\tau(o_i)\leq \tau(o_{i+1})$ for every $i$.

Assume temporarily that we only need to support searching for patterns
of length at least $2$.

The strategy is to reduce the problem to the case of pattern matching in
a set of anchored strings.
The appropriate multiset $A_t$ of anchored strings can be defined inductively:
If $t=1$, then $A_1=\emptyset$.
Otherwise (if $t>1$), we assume that $A_{t-1}$ is given. We have the following cases:
  \begin{enumerate}
    \item $T_t$ is the text $T_{t-1}$ with one letter inserted. Then
      $T_t=XcY$ and $T_{t-1}=XY$ for some strings $X$, $Y$ and a letter $c$.
      If $T_t$ introduces any new occurrences of the pattern $p$, then $p$ is
      of the form $p_1cp_2$.
      Assume that $p_1\neq\eps$.
      Then $X\neq\eps$ as well.
      Moreover, $p_1$ is a suffix of $X$, whereas $cp_2$ is a prefix of $cY$.
      Equivalently, an anchored string $p_1|cp_2$ is a substring of an
      anchored string $X|cY$.
      As the choice of $p_1$ and $p_2$ was arbitrary, the anchored string $X|cY$
      may contain any anchored string $p_1|cp_2$ such that $p=p_1cp_2$.
      Otherwise, if $p_1=\eps$, then $Y\neq\eps$ and we include $c|Y$ in our set $A_t$.
      To sum up, $A_t=A_{t-1}\cup\{X|cY,c|Y\}$.

    \item $T_t$ is the text $T_{t-1}$ with some block of text deleted.
      Then $T_t=XY$ and $T_{t-1}=XBY$ for some string $X$, $Y$ and a nonempty string
      $B$.
      If $T_t$ introduces any new occurrences of the pattern $p$, then both $X$ and $Y$
      are nonempty and $p$ is of the form $p_1p_2$, where $p_1$ and $p_2$ are nonempty,
      $p_1$ is a suffix $X$, while $p_2$ is a prefix of $Y$.
      Hence, we set $A_t=A_{t-1}\cup\{X|Y\}$.

    \item $T_t$ is created by moving some block of text within $T_{t-1}$.
    	Then $T_{t-1}=XABY$ and $T_t = XBAY$ for some non-empty strings $A,B$
      and some strings $X,Y$.
      We add to $A_t$ anchored strings $XB|AY$, $X|BAY$ (if $X\ne \eps$), and $XBA|Y$ (if $Y\ne \eps$).
  \end{enumerate}

As any subsequent edit operation increases the size of the set of anchored
strings by at most $3$, we have $|A_t|=O(t)$.
The above construction can be simulated efficiently using the dynamic string
collection data structure from \cref{sec:collection,sec:om}:
we begin with an empty collection $\coll$ and
each step consists of at most one $\makeop$ operation and a constant
number of $\splitop$ and $\concop$ operations to assemble the needed
anchored strings.
Thus, the total time needed to construct $A_t$ is $O(t\log t)$.

In order to list the occurrences chronologically, we keep the timestamp
function $\tau:A_t\to \{1,\ldots,t\}$ such that for each
$a\in A_j\setminus A_{j-1}$ we have $\tau(a)=j$.
Additionally, for each $a=X|Y$ in $A_{j}\setminus A_{j-1}$ we remember
the offset $\delta(a)$ containing the position following the
anchor between $X$ and $Y$ in the version $T_j$.
Note that we always have $T_j[\delta(a)-|X|..\delta(a)+|Y|-1]=XY$.
The values $\delta(a)$ can be easily computed while inserting $a$ to $A_j$.

Once we build the set $A=A_t$, we follow the same general strategy of
\cref{sec:anchored} to map its elements to two-dimensional points.
However, as we need our points to be reported chronologically according
to the timestamp function $\tau$, the construction is more subtle.
Fix the history $T_1,\ldots,T_t$.

We first recall the data structure of Willard \cite{Willard:1985}, which stores
a static set of $n$ two-dimensional points $(x,y)$ and allows to report $k$
points in an orthogonal rectangle in time $O(\log{n}+k)$.
We construct a balanced binary tree with
the individual points stored in the leaves in the $x$-order.
Then, in each internal node $v$ we store the list $Y_v$ of points
represented in the leaf descendants of $v$, sorted by the $y$-coordinate.
Additionally, for each point $p$ in $Y_v$ we store the pointers to the
predecessors/successors (with respect to weak inequalities)
of $p$ (in the $y$-order) in both $Y_l$ and $Y_r$, where $l$ and $r$
is the left and the right child of $v$, respectively.
The list $Y_v$ along with the required pointers can be constructed
from $Y_l$ and $Y_r$ in time $O(|Y_l|+|Y_r|)$.
Thus, the total time needed to build the data structure is $O(n\log{n})$.

In the static case, it is convenient to store the lists $Y_v$ as arrays.
In order to report points in a rectangle $[x_1,x_2]\times [y_1,y_2]$,
we first find the successor (predecessor) of $y_1$ ($y_2$) in $Y_{\text{root}}$
with binary search.
Then, we traverse the tree until we end up in $O(\log{n})$ nodes
responsible for $x$-intervals completely contained in $[x_1,x_2]$.
Following the predecessor/successor pointers allows us to avoid repeating
binary search in the subsequently visited nodes.
Finally, we report $k$ points by listing $k$ points from
$O(\log{n})$ subarrays of some of the arrays $Y_v$.

Now, we extend this data structure so that the points are reported
in increasing order of their timestamps, given by a function $\tau$.
We additionally build an \emph{range minimum query} data structure
for each of the lists $Y_v$.
As $Y_v$ is stored as an array,
such structure can be constructed in linear time and allows
searching for an element of minimum timestamp in a subarray in
$O(1)$ time \cite{LCA}.
As discussed above, all we need to do is to report $k$ points
with smallest weights among $O(\log{n})$ subarrays.
In order to do that, we initialize a priority queue with
$O(\log{n})$ triples of the form $(Y_v,a,b)$.
We assume the minimum weight of subarray $Y_v[a..b]$ to be
a key of such triple -- such a key can be computed in $O(1)$ time.
Next, we do the following until $k$ points have been reported:
remove the triple $(Y_v,a,b)$ with a minimum key $\tau(Y_v[c])$ from the queue,
report $Y_v[c]$ and add the triples $(Y_v,a,c-1)$ and $(Y_v,c+1,b)$
to the queue.
It is clear, that each subsequent point can be reported with
$O(\log{n})$ time overhead, so reporting $k$ points takes
$O(k\log{n})$ time in total.

Let $p$ be a pattern that we want to search for in the history
of edits.
Since our data structure supports only searches for anchored strings,
we need to turn our pattern into an anchored string as well.
Unfortunately, in general we do not know how to partition the
pattern in order to find its first occurrence in the history.
As a result, we need to try every possible partition.
All the possible partitions $p=p_1p_2$ into nonempty substrings
$p_1,p_2$ can be temporarily inserted into $\coll$ by
issuing $|p|$ single-letter $\makeop$ operations and no more than $|p|-1$ $\splitop$
operations on the collection $\coll$.
Fortunately, the sets of occurrences reported for different
partitions do not intersect.
A reported anchored string $a$ corresponds to the occurrence
$T_{\tau(a)}[\delta(a)-|p_1|..\delta(a)+|p_2|-1]$.
Since we have $O(|p|)$ possible partitions, $O(|p|)$ $\anchfind$
queries need to be issued to the anchored strings data structure
to find the first occurrence of $p$ in the history of edits,
obtaining the query time $O(|p|\log{t})$.
Moreover, if we want to report $k$ occurrences chronologically,
we can again employ a priority queue to obtain time $O(|p|\log{t}+k\log{t})$.

We handle the case of patterns of length $1$ separately.
For each letter $c\in\Sigma$ we keep the identifiers of
inserted letters $c$ along with the time intervals when
the identifiers were present in the text.
Listing the occurrences of a letter $c$ in the history
of edits in chronological order is trivial.

\begin{theorem}
Let $T_1,T_2,\ldots,T_t$ be the history of edits of the text $T$.
There exists a data structure that can preprocess the history in $O(t \log t)$ time w.h.p. and then answer pattern matching queries against all versions of $T$. A query for a pattern $p$ is answered in $O(|p|\log t)$ time w.h.p. (with respect to $t$).
After that the occurrences can be listed in chronological order, each with a delay of $O(\log t)$.
\end{theorem}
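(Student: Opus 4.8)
The plan is to assemble the theorem directly from the building blocks developed in the preceding sections: the dynamic string collection of \cref{sec:collection,sec:om} (for concatenation/split and constant-time lexicographic comparison), the anchored index of \cref{thm:ai}, and the static range-reporting-with-timestamps machinery just described. The heart of the argument is the inductive construction of the multiset $A_t$ of anchored strings, whose correctness I would state and prove as an explicit invariant.

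\textbf{Step 1: Correctness of the anchored-string invariant.} The key claim is: a pattern $p$ with $|p|\ge 2$ has a distinct occurrence first appearing in version $T_i$ if and only if some $p_1|p_2$ with $p=p_1p_2$, $p_1,p_2\in\Sigma^+$, is a substring of an anchored string $a\in A_i\setminus A_{i-1}$, with $\tau(a)=i$ and $\delta(a)$ recovering the position. I would verify this by case analysis on the three edit types, exactly as sketched in \cref{sec:timelinestatic}: for each edit $T_{i-1}\to T_i$, every occurrence of $p$ in $T_i$ that did not occur (with the same letter-identifier sequence, at a chronologically contiguous time) in $T_{i-1}$ must straddle the seam introduced by the edit, and the $O(1)$ anchored strings added at step $i$ (two for insertion, one for deletion, up to three for cut-paste, corresponding to placing the anchor to the left of, at, or to the right of each seam) capture exactly these straddling occurrences. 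The \emph{distinctness} clause of the definition is what forces attention to the seam rather than the whole text: an occurrence not touching any seam has the same identifier sequence as one in $T_{i-1}$ and the intervening versions all contain it, so it is not distinct from the earlier one. This gives $|A_t|=O(t)$ since at most three strings are added per step.

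\textbf{Step 2: Building $A_t$ efficiently.} I would run the dynamic string collection of \cref{thm:data_structure} augmented with \cref{thm:persistent_compare}. Maintaining the current text $T_i$ as an element of the collection, each edit is simulated with one $\makeop$ (for the inserted letter, type 1) and a constant number of $\splitop$/$\concop$ operations to (i) update the text and (ii) assemble the $O(1)$ new anchored strings $X|cY$, $X|Y$, etc. Since $|A_t|=O(t)$ and every string ever handed to the structure has polynomial length in $t$, \cref{cor:persistent} (or directly \cref{thm:data_structure}) gives $O(\log t)$ per operation with high probability, hence $O(t\log t)$ total, w.h.p. Along the way I record $\tau(a)$ and $\delta(a)$ for each $a$, which are immediate byproducts of the construction.

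\textbf{Step 3: The reporting index and the query.} I would invoke \cref{thm:ai}-style reduction: map $a=(s_1,s_2)\in A_t$ to the point $(s_1^R,s_2)$, using the fact that the collection is closed under reversal (add $w^R$ whenever $w$ is added) and that comparisons cost $O(1)$ by \cref{thm:persistent_compare}. But instead of Mortensen's dynamic structure I use the \emph{static} Willard structure described above, built in $O(t\log t)$ time, augmented with a range-minimum structure on each secondary list $Y_v$ keyed by $\tau$, so that after locating the $O(\log t)$ canonical subarrays for a query rectangle I can emit the matching points in increasing $\tau$-order via a priority queue, with $O(\log t)$ delay per point. A query for pattern $p$ (with $|p|\ge 2$) requires trying all $|p|-1$ partitions $p=p_1p_2$: insert $p$ and its relevant pieces into the collection with $O(|p|)$ $\makeop$/$\splitop$ operations, then issue $O(|p|)$ anchored queries of the form $[p_1^R,p_1^R\lastsymb]\times[p_2,p_2\lastsymb]$; a merged priority queue over all $|p|$ rectangles yields occurrences sorted by $\tau$, giving first-occurrence time $O(|p|\log t)$ and delay $O(\log t)$ thereafter (the occurrence $a$ decodes to $T_{\tau(a)}[\delta(a)-|p_1|..\delta(a)+|p_2|-1]$). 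Length-1 patterns are handled trivially by keeping, per letter $c\in\Sigma$, the time-intervals of the identifiers of inserted copies of $c$.

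\textbf{The main obstacle} I expect is Step 1, specifically pinning down the distinctness definition so the seam-only analysis is actually complete. One must argue carefully that an occurrence of $p$ in $T_i$ that \emph{does} touch the seam but whose identifier sequence coincides with an earlier occurrence is nonetheless either reported via $A_i$ or is genuinely not a distinct new occurrence — i.e., that the anchored strings added at step $i$ capture precisely the occurrences that are distinct from everything in $A_{i-1}$, no more and no less. The cut-paste case (type 3) is the fiddliest, since it creates two new seams ($X|BAY$ at the start of the moved block's new location and $XBA|Y$ at its end, plus $XB|AY$ at the internal junction) and one must check that occurrences crossing the \emph{old} seams of $T_{i-1}$ are not lost; here the fact that $A_i\supseteq A_{i-1}$ (the set only grows) does the work, so no previously-captured occurrence ever disappears. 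Everything else is a routine assembly of cited results.
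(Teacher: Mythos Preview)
Your proposal is correct and follows essentially the same route as the paper: build $A_t$ inductively via $O(1)$ anchored strings per edit using the dynamic string collection of \cref{sec:collection,sec:om}, reduce to 2D points, and use Willard's static structure augmented with range-minimum arrays and a priority queue for chronological reporting, querying over all $|p|-1$ partitions of $p$ and handling length-$1$ patterns separately. Your Step~1 is in fact more explicit about the distinctness invariant and the cut-paste seam analysis than the paper's own sketch.
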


\subsection{Online Indexing of the Edit History}
We now sketch how to drop the assumption
that the whole history of edits $T_1,\ldots,T_t$ is given beforehand.
Instead, we assume that the insert/remove/cut operations
come in an online fashion.
As discussed previously, all we have to do is to design a data structure
maintaining a multiset $A$ of two-dimensional points subject to the following
operations:
\begin{itemize}
\item add a point to $A$,
\item report $k$ points from $A\cap([x_1,x_2]\times [y_1,y_2])$
  in the chronological order of their additions.
\end{itemize}
We proceed similarly as in the static case.
The dynamization of Willard's data structure has already been
studied~\cite{Dietz:1991,Mehlhorn:1990,Mortensen:2003}.
However, to our best knowledge, no effort has been made
to study the case when chronological reporting is required.
In order to allow for efficient chronological reporting,
we employ the arguably simplest way to dynamize
the static structure, without much loss in asymptotic update/query times
and worst-case performance guarantees.

We use the following lemma to design a replacement for
the static pointers from
$Y_v$ to the lists $Y_{c_1},Y_{c_2},\ldots$, where $c_1,c_2,\ldots$
are the children of the node $v$ in the top-level search tree.

\begin{lemma}[\cite{Dietz:1991}]\label{lem:mark}
Let $L$ be an initially empty list whose elements can be either
\emph{marked} or \emph{unmarked}.
There exist a data structure that supports each of the following
operations on $L$ in $O(\log{\log{n}})$ time.
\begin{itemize}
\item $insert(x,y)$ -- insert an unmarked element $x$ before $y\in L$.
\item $mark(x)$ -- mark an element $x\in L$.
\item $succ(x)$ -- for $x\in L$, return the next marked element in $L$,
  if such element exists.
\end{itemize}
\end{lemma}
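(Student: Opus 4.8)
The plan is to use the standard two-level design for maintaining a sparse marked subset of a dynamic list. Throughout, $n$ denotes the current length of $L$, and the whole structure is rebuilt from scratch each time $n$ doubles (there are no deletions, so $n$ only grows), contributing $O(1)$ amortized overhead; since the paper works on a word RAM with $\Theta(\log n)$-bit words, a block of $\Theta(\log n)$ elements occupies $O(1)$ machine words. I would partition $L$ into consecutive \emph{chunks} of between $\log n$ and $2\log n$ elements, kept as a doubly linked list of chunks, with each element holding a back-pointer to its chunk. A chunk stores its elements in order together with a bitmask with one bit per slot recording which elements are marked; as the bitmask occupies $O(1)$ words, ``find the next marked slot after a given slot of this chunk'' is answered in $O(1)$ time by clearing the low bits and extracting the least significant set bit. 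On top of the list of chunks I would maintain two auxiliary structures: an order-maintenance data structure~\cite{Dietz:1987,Bender:2002} over the chunks, which assigns each chunk an integer label from a universe of size $\poly(n)$ and supports insertion of a chunk and order comparisons in $O(1)$ (amortized) time; and a van Emde Boas tree over the set of labels of those chunks that currently contain at least one marked element, which over a $\poly(n)$-sized universe supports insertion, deletion, and successor queries in $O(\log\log n)$ time.

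With these in place the operations are immediate. For $insert(x,y)$ I locate $y$'s chunk $C$ in $O(1)$ time via the back-pointer, splice $x$ into $C$ and shift its bitmask; if $C$ now exceeds $2\log n$ elements I split it into two chunks of about $\log n$ elements each, updating the chunk list, the order-maintenance structure, and — for whichever of the two halves still contains a mark — the van Emde Boas tree. A split happens at most once per $\Omega(\log n)$ insertions and itself costs $O(\log n + \log\log n)$, so it is $O(1)$ amortized. For $mark(x)$ I set the relevant bit in $x$'s chunk and, if that chunk was previously unmarked, insert its label into the van Emde Boas tree, costing $O(\log\log n)$. For $succ(x)$ I first look, in $O(1)$ time, for a marked slot strictly after $x$ inside $x$'s own chunk; failing that, I run a van Emde Boas successor query to reach the next chunk containing a mark and return its first marked element, all in $O(\log\log n)$. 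Each operation thus performs $O(1)$ word operations plus $O(1)$ order-maintenance and van Emde Boas operations.

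The main obstacle is keeping the van Emde Boas tree consistent with the order-maintenance structure: inserting a new chunk may force the order-maintenance structure to relabel chunks that are already present, and every relabeled chunk currently containing a mark must be moved to its new label in the van Emde Boas tree, at cost $O(\log\log n)$ apiece. The classical analysis of~\cite{Bender:2002} bounds the total number of relabelings over a sequence of $m$ chunk insertions by $O(m\log m)$, so charging $O(\log\log n)$ per relabeled chunk raises the amortized cost of a chunk insertion only to $O(\log n\log\log n)$; distributing this over the $\Omega(\log n)$ list insertions that occur between two successive splits of a given chunk yields $O(\log\log n)$ amortized per $insert$, as claimed. If worst-case rather than amortized bounds are desired, one de-amortizes in the usual way — spread each chunk split and each global rebuild over the next $\Theta(\log n)$ operations, keeping a shadow copy during the transition — which is routine once the amortized version is established. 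It is worth noting that the $\log\log n$ in the statement is exactly the $\log\log(\poly(n)) = \Theta(\log\log n)$ cost of one van Emde Boas operation on the label universe, so this is the natural bound for the approach; and when $L$ is so short that $\log n$ drops below the word size, keeping all of $L$ in a single chunk reduces every operation to $O(1)$ word operations.
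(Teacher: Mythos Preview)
First, note that the paper does not give its own proof of this lemma: it is quoted with a citation to \cite{Dietz:1991} and used as a black box. Your high-level architecture --- chunks of size $\Theta(\log n)$, order maintenance on the chunks, a van Emde Boas tree over the labels of chunks that contain a mark, with the relabeling cost amortized over the $\Theta(\log n)$ list insertions between consecutive chunk splits --- is exactly the standard reduction, and that part of the argument is fine.

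The gap is inside a chunk. You assert that ``a block of $\Theta(\log n)$ elements occupies $O(1)$ machine words'' and that $insert$ just ``splices $x$ into $C$ and shifts its bitmask''. The bitmask does fit in $O(1)$ words, but the $\Theta(\log n)$ element \emph{handles} do not, and shifting the bitmask at the right place requires knowing $y$'s rank within the chunk. If the chunk is an array, inserting before $y$ forces a shift of $\Theta(\log n)$ handles; if it is a linked list, insertion is $O(1)$ but you cannot recover $y$'s rank without a $\Theta(\log n)$ scan or without running an in-chunk list-labeling scheme, which on $\Theta(\log n)$ elements with an $O(\log n)$-sized label space costs $\Theta((\log\log n)^2)$ amortized relabels per insertion. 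Either way the claimed $O(1)$ in-chunk insert fails, and with it the $O(\log\log n)$ bound for $insert$. The easy fix, which leaves the rest of your analysis untouched, is to replace the bitmask by a balanced BST on the chunk's $\Theta(\log n)$ elements (in list order), where each node carries a bit ``my subtree contains a marked element''. Insertion, marking, and ``next marked after $x$ within this chunk'' then all run in $O(\log\log n)$, which is exactly what you need.
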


We also replace the top-level binary search tree from the static
structure with a weight-balanced B-tree of \cite{Arge:2003}.
In a weight balanced B-tree $\Gamma$ with branching parameter $a\geq 8$,
the elements are stored as leaves.
For an internal node $v\in\Gamma$ we define its weight $w(v)$ to be
the number of leaves among the descendants of $v$.
Also denote by $c^v_1,c^v_2,\ldots$ the children of a node $v$.

The following invariants are fulfilled by a weight-balanced B-tree $\Gamma$:
\begin{itemize}
\item All leaves of $\Gamma$ are at the same level.
\item
  Let \emph{height} of a node $v$ be the number of edges
on the path from $v$ to any leaf.
An internal node of height $h$ has weight less than $2a^h$.
\item Except for the root, an internal node of height $h$ has weight
  larger than $\frac{1}{2}a^h$.
\end{itemize}
\begin{lemma}[\cite{Arge:2003}]
For a weight-balanced B-tree $\Gamma$ with branching parameter $a\geq 8$,
the following hold:
\begin{itemize}
    \item All internal nodes of $\Gamma$ have at most $4a$ children.
    \item Except for the root, all internal nodes of $\Gamma$ have at least $a/4$ children.
    \item If the number of elements stored in $\Gamma$ is $n$, $\Gamma$
      has height $O(\log_a{n})$.
\end{itemize}
\end{lemma}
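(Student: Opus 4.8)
The plan is to obtain all three items as direct consequences of the three structural invariants of a weight-balanced B-tree quoted just above: (i) all leaves lie on one level, (ii) every internal node of height $h$ has weight less than $2a^h$, and (iii) every non-root internal node of height $h$ has weight more than $\tfrac12 a^h$. First I would record that a leaf counts as a node of height $0$ and weight $1$ (the number of its leaf-descendants, namely itself), and that both weight inequalities remain valid there since $\tfrac12<1<2$; this lets children that happen to be leaves be handled uniformly in what follows. For any internal node $v$ of height $h\ge 1$ with children $c_1,\dots,c_k$, each $c_i$ has height $h-1$, and since the subtrees rooted at the $c_i$ partition the leaf-descendants of $v$ we have $\sum_{i=1}^k w(c_i)=w(v)$.

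For the upper bound on the degree, I would fix such a $v$ (possibly the root) and use (iii) on its children, none of which is the root, so $w(c_i)>\tfrac12 a^{h-1}$ for every $i$; combining with (ii) applied to $v$ gives $k\cdot\tfrac12 a^{h-1}<w(v)<2a^h$, hence $k<4a$, proving that every internal node has at most $4a$ children. For the lower bound I would take $v$ to be a \emph{non-root} internal node of height $h$ and apply (iii) to $v$ together with (ii) to each child — this is the step that uses that the upper weight bound is available at every node, be it root, internal, or leaf — yielding $\tfrac12 a^h<w(v)=\sum_i w(c_i)<k\cdot 2a^{h-1}$, so $k>a/4$, i.e.\ every non-root internal node has at least $a/4$ children. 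Since $k$ is a positive integer, the strict inequalities give the claimed weak bounds.

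For the height bound, let $H$ be the height of the root of a tree storing $n\ge 1$ elements. If $H=0$ the tree is a single leaf and the bound is trivial, so I would assume $H\ge 1$ and pick any child $c$ of the root; then $c$ is a non-root node of height $H-1$, so by (iii) (with the height-$0$ convention above covering $H=1$) we get $n=w(\text{root})\ge w(c)>\tfrac12 a^{H-1}$, whence $a^{H-1}<2n$ and $H<1+\log_a(2n)=O(\log_a n)$. The only points requiring a little care — and the ones I would spell out rather than leave implicit — are treating the root separately, since its lower weight bound fails, which is exactly why the height argument descends to a child of the root, and checking the height-$0$ boundary case so that leaf children need no special handling; beyond that the lemma is an immediate corollary of the invariants and there is no real obstacle.
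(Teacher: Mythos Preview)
Your argument is correct and is the standard derivation of these degree and height bounds from the weight invariants. Note, however, that the paper does not give its own proof of this lemma: it is stated with a citation to Arge and Vitter~\cite{Arge:2003} and used as a black box, so there is no in-paper proof to compare against. Your write-up is exactly the elementary counting argument one would expect (and essentially what appears in the original reference), including the care you take with leaves at height~$0$ and with the root lacking a lower weight bound.
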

To insert an element $e$ into $\Gamma$, we append it as a leaf
at an appropriate position.
This may result in some node $v$ getting out of balance,
i.e., $w(v)>2a^h$, and in that case rebalancing is needed.
For reasons that will become clear later, we realize rebalancing
in the following manner.
Once $w(v)$ first becomes at least $\frac{3}{2}a^h$ for a node $v$ with
$k(v)$ children, we find the index $j(v)$ such that
$\max\left(\sum_{i=0}^{j(v)}w(c^v_i),\sum_{i=j(v)+1}^{k(v)} w(c^v_i)\right)$
is the smallest.
Note that this value is less than $\frac{3}{4}a^h+2a^{h-1}$
at that time.
The value $j(v)$ is then kept (or incremented if a child $c^v_i$
with $i\leq j(v)$ splits into nodes ${c^v_i}',{c^v_i}''$) for
future split of $v$.
The intuition here is that we decide what the split of $v$ will look
like before it actually happens.
This allows us to rebuild any secondary data structures stored in $v$,
so that once nodes $v',v''$ are created, their secondary structures
are ready to use.
The rebuilding work is distributed among the following updates,
and as a result, we obtain good worst-case update bounds.

At the very moment when $w(v)$ becomes $2a^h$, we split $v$ into nodes $v'$ and $v''$ such that
$v'$ has children $c^v_1,\ldots,c^v_{j(v)}$ and $v''$ has
children $c^v_{j(v)+1},\ldots,c^v_{k(v)}$.
We have $$\max(w(v'),w(v''))=\\
\max\left(\sum_{i=0}^{j(v)}w(c^v_i),\sum_{i=j(v)+1}^{k(v)} w(c^v_i)\right)<\\
\tfrac{3}{4}a^h+2a^{h-1}+\tfrac{1}{2}a^h<a^h\left(\tfrac{5}{4}+\tfrac{2}{a}\right)\leq \tfrac{3}{2}a^h,$$
$$\min(w(v'),w(v''))=w(v)-\max(w(v'),w(v''))>\tfrac{1}{2} a^h.$$
Thus, the weight invariants hold after the split.
The worst-case time needed to perform an insertion is $O(\log{n})$.

Suppose $\Gamma$ stores $n$ elements and each node $v\in \Gamma$ maintains a secondary
data structure $D_v$ on the elements stored in the subtree of $v$
such that:
\begin{itemize}
  \item An empty structure $D_v$ is initialized in $O(1)$ time.
  \item We can add an element to $D_v$ in worst-case time $O(U(n))$.
  \item Query on $D_v$ can be performed in worst-case time $O(Q(n))$.
  \item A list $D_v.L$ containing a history of insertions to $D_v$
    is a part of $D_v$.
\end{itemize}
\begin{lemma}
Additional worst-case time needed to adjust the secondary structures
$D_v$ while performing an insertion to a weight-balanced B-tree $\Gamma$
is $O(U(n)\log n)$.
\end{lemma}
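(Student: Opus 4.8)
The statement to prove bounds the extra worst-case time for maintaining secondary structures $D_v$ during an insertion to a weight-balanced B-tree $\Gamma$ by $O(U(n)\log n)$. Here is the plan.

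The plan is to account separately for two sources of work: (1) adding the newly inserted element to the secondary structures $D_v$ along its root-to-leaf path, and (2) the incremental rebuilding of secondary structures for nodes that are in the process of being split. For (1), since $\Gamma$ has height $O(\log_a n) = O(\log n)$ and the new element appears in the subtree of exactly the $O(\log n)$ ancestors of its leaf, we perform $O(\log n)$ calls to the add-operation of some $D_v$, each costing $O(U(n))$, for a total of $O(U(n)\log n)$. For (2), the key idea is that when a node $v$ of height $h$ reaches weight $\tfrac32 a^h$, we have already fixed the split index $j(v)$, so we know exactly which elements will land in $D_{v'}$ and which in $D_{v''}$ once the split actually occurs at weight $2a^h$. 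We therefore build $D_{v'}$ and $D_{v''}$ from scratch incrementally, replaying the relevant sublists of $D_v.L$, and we spread this rebuilding work over the $\Theta(a^h)$ insertions into the subtree of $v$ that occur between weight $\tfrac32 a^h$ and $2 a^h$. The rebuilding of $D_{v'}$ and $D_{v''}$ costs $O(a^h \cdot U(n))$ total (each has at most $\tfrac32 a^h = O(a^h)$ elements), so amortizing over $\Omega(a^h)$ insertions gives $O(U(n))$ worst-case extra work per insertion per height level $h$; summing over the $O(\log n)$ height levels yields another $O(U(n)\log n)$.

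First I would formalize the charging scheme for part (2): an insertion of an element $e$ touches one node at each height; at height $h$, if that node $v$ is currently in its "pre-split" phase (weight between $\tfrac32 a^h$ and $2a^h$), we do $O(U(n))$ units of the pending rebuild for $v$. Since the pre-split phase lasts for at least $\tfrac12 a^h$ insertions into the subtree of $v$ (weight grows from $\tfrac32 a^h$ to $2a^h$ one leaf at a time), and the total rebuild cost for $v$ is $O(a^h U(n))$, a constant amount of rebuild work per such insertion suffices. One must also check that the history list $D_v.L$ lets us identify which insertions contributed elements to the $c^v_i$ with $i \le j(v)$ versus $i > j(v)$, so the replays for $D_{v'}$ and $D_{v''}$ partition $D_v.L$ correctly; this is exactly why $D_v$ is required to carry the insertion-history list $D_v.L$. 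I would also note that when a child $c^v_i$ with $i \le j(v)$ itself splits, $j(v)$ is incremented, which keeps the intended partition of children consistent and does not affect the element-level partition already decided.

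The main obstacle will be the interaction between nested rebuilds across levels: an insertion simultaneously contributes to the weight of every ancestor, so several ancestors may each be in their pre-split phase at once, and the replay of $D_v.L$ used to build $D_{v'}, D_{v''}$ must remain valid even though descendants of $v$ are themselves being split concurrently. I would argue that because each $D_u$ only ever grows (elements are added, never removed) and because the split of a descendant does not change the multiset of elements stored in the subtree of $v$, the history $D_v.L$ is a faithful record from which $D_{v'}$ and $D_{v''}$ can be rebuilt regardless of descendant activity; the per-level $O(U(n))$ bound then composes additively over the $O(\log n)$ levels. A secondary technical point is that $U(n)$ and $Q(n)$ are stated as functions of the current total size $n$, which only grows, so using the current $n$ as the bound throughout is safe. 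Putting the two parts together gives the claimed $O(U(n)\log n)$ worst-case overhead per insertion.
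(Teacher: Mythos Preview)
Your proposal is correct and takes essentially the same approach as the paper: insert the new element into $D_v$ at each of the $O(\log n)$ levels, and at each level spread the $O(a^h U(n))$ cost of rebuilding $D_{v'},D_{v''}$ over the $\tfrac12 a^h$ insertions between weights $\tfrac32 a^h$ and $2a^h$, for $O(U(n))$ extra work per level per insertion. The paper's proof is terser (it simply says ``four insertions at once'' from $D_v.L$ per step) and does not spell out the nested-rebuild or history-consistency points you raise, but the argument is the same.
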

\begin{proof}
When inserting an element $e$ to $\Gamma$, a new leaf is created.
The leaf always contains a single element, hence all we need to
is to insert $e$ into a newly created secondary data structure.

Now, assume that $v$ is an internal node of $\Gamma$ of height $l$.
We insert $e$ to $D_v$ in time $O(U(n))$ and append this insertion
to $D_v.L$ in $O(1)$ time.
If after the insertion $w(v)>\frac{3}{2}a^l$ for the first time,
we initialize two secondary structures $D_{v'}$ and $D_{v''}$.
During the next $\frac{1}{2}a^l$ insertions of elements
in the subtree of $v$ we replay the insertions to $D_v$,
distributing the elements between $D_{v'}$ and $D_{v''}$,
four insertions at once.
From the insertion algorithm it follows that we already
know how the split of $v$ into $v',v''$ will look like
and thus whether the insertion past should be
performed on $D_{v'}$ or $D_{v''}$.
The additional work per insertion in the subtree of $v$
is $O(U(n))$.

When $w(v)=2a^l$, the secondary structures $D_{v'},D_{v''}$
are ready for the nodes $v'$ and $v''$.
\end{proof}

In our application, the secondary data structure $D_v$
can be seen as an enhanced linear list of points.
It supports the following operations:
\begin{itemize}
\item $insert(x,y,w,x')$ -- given an element $y\in D_v$ and a child
  $w$ of $v$, insert $x$ before $y$.
  Mark $x$ as stored in $D_w$ as $x'$.
  If $y=\perp$, $x$ is inserted at the end of the list.
  If $w=\perp$, $x$ is not assigned to any child of $v$.
\item $next(x,w)$ ($prev(x,w)$) -- for $x\in D_v$, return the next (previous)
  element in the list assigned to the subtree $w$.
\item $first(x,y)$ -- return the element among $x,\ldots,y$
  that was the first to be added to $D_v$.
\end{itemize}

\begin{lemma}
Let $v\in\Gamma$.
Operations $insert$, $next$, and $prev$, on $D_v$ can be performed in $O(\log{\log{n}})$
worst-case time. Operation $first$ can be performed in $O(1)$ worst-case time.
\end{lemma}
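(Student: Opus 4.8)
The plan is to realise $D_v$ as three loosely coupled components: (i) a doubly-linked list with order-maintenance labels (\cite{Dietz:1987,Bender:2002}) holding the points in their list order, where each point $x$ also stores the pair $(w,x')$ recording the child it is assigned to and the corresponding element of $D_w$; (ii) for every child $w$ of $v$, a pair $L_w,L_w^{R}$ of marked-list structures of~\cref{lem:mark}, $L_w^{R}$ being $L_w$ reversed; and (iii) an incrementally-built Cartesian tree $\mathcal{C}_v$ used solely for $first$. Since $\Gamma$ has a constant branching parameter $a$, each internal node has $O(1)$ children, so keeping one structure per child is affordable. The history list $D_v.L$ is simply the sequence of $insert$ calls recorded as they occur; appending is $O(1)$ and it is consulted only by the outer dynamization when $v$ is about to split.

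For $next$ and $prev$ I would use~\cref{lem:mark} directly. Every point of $D_v$ is inserted \emph{unmarked} into all the $L_c$ and $L_c^{R}$, and additionally \emph{marked} in $L_w$ and $L_w^{R}$ precisely when it is assigned to child $w$; cross-pointers between a point and its copies are set up during the insertion. Then $insert(x,y,w,x')$ performs $O(a)=O(1)$ unmarked insertions and two $mark$ operations, all in $O(\log\log n)$ worst-case time, while $next(x,w)$ returns $succ_{L_w}(x)$ and $prev(x,w)$ returns the $D_v$-image of $succ_{L_w^{R}}(x)$, each a single $O(\log\log n)$ query.

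The crux is $first(x,y)$, which must return in $O(1)$ the earliest-inserted point of the list segment from $x$ to $y$. The idea is to keep $\mathcal{C}_v$ with in-order equal to the list order and heap order by insertion time into $D_v$ (minimum at the root); then $first(x,y)$ is the point stored at the $\mathcal{C}_v$-lowest-common-ancestor of $x$ and $y$, since this LCA lies in the in-order interval between $x$ and $y$ and its subtree, being a contiguous in-order range containing both endpoints, contains that whole interval, so by the min-heap property it attains the minimum insertion time there. The point is that a newly inserted element always carries the current \emph{largest} time, hence it must be a leaf of $\mathcal{C}_v$; if $p,q$ are its in-order predecessor and successor before the insertion (at least one exists, else the element is the new root), then the new leaf becomes the left child of $q$ when $q$ has no left child, and otherwise the right child of $p$ — one checks by case analysis on whether $p$ has a right child that exactly one of these slots is free and that the resulting in-order position is correct. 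Thus $\mathcal{C}_v$ is grown by pure leaf additions with $O(1)$ work to locate the parent, never restructured; plugging in a dynamic-LCA structure that supports leaf insertion and LCA queries in worst-case $O(1)$ (Cole and Hariharan~\cite{Cole:2005}; alternatively a static-RMQ construction~\cite{LCA} inside a global-rebuilding scheme, since $|\mathcal{C}_v|\le n$) yields the claimed $O(1)$ bound for $first$.

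The main obstacle is this last paragraph: proving carefully that inserting a maximum-priority element into a Cartesian tree changes only its shape at a single new leaf and that the parent slot is determined in $O(1)$, and — equally important for the surrounding construction — checking that this behaviour is consistent under the staged node split of $\Gamma$, i.e.\ that rebuilding $D_{v'}$ and $D_{v''}$ by replaying their sub-histories of $insert$ calls one $O(\log\log n)$-time step at a time reproduces exactly the same Cartesian trees, marked lists, and order-maintenance lists one would obtain directly. Once these points are established, every stated bound follows by adding up the $O(1)$-many component costs, and the resulting $U(n)=O(\log\log n)$ is exactly what the outer weight-balanced-B-tree analysis needs.
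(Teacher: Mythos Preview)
Your proposal is correct and follows essentially the same approach as the paper: per-child marked lists from \cref{lem:mark} for $next$/$prev$, and an incrementally maintained Cartesian tree together with the dynamic-LCA structure of Cole and Hariharan~\cite{Cole:2005} for $first$. The only differences are cosmetic---you explicitly keep reversed copies $L_w^R$ for $prev$ (the paper leaves this implicit) and add an order-maintenance component that is not actually needed here, and your rule for placing the new Cartesian-tree leaf (``left child of $q$ if that slot is empty, else right child of $p$'') is equivalent to the paper's timestamp comparison of the two neighbours.
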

\begin{proof}
Let $w_1,w_2,\ldots,w_k$ be the children of $v$.
Recall that $k\leq 4a=O(1)$.
We keep a single structure from \cref{lem:mark}, denoted by $L_{w_i}$
for each $w_i$.
When $insert(x,y,w,x')$ is performed,
a copy of $x$ is inserted into each $L_{w_i}$.
We additionally mark $x$ in $L_w$.
By \cref{lem:mark}, the structures $L_{w_i}$ alone allow us to perform
$insert$, $next$ and $prev$ within desired time bounds.

\newcommand{\tmlca}{\mathtt{lca}}

In order to answer $first(x,y)$ queries, we need to introduce
another ingredient of $D_v$.
Let $\tau(x)$ be the time, when $insert(x,*,*,*)$ was performed.
We maintain a \emph{cartesian tree} $\mathcal{CT}$ corresponding to the list
$e_1,\ldots,e_l$ represented by $D_v$.
Denote by $\mathcal{CT}(e)$ the node of $\mathcal{CT}$ corresponding
to the element $e$.
The cartesian tree can be defined inductively.
Let $e_j$ be the unique element $e_j$ such that $\tau(e_j)$ is minimal.
The root $\mathcal{CT}(e_j)$ contains $e_j$.
If $j>1$ ($j<k$), then the left (right) child of $\mathcal{CT}(e_j)$ is the root of
the cartesian tree corresponding
to the sublist $e_1,\ldots,e_{j-1}$ ($e_{j+1},\ldots,e_k$).
\begin{claim}[folklore, \cite{LCA}]\label{cla:cartesian}
The minimum element of the sublist $e_a,\ldots,e_b$ is $e_c$ such
that $\mathcal{CT}(e_c)$
is the lowest common ancestor ($\tmlca$) of the nodes of $\mathcal{CT}(e_a)$
and $\mathcal{CT}(e_b)$.
\end{claim}

In our case, each newly added element $e$ has greater timestamp $\tau(e)$
than any of the previously inserted elements.
Clearly, the greatest element is always a leaf of the cartesian tree.
If $e$ is the only element of $D_v$, we make it a root of $\mathcal{CT}$.
Otherwise, suppose $e'$ ($e''$) is the left (right) neighbor of $e$ in $D_v$.
If $e'$ ($e''$) does not exist, we set $\tau(e')=-\infty$ ($\tau(e'')=\infty$).
Assume wlog that $\tau(e')<\tau(e'')$.

By \cref{cla:cartesian}, we have $\tmlca(\mathcal{CT}(e'),\mathcal{CT}(e''))=\mathcal{CT}(e')$,
$\tmlca(\mathcal{CT}(e),\mathcal{CT}(e'))=\mathcal{CT}(e')$ and \linebreak $\tmlca(\mathcal{CT}(e),\mathcal{CT}(e''))=\mathcal{CT}(e'')$.
Thus, $\mathcal{CT}(e)$ is a descendant of $\mathcal{CT}(e'')$ and $\mathcal{CT}(e'')$ is a descendant of $\mathcal{CT}(e')$.
Assume there exists such node $\mathcal{CT}(e^*)$ that is a descendant
of $\mathcal{CT}(e'')$ and an ancestor of $\mathcal{CT}(e)$.
Thus $\tau(e)>\tau(e^*)>\tau(e'')>\tau(e')$.
If $e^*$ was to the left (right) of $e'$ ($e''$), both
$\mathcal{CT}(e^*)$ and $\mathcal{CT}(e)$
would be descendants of $\mathcal{CT}(e')$ ($\mathcal{CT}(e'')$).
However, as they lie on the opposite sides of $e'$ ($e''$), one
of them would end up in the left subtree of $\mathcal{CT}(e')$ ($\mathcal{CT}(e'')$)
and the other in the right subtree.
It contradicts the fact that $\mathcal{CT}(e)$ is a descendant of $\mathcal{CT}(e^*)$.
We conclude that $\mathcal{CT}(e)$ is in fact the left son of $\mathcal{CT}(e'')$.

We now use the data structure of Cole et al.~\cite{Cole:2005} to maintain the cartesian tree
$\mathcal{CT}$.
The data structure allows to maintain
a dynamic tree subject to leaf insertions and $\tmlca$ queries.
Both insertions and $\tmlca$ queries can be handled in worst-case $O(1)$ time.
Upon insertion, all we have to do is to make $e$ a child
of the node $\mathcal{CT}(e'')$.
\end{proof}

We now describe how to use the weight-balanced B-tree $\Gamma$ along with the
secondary data structures $D_v$ to maintain a
growing set of points $A$, so that efficient orthogonal chronological
reporting is possible.
Recall that leaves of $\Gamma$ (from left to right) contain the points of $A$ ordered
by the first coordinate, whereas the secondary structures order the points
by the second coordinate.
Let $r$ be the root of $\Gamma$.
For convenience we additionally keep the points of $A$
along with a mapping to the entries of $D_r$
in a separate balanced tree $B_y$.
$B_y$ orders points by the second coordinate, so that
we can efficiently find the (strict or non-strict) predecessor/successor
of some point in $A$ in $y$-order.

In order to insert the point $(x,y)$, we
first use $B_y$ to find the point $(x',y')$ with the smallest $y$ coordinate
(corresponding to $s_r$ in $D_r$) greater than $y$ in worst-case time $O(\log{n})$.
Let $r=v_1,\ldots,v_b$ be the ancestors of the leaf representing $(x',y')$.
While we visit $v_1,v_2,\ldots$ in that order we insert $(x,y)$ into $D_{v_i}$
and query $D_{v_i}$ to find an appropriate position
in $D_{v_{i+1}}$, where $(x,y)$ should be inserted.
Then a new leaf containing $(x,y)$ is created as a child of $v_b$
and any needed rebalancing is done in a bottom-up fashion.
The worst case running time of the insertion is $O(\log{n}\log{\log{n}})$.

To report $k$ points from a rectangle $[x_1,x_2]\times [y_1,y_2]$ in
the insertion order, we follow the same strategy as in the static
case: we first find $O(\log{n})$ nodes of $\Gamma$ responsible
for the interval $[x_1,x_2]$.
For each of these nodes $v$ we find the two elements $a_v,b_v$ of
the list $D_v$ such that $a_v,\ldots,b_v$ are exactly the points in the
subtree of $v$ with the second coordinates in the interval $[y_1,y_2]$.
This can be done using the $prev$ and $next$ operations of secondary structures
during the top-down traversal.
Next, a priority queue storing the chronologically first points
of those subtrees is formed and each time we pop a point from subtree $v$ off the queue,
at most two additional points from $v$ are added.
The total time to report $k$ points is thus $O(\log{n}\log{\log{n}}+k\log{n})$.

As a simple corollary of the above discussion and \cref{sec:timelinestatic},
we obtain the following theorem.

\begin{theorem}
There exists a data structure maintaining the history of edits of text $T$,
which can be updated in $O(\log{t}\log{\log{t}})$ worst-case time, when a new edit is performed.
The data structure supports pattern matching queries against all version of $T$ in $O(m \log{t}\log\log{t})$ with high probability, where $m$ is the length of the pattern.
After that the occurrences of the pattern can be listed in chronological order, each with a worst-case delay of $O(\log t)$ with high probability.
\end{theorem}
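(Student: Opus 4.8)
The plan is to assemble the two ingredients developed above: the static reduction from the edit history to pattern matching in a set of anchored strings (Section~\ref{sec:timelinestatic}), and the dynamization of Willard's range-reporting structure with chronological reporting via a weight-balanced B-tree carrying the secondary structures $D_v$. First I would recall the inductive construction of the multiset $A_t$ of anchored strings: each of the three edit types (insert a letter, delete a block, move a block) adds at most three anchored strings to $A_{t-1}$, so $|A_t| = O(t)$, and each newly added anchored string $a = X|Y$ carries a timestamp $\tau(a)$ and an offset $\delta(a)$ recording where in $T_{\tau(a)}$ the anchor sits. This construction is simulated by the dynamic string collection of \cref{sec:collection,sec:om}: each edit triggers at most one $\makeop$ and $O(1)$ $\splitop$/$\concop$ operations. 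By \cref{thm:persistent_compare}, these run in $O(\log t)$ time each with high probability with respect to $t$, so the data structure is maintained in $O(\log t)$ amortized... no: each edit issues $O(1)$ such operations, so $O(\log t)$ per edit, which is dominated by the announced $O(\log t \log\log t)$ bound.

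Next I would fold in the anchored-index machinery. The $O(1)$ anchored strings created per edit must be inserted into a dynamic structure supporting $\anchfind$-style queries but with the additional requirement of reporting in chronological ($\tau$-)order. This is exactly what the dynamized Willard structure over $\Gamma$ gives: I would verify that the reduction of \cref{sec:anchored} (mapping $(s_1,s_2) \mapsto (s_1^R, s_2)$ and querying the rectangle $[p_1^R, p_1^R\lastsymb]\times[p_2,p_2\lastsymb]$) still applies, that comparisons of the relevant strings take $O(1)$ time via the order-maintenance component, and that each insertion into $\Gamma$ costs $O(\log t \log\log t)$ worst-case (the $O(\log t)$ tree height times the $O(\log\log t)$ update time of each secondary $D_v$, plus the $O(\log\log t)$ per-level cost from \cref{lem:mark} and the $O(1)$-time cartesian-tree/LCA maintenance of Cole et al.~\cite{Cole:2005}). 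Since each edit adds $O(1)$ points, the per-edit update time is $O(\log t \log\log t)$, matching the claim.

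For queries: given a pattern $p$ of length $m$, I would handle $m=1$ separately (store, for each letter, the identifiers and the time intervals during which they were present — trivial to list chronologically), and for $m\ge 2$ try all $O(m)$ factorizations $p = p_1 p_2$ into nonempty parts, each obtained via $O(m)$ single-letter $\makeop$s and $O(m)$ $\splitop$s on $\coll$, costing $O(m\log t \log\log t)$ with high probability. Each factorization yields one $\anchfind$-type query on $\Gamma$; the sets of occurrences reported for distinct factorizations are disjoint, and a reported anchored string $a$ decodes to the occurrence $T_{\tau(a)}[\delta(a)-|p_1|..\delta(a)+|p_2|-1]$. To list $k$ occurrences across all factorizations in global chronological order, I would merge the $O(m)$ per-factorization streams with a priority queue keyed by $\tau$; combined with the internal $O(\log t)$-per-point delay of the dynamic range structure, this gives a delay of $O(\log t)$ per reported occurrence with high probability. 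I expect the main obstacle to be the bookkeeping around chronological reporting in the dynamic setting — specifically, ensuring that when a node $v\in\Gamma$ splits, the secondary structure $D_v$ (with its cartesian tree and the marked-list structures $L_{w_i}$) is rebuilt incrementally so that $D_{v'}, D_{v''}$ are ready exactly when the split occurs, and that the $\tau$-order is preserved across this rebuild; this is where the weight-balanced B-tree's delayed-split mechanism and the $O(1)$-time dynamic-LCA structure must be carefully coordinated, and where the worst-case (not amortized) $O(\log t\log\log t)$ bound is actually earned.
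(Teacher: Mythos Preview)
Your proposal is correct and follows essentially the same approach as the paper: reduce each edit to $O(1)$ new anchored strings maintained via the dynamic string collection of \cref{sec:collection,sec:om}, insert them as 2D points into the dynamized Willard structure built on a weight-balanced B-tree with the secondary structures $D_v$ (marked lists plus cartesian tree with dynamic LCA), and answer a query by trying all $O(m)$ factorizations and merging the resulting chronological streams with a priority queue. The one minor imprecision is that the $O(m\log t\log\log t)$ query cost comes from the $O(m)$ range queries on $\Gamma$ (each $O(\log t\log\log t)$), not from the $\makeop/\splitop$ calls, which only contribute $O(m\log t)$; the final bound is unaffected.
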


\begin{remark}
The above construction of the incremental data structure for chronological
orthogonal reporting can be further tweaked so that the insertion and reporting take
$O(\log{n})$ and $O(\log{n})$ time in the worst-case correspondingly.
This can be done by using the methods of Mortensen \cite{Mortensen:2003}
along with the maintenance of a cartesian tree.
Rather easy but technical changes are required at every step of
Mortensen's construction.
\end{remark}

\appendix

\section{Model of Computation}\label{app:model}
We use the standard word RAM model (with multiplication); see e.g.~\cite{DBLP:conf/stacs/Hagerup98} for a reference.
We assume that the machine word is $\mword$ bits long, which implies that each memory cell can store an integer from the range $[0, 2^{\mword})$.

In this version of the paper we do not analyze the space usage of our data structures.
We only note that an update taking $t$ units of time increases the space by at most $O(t)$.
Queries could potentially also increase memory consumption, but (since none of our bounds are amortized)
we may keep track of memory cells modified by a query algorithm and afterwards revert them to the original values.

We use randomization, which is incorporated in the word RAM model in a standard
way: given an integer $x$ fitting a machine word in constant time we may obtain a uniformly random integer between 0 and $x$ inclusive.
We never gamble with the correctness of the results, and we guarantee the running times bounds to hold worst-case, but we do allow each operation to fail (notifying about its failure).
Once an instance of the data structure fails, it cannot be used any more, i.e., all subsequent operations also fail.
A standard transformation lets us guarantee no failures at the cost of gambling with the running time: we store a history of the updates
and once we encounter a failure, we initialize a new instance and replay the history (restarting every time we face a further failure).

We provide a guarantee that all operations succeed \emph{with high probability} (w.h.p.).
This is the strongest model in which dynamic dictionaries with worst-case (not amortized) constant-time updates have been developed.
Formally, we say that an event holds with high probability if its probability is at least $1-\frac{1}{2n^c}$ where $c>0$ is a parameter that can be chosen by the user and $n$ is the total input size of the prior and current updates (we assume input size of each update to be a positive integer).
Provided that all operations are supported in $n^{O(1)}$ time (which is the case in our work),
the aforementioned transformation results in operations working without failures where the same (up to a constant factor)
running time bounds hold with high probability (again, with respect to the input size of prior updates).
Moreover, the running time of a sequence of operations is w.h.p. (with respect to the input size of all updates)
proportional to the sum of the worst-case bounds in the model with failures, despite the fact that the algorithm needs to confront a failure with constant probability. %
Both these statements are due to the following result.
\begin{lemma}\label{lem:restart}
Let us fix a sequence of updates $u_1,\ldots,u_m$ to an initially empty randomized data structure $D$.
Let $n_i$ be the the total input size of $u_1,\ldots,u_i$. Suppose that for some $c>2$, each update $u_i$ works in time $t_i$, $1\le t_i \le n_i^{\sqrt{c}}$, with probability
at least $1-\frac{1}{2n_i^c}$ and fails otherwise.
Consider an algorithm which performs these updates and restarts (with an independent instance of $D$) whenever it faces a failure.
Its running time is $O(\sum_{i=1}^m t_i)$ with probability at least $1-\frac{1}{n_m^{\omega(\sqrt{c})}}$.
\end{lemma}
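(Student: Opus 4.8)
The plan is to think of the whole process as a sequence of independent \emph{attempts}. Attempt $k$ creates a fresh instance of $D$ and replays $u_1,u_2,\dots$ until either all $m$ updates succeed --- in which case the process halts --- or some update fails, in which case attempt $k+1$ begins. Since the instances are independent and $u_i$ succeeds with probability at least $1-\tfrac1{2n_i^c}$, the probability that a single attempt reaches the end without any failure is at least $\prod_{i=1}^m\bigl(1-\tfrac1{2n_i^c}\bigr)\ge 1-\sum_{i=1}^m\tfrac1{2n_i^c}\ge 1-\tfrac12\zeta(c)$, and because $n_i\ge i$ (each update has input size at least $1$) and $\zeta(c)\le\zeta(2)<2$ for $c>2$, this is at least a positive constant $1-q_0$ with $q_0\le\tfrac{\pi^2}{12}$ uniformly in $c$. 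Hence the number $\tau$ of attempts is dominated by a geometric random variable: $\Pr[\tau>k]\le q_0^{\,k}$. The running time is at most $S+W$ where $S=\sum_i t_i$ bounds the final, successful attempt and $W$ collects the cost of the $\tau-1$ failed ones; a failed attempt whose first failure occurs in step $j$ costs at most $s_j:=\sum_{l\le j}t_l$ (the worst-case time of $u_1,\dots,u_j$), so writing $F_j$ for the number of failed attempts that fail in step $j$ we have $W\le\sum_{j=1}^m s_jF_j$.

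First I would record the easy expectation bound. Because ``attempt $k$ takes place'' has probability at most $q_0^{k-1}$ and is determined by attempts $1,\dots,k-1$ only, $\mathbb E[W]\le\frac1{1-q_0}\sum_{j=1}^m s_j\Pr[u_j\text{ fails}]\le\frac1{1-q_0}\sum_{j=1}^m\frac{s_j}{2n_j^c}\le\frac{q_0}{1-q_0}\,S$, using $s_j\le S$ and $\sum_j\tfrac1{2n_j^c}\le q_0$. So the expected running time is already $O(S)$; the substance of the lemma is the high-probability tail $\Pr\bigl[W\ge KS\bigr]\le n_m^{-\omega(\sqrt c)}$ for a suitable constant $K$. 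The mechanism behind it is that a \emph{costly} failure is \emph{rare}: if a failure in step $j$ costs at least $2^g$, then $2^g\le s_j\le j\,n_j^{\sqrt c}\le n_j^{\,\sqrt c+1}$, so $n_j\ge 2^{\,g/(\sqrt c+1)}$ and $\Pr[u_j\text{ fails}]\le\tfrac12\,2^{-gc/(\sqrt c+1)}$, a probability shrinking polynomially in the cost with exponent growing in $c$. (In the applications one also has $n_j=O(s_j)$ --- the input size of each update is $O(t_i)$ --- which makes this estimate quantitatively cleaner and in particular rules out the degenerate scenario of many cheap but frequent step-$1$ failures.)

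To turn this into a tail bound I would group the steps into dyadic cost classes $\mathcal G_g=\{j:2^g\le s_j<2^{g+1}\}$ for $0\le g\le G$ with $2^G\le S<2^{G+1}$, so there are $O(\log S)=O(\sqrt c\,\log n_m)$ of them, and the waste inside class $g$ is at most $2^{g+1}F^{(g)}$ with $F^{(g)}=\sum_{j\in\mathcal G_g}F_j$. A dyadic pigeonhole with weights proportional to $(G-g+1)^{-2}$ shows that $W\ge KS$ forces $F^{(g)}\ge r_g:=\Omega\bigl(K\,2^{G-g}/(G-g+1)^2\bigr)$ for some $g$. For the count $F^{(g)}$ one gets a geometric-type tail: each attempt fails inside $\mathcal G_g$ with probability $P^{(g)}\le|\mathcal G_g|\cdot\tfrac12 2^{-gc/(\sqrt c+1)}\le\tfrac12 2^{-g\alpha}$, where $\alpha=\tfrac{c-\sqrt c-1}{\sqrt c+1}$, and a failed attempt automatically makes $\tau$ pass it, so after truncating $\tau$ at $T=\Theta(\sqrt c\,\log n_m)$ (contributing an extra $q_0^{\,T}=n_m^{-\omega(\sqrt c)}$) one obtains $\Pr[F^{(g)}\ge r_g]\le n_m^{-\omega(\sqrt c)}+\bigl(TP^{(g)}\bigr)^{r_g}$. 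Multiplying out, $r_g\cdot g\alpha$ is bounded below over all $g$ (the worst case being $g=G-O(1)$, giving $\Omega(\alpha\log S)$, with the top class $g=G$ handled directly from $P^{(G)}\le n_m^{-\Omega(c)}$), so each term is $n_m^{-\omega(\sqrt c)}$; summing over the $O(\sqrt c\,\log n_m)$ classes and adding the truncation term keeps the bound $n_m^{-\omega(\sqrt c)}$. The main obstacle is exactly this concentration step: extracting a clean geometric tail for the $F^{(g)}$'s from the branching restart process, choosing the dyadic weights so that a single estimate covers all $c>2$ uniformly (rather than splitting at $c=\varphi^2$, where $\alpha$ changes sign), and controlling the logarithmic-in-$n_m$ number of classes together with the truncation of $\tau$ without eroding the $\omega(\sqrt c)$ in the exponent.
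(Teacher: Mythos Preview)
Your approach is on a reasonable track, but it is considerably more intricate than necessary, and the ``main obstacle'' you identify (balancing the dyadic weights, the sign change of $\alpha$ at $c=\varphi^2$, controlling the number of classes) is an artefact of the decomposition rather than a genuine difficulty of the problem.

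The paper's proof avoids all of this with a single threshold instead of a full dyadic scale. Set $T_i=\sum_{j\le i}t_j$, $N=n_m$, and let $l$ be the least index with $T_l\ge\sqrt{N}$. Now modify the algorithm so that any failure at a step $i\ge l$ is treated as a permanent failure (no restart). Two things happen. First, every restart that does occur happens before step $l$, hence costs at most $T_{l-1}<\sqrt{N}$; since each attempt succeeds with constant probability, more than $\Theta(\log N)$ restarts occur with probability $N^{-\omega(1)}$, and otherwise the total waste is $O(\sqrt{N}\log N)=O(T_m)$. Second, at any step $i\ge l$ one has $n_i^{1+\sqrt c}\ge T_i\ge\sqrt{N}$, so $n_i\ge N^{1/(2(1+\sqrt c))}$ and a single failure there has probability at most $\tfrac12 N^{-c/(2(1+\sqrt c))}$; a union bound over $m\le N$ such steps gives the permanent-failure probability $\tfrac12 N^{1-c/(2(1+\sqrt c))}$, which is the claimed $N^{-\Theta(\sqrt c)}$.

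In short, the paper collapses your whole dyadic ladder to a single cut at $\sqrt{N}$: below it restarts are cheap, above it failures are globally rare. This removes the need for the pigeonhole over classes, the geometric tail for each $F^{(g)}$, and any case distinction on the sign of $\alpha$. Your sketch could presumably be completed (the truncation $\tau\le T$ already forces $F^{(g)}\ge r_g$ to be impossible except for $g$ within $O(\log\log N)$ of $G$, where your $P^{(g)}$ bound is strong), but the single-threshold argument gets there in a few lines. Both arguments rely on $n_j\le T_j$ (running time at least input size), which you noted parenthetically; the paper uses this silently in the line $n_i\le T_i\le n_i^{1+\sqrt c}$.
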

\begin{proof}
Let $T_i=\sum_{j=1}^i t_j$ and note that $n_i \le T_i\le n_i^{1+\sqrt{c}}$.
Also let $N=n_m$ and let $l$ be the smallest index such that $T_l \ge \sqrt{N}$.
Let us modify the algorithm so that at steps $i\ge l$  it does not restart but fails permanently instead.
This may only decrease the probability of timely success.

Observe that with constant probability (at least $a := 1-\sum_{k=1}^\infty \frac{1}{2k^2}=\frac{12-\pi^2}{12}$)
the sequence of operations is performed without any need to restart.
Hence, the probability of the event that $k$ restarts are needed before the success is bounded by $a^k$. For $k=\omega(\log N)$
this is $\frac{1}{N^{\omega(1)}}$.

On the other hand the time, between two restarts is bounded by $T_{l-1}$ and thus if there are at most $O(\log N)$ of them,
in time $T+O(T_{l-1}\log N)=T+O(\sqrt{N}\log N)=O(T)$ we either have a success or a permanent failure.
However, the probability of permanent failure ever happening is bounded $\sum_{i=l}^m \frac{1}{2n_i^c} \le \frac{m}{2n_l^c}$.
Since $N\ge m$ and $n_l^{1+\sqrt{c}} \ge T_l \ge \sqrt{N}$ this is at most
$$\frac{N}{2N^{\frac{c}{2(1+\sqrt{c})}}} =\frac{1}{N^{\omega(\sqrt{c})}}$$
as claimed.
\end{proof}

We use randomization to implement efficient dynamic dictionaries.\footnote{%
As opposed to Alstrup et al.~\cite{Alstrup}, we also directly use randomization in the core ideas of our technique.
However, due to the usage of dictionaries in both solutions, our models of computation are exactly the same.
}
Such dictionaries map keys to values and, both of which fit $O(1)$ machine words,
and provide membership queries (determine if there is an entry for a given key) and retrieval queries
(return the entry for a given key).
To implement such dictionaries, we use the result of Dietzfelbinger and Meyer auf der Heide~\cite{Dietzfelbinger:1990},
who supports all operations in $O(1)$ time w.h.p.
However, because we need to maintain multiple (possibly small) dictionaries,
we need to make sure that the w.h.p. bounds for the dictionaries coincide with the w.h.p.
bounds for the data structures we construct.
For this, we simulate all dictionaries throughout our data structure using a single dictionary, whose
keys additionally contain an identifier of the simulated instance they come from.
Moreover, for every update of our data structure, we insert to this dictionary a number of dummy entries
proportional to the input size of this update.

\section{Formal Definitions of $\symbols$, $\sstr$ and $\sstr^{-1}$}\label{app:formalsymbols}
For every (possibly infinite) alphabet $\Sigma$ we define an infinite set of symbols $\symbols$.
This set is defined inductively so that $\Sigma \subseteq \symbols$, $(S_1,S_2)\in \symbols$ for every $S_1,S_2\in \symbols$
and $(S,k)\in \symbols$ for every $S\in \symbols$ and $k\in \mathbb{Z}_{\ge 2}$.
Note that each symbol $S\in \symbols$ generates exactly one string $s\in \Sigma^+$, which we denote by $\sstr(S)$.
We also define $\slength(S)=|\sstr(S)|$.

For every instance of the data structure we fix random bits $\hs_i(S)$ for every $S\in \symbols$ and $i\in \mathbb{Z}_{\ge 1}$.
We assume that these random bits are drawn before an instance of the data structure is initialized, although they are hidden until it decides to reveal them.
This conceptual trick lets us unambiguously construct a straight-line grammar $\grammar(w)$ for every string $w\in \Sigma^+$
using the $\shrink{i}$ functions defined in \cref{sec:single_string}.
A grammar $\grammar(\coll)$ defined for a collection of nonempty strings is simply the union of such straight-line grammars
over $w\in \coll$.

Note that not necessarily all symbols of $\symbols$ might appear in straight-line grammars $\grammar(w)$ for $w\in \Sigma^+$
In particular, it easy to see that a single string might be generated by multiple symbols but, as we shall see soon,
exactly one of this symbols is present in $\grammar(\Sigma^+)$.

Before, let us define recursively the level $\slev(S)$ of a symbol $S$.
If $S\in \Sigma$ is a terminal, then $\slev(S)=0$. If $S=(S_1,k)$ for $k\in \mathbb{Z}_{\ge 2}$, then $\slev(S)=\slev(S_1)+1$.
Otherwise, $S=(S_1,S_2)$ and $\slev(S)$ is the smallest even number $l$ greater than $\max(\slev(S_1), \slev(S_2))$ such that $\hs_{l/2}(S_1) = 0$ and $\hs_{l/2}(S_2) = 1$. The following easy observation provides a characterization of this function.
\begin{observation}
If $S$ is a symbol of $\grammar(\Sigma^+)$, then $S$ may only be generated by $\shrink{\slev(S)}$.
\end{observation}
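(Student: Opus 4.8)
The statement to establish is: if $S$ is a symbol occurring in $\grammar(\Sigma^+)$, then $S$ can only be generated (i.e., the production rule $S\to\cdots$ can only be created) by $\shrink{\slev(S)}$. I would argue by structural induction on $S$, following the same recursive definition of $\slev$ that was just introduced, and tracking how the parse scheme $\shrink{i}=\rle$ (for odd $i$) and $\shrink{i}=\compress_{i/2}$ (for even $i$) places block boundaries.

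\textbf{First steps.} The base case $S\in\Sigma$ is trivial: a terminal has $\slev(S)=0$ and it is never created by any $\shrink{i}$ (it is simply an input letter, i.e., a symbol at level $0$), so the claim holds vacuously. For the inductive step I would split on the form of $S$. If $S=(S_1,k)$ with $k\ge 2$, then $S$ is created by an $\rle$ call, hence by some $\shrink{l}$ with $l$ odd; I must check $l=\slev(S)=\slev(S_1)+1$. Since $S$ occurs in $\grammar(\Sigma^+)$, so does $S_1$, and the run it forms consists of $k\ge 2$ consecutive copies of $S_1$ inside $\cshrink{l-1}(w)$ for the relevant $w$; the block $S_1^k$ can only be collapsed once the symbol $S_1$ has already appeared, which by the inductive hypothesis happens no earlier than level $\slev(S_1)$, and an $\rle$ pass runs at every odd level, so the first opportunity is $l=\slev(S_1)+1$ — and the grammar-construction rule (``collapse a block only when first encountered'') ensures this is in fact the level used. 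For $S=(S_1,S_2)$, $S$ is created by a $\compress_{i/2}$ call, so by some even $\shrink{l}$; here I must show $l$ is the smallest even number greater than $\max(\slev(S_1),\slev(S_2))$ with $\hs_{l/2}(S_1)=0$ and $\hs_{l/2}(S_2)=1$. The inequality $l>\max(\slev(S_1),\slev(S_2))$ follows because both $S_1$ and $S_2$ must already be present before the pair $S_1S_2$ can be a block, again invoking the inductive hypothesis. The condition on the $\hs$-bits is exactly the rule by which $\compress_{i/2}$ marks a block: a pair $a_ja_{j+1}$ is marked iff $\hs_{i/2}(a_j)=0$ and $\hs_{i/2}(a_{j+1})=1$; so at any earlier even level $l'$ satisfying $l'>\max(\slev(S_1),\slev(S_2))$ but failing the bit condition, no such block is formed, and at the first level where it is formed the block is collapsed. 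By the definition of $\slev(S)$ this first such level is precisely $\slev(S)$.

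\textbf{The main obstacle.} The delicate point is not the arithmetic but making rigorous the phrase ``$S$ may only be generated by'': I must rule out the symbol being created at a level \emph{larger} than $\slev(S)$. This is where the construction rule that each block is collapsed at the \emph{first} level it could be formed, together with the fact that $\shrink{\cdot}$ applies $\rle$ at every odd level and $\compress$ at every even level, is essential — it guarantees that once $S_1,S_2$ (resp.\ $k$ copies of $S_1$) are present and the bit/run condition is met, the collapse is forced at that very level and not deferred. I would also need the auxiliary fact (implicit in \cref{sec:persistent_ds}) that in the canonical construction $\grammar(\Sigma^+)$ each string is represented consistently, so that the symbol $S$ as an abstract element of $\symbols$ corresponds to the unique node created at level $\slev(S)$; the observation that many symbols generate the same string but only one is ``consistent'' (present in $\grammar(\Sigma^+)$) is exactly what lets the level function be well-defined and match the construction. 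Once these points are pinned down, the induction closes cleanly.
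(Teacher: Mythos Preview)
The paper states this as an observation with no proof, so there is no argument to compare against directly; your plan has the right inductive skeleton, but the ``first-opportunity'' reasoning contains a real gap.

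You argue forward: once $S_1$ (resp.\ $S_1,S_2$) exists, the next $\rle$/$\compress$ pass must collapse the block at level $\slev(S_1)+1$ (resp.\ $\slev(S)$). But you never establish that the \emph{block} --- $k$ adjacent copies of $S_1$, or an adjacent pair $S_1S_2$ --- is present at that earlier level. All you actually know is that the block sits in $\cshrink{l-1}(w)$, where $S$ was created. The scheme does not defer collapses, but it also does not collapse blocks that are not there; mere ``presence'' of $S_1$ is not enough. Your parity claim that $\slev(S_1)+1$ is odd is likewise unjustified at this point.

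The fix runs backward from level $l-1$. In the power case, if $\slev(S_1)<l-1$ then by the inductive hypothesis every copy of $S_1$ in the run is a single-symbol block at level $l-1$, so the run $S_1^k$ already lies in $\cshrink{l-2}(w)$; but $l-2$ is odd and $\cshrink{l-2}(w)=\rle(\cshrink{l-3}(w))$ is $1$-repetition-free, a contradiction. You never invoke this $1$-repetition-freeness, and without it the power case does not close (and the parity of $\slev(S_1)$ remains open). In the pair case, if some even $l'<l$ satisfies $l'>\max(\slev(S_1),\slev(S_2))$ together with the bit condition, then tracing back from $l-1$ the inductive hypothesis forces both $S_1$ and $S_2$ to be single-symbol blocks at every level $l',\dots,l-1$, so they remain adjacent in $\cshrink{l'-1}(w)$; but then $\compress_{l'/2}$ merges them at level $l'$, contradicting that $S_1$ is a single-symbol block there. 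This tracing-and-contradiction step is exactly what your argument assumes without proof.
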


We are now ready to show the aforementioned bijection between $\Sigma^+$ and symbols of $\grammar(\Sigma^+)$.
We note that such a property did not hold for previous approaches~\cite{Alstrup,Mehlhorn}. Although its presence does not seem crucial for our solution, it simplifies our concepts.

\begin{restatable}{lemma}{startsymbol}\label{lem:startsymbol}
If $S$ is a symbol of $\grammar(\Sigma^+)$, then $S$ is the start symbol of $\grammar(\sstr(S))$.
Consequently, $\sstr(\cdot)$ bijectively maps symbols of $\grammar(\Sigma^+)$ to $\Sigma^+$.
\end{restatable}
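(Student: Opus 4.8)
The plan is to prove \cref{lem:startsymbol} by induction on $\slev(S)$, showing that the recursive parsing procedure applied to $\sstr(S)$ reconstructs exactly the symbol $S$ as its start symbol. First I would set up the statement carefully: recall that $\grammar(\sstr(S))$ is the straight-line grammar obtained by the fixed (but hidden) random bits $\hs_i(\cdot)$ via the iterated application of $\shrink{i}$ to $\sstr(S)$, and that the start symbol is the unique symbol of level $\depth(\sstr(S))$ generating $\sstr(S)$. The crux is that the set $\symbols$ was defined combinatorially, \emph{independently} of the random choices, so a priori many symbols generate the same string; the claim is that among them only the one ``consistent'' with the parsing appears in $\grammar(\Sigma^+)$.

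The base case $\slev(S)=0$ is immediate: then $S\in\Sigma$, $\sstr(S)=S$ is a single letter, $\depth(S)=0$, and $\grammar(S)$ has $S$ itself as its only (start) symbol. For the inductive step, suppose the claim holds for all symbols of smaller level and let $S$ have level $l=\slev(S)>0$. There are two cases according to the top production. If $S=(S_1,k)$ with $k\ge 2$, then $\sstr(S)=\sstr(S_1)^k$ and $l=\slev(S_1)+1$, which (by the parity of levels and the definition of $\shrink{}$) is odd, so $\shrink{l}=\rle$. I would argue that $\cshrink{l-1}(\sstr(S))=S_1^k$: this follows because by the inductive hypothesis the parsing of $\sstr(S_1)$ produces $S_1$ at level $\slev(S_1)=l-1$, and the parsing of a concatenation of $k$ identical blocks produces $k$ identical symbols at each level up to level $l-1$ (no $\compress$ or $\rle$ step can merge a symbol with a copy of itself into a larger block before they all become equal, and once they are equal they form a maximal run of length exactly $k$, with no neighbours to extend it since $\sstr(S)$ is exactly $\sstr(S_1)^k$). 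Then $\rle$ collapses this maximal run $S_1^k$ into a single symbol, which — by the way $\rle$ assigns symbols via the dictionary keyed on $(s_1,k)$ — is precisely $(S_1,k)=S$, and $|\cshrink{l}(\sstr(S))|=1$, so $S$ is the start symbol. The case $S=(S_1,S_2)$ is analogous but uses $\compress_{l/2}$: here $l$ is even and is the smallest even $l>\max(\slev(S_1),\slev(S_2))$ with $\hs_{l/2}(S_1)=0$ and $\hs_{l/2}(S_2)=1$. I would show $\cshrink{l-1}(\sstr(S_1))=S_1$ and $\cshrink{l-1}(\sstr(S_2))=S_2$ using the inductive hypothesis together with the observation (from the paragraph before the lemma) that a symbol of level $<l-1$ reaching level $l-1$ sits there unchanged because no block boundary cuts it and the surrounding structure is a concatenation; so $\cshrink{l-1}(\sstr(S))$ begins/ends with $S_1$ followed by $S_2$ in the relevant position, and the definition of $l$ guarantees $\compress_{l/2}$ places a block boundary exactly around the pair $S_1S_2$ (and around no other pair that would interfere), collapsing it to the symbol keyed by $(S_1,S_2)$, i.e.\ $S$. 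This yields $|\cshrink{l}(\sstr(S))|=1$, completing the induction.

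The hard part will be making the ``$\cshrink{l-1}(\sstr(S))$ has the right form'' step fully rigorous: I need to control the \emph{entire} intermediate string, not just the symbols descending from $S_1$ and $S_2$, because a $\compress$ step could in principle be affected by neighbours, and for the power case I must rule out that the run of $S_1$'s is longer than $k$ or gets broken up at an earlier level. The clean way is to prove a strengthened auxiliary statement by induction on the level $j$: for $0\le j\le l-1$, $\cshrink{j}(\sstr(S))$ equals the string obtained by replacing, in the decomposition of $\sstr(S)$ according to the top production of $S$, each $\sstr(S_i)$-piece by $\cshrink{j}(\sstr(S_i))$ (with the pieces of level $<j$ that have already stabilised sitting unchanged), and that no block boundary formed by $\shrink{j+1}$ crosses a piece boundary — except the one that will eventually form $S$. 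This is really an internal consistency property of the parsing of concatenations, closely parallel to the counterpart lemmas (\cref{lem:close}, \cref{fct:layer}) already proved for context-insensitivity, and I would lean on \cref{fct:layer} to handle the run case (equal adjacent symbols never get a boundary between them) and on the characterising observation just above the lemma to pin down the level at which the final collapse happens. Once this auxiliary claim is in hand, the statement ``$\sstr(\cdot)$ is a bijection'' follows formally: injectivity on symbols of $\grammar(\Sigma^+)$ because each such symbol $S$ is the \emph{unique} start symbol of $\grammar(\sstr(S))$ and distinct $S$ give distinct $\sstr(S)$ by the just-proved uniqueness; surjectivity because every $w\in\Sigma^+$ is $\sstr$ of the start symbol of $\grammar(w)$, which lies in $\grammar(\Sigma^+)$.
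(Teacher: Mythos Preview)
Your outer induction on $\slev(S)$ with case analysis on the top production is a natural idea, but there is a real gap in the inner auxiliary claim. You want, for $S=(S_1,S_2)\in\grammar(\Sigma^+)$ and every level $j\le\slev(S)-1$, that $\cshrink{j}(\sstr(S_1)\sstr(S_2))=\cshrink{j}(\sstr(S_1))\cdot\cshrink{j}(\sstr(S_2))$ (and analogously for powers). The step from $j$ to $j+1$ needs $\shrink{j+1}$ never to form a block straddling the seam between the two pieces. For $j\ge\max(\slev(S_1),\slev(S_2))$ this does follow from the definition of $\slev(S)$, as you note. But for smaller $j$ you give no argument, and the hypothesis $S\in\grammar(\Sigma^+)$ is essential there: for instance the symbol $((a,b),b)$ (with $\hs_1(a)=0$, $\hs_1(b)=1$) would already violate your auxiliary claim at level~$1$, since $\rle(abb)=a(b,2)$ destroys the seam; only the fact that this particular symbol never arises in any $\grammar(w)$ saves the statement. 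The tools you cite do not help in the needed direction: \cref{lem:close} transports structure from a string to an \emph{extension}, whereas here you must transport it from an occurrence inside some $w$ down to $\sstr(S)$ parsed in isolation, and \cref{fct:layer} says equal neighbours share a parent, which is the opposite of placing a boundary.

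The paper avoids the case analysis entirely by taking exactly this ``external witness'' route. It fixes any $w$ with $S\in\grammar(w)$ and proves, by induction on the level $i$, that the fragment of $\cshrink{i}(w)$ corresponding to the occurrence of $\sstr(S)$ equals $\cshrink{i}(\sstr(S))$. The inductive step uses two facts: that fragment is itself represented by a whole fragment of $\cshrink{i+1}(w)$, so $\shrink{i+1}$ places block boundaries at both of its ends; and $\shrink{i+1}$ decides each boundary purely from the two adjacent symbols. Together these force the internal block structure to match the standalone parsing. Your auxiliary claim is actually a corollary of this argument (apply it to each child of $S$), but proving it directly without an external $w$ seems to require the same locality insight packaged with more bookkeeping.
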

\begin{proof}
Let $l=\slev(S)$.
The fact that $S\in \grammar(w)$ for some $w$ implies that $w$ has a fragment equal to $\sstr(S)$ which
in $\cshrink{l}(w)$ is represented by $S$.
Consequently, for each level $0\le i \le l$, this fragment is represented by a fragment of $\cshrink{i}(w)$.
We shall inductively prove that these fragments are equal to $\cshrink{i}(\sstr(S))$.
For $i=0$ this is clear, so let us take $0\le i < l$ such that this property holds at level $i$
to conclude that the same property holds at level $i+1$.
It suffices to prove that $\shrink{i+1}$ produces the same blocks both in the fragment of $\cshrink{i}(w)$ and in $\cshrink{i}(\sstr(S))$.
Since the corresponding fragment of $w$ is represented by a fragment of $\cshrink{i+1}(w)$, then $\shrink{i+1}$ places block boundaries
before and after the fragment of $\cshrink{i}(w)$. Moreover, because $\shrink{i+1}$ places a block boundary between two positions
solely based on the characters of these positions, the blocks within the fragment are placed in the analogous
positions as in $\shrink{i}(\sstr(S))$.
Consequently, the blocks are the same and this concludes the inductive proof.
In particular we get $\shrink{l}(\sstr(S))=S$ and thus $S$ in indeed the starting symbol of  $\grammar(\sstr(S))$.

For the second part, note that $\sstr(S_1)=w=\sstr(S_2)$ implies that $S_1=S_2$ is the start symbol of $\grammar(w)$.
\end{proof}

The function inverse to $\sstr$ mapping $\Sigma^+$ to symbols of $\grammar(\Sigma^+)$ is denoted by $\sstr^{-1}$.
We now restate the properties already outlined in~\cref{sec:persistent_ds}.
We say that $\grammar$ \emph{represents} a string $w$ if $\sstr^{-1}(w) \in \grammar$.
A grammar $\grammar$ is called \emph{closed} if all symbols appearing on the right-hand side of a production rule
also appear on the left-hand side.
Our data structure maintains a grammar $\grammar$ which satisfies the following invariant:
\begin{invariant}
$\grammar \subseteq \grammar(\Sigma^+)$ is closed and represents each $w\in \coll$.
\end{invariant}

\bibliographystyle{plainurl}
\bibliography{paper}

\end{document}